\newtheorem{theorem}{Theorem}[section]
\newtheorem{corollary}[theorem]{Corollary}
\newtheorem{lemma}[theorem]{Lemma}
\newtheorem{fact}[theorem]{Fact}
\theoremstyle{definition}
\newtheorem{definition}[theorem]{Definition}
\newtheorem{remark}[theorem]{Remark}
\newenvironment{fminipage}%
  {\begin{Sbox}\begin{minipage}}%
  {\end{minipage}\end{Sbox}\fbox{\TheSbox}}
\let\originalleft\left
\let\originalright\right
\renewcommand{\left}{\mathopen{}\mathclose\bgroup\originalleft}
  \renewcommand{\right}{\aftergroup\egroup\originalright}
\def\eq#1{\begin{equation*}\begin{split}#1\end{split}\end{equation*}}
\def\eql#1#2{\begin{equation}{#1}\begin{split}#2\end{split}\end{equation}}
\def\al#1{\begin{align}
             #1
            \end{align}}
\def\comeq#1{\stackrel{\mathrm{#1}}{=}}
\def\pleq{\preccurlyeq}
\def\pgeq{\succcurlyeq}
\def\defeq{\stackrel{\mathrm{def}}{=}}
\def\pr#1{\left( #1 \right ) }
\def\br#1{\left[ #1 \right ] }
\def\dr#1{\left\{#1\right\}}
\def\abs#1{\left|#1  \right|}
\def\calG{\mathcal{G}}
\def\calL{\mathcal{L}}
\def\calT{\mathcal{T}}
\def\calN{\mathcal{N}}
\newcommand\PPi{\boldsymbol{\Pi}}
\def\aa{\pmb{\mathit{a}}}
\newcommand\bb{\boldsymbol{\mathit{b}}}
\newcommand\dd{\boldsymbol{\mathit{d}}}
\newcommand\ee{\boldsymbol{\mathit{e}}}
\newcommand\vv{\boldsymbol{\mathit{v}}}
\newcommand\ww{\boldsymbol{\mathit{w}}}
\newcommand\yy{\boldsymbol{\mathit{y}}}
\newcommand\zz{\boldsymbol{\mathit{z}}}
\newcommand\xx{\boldsymbol{\mathit{x}}}
\renewcommand\AA{\boldsymbol{\mathit{A}}}
\newcommand\BB{\boldsymbol{\mathit{B}}}
\newcommand\CC{\boldsymbol{\mathit{C}}}
\newcommand\DD{\boldsymbol{\mathit{D}}}
\newcommand\EE{\boldsymbol{\mathit{E}}}
\newcommand\GG{\boldsymbol{\mathit{G}}}
\newcommand\II{\boldsymbol{\mathit{I}}}
\newcommand\NN{\boldsymbol{\mathit{N}}}
\newcommand\MM{\boldsymbol{\mathit{M}}}
\newcommand\LL{\boldsymbol{\mathit{L}}}
\newcommand\PP{\boldsymbol{\mathit{P}}}
\newcommand\QQ{\boldsymbol{\mathit{Q}}}
\newcommand\RR{\boldsymbol{\mathit{R}}}
\renewcommand\SS{\boldsymbol{\mathit{S}}}
\newcommand\UU{\boldsymbol{\mathit{U}}}
\newcommand\WW{\boldsymbol{\mathit{W}}}
\newcommand\XX{\boldsymbol{\mathit{X}}}
\newcommand\YY{\boldsymbol{\mathit{Y}}}
\newcommand\ZZ{\boldsymbol{\mathit{Z}}}
\def\Otil#1{\tilde{O}\left(#1\right)}
\def\nm#1{\left\| #1 \right\|}
\def\no#1{\left\|#1\right\|_1}
\def\nt#1{\left\| #1 \right\|_2}
\def\ni#1{\left\|#1\right\|_{\infty}}
\def\la{\lambda}
\def\tp{^\top}
\def \nnz#1{\textbf{nnz}\left(#1\right)}
\def \Lt#1{\LL^{(#1)}}
\def \Gt#1{{\GG^{(#1)}}}
\def \xt#1{\xx^{(#1)}}
\def \Zt#1{\ZZ^{(#1)}}
\def \Ltt#1{{\LL_*^{(#1)}}}
\def \bt#1{\bb^{(#1)}}
\def \Ut#1{\UU^{(#1)}}
\newcommand \arrlf{\leftarrow}
\newcommand \arr{\rightarrow}
\newcommand \eps{\epsilon}
\def \Diag#1{\textbf{Diag}\left(#1\right)}
\newcommand{\zero}{\mathbf{0}}
\newcommand{\one}{\mathbf{1}}
\newcommand \inv{^{-1}}
\def\MatSize#1#2{\mathbb{R}^{#1\times#2}}
\def\MS#1#2{\mathbb{R}^{#1\times#2}}
\newcommand \Real{\mathbb{R}}
\def \poly#1{\rm poly\left(#1\right)}
\def \calLt#1{\calL^{(#1)}}
\def\At#1{\AA^{\pr{#1}}}
\def\Ut#1{\UU^{\pr{#1}}}
\def\Wt#1{\WW^{\pr{#1}}}
\newcommand\sleq{\subseteq}
\newcommand\sgeq{\supseteq}
\newcommand\arl{\arrlf}
\newcommand\SparP{\textsc{SparseProduct}}
\newcommand\SparE{\textsc{OraSparseLaplacian}}  \def\Lapap#1{$#1$-\emph{asymmetric approximation}}
\newcommand\FindRCDD{\textsc{FindRCDDBlock}}
\newcommand\PRI{\textsc{PreRichardson}}
\newcommand\SparseSchur{\textsc{SparseSchur}}
\newcommand\SE{\textsc{SparseEulerianFC}} \newcommand\TSE{\calT_{\rm S}} \newcommand\NSE{\calN_{\rm S}}
\newcommand\SP{\textsc{SparseProductFC}}
\newcommand\SCC{\textsc{SchurChain}}
\newcommand\PreC{\textsc{Precondition}}
\def\mx#1{\begin{bmatrix}
           #1
        \end{bmatrix}}
\def\vc#1{\begin{pmatrix}
        #1
        \end{pmatrix}}
\def\sc#1{\textsc{Sc}\pr{#1}}
\newcommand\alp{\alpha}
\newcommand\dele{\backslash}
\def\U#1{\UU\br{#1}}
\def\nd#1#2{\nt{\U{#1}^{\dagger/2} {#2} \U{#1}^{\dagger/2}}  }
\def\ndd#1#2{\nt{{#1}^{\dagger/2} {#2} {#1}^{\dagger/2}}  }
\def\nG#1#2{\nt{\UG{#1}^{\dagger/2} {#2} \UG{#1}^{\dagger/2}}  }
\def\nA#1#2{\nm{#2}_{#1}}
\def\Mt#1{\bm{\mathcal{M}}^{\pr{#1}}}
\def\At#1{\AA^{\pr{#1}}}
\def\Dt#1{\DD^{\pr{#1}}}
\def\iv#1{\pr{#1}\inv}
\newcommand\xtil{\widehat{\xx}}
\newcommand\Con{\PPi}
\newcommand\dg{^\dagger}
\newcommand\dlt{\delta}
\newcommand\Ltil{\widetilde{\LL}}
\def\Ytt#1{\widetilde{\YY}^{\pr{#1}}}
\newcommand\bet{\beta}
\newcommand\gam{\gamma}
\def\zerom#1#2{\zero_{\abs{#1}\times \abs{#2}}}
\def\St#1{\SS^{\pr{#1}}}
\def\Stt#1{\widetilde{\SS}^{\pr{#1}}}
\def\iv#1{\pr{#1}\inv}
\def\tpp#1{\pr{#1}\tp}
\def\Ltt#1{\widetilde{\LL}^{\pr{#1}}}
\def\Att#1{\widetilde{\AA}^{\pr{#1}}}
\newcommand\Sap{\widehat{\SS}}
\def\Xtt#1{\widetilde{\XX}^{\pr{#1}}}
\newcommand\Xap{\widetilde{\XX}}
\def\EY#1{\widehat{\EE}^{\pr{#1}}}
\def\EXX#1{\EE_{\XX}^{\pr{#1}}}
\newcommand\EX{\EE_{\XX}}
\newcommand\ex{\xx}
\newcommand\ey{\yy}
\def\Gt#1{\GG^{\pr{#1}}}
\def\Wt#1{\WW^{\pr{#1}}}
\def\Uvc#1{\mathcal{V}^{\rm G}\pr{#1}}
\def\zerov#1{\zero_{\abs{#1}}}
\newcommand\epsz{\eps_0}
\def\putmat#1{\mathcal{P}\pr{#1}}
\def\uni#1{\mathcal{R}\pr{#1}}
\def\psit#1{\psi^{\pr{#1}}}
\newcommand\xn{\xx_{\rm N}}
\newcommand\xu{\xx_{\rm U}}
\newcommand\ym{\yy_{\rm M}}
\newcommand\yu{\yy_{\rm U}}
\def\Mtt#1{\bm{\widetilde{\mathcal{M}}}^{\pr{#1}}}
\def\Ntt#1{\widehat{\NN}^{\pr{#1}}}
\def\Ett#1{\EE^{\pr{#1}}}
\def\unierr#1{\uni{#1}}
\newcommand\Errap{\bm{\mathcal{E}}}
\def\Errt#1{\Errap^{\pr{#1}}}
\newcommand\repmat{\emph{``repetition matrix"}}
\def\rep#1{{\rm Rep}\pr{#1}}
\def\repp#1{{\rm Rep^{+0}}\pr{#1}}
\def\repFC#1{{\rm Rep}\pr{#1}}
\def\Ers#1{\bm{\mathcal{E}}^{\pr{1:#1, #1}}}
\newcommand\Rap{\widehat{\RR}}
\def\poly#1{{\rm poly}\pr{#1}}
\newcommand\Dap{\widehat{\DD}}
\def\Ltilt#1{\widetilde{L}^{\pr{#1}}}
\def\Utilt#1{\widetilde{U}^{\pr{#1}}}
\def\Mpt#1{\MM^{\pr{#1}}}
\def\DS#1{\widetilde{D}^{\pr{#1}}}
\newcommand\bigcdot{}
\newcommand\PA{\PP_{\AA}}
\def\narr#1#2{\left\| #2 \right\|_{#1 \arr #1}}
\newcommand\Lap{\widetilde{\LL}}
\newcommand\strc{strongly connected}
\newcommand\Bap{\widetilde{\BB}}
\def\puts#1{2^{#1}\Fn + \Cn}
\def\spanrm#1{{\rm span}\pr{#1}}
\def\UG#1{\mathcal{U}^{\rm G}\br{#1}}
\newcommand\Fn{\abs{F}}
\newcommand\Cn{\abs{C}}
\newcommand\Contil{\widetilde{\Con}}
\def\XL#1{\mathcal{\QQ}^{\pr{#1}}}
\newcommand\Zhat{\widehat{\ZZ}}
\def\xhatt#1{\widehat{\xx}^{\pr{#1}}}
\newcommand\Ahat{\widehat{\AA}}
\def\Mzoo#1{\MM^{\pr{#1}}}
\def\Nzoo#1{\NN^{\pr{#1}}}
\def\Uzoo#1{\UU^{\pr{#1}}}
\def\betzoo#1{\bet^{\pr{#1}}}
\def\rhozoo#1{\rho^{\pr{#1}}}
\def\muzoo#1{\mu^{\pr{#1}}}
\def\bzoo#1{b^{\pr{#1}}}
\def\azoo#1{a^{\pr{#1}}}
\newcommand\aleq{\stackrel{\mathrm{asym}}{\pleq}}
\newcommand\asymb{\emph{asymmetrically bounded}}
\def\Ly#1{\boldsymbol{\Phi}\pr{#1|\DD_{FF}, F}}
\def\Upm#1{\U{\begin{pmatrix}#1 \end{pmatrix} }}
\begin{document}

\title{
Sparsified Block Elimination for Directed Laplacians
}

\author{
    Richard Peng\\
    University of Waterloo\\
    \texttt{y5peng@uwaterloo.ca}
    \and
    Zhuoqing Song  \\
    Fudan University \\
    \texttt{zqsong19@fudan.edu.cn}
}

\date{}

\maketitle
\thispagestyle{empty}

\begin{abstract}
We show that the sparsified block elimination algorithm for solving undirected Laplacian linear systems from [Kyng-Lee-Peng-Sachdeva-Spielman STOC'16] directly works for directed Laplacians. Given access to a sparsification algorithm that, on graphs with $n$ vertices and $m$ edges, takes time $\mathcal{T}_{\rm S}(m)$ to output a sparsifier with $\mathcal{N}_{\rm S}(n)$ edges, our algorithm solves a directed Eulerian system on $n$ vertices and $m$ edges to $\epsilon$ relative accuracy in time
$$ O(\mathcal{T}_{\rm S}(m) + {\mathcal{N}_{\rm S}(n)\log {n}\log(n/\epsilon)}) + \tilde{O}(\mathcal{T}_{\rm S}(\mathcal{N}_{\rm S}(n)) \log n), $$
where the $\tilde{O}(\cdot)$ notation hides $\log\log(n)$ factors. By previous results, this implies improved runtimes for linear systems in strongly connected directed graphs, PageRank matrices, and asymmetric M-matrices. When combined with slower constructions of smaller Eulerian sparsifiers based on short cycle decompositions, it also gives a solver that runs in $O(n \log^{5}n \log(n / \epsilon))$ time after $O(n^2 \log^{O(1)} n)$ pre-processing. At the core of our analyses are constructions of augmented matrices whose Schur complements encode error matrices.
\end{abstract}


\section{Introduction}

The design of efficient solvers for systems of linear equations
in graph Laplacian matrices and their extensions has
been a highly fruitful topic in algorithms.
Laplacian matrices directly correspond to undirected graphs:
off-diagonal entries are negations of edge weights,
while the diagonal entries contain weighted degrees.
Solvers for Laplacian matrices
led to breakthroughs in fundamental problems
in combinatorial optimization.
Tools developed during such studies have in turn
influenced data structures,
randomized numerical linear algebra,
scientific computing,
and network science~\cite{S10,T10}.

An important direction in this Laplacian paradigm of
designing graph algorithms is extending tools developed
for undirected Laplacian matrices to directed graphs.
Here a perspective from random walks and Markov chains
leads to directed Laplacian matrices~\cite{CKPPSV16}.
Such matrices have directed edge weights in off-diagonal
entries, and weighted out-degrees on diagonals.
In contrast to solving linear systems in undirected Laplacians,
solving linear systems in directed Laplacians is
significantly less well-understood.
Almost-linear time~\cite{cohen2017almost}
and nearly-linear time solvers~\cite{cohen2018solving}
were developed very recently,
and involve many more moving pieces.

In particular, the nearly-linear time algorithm
from~\cite{cohen2018solving} combined
block Gaussian elimination with single variables/vertex
elimination, analyzed using matrix Martingales.
In contrast, for undirected Laplacians,
both block elimination~\cite{kyng2016sparsified}
or matrix Martingales~\cite{KS16} can give different
nearly-linear time solver algorithms, and there also
exists more combinatorial approaches~\cite{KOSZ13}.
In this paper, we simplify this picture for directed
Laplacian solvers by providing an analog of
the sparsified Cholesky/multi-grid solver from~\cite{kyng2016sparsified}.
This algorithm's running time is close to
the limit of sparsification based algorithms: the running
time of invoking a sparsification routine on its own output.
Formally, we show:
\newcommand\StrcEulerianLapSolveBlk{
    Given a \strc\ Eulerian Laplacian $\LL\in\MS{n}{n}$ and an error parameter $\eps\in (0, 1)$, we can process it in time  $O\pr{\TSE\pr{m, n, 1} } + \Otil{\TSE\pr{\NSE\pr{n, 1}, n, 1}\log n  }$ so that, with high probability,
    given any vector $\bb\in \Real^n$ with $\bb\perp \one$,
    we can  compute  a vector $\xx\in \Real^n$ in time
    $O\pr{\NSE\pr{n, 1}\log n \log\pr{n/\eps}}$  such that
  \eq{
    \nA{\U{\LL}}{\xx - \LL\dg\bb} \leq \eps \nA{\U{\LL}}{\LL\dg \bb},
  }
  where $\U{\LL} = \pr{\LL + \LL\tp}/2$.
}
\begin{theorem}\label{thm:TSENSEsolver1}
  \StrcEulerianLapSolveBlk

\end{theorem}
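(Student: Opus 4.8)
The plan is to transplant the sparsified block-elimination (``sparsified Cholesky''/multigrid) scheme of~\cite{kyng2016sparsified} to the Eulerian setting, substituting the Eulerian sparsification oracle for the undirected one and replacing every Loewner comparison by its asymmetric analogue $\aleq,\ageq$. There are two phases. The preprocessing phase (call it \SCC) builds a short chain of successively smaller Eulerian Laplacians $\LL = \LL^{(0)}, \LL^{(1)}, \dots, \LL^{(d)}$ together with the elimination data at each level; the solve phase runs $O\pr{\log(n/\eps)}$ iterations of an asymmetric preconditioned Richardson iteration (\PRI) in which one application of the preconditioner is a single downward-and-upward pass over this chain.

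\textbf{The Schur-complement chain.} At level $i$ we call \FindRCDD\ to extract a subset $F_i$ of the current vertex set $V_i$ with $\abs{F_i} = \Omega\pr{\abs{V_i}}$ such that the principal block $\LL^{(i)}_{F_i F_i}$ is $\alpha$-RCDD for an absolute constant $\alpha$; such a block exists for Eulerian Laplacians and can be found cheaply. The Schur complement $\sc{\LL^{(i)}/F_i}$ onto $C_i := V_i \setminus F_i$ is again an Eulerian Laplacian, but dense, so rather than form it we invoke the sparse-Schur-complement routine \SparseSchur\ --- which realizes the needed random-walk/biclique structure without materializing the dense intermediate --- to obtain directly an Eulerian sparsifier $\LL^{(i+1)}$ with at most $\NSE\pr{\abs{V_i},1}$ edges. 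Since $\abs{V_i}$ decreases geometrically, the chain has depth $d = O(\log n)$, $\sum_i \abs{V_i} = O(n)$, and $\LL^{(i)}$ has $O\pr{\NSE(n,1)}$ edges for each $i \ge 1$; hence the sparsification work is $\TSE(m,n,1)$ at the top level plus $O(\log n)$ further calls on inputs of size $\NSE(n,1)$, i.e.\ $\Otil{\TSE\pr{\NSE(n,1),n,1}\log n}$ overall, which is exactly the claimed preprocessing bound (the $\log\log n$ slack covers boosting each sparsifier's success probability enough that a union bound over the $O(\log n)$ levels still yields high probability).

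\textbf{Error control via augmented matrices.} The genuinely new point compared to~\cite{kyng2016sparsified} is that the per-level sparsification errors $\Err{i} := \LL^{(i+1)} - \sc{\LL^{(i)}/F_i}$ cannot be composed through the recursion by a naive triangle inequality: an error pushed through $\bigl(\LL^{(i)}_{F_i F_i}\bigr)\inv$ inside the next Schur complement, or through the chained approximate inverses in the solve operator, may be amplified, and asymmetric approximations are not closed under products. To avoid propagating errors level by level we instead, following the idea flagged in the abstract, assemble from the blocks $\LL^{(i)}_{F_i F_i}, \LL^{(i)}_{F_i C_i}, \LL^{(i)}_{C_i F_i}$ and the sparsifiers $\LL^{(i+1)}$ a single \emph{augmented matrix} $\aug{\cdot}$ on an enlarged vertex set whose Schur complement onto the original vertices $V_0$ is exactly the operator the chain implements --- equivalently, whose Schur complement realizes the accumulated error of this operator against $\LL$. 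Because Schur complements of asymmetrically bounded matrices are again asymmetrically bounded, and the only ``new'' ingredients built into $\aug{\cdot}$ are the single-level sparsifications, whose degree-normalized asymmetric error the oracle bounds, a single bound on $\aug{\cdot}$ transfers to a bound on the global preconditioned operator. Making this construction precise --- exhibiting $\aug{\cdot}$ explicitly, verifying that it is an appropriately normalized RCDD-augmented matrix to which the Schur-complement and $\aleq/\ageq$ calculus applies, and checking that its Schur complement is indeed the implemented operator --- is the main obstacle; once it is done the error guarantee falls out, together with the fact that $\alpha$-RCDD-ness of the $\LL^{(i)}_{F_i F_i}$ makes their implicit inverses applicable to negligible error by $O(1)$ Jacobi/Richardson steps.

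\textbf{The solve.} Applying $\ZZ \approx \LL\dg$ to a vector is one pass over the chain: at level $i$ solve the $\alpha$-RCDD block $\LL^{(i)}_{F_i F_i}$ to high accuracy in $O(\abs{F_i})$ time and recurse on $\LL^{(i+1)}$, for $O\pr{\sum_i \NSE(\abs{V_i},1)} = O\pr{\NSE(n,1)\log n}$ work per application (the $\log n$ absorbing the inner per-level iterations). Since $\LL$ is Eulerian we have $\U{\LL} \succeq 0$ and $\LL$ is $\U{\LL}$-asymmetrically bounded, so \PRI\ is valid; the error bound from the augmented-matrix analysis makes $\ZZ$ a good enough preconditioner that $O\pr{\log(n/\eps)}$ iterations push the residual below $\eps$ in the $\U{\LL}$-norm, i.e.\ $\nA{\U{\LL}}{\xx - \LL\dg\bb} \le \eps\,\nA{\U{\LL}}{\LL\dg\bb}$. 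Multiplying the per-application cost by the iteration count gives the stated solve time $O\pr{\NSE(n,1)\log n\log(n/\eps)}$.
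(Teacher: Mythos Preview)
Your high-level architecture is right and matches the paper: extract an $\alpha$-RCDD block with \FindRCDD, approximate its Schur complement with \SparseSchur, recurse for $d=O(\log n)$ levels, then precondition Richardson. But you misplace the paper's main technical contribution and hand-wave over the hardest step.

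The augmented matrices are \emph{not} used to compose errors across the $O(\log n)$ levels of the chain. Chain-level error is handled by the simpler device of taking decreasing per-level tolerances $\dlt_i = O(1/i^2)$ so that $\sum_i \dlt_i \le \frac14$, defining the cumulative approximation $\Lap$ (the block-LU factorization the chain implements exactly) and an error-measuring PSD matrix $\Bap$ built from Schur complements of $\Lap$, and then invoking a lemma from~\cite{cohen2018solving} that gives $\tpp{\Lap\dg}\Bap\Lap\dg \pleq \Bap\dg$. That is enough to get $\narr{\Bap}{\Con - \Con\Zhat\LL}\le \frac12$.

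Where the augmented matrices actually enter is \emph{inside} \SparseSchur, which you treat as a black box. The approximate Schur complement at one level is computed by \emph{partial block elimination}: iterate the identity $(\DD-\AA)^{-1}=\tfrac12\bigl(\DD^{-1}+(\II+\DD^{-1}\AA)(\DD-\AA\DD^{-1}\AA)^{-1}(\II+\AA\DD^{-1})\bigr)$ on the $(F,F)$ block, which squares $\DD_{FF}^{-1}\AA_{FF}$ and converges quadratically, so $K=O(\log\log n)$ steps suffice; each step's product is sparsified. The difficulty is that bounding the approximate $\Ltt{k}$ against the exact $\Lt{k}$ by direct recursion blows up. The paper instead writes $\Lt{k}$ as the Schur complement of an explicit $(2^k\Fn+\Cn)$-sized Eulerian Laplacian $\Mt{0,k}$ built only from blocks of $\LL$, and $\Ltt{k}$ as the Schur complement of $\Mtt{0,k}=\Mt{0,k}+\text{error}$; a new Schur-complement robustness lemma then pushes the error through. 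Your assertion that ``Schur complements of asymmetrically bounded matrices are again asymmetrically bounded'' is not a free fact --- it is the content of that lemma and requires $\alpha$-RCDD structure on the eliminated block, control of $\U{\NN}$ against the PSD reference, and a smallness condition on the error.

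Two smaller corrections: the inner Richardson on each $\Stt{i}_{F_iF_i}$ needs $N=O(\log n)$ iterations, not $O(1)$, because the residual must be polynomially small relative to $\la_2(\Bap)$; and the $\log n$ in the per-solve cost comes from this $N$, not from summing $\NSE(\abs{V_i},1)$ over levels --- that sum is $O(\NSE(n,1))$ by geometric decay of $\abs{V_i}$.
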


This result improves the at least $\Omega(\log^{5}n)$
factor overhead upon sparsification of the previous
nearly-linear time directed Laplacian solver~\cite{cohen2018solving},
and is analogous to the current best overheads for
sparsification based solvers for
undirected Laplacians~\cite{kyng2016sparsified}.
From the existence of sparsifiers of size $O(n\log^4 n\eps^{-2})$~\cite{CGPSSW18},
we also obtain the existence of $O(n \log^{5}n\log(n/\eps))$ time solver
routines that require quadratic time preprocessing to compute.
As with other improved solvers for directed Laplacians
our improvements directly applies to applications
of such solvers,
including random walk related quantities~\cite{CKPPSV16},
as well as PageRank / Perron-Frobenius vectors~\cite{AJSS19}.

Our result complements recent developments
of better sparsifiers of Eulerian Laplacians~\cite{CGPSSW18,LSY19,PY19}.
By analyzing a pseudocode that's entirely analogous
to the undirected block-elimination algorithm from~\cite{kyng2016sparsified},
we narrow the gap between Laplacian solvers for
directed and undirected graphs.
Our result also emphasizes the need for better directed
sparsification routines.
While there is a rich literature on
undirected sparsification~\cite{BSST13},
the current best directed sparsification algorithms rely on expander
decompositions, so have rather large logarithmic factor overheads.
We discuss such bounds in detail in Appendix~\ref{sec:sparsify}.

Finally, our analysis of this more direct algorithm
require better understanding the accumulation errors
in Eulerian Laplacians and their partially eliminated
states, known as Schur Complements.
It was observed in~\cite{cohen2018solving} that these
objects are significantly less robust than their
undirected analogs.
Our analysis of these objects rely on augmentations
of matrices: constructing larger matrices
whose Schur complements correspond to the final objects
we wish to approximate, and bounding errors on these larger matrices
instead.
This approach has roots in symbolic computation,
and can be viewed as a generalization of
low-rank perturbation formulas such as Sherman-Morrison-Woodbury~\cite{B21}.
We believe both this algebraic technique,
and the additional robustness properties of
directed Schur Complements we show,
may be of independent interest.

\subsection{Related Works}

Directed Laplacian matrices arise in problems related
to directed random walks / non-reversible Markov chains,
such as computations of stationary distributions,
hitting times and escape probabilities.
A formal treatment of applying an Eulerian solver to
these problems can be found in~\cite{CKPPSV16}
and~\cite{AJSS19}.
Adaptations of Eulerian Laplacian solvers have also led
to improved bounded-space algorithms for estimating
random walk probabilities~\cite{AKMPS20}.

Our algorithm is most closely related to the
previous nearly-linear time directed Laplacian solver~\cite{cohen2018solving}.
That algorithm is motivated by single variable elimination
and a matrix Martingale based analysis.
However, it invokes both components of block elimination
algorithms: finding strongly diagonally dominant subsets,
and invoking sparsification as black-boxes.
The runtime overhead of this routine over sparsification
is at least $\log^{5}n$:
in~\cite{cohen2018solving}\footnote{arXiv version
1~\url{https://arxiv.org/pdf/1811.10722v1.pdf}},
Lemma 5.1 gives that each phase (for a constant
factor reduction) invokes sparsification
$O(\log^{2}n)$ times, and each call
is ran with error at most $\frac{1}{O(\log^{3}n)}$
(divided by $\log^{2}n$ in Line 2 of Algorithm 2,
and also by $\log{n}$ in Line 5 of Algorithm 3).

While our algorithms are directed analogs of the
undirected block elimination routines
from in~\cite{kyng2016sparsified},
our analyses rely on many structures
developed in~\cite{cohen2018solving}.
Specifically, our cumulative error during elimination steps
is bounded via the matrix that's the sum of undiretifications
of the intermediate directed matrices.
On the other hand, we believe our algorithm is more natural:
our sampling no longer needs to be locally unbiased,
the per-step errors do not need to be decreased by polylog factors,
and the algorithm is no longer divided into inner/outer phases.
This more streamlined algorithm leads to our runtime improvements.

Our Schur Complement sparisifcation algorithm is based
on the partial block elimination routine from~\cite{kyng2016sparsified},
which is in turn based on a two-term decomposition
formula for (pseudo-)inverses from~\cite{peng2014efficient}.
We remark that there is a good sparsification routine in the low space setting~\cite{AKMPS20}.
There is a subsequent algorithm that replaces this decomposition
with directly powering via random walks~\cite{CCLPT15}
that's also applicable for sparsifying undirected Schur Complements.
However, that algorithm relies on sparsifying $3$-step
random walk polynomials, which to our knowledge,
is a subroutine that has not been studied in directed settings.
As a result, we are unable to utilize this later
development directly.

The existence of $O(n \log^{4}n)$ sized sparsifiers
in~\cite{CGPSSW18} relies on decomposing unit weighted
graphs into short cycles and $O(n)$ extra edges.
While this decomposition has a simple $O(m^2)$ time
algorithm (peel off all vertices with degree $<$ 3,
then return the lowest cross-edge in the BFS tree),
the current fastest construction of it takes
$m^{1 + o(1)}$ time~\cite{CGPSSW18,PY19,LSY19}.
As a result, we need to instead invoke the more expensive,
graph decomposition based, algorithms from~\cite{cohen2017almost}
for sparsification.
Also, we can only use the naive $O(m^2)$ construction of
$O(\log{n})$-lengthed cycle decompositions
(after an initial sparsification call to make
$m = O(n\log^{O(1)}n)$)
because the almost-linear time algorithm
in~\cite{LSY19} produces $O(\log^{2}{n})$-lengthed cycles.

\section{Preliminary }\label{sec:notation}

\subsection{Notations}
\begin{flushleft}

\textbf{General Notations:}
The notation $\Otil{\cdot}$ suppresses the
$polyloglog(n)$ factors in this paper.

We let $[n] = \dr{1, 2, \cdots, n}$.
For matrix $\AA$, $\nnz{\AA}$ denotes its number of nonzero entries.

For matrix $\AA\in \MS{n}{n}$ and subsets $T_1, T_2\sleq [n]$, $\AA_{T_1 T_2}\in \MS{\abs{T_1}}{\abs{T_2 }}$ is the submatrix containing the entries with row indexes and column indexes in $(T_1, T_2)$; and $\AA_{-T_1, - T_2}$ is the submatrix of $\AA$ by removing the rows indexed by $T_1$ and columns indexed by $T_2$.
For vector $\vv\in \Real^n$ and subset $C\sleq [n]$, $\vv_{C}$ is the subvector of $\vv$ containing the entries indexed by $C$.
\\~\\

\textbf{Matrix:}
We use $\II_{a}$, $\zero_{b\times c}$ to denote the identity matrix of size $a$ and the $b$-by-$c$ zero matrix,  and we sometimes omit the subscripts when their sizes can be determined from the context.
For any matrix $\XX\in \MatSize{a}{b}$ and set $T_1, T_2 \sleq [n]$ with $\abs{T_1} = a$, $\abs{T_2} = b$,
$\putmat{\XX, T_1, T_2, n} $ denotes an $n$-by-$n $ matrix whose submatrix indexed by $\pr{T_1, T_2}$ equals $\XX$ and all the other entries equal $0$.
In other words, $\putmat{\XX, T_1, T_2, n}$ can be regarded as replacing the submatrix indexed by $\pr{T_1, T_2}$ with $\XX$ in the zero matrix $\zero_{{n}\times {n}}$.

For symmetric matrix $\AA\in\MS{n}{n}$, we use $\la_i\pr{\AA}$ to denotes its $i$-th smallest eigenvalue.
For symmetric matrices $\AA, \BB\in \MS{n}{n}$, we use $\AA \pgeq \BB$ $(\AA \succ \BB)$  to indicate that for any $\xx\in \Real^n$, $\xx\tp\AA\xx \geq \xx\tp\BB\xx$ $(\xx\tp\AA\xx > \xx\tp\BB\xx)$.
We define $\pleq$, $\prec$ analogously.
A square matrix $\AA\in\MS{n}{n}$ is positive semidefinite (PSD) iff $\AA$ is symmetric and $\AA\pgeq \zero$;
 $\AA\in \MS{n}{n}$ is positive definite (PD) iff $\AA$ is symmetric and $\AA \succ \zero$.

For PSD matrix $\AA$, $\AA^{1/2}$ is its square root; $\AA\dg$ denotes its Moore-Penrose pseudoinverse; $\AA^{\dagger/2}$ is the square root of its Moore-Penrose pseudoinverse.
\\~\\

\textbf{Vector:}
$\one_a$, $\zero_b$ denote the $a$-dimensional all-ones vector and $b$-dimensional all-zeros vector; when their sizes can be determined from the context, we sometimes omit the subscripts.

For matrix $\AA\in \MS{n}{n}$, $\Diag{\AA}$ is an $n$-by-$n$ diagonal matrix with the same diagonal entries as $\AA$.
For vector $\xx\in \Real^n$, $\Diag{\xx}$  denotes an $n$-by-$n$ diagonal matrix with its $i$-th diagonal entry equalling $\xx_i$.

For any positive semidefinite matrix $\AA$, we define the vector norm $\nA{\AA}{\xx} = \sqrt{\xx\tp\AA\xx}$.
And $\nm{\cdot}_{p}$ denotes the $\ell^p$ norm.
\\~\\

\textbf{Matrix norm:}
$\nm{\cdot}_p$ denotes the $\ell^p$ norm.
For instance, for $\AA\in \MS{n}{n}$, $\nt{\AA} = \sqrt{\la_n\pr{\AA\tp\AA}}$; $\ni{\AA} = \max_{i\in [n]}\sum_{j=1}^{n}\abs{\AA_{ij}}$.
For matrix $\BB\in \MS{n}{n}$ and PSD matrix $\AA\in \MS{n}{n}$, we denote $\narr{\AA}{\BB} = \sup_{\nA{\AA}{\xx}\neq 0}\frac{\nA{\AA}{\BB\xx}}{\nA{\AA}{\xx}}$.
\\~\\

\textbf{Schur complement:}
For $\AA\in \MS{n}{n}$ and $F, C$ a partition of $[n]$ such that $\AA_{FF}$ is nonsingular, the Schur complement of $F$ in $\AA$ is defined as $\sc{\AA, F} = \AA_{CC} - \AA_{CF}\AA_{FF}\inv\AA_{FC}$.

When we need to emphasize the support set of the
entries that remain,
we also denote $\sc{\AA, - C} = \sc{\AA, F}$.

\subsection{(Directed) Laplacians, Symmetrizations}

A matrix $\LL\in \MS{n}{n}$ is called a directed Laplacian iff $\one\tp\LL = \zero\tp$ and all off-diagonal entries of $\LL$ are non-positive, i.e., $\LL_{ii} = - \sum_{j: j\neq i}\LL_{ji} $ for all $i\in [n]$ and $\LL_{ij} \leq 0$ for all $i\neq j$.
A (directed) Laplacian $\LL$ can be associated with a (directed) graph $\calG[{\LL}]$ whose adjacency matrix is $\Ahat = \Diag{\LL} - \LL\tp$.
The in-degrees/out-degrees of $\LL$ are defined as the in-degrees/out-degrees of $\calG[\LL]$.
For directed Laplacians, its out-degrees equal
its diagonal entries.
If $\calG[{\LL}]$ is strongly connected, we say the (directed) Laplacian $\LL$ is strongly connected.

In addition, if $\LL\one = \zero$, we call $\LL$ an Eulerian Laplacian.
These Laplacians have the property that in-degrees of vertices
equal to out-degrees.
The undirected Laplacian is a special case where $\LL = \LL\tp$.
We often refer to these as symmetric Laplacians,
or just Laplacians.
\\~\\

\textbf{Symmetrization:} For square matrix $\AA\in \MS{n}{n}$, we define its matrix symmetrization as $\U{\AA} = \frac{\AA + \AA\tp}{2}$.
For a directed Laplacian $\LL\in \MS{n}{n}$, we define its undirectification as $\UG{\LL} = \frac{1}{2}\pr{\LL + \LL\tp - \Diag{\pr{\LL + \LL\tp}\one}}. $
$\UG{\LL}$ is called the undirectification because it is a symmetric Laplacian whose adjacency matrix is $\U{\Ahat}$, where $\Ahat = \Diag{\LL} - \LL\tp$ is the adjacency matrix of $\calG[{\LL}]$.
For an Eulerian Laplacian $\LL$, its matrix symmetrization coincides with its undirectification, i.e., $\U{\LL} = \UG{\LL}$.
Eulerian Laplacians are critically important in solvers
for directed Laplacians because they are the only setting
in which the undirectification is positive semidefinite.
\\~\\

\textbf{Row Column Diagonal Dominant (RCDD):} A square matrix $\AA\in \MS{n}{n}$ is $\alp$-RCDD iff $\sum_{j\in [n]\dele \dr{i} }\abs{\AA_{ij}} \leq \frac{1}{1 + \alp}\AA_{ii}$ and $\sum_{j\in [n]\dele \dr{i} }\abs{\AA_{ji}} \leq \frac{1}{1 + \alp}\AA_{ii}$ for any $i\in [n]$. We also say $\AA$ is RCDD if $\AA$ is $0$-RCDD.

\end{flushleft}

\subsection{Sparsification}\label{sec:sparseblkb}

All almost-linear time or faster solvers for directed Laplacians
to date are built around sparsification: the approximation of
graphs by ones with fewer edges.
As it's difficult to even approximate reachability of directed
graphs, \cite{cohen2017almost} introduced the key idea of measuring
approximations w.r.t. a symmetric PSD matrix.
Such approximations are at the core of all subsequent
algorithms, including ours.

\begin{definition}\label{def:aleq1}
    (Asymmetrically bounded)
    Given a matrix $\AA\in \MS{n}{n}$ and a PSD matrix $\UU\in\MS{n}{n}$, $\AA$ is \asymb\ by $\UU$ iff $\ker\pr{\UU} \sleq \ker\pr{\AA\tp}\cap \ker\pr{\AA}$ and $\ndd{\UU}{\AA} \leq 1$.
    We denote it by $\AA \aleq \UU$.

\end{definition}
By our definition, $\AA \aleq \UU$ is equivalent to $-\AA \aleq \UU$.
The following lemma is changed slightly from Lemma~B.2 of~\cite{cohen2017almost}.
\newcommand\lemne{
    For any matrix $\AA\in \MS{n}{n}$ and PSD matrix $\UU\in \MS{n}{n}$, the following statements are equivalent:
    \begin{itemize}
      \item $\AA \aleq \UU.  $ 

      \item $2\xx\tp\AA\yy \leq \xx\tp\UU\xx + \yy\tp\UU\yy,\ \forall \xx, \yy \in \Real^n. $

    \end{itemize}
}
\begin{fact}\label{lem:ne}
    \lemne

\end{fact}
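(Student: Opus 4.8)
The statement is a clean equivalence between the asymmetric-boundedness relation $\AA \aleq \UU$ and the bilinear inequality $2\xx\tp\AA\yy \leq \xx\tp\UU\xx + \yy\tp\UU\yy$ for all $\xx,\yy$. The plan is to unfold both conditions through the substitution $\xx = \UU^{\dagger/2}\uu$, $\yy = \UU^{\dagger/2}\vv$, which reduces everything to a statement about the matrix $\MM \defeq \UU^{\dagger/2}\AA\UU^{\dagger/2}$ and about the kernel hypothesis, and then apply the standard characterization of the operator norm $\nt{\MM}$ in terms of bilinear forms.

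First I would handle the kernel condition and check it is implied by the bilinear inequality on its own (so it does not need to be assumed separately when going from the second bullet to the first). If $\yy \in \ker(\UU)$, then the right-hand side $\xx\tp\UU\xx + \yy\tp\UU\yy = \xx\tp\UU\xx$ is independent of $\yy$, while the left-hand side is $2\xx\tp\AA\yy$; applying the inequality to $t\yy$ for all $t\in\Real$ forces $\xx\tp\AA\yy = 0$ for all $\xx$, hence $\AA\yy = 0$, i.e.\ $\ker(\UU)\sleq\ker(\AA)$. Symmetrically (swapping the roles of $\xx$ and $\yy$, using that the RHS is symmetric in $\xx,\yy$) we get $\ker(\UU)\sleq\ker(\AA\tp)$. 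Conversely, if $\AA\aleq\UU$ then $\ker(\UU)\sleq\ker(\AA)\cap\ker(\AA\tp)$ holds by definition. So in both directions the kernel containment is available, and on the orthogonal complement of $\ker(\UU)$ the matrix $\UU$ is invertible and $\UU^{\dagger/2}$ behaves like a genuine inverse square root.

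Next I would do the norm-to-bilinear reduction. Write any $\xx,\yy\in\Real^n$ as $\xx = \xx_0 + \UU^{\dagger/2}\uu$ and $\yy = \yy_0 + \UU^{\dagger/2}\vv$ with $\xx_0,\yy_0\in\ker(\UU)$ and $\uu,\vv\in\mathrm{range}(\UU)$; since $\ker(\UU)\sleq\ker(\AA)\cap\ker(\AA\tp)$ the cross terms involving $\xx_0$ or $\yy_0$ vanish in $\xx\tp\AA\yy$, and they also vanish in $\xx\tp\UU\xx$, $\yy\tp\UU\yy$. Thus the bilinear inequality for all $\xx,\yy$ is equivalent to $2\uu\tp\MM\vv \le \nt{\uu}^2 + \nt{\vv}^2$ for all $\uu,\vv$ (after absorbing $\UU^{\dagger/2}\UU\UU^{\dagger/2} = \PP$, the orthogonal projection onto $\mathrm{range}(\UU)$, and noting $\uu,\vv$ already lie in that range). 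Meanwhile $\ndd{\UU}{\AA} = \nt{\MM} \le 1$ by definition. So the whole claim collapses to the elementary fact: for a matrix $\MM$, $\nt{\MM}\le 1$ iff $2\uu\tp\MM\vv \le \nt{\uu}^2 + \nt{\vv}^2$ for all $\uu,\vv$. The ``$\Rightarrow$'' direction is Cauchy–Schwarz plus AM–GM: $2\uu\tp\MM\vv \le 2\nt{\uu}\nt{\MM\vv} \le 2\nt{\uu}\nt{\vv} \le \nt{\uu}^2+\nt{\vv}^2$. For ``$\Leftarrow$'', given $\vv$ with $\MM\vv\neq 0$, plug in $\uu = \nt{\vv}\,\MM\vv/\nt{\MM\vv}$ to get $2\nt{\vv}\nt{\MM\vv} \le 2\nt{\vv}^2$, hence $\nt{\MM\vv}\le\nt{\vv}$; taking the supremum over $\vv$ gives $\nt{\MM}\le 1$.

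The only mild subtlety — and the step I would be most careful about — is the bookkeeping with the kernel of $\UU$ and the pseudoinverse square root: making sure that $\UU^{\dagger/2}\UU^{1/2}$ and $\UU^{1/2}\UU^{\dagger/2}$ are both the projection $\PP$ onto $\mathrm{range}(\UU)$, that the decomposition $\xx = \xx_0 + \UU^{\dagger/2}\uu$ is legitimate and exhaustive, and that all the dropped cross terms really are zero under the stated kernel hypothesis. None of this is deep, but it is where an error would hide; the rest is the textbook norm-versus-bilinear-form argument. I would present the proof as: (i) kernel equivalence in both directions, (ii) reduction to $\MM = \UU^{\dagger/2}\AA\UU^{\dagger/2}$, (iii) the elementary $\nt{\MM}\le 1$ characterization.
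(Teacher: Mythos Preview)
Your proof is correct and follows essentially the same approach as the paper. The paper only spells out the kernel-containment direction (via a scaling argument like yours, taking $\xx = c\AA\vv$ and letting $c\to 0^+$) and defers the norm-versus-bilinear-form equivalence to Lemma~B.2 of~\cite{cohen2017almost}; your write-up is simply more self-contained, carrying out that standard reduction to $\nt{\MM}\le 1$ explicitly.
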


\begin{definition}
\label{def:UApprox}
(Approximation of directed Laplacians via undirectification)
Given matrix $\AA\in \MS{n}{n}$ and directed Laplacian $\BB\in \MS{n}{n}$, $\AA$ is an \Lapap{\eps} of $\BB$ iff
$\AA - \BB$ is \asymb\ by $\eps \cdot \UG{\BB}$.
\end{definition}

In particular, for \strc\ Eulerian Laplacians $\AA$ and $\BB$, $\AA$ is an \Lapap{\eps} of $\BB$ iff $\nd{\BB}{\pr{\AA - \BB}} \leq \eps.  $

We will utilize sparsifiers for Eulerian Laplacians~\cite{cohen2017almost,CGPSSW18},
as well as implicit sparsifiers for products of directed
adjacency matrices as black boxes throughout our presentations.
The formal statements of these black boxes are below.

\begin{theorem}\label{thm:SparEoracle1}
    (Directed Laplacian sparsification oracle)
     Given a directed Laplacian $\LL\in \MS{n}{n}$ with $\nnz{\LL} = m$ and error parameter $\dlt \in (0, 1)$, there is an oracle $\SparE$ which  runs in at most $\TSE\pr{m, n, \dlt}$ time,
    where $\TSE\pr{m, n, \dlt} = O((m\log^{O(1)}n + n\log^{O(1)}n) \dlt^{-O(1)})$,
    to return with high probability a directed Laplacian $\Lap$ satisfying:
     \begin{enumerate}[(i)]
       \item $\nnz{\Lap} \leq \NSE\pr{n, \dlt}$
       where $\NSE\pr{n, \dlt} = O(n\log^{O(1)}n \dlt^{-O(1)})$;

       \item $\Diag{\Lap} = \Diag{\LL}$; \label{enum:Diagnece}

       \item 
        $\Lap - \LL \aleq \dlt\cdot \UG{\LL} $. \label{enum:LapLLdltUdef1}

%

     \end{enumerate}

\end{theorem}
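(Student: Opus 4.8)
The plan is to realize $\SparE$ as a thin wrapper around the directed-Laplacian sparsification algorithm of~\cite{cohen2017almost}. Recall its guarantee: given a directed Laplacian $\LL$ with $m$ nonzeros and accuracy parameter $\dlt\in(0,1)$, it returns with high probability a directed Laplacian $\Lap_0$ with $O\pr{n\log^{O(1)}n\,\dlt^{-O(1)}}$ nonzeros, in time $O\pr{(m+n)\log^{O(1)}n\,\dlt^{-O(1)}}$, that is an $O(\dlt)$-asymmetric approximation of $\LL$ in the sense of Definition~\ref{def:UApprox}, i.e.\ $\Lap_0-\LL\aleq O(\dlt)\cdot\UG{\LL}$. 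Running this routine with a constant-factor smaller accuracy $\dlt'=\Theta(\dlt)$, and letting $\TSE\pr{m,n,\dlt}$ and $\NSE\pr{n,\dlt}$ name the resulting time and size functions (which have exactly the stated $O(\cdot)$ forms), immediately yields the running-time claim, property~(i), and a bound $\Lap_0-\LL\aleq\tfrac{\dlt}{2}\,\UG{\LL}$ towards property~(iii). Up to this point the claim is a restatement of~\cite{cohen2017almost}.

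\textbf{Enforcing the diagonal constraint~(ii).} The one requirement not literally contained in the above is $\Diag{\Lap}=\Diag{\LL}$ --- exact preservation of out-degrees, since the diagonal of a directed Laplacian is its out-degree vector. I would handle this in two complementary ways. First, the construction in~\cite{cohen2017almost} can be taken to preserve out-degrees exactly: it decomposes the undirectification into expanders, samples within each piece with a per-vertex-unbiased rule, and can be finished by rescaling all out-edges of each vertex $i$ by a common factor so that $i$'s out-degree equals $\LL_{ii}$; rescaling one column of a directed Laplacian keeps that column balanced and keeps off-diagonal entries non-positive, so the output is still a directed Laplacian. Second, for a black-box sparsifier one may post-process: with $\Lap_0$ produced at accuracy $\dlt/3$, set $\dd:=\Diag{\LL}-\Diag{\Lap_0}$, where Fact~\ref{lem:ne} applied to $\pm(\Lap_0-\LL)\aleq\tfrac{\dlt}{3}\UG{\LL}$ with the test pair $\xx=\yy=\ee_i$ gives $\abs{\dd_i}\le\tfrac{\dlt}{3}(\UG{\LL})_{ii}$; then append at most $n$ correction edges --- for each $i$ with $\dd_i>0$ extra out-weight $\dd_i$ at $i$, and for each $i$ with $\dd_i<0$ a matching reduction of $i$'s out-edges --- routed along edges of $\LL$, producing a directed Laplacian with the prescribed diagonal, $\one\tp\Lap=\zero$, non-positive off-diagonals, and only $O(n)$ extra nonzeros.

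\textbf{Recombining the error, and accounting.} In either route, write $\Lap-\LL=(\Lap_0-\LL)+(\Lap-\Lap_0)$: the first summand is $\aleq\tfrac{\dlt}{2}\UG{\LL}$ by construction, and the second --- the degree correction --- is what needs to be shown $\aleq\tfrac{\dlt}{2}\UG{\LL}$. Since by Fact~\ref{lem:ne} the relation ``$\aleq$'' is additive in its right-hand side ($\MM_1\aleq\UU_1,\ \MM_2\aleq\UU_2\ \Rightarrow\ \MM_1+\MM_2\aleq\UU_1+\UU_2$), and the correction is supported on the vertices touched by $\LL$ so that the kernel condition of Definition~\ref{def:aleq1} is inherited, this gives $\Lap-\LL\aleq\dlt\,\UG{\LL}$, i.e.\ property~(iii). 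The correction costs $O(n)$ time and at most $O(n)$ extra nonzeros, so it does not affect property~(i) or the running-time bound, and the high-probability guarantee is inherited from~\cite{cohen2017almost}.

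\textbf{Main obstacle.} The crux is the bound $\Lap-\Lap_0\aleq\tfrac{\dlt}{2}\UG{\LL}$ on the degree correction. This is \emph{not} a consequence of the pointwise bound $\abs{\dd_i}\le\tfrac{\dlt}{3}(\UG{\LL})_{ii}$, because $\UG{\LL}$ need not dominate $\Diag{\UG{\LL}}$ in the Loewner order, so one cannot charge the correction at vertex $i$ to the single diagonal entry $(\UG{\LL})_{ii}$ --- in particular a vertex whose out-degree is much smaller than its in-degree can force a spectrally large correction if one adds fresh edges there. The resolution is to tie the correction to a rescaling of edges that $\LL$ itself already carries, so that each unit of correction at $i$ is dominated by the corresponding edge-Laplacian inside $\UG{\LL}$ (up to the $O(\dlt)$ sparsification error already paid); equivalently, one appeals directly to the degree-preserving form of the~\cite{cohen2017almost} construction, which sidesteps the issue. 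Everything else is routine parameter bookkeeping and translation of notation.
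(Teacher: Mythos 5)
Your proposal is correct and matches the paper's proof: Appendix~\ref{sec:sparsify} likewise realizes $\SparE$ by citing the Section-3 construction of~\cite{cohen2017almost}, whose built-in patching step keeps row and column sums (hence the diagonal) exactly invariant so that the degree constraint never has to be repaired after the fact, and the paper only adds the expander decomposition of~\cite{saranurak2019expander} to pin down the explicit polylog exponents in $\TSE$ and $\NSE$. Your observation about why a post-hoc degree correction fails --- the pointwise bound $\abs{\dd_i}\le\tfrac{\dlt}{3}\pr{\UG{\LL}}_{ii}$ does not yield a Loewner bound against $\UG{\LL}$ --- is exactly the reason both you and the paper rely on the degree-preserving form of the cited routine rather than a black-box wrapper.
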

\begin{remark}
Conditions~\eqref{enum:Diagnece},~\eqref{enum:LapLLdltUdef1}
in Theorem~\ref{thm:SparEoracle1} above are equivalent to
$\Lap$ and $\LL$ having the same in- and out-degrees,
and $\nG{\LL}{\pr{\Lap - \LL}} \leq \dlt  $ respectively.
By having the same in-degrees and out-degrees, we mean $\Diag{\Lap} = \Diag{\LL}  $ and $\Lap\one = \LL\one$.
\end{remark}

\begin{lemma}\label{lem:SparP}
    (Lemma~3.18 of~\cite{cohen2017almost})
    Let $\xx, \yy\in \Real^n$ be nonnegative vectors with $\nnz{\xx} + \nnz{y} = m$ and let $\eps, p\in (0, 1)$.
    And we denote $\GG = \pr{\one\tp\xx}\Diag{\yy} - \xx\yy\tp$.
    Then, there is a routine $\SparP$ which computes
    with probability at least $1 -  p$
    a nonnegative matrix $\AA$ in $O\pr{m\eps^{-2}\log \frac{m}{p}}$ time such that $\nnz{\AA } = O\pr{m\eps^{-2}\log \frac{m}{p} }$, $\AA - \xx\yy\tp \aleq \eps\cdot \UG{\GG}$.

\end{lemma}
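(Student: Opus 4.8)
The plan is to prove this by leverage-score importance sampling together with a matrix concentration bound. I view $\xx\yy\tp$ as the off-diagonal part of the structured directed Laplacian $\GG$ and sample its $\nnz{\xx}\cdot\nnz{\yy}$ ``edges'' $\xx_i\yy_j\ee_i\ee_j\tp$ down to $O\pr{m\eps^{-2}\log\frac{m}{p}}$ of them. By homogeneity of the claim in $\xx$ and in $\yy$ separately, I first rescale so that $\one\tp\xx = \one\tp\yy = 1$; then $\GG = \Diag{\yy} - \xx\yy\tp$ and $\UG{\GG} = \tfrac12\pr{\Diag{\xx} + \Diag{\yy} - \xx\yy\tp - \yy\xx\tp}$ is the Laplacian of the rank-one-weighted complete bipartite graph on $\supp{\xx}\cup\supp{\yy}$ (edge $\{i,j\}$ has weight $\tfrac12(\xx_i\yy_j + \xx_j\yy_i)$); concretely $\UG{\GG} = \tfrac12\Phi\tp L_B\Phi$ for the fold $\Phi = \mx{\II\\ \II}$ and $L_B = \mx{\Diag{\xx} & -\xx\yy\tp\\ -\yy\xx\tp & \Diag{\yy}}$, and the effective resistance between the two sides of $L_B$ has the closed form $\xx_i\inv + \yy_j\inv - 1$. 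I then draw $N = O\pr{m\eps^{-2}\log\frac{m}{p}}$ i.i.d.\ ``edges'' $(I_k,J_k)$ from a distribution $q_{ij}$ that (a) over-estimates the leverage scores $\xx_i\yy_j\ndd{\UG{\GG}}{\ee_i\ee_j\tp} = \xx_i\yy_j\sqrt{\ee_i\tp\UG{\GG}^\dagger\ee_i}\sqrt{\ee_j\tp\UG{\GG}^\dagger\ee_j}$, (b) factors as a function of $i$ times a function of $j$ (so it is sampled in $O(1)$ time via two alias tables), and (c) is obtained from the closed forms above rather than from actual solves in $\UG{\GG}$; rescaling each sampled term gives the unbiased estimator $\AA = \tfrac1N\sum_{k}\tfrac{\xx_{I_k}\yy_{J_k}}{q_{I_k J_k}}\ee_{I_k}\ee_{J_k}\tp$, which is nonnegative with $\nnz{\AA}\leq N$. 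I also arrange the sampling so that $\AA\one = \xx$ and $\AA\tp\one = \yy$ hold \emph{exactly}, since the definition of $\aleq$ requires $\ker\pr{\UG{\GG}}\sleq\ker\pr{\AA-\xx\yy\tp}\cap\ker\pr{(\AA-\xx\yy\tp)\tp}$ and $\one\in\ker\pr{\UG{\GG}}$.

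Then I bound the error. Since $\UG{\GG}$ is positive semidefinite and the kernel condition holds, $\AA - \xx\yy\tp\aleq\eps\cdot\UG{\GG}$ is equivalent to $\ndd{\UG{\GG}}{\AA-\xx\yy\tp} = \nt{\UG{\GG}^{\dagger/2}(\AA-\xx\yy\tp)\UG{\GG}^{\dagger/2}}\leq\eps$. As $\AA-\xx\yy\tp$ is asymmetric, I control this spectral norm via its Hermitian dilation (equivalently, an $\eps$-net argument on the bilinear form, using Fact~\ref{lem:ne}) and a matrix Bernstein inequality applied to the $N$ i.i.d.\ mean-zero summands $\tfrac1N\UG{\GG}^{\dagger/2}\pr{\tfrac{\xx_{I_k}\yy_{J_k}}{q_{I_k J_k}}\ee_{I_k}\ee_{J_k}\tp - \xx\yy\tp}\UG{\GG}^{\dagger/2}$. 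The two inputs Bernstein needs -- the almost-sure per-term operator norm, and the operator norm of the sum of second moments -- are each $O(L/N)$ where $L := \sum_{i,j}\xx_i\yy_j\sqrt{\ee_i\tp\UG{\GG}^\dagger\ee_i}\sqrt{\ee_j\tp\UG{\GG}^\dagger\ee_j}$ is the total leverage; feeding $N = O\pr{m\eps^{-2}\log\frac{m}{p}}$ and $L = O(m)$ into Bernstein then gives deviation at most $\eps$ with probability at least $1-p$, the second-moment term being what forces $\eps^{-2}$ rather than $\eps^{-1}$.

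The crux is the bound $L = O(m)$: the trivial bound over all $\nnz{\xx}\cdot\nnz{\yy}$ pairs is only $O(m^2)$, and it is exactly the rank-one complete-bipartite structure of $\UG{\GG}$ that rescues it. I would peel $L$ apart using the triangle inequality in the $\UG{\GG}^\dagger$-norm, $\sqrt{\ee_j\tp\UG{\GG}^\dagger\ee_j}\leq\sqrt{\ee_i\tp\UG{\GG}^\dagger\ee_i} + \sqrt{R_{ij}}$ (with $R_{ij}$ the effective resistance between $i,j$ in $\UG{\GG}$), reducing $L$ to $\sum_{i,j}\xx_i\yy_j R_{ij}$ -- which is, up to a factor $2$, the ordinary sum of edge leverage scores of the undirected graph $\UG{\GG}$, hence $\leq 2\,\mathrm{rank}\pr{\UG{\GG}} = O(m)$ -- plus cross terms $\sum_{i,j}\xx_i\yy_j\sqrt{R_{ij}\,\ee_i\tp\UG{\GG}^\dagger\ee_i}$; for the cross terms I would use $\sum_j\yy_j\sqrt{R_{ij}}\leq\sqrt{\sum_j\yy_j R_{ij}}$ together with the closed-form resistance bounds and trace identities such as $\sum_j\yy_j\,\ee_j\tp\UG{\GG}^\dagger\ee_j = \trace{\UG{\GG}^\dagger\Diag{\yy}}$ and the semidefinite domination $\Diag{\xx}+\Diag{\yy} = 2\UG{\GG} + \xx\yy\tp + \yy\xx\tp \pleq 2\UG{\GG} + \xx\xx\tp + \yy\yy\tp$. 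I expect this last step -- making sure that the genuinely large values of $\ee_j\tp\UG{\GG}^\dagger\ee_j$ (which do occur, near vertices carrying tiny $\xx$- or $\yy$-mass) are always multiplied by correspondingly tiny edge weights, so that nothing larger than $O(m)$ survives -- to be the main obstacle, along with the two bookkeeping tasks it is entangled with: extracting from the closed forms a cheap product-form over-estimate $q_{ij}$, and engineering the sampling to preserve row and column sums exactly.

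Finally, the running time and sparsity are immediate: $\nnz{\AA}\leq N = O\pr{m\eps^{-2}\log\frac{m}{p}}$; the normalizations of the product-form leverage over-estimate and the two alias tables are built in $O(m)$ time; and each of the $N$ samples is drawn and recorded in $O(1)$ time.
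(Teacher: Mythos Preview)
The paper does not prove this lemma; it is quoted as Lemma~3.18 of~\cite{cohen2017almost} and used as a black box throughout. So there is no ``paper's proof'' to compare your proposal against; I can only comment on the proposal itself.

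Your template---importance sampling against $\UG{\GG}$, then matrix Bernstein on the Hermitian dilation---is standard and is essentially the approach of~\cite{cohen2017almost}. The two places you yourself flag as obstacles are in fact genuine gaps.

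\emph{Exact marginals.} You correctly observe that the kernel clause in $\aleq$ forces $\AA\one=\xx$ and $\AA\tp\one=\yy$ \emph{exactly}, not just in expectation. But ``I also arrange the sampling so that\ldots'' is not a construction: $N$ i.i.d.\ draws from \emph{any} product-form distribution $q_{ij}$ will hit neither marginal exactly. One must either use a structured sampler (e.g., for each $i\in\supp{\xx}$ draw $O(\eps^{-2}\log(m/p))$ indices from $\yy$, which fixes row sums by design and already gives the claimed $\nnz{\AA}$, then handle columns symmetrically), or patch after the fact---in which case you owe a proof that the patch is entrywise nonnegative and small in the $\UG{\GG}^{\dagger}$ norm. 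You have done neither; and since the statement also demands $\AA\ge 0$, the patching route is not free.

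\emph{The total-leverage bound $L=O(m)$.} Your triangle-inequality-plus-cross-terms sketch does not close. A cleaner route than the one you outline: $L$ factors as $\bigl(\sum_i\xx_i\sqrt{r_i}\bigr)\bigl(\sum_j\yy_j\sqrt{r_j}\bigr)$ with $r_i=(\UG{\GG}^\dagger)_{ii}$, so Cauchy--Schwarz gives $L\le\sqrt{\trace{\UG{\GG}^\dagger\Diag{\xx}}\cdot\trace{\UG{\GG}^\dagger\Diag{\yy}}}$; the identity $\Diag{\xx}+\Diag{\yy}=2\,\UG{\GG}+\xx\yy\tp+\yy\xx\tp$ then yields $\trace{\UG{\GG}^\dagger\Diag{\xx+\yy}}=2\,\mathrm{rank}(\UG{\GG})+2\,\xx\tp\UG{\GG}^\dagger\yy$. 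Everything thus reduces to bounding the single scalar $\xx\tp\UG{\GG}^\dagger\yy$ by $O(m)$, which is where the rank-one structure must be used---but you have not carried this out, and your cross-term argument as written does not substitute for it.
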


Given an Eulerian Laplacian $\LL\in \MS{n}{n} $ and a partition $F, C$ of $[n]$, by invoking \\ $\SparE$ on subgraphs with edges inside $\pr{F, F}$, $(F, C)$, $(C, F)$, $(C, C)$ respectively, we can get a Laplacian sparsification procedure
$\SE$ so that the sparsified Eulerian Laplacians returned by $\SE$ not only satisfy all the properties mentioned in Theorem~\ref{thm:SparEoracle1}, but also keep the in-degrees and out-degrees of the subgraph supported by $\pr{F, F}$.
Analogously, a routine $\SP$ can be constructed by applying $\SparP$ four time.
For explicit definitions of $\SE$ and $\SP$,
see Lemma~\ref{lem:SE}.

\subsection{Sufficiency of Solving Eulerian
Systems to Constant Error}

Previous works on solvers for directed Laplacians and their
generalizations (to RCDD and M-matrices) established
that it's sufficient to solve Eulerian systems to constant
relative accuracy in their undirectification.
\begin{itemize}
    \item The iterative refinement procedure shown in~\cite{cohen2017almost} shows that a constant accuracy
    solver can be amplified to one with $\epsilon$ relative
    accuracy in $O(\log(1 / \epsilon))$ iterations.
    \item The stationary computation procedure in~\cite{CKPPSV16}
    showed that arbitrary strongly connected Laplacians with
    mixing time $T_{mix}$ can be solved to $2$-norm error $\epsilon$
    by solving $O(\log(T_{mix} / \epsilon))$ systems in Eulerian
    Laplacians.
    This was subsequently simplified and extended to M-matrices
    and row-column-diagonally-dominant matrices in~\cite{AJSS19}
    (with an extra $\log{n}$ factor in running time).
    A purely random walk (instead of matrix perturbation) based
    outer loop is also given in the thesis of Peebles~\cite{Peebles19:thesis}.
\end{itemize}

\section{Overview}
\label{sec:overview}

Our algorithm is based on sparse Gaussian elimination.
Before we discuss the block version, it is useful to
first describe how the single variable version works
on Eulerian Laplacians.

Recall that Eulerian Laplacians store the (weighted) degrees
on the diagonal, and the negation of the out edge weights
from $j$ in column $j$.

Suppose we eliminate vertex $j$.
Then we need to add a rescaled version of row $j$ to
each row $i$ where $\LL_{j,:}$ is non-zero.
Accounting for $\LL_{j, j} = \dd_{j}$,
this weight for row $i$ is given by
$\frac{\ww_{j \rightarrow i}}{\dd_{j}}$,
and the corresponding decrease in entry $j, k$ is then
\[
\frac{\ww_{j \rightarrow i} \ww_{k \rightarrow j}}{\dd_{j}}.
\]
In other words, when eliminating vertex $j$,
we add an edge $k \rightarrow i$ for each triple of vertices
$k \rightarrow j \rightarrow i$, with weight given by above.

The effect of this elimination on the vector $b$
can also be described through this `distribution' of
row $j$ onto its out-neighbors.
However, to start with, it's useful to focus on what
happens to the matrix.
The key observation for elimination based Laplacian solvers
is that this new matrix remains a graph.
In fact, it can be checked that this process exactly
preserves the in and out degrees of all neighbors of $i$,
so the graph also remains Eulerian.

However, without additional assumptions on the non-zero
structures such as separators,
directly performing the above process takes $O(n^{3})$ time:
the newly added entries quickly increases the density
of the matrix until each row has $\Theta(n)$ entries.
So the starting point of elimination based Laplacian
solvers is to address the following two problems:
\begin{enumerate}
    \item Keeping the results of elimination sparse.
    \item Find vertices that are easy to eliminate.
\end{enumerate}

\subsection{Block Cholesky}
\newcommand\PrecPosition{
\vspace{0.2cm}
\begin{algorithm}[!htb]
\caption{$\PreC\pr{\dr{\dr{\Stt{i}}_{i=1}^d, \dr{F_i}_{i=1}^d}, \xx, N}$}
\label{alg:precondition}

\KwIn{ $\pr{\alp, \bet, \dr{\dlt_i}_{i=1}^d }$-Schur complement chain $\dr{\dr{\Stt{i}}_{i=1}^d, \dr{F_i}_{i=1}^d}$;
    vector $\xx\in \Real^n $; error parameter $\eps\in (0, 1)$  }

\KwOut{ vector $\xx\in \Real^n$ }


\For{$i = 1, \cdots, d - 1$}{
    $\xx_{F_i} \arl \PRI\pr{\Stt{i}_{F_i F_i}, \xx_{F_i}, \Diag{\Stt{i}}_{F_i F_i}\inv, \frac{1}{2}, N } $  \;
    $\xx_{C_i} \arl \xx_{C_i} - \Stt{i}_{C_i F_i}\xx_{F_i}    $
}

$\xx_{F_d} \arl \pr{\Stt{d}}\dg\xx_{F_d}  $  \;

\For{$i = d - 1, \cdots, 1 $}{
    $\xx_{F_i} \arl \xx_{F_i} - \PRI\pr{\Stt{i}_{F_i F_i}, \Stt{i}_{F_i C_i}\xx_{C_i}, \Diag{\Stt{i}}_{F_i F_i}\inv, \frac{1}{2}, N } $
}

Let  $\xx \arl \xx - \frac{\one\tp\xx}{n}\cdot \one $  \;

Return $\xx$

\end{algorithm}
\vspace{0.2cm}
}

One possible solution to the issue above is to directly
sample the edges formed after eliminating each vertex.
It leads to sparsified/incomplete Cholesky factorization
based algorithms~\cite{KS16,cohen2018solving},
including the first nearly-linear time solver for directed Laplacians.

Our algorithm is based on eliminating blocks of vertices,
and is closest to the algorithm from~\cite{kyng2016sparsified}.
It aims to eliminate a block of $\Omega(n)$ vertices
simultaneously.
This subset, which we denote using $F$, is chosen to
be almost independent.
That is, $F$ is picked so that each vertex in $F$
has at least a constant portion of its out-degree
going to $V \setminus F$, which we denote as $C$.

This property means that any random walk on $F$ exits
it with constant probability.
This intuition, when viewed from iterative methods perspective,
implies that power method allows rapid simulation of
elimination onto $C = V \setminus F$.
From a matrix perspective, it means these
matrices are well-approximated by their diagonal.
So subproblem $\LL_{FF}\inv\bb$ can be solved to high
accuracy in $O\pr{\log n}$ iterations via power method.
We formalize the guarantees of such procedures,
\textsc{PreRichardson} in Lemma~\ref{lem:PRIconverge1}
in Section~\ref{sec:preconditioner}.

\PrecPosition

Compared to single-vertex elimination schemes,
block elimination has the advantage of having
less error accumulation.
Single elimination can be viewed as eliminating $1$
vertex per step, while we will show that
the block method eliminates $\Omega(n)$
vertices in $O(\log\log{n})$ steps.
This smaller number of dependencies in turn provides
us the ability to bound errors more directly.

Formally, given the partition $F, C\sleq [n]$ with the permutation matrix $\PP$ such that $\PP\LL\PP\tp = \mx{\LL_{FF}&\LL_{FC}\\ \LL_{CF}&\LL_{CC}}$,
the block Cholesky factorization of $\LL\in \MS{n}{n}$ is given as
\eq{
    \LL =
    \PP\tp\mx{
        \II_{\abs{F}} & \zerom{F}{C}  \\
        \LL_{CF}\LL_{FF}\inv & \II_{\abs{C}}
    }
    \cdot
    \mx{
        \LL_{FF} & \zerom{F}{C}  \\
        \zerom{C}{F} & \sc{\LL, F}
    }
    \cdot
    \mx{
        \II_{\Fn} & \LL_{FF}\inv\LL_{FC}  \\
        \zerom{C}{F} & \II_{\abs{C}}
    }\PP.
}

Using the above factorization iteratively (with sparsification) generates
a Schur complement chain $\dr{\dr{\Stt{i}}_{i=1}^d, \dr{F_i}_{i=1}^d }$ (Definition~\ref{def:SCC1}),
the solver algorithm loops through these and solves the subproblems
via the projection / prologation maps defined via the
random walk on $F_i$ respectively, using the power-method
based elimination procedure described above.
Its pseudocode is given in Algorithm~\ref{alg:precondition}
for completeness.

We remark that in practice, the iteration numbers of the preconditioned Richardson iterations in Algorithm~\ref{alg:precondition} can differ with each other.
Here, we set a uniform $N$ merely for simplicity.
If we have unlimited (or quadratic) precomputation power, the method described above suffices to give us a fast solver.
However, due to the exact Schur Complements being dense,
the major remaining difficulty is to efficiently
compute an approximate Schur complement.

\subsection{Schur Complement Sparsification via Partial Block Elimination }

\newcommand\SparSchurPseuC{
\vspace{0.2cm}
\begin{algorithm}[!htb]
\caption{$\SparseSchur\pr{\LL, F, \dlt}$}
\label{alg:SparSchur}

\KwIn{
    strongly connected Eulerian Laplacian $\LL\in \MatSize{n}{n}$;
    partition $F, C$ of $[n]$;
    error parameter $\dlt \in (0, 1)$
}

\KwOut{
    Sparse approximate Schur complement $\SS$
}

If $\nnz{\LL} \geq O\pr{\NSE\pr{n, \dlt}}$, call $\SparE$ to sparsify $\LL$ with error parameter $O\pr{\dlt}$  \;
Find a permutation matrix $\PP$ such that $\PP\LL\PP\tp = \mx{\LL_{FF} & \LL_{FC}  \\ \LL_{CF} & \LL_{CC} } $  \label{algline:defpermu1}
  \;
Set $K \arl O\pr{\log\log \frac{n}{\dlt}}$, $\eps \arl O\pr{\frac{\dlt }{K}}$,
$\Ltt{0} \arl \LL $,
$\DD \arl \Diag{\Ltt{0}}$, $\Att{0} \arl \DD - \Ltt{0}$  \;

\For{$k = 1, \cdots, K$}{
    \For{$i \in F$}{
         $\Ytt{k, i} \arl \SP\pr{\Att{k-1}_{:, i}, \tpp{\Att{k-1}_{i,:}}, \eps, F}$
    }
        Let $\Ytt{k} \arl \sum_{i\in F} \frac{1}{\DD_{ii}}\Ytt{k,i}$,
        $\Ltt{k,0} \arrlf
        \PP\tp
        \mx{
             \DD_{FF} & - \Att{k-1}_{FC}  \\
             - \Att{k-1}_{CF} & 2\Ltt{k-1}_{CC}
        }\PP  - \Ytt{k}
        $ \label{line:Lttk01} \;
        $\Ltt{k} \arl \SE\pr{\Ltt{k,0},  \eps, F}$
        and
        $\Att{k} \arl \PP\tp\mx{\DD_{FF} & \\ & \Diag{\Ltt{k}_{CC}} }\PP - \Ltt{k} $ \label{line:Attk1} \;
}
\For{$i  \in F$}{
    $\Xtt{i} \arl \SparP\pr{\Att{K}_{C,i}, \tpp{\Att{K}_{i, C}}, \eps } $
}
Let $\Xap \arl \sum_{i\in F} \frac{1}{\DD_{ii}} \Xtt{i}$ and $\Stt{0} \arrlf \frac{1}{2^K}\pr{\Ltt{K}_{CC} - \Xap}  $ \;
Compute a patching matrix $\RR \in \MatSize{\Cn}{\Cn} $ with $\RR_{2:\Cn,1} = - \Stt{0}_{2:\Cn,:}\one $, $\RR_{1,2:\Cn} = - \one\tp\Stt{0}_{:, 2:\Cn}$, $\RR_{1, 1} = - \RR_{1,2:\Cn}\one - \one\tp\RR_{2:\Cn,1} - \one\tp\Stt{0}\one $, and $\RR_{ij} = 0$ for $i\neq 1$ and $j\neq 1$   \;
Set $\Sap = \Stt{0} + \RR$  \;
Return $\SS = \SparE\pr{\Sap, \dlt/8}$

\end{algorithm}
\begin{remark}
    This permutation matrix $\PP$ defined on line~\ref{algline:defpermu1} of Algorithm~\ref{alg:SparSchur} is only used to simplify the pseudocodes in Lines~\ref{line:Lttk01},~\ref{line:Attk1}. We don't need to construct it in practice.
    We use the same $\DD_{FF}$ in each iteration to simplify our analysis.
    It is possible to replace $\DD_{FF}$ by $\Diag{\Ltt{k-1}}_{FF} $ in iterations $k$ and achieve similar running time.

\end{remark}
}

Thus, the main bottleneck toward an efficient algorithm
is the fast construction of approximate Schur complements.
We will give such an algorithm whose running time is close
to those of sparsification primitives via a partial block
elimination process.

In simple terms, a step of this process
squares the $(F, F)$ block.
Repeating this gives quadratic convergence.
With $\alpha$ about $O(1)$, $O(\log\log{n})$ iterations
suffice, so the resulting error is easier to control than
martingales.

\begin{lemma}
    [\cite{peng2014efficient}]
    For any diagonal matrix $\DD\in\MatSize{n}{n}$ and a matrix $\AA\in \MatSize{n}{n}$ with $\DD - \AA$ nonsingular, we have
    \eql{\label{eq:D-A}}{
        \pr{\DD - \AA}\inv = \frac{1}{2}\pr{\DD\inv + \pr{\II + \DD\inv\AA}\pr{\DD - \AA\DD\inv\AA}\inv\pr{\II + \AA\DD\inv} }.
    }
\end{lemma}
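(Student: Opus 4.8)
The plan is to verify the identity by a direct algebraic manipulation, hinging on a single auxiliary factorization of $\DD - \AA\DD\inv\AA$. Throughout I treat $\DD$ as invertible and $\DD - \AA\DD\inv\AA$ as invertible, which is anyway implicit in the fact that the right-hand side of~\eqref{eq:D-A} is written down; in all the applications in this paper $\DD$ is a strictly positive diagonal, so $\DD\inv$ always makes sense.

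The first step is to establish the factorization
\eq{
  \DD - \AA\DD\inv\AA = \pr{\DD - \AA}\pr{\II + \DD\inv\AA} = \pr{\II + \AA\DD\inv}\pr{\DD - \AA}.
}
Both equalities follow by expanding the products and cancelling the two linear-in-$\AA$ cross terms, using only $\DD\DD\inv = \DD\inv\DD = \II$; the diagonal structure of $\DD$ plays no role. As a byproduct, this shows that whenever $\DD$ and $\DD - \AA$ are invertible, the matrix $\DD - \AA\DD\inv\AA$ is invertible precisely when $\II + \DD\inv\AA$ (equivalently $\DD + \AA$) is, which records a clean sufficient condition under which~\eqref{eq:D-A} is meaningful.

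The second step is the verification itself. Since $\DD - \AA$ is invertible, it suffices to left-multiply the claimed expression for $\pr{\DD - \AA}\inv$ by $\DD - \AA$ and check that the product equals $\II$. The $\frac12 \DD\inv$ term contributes $\pr{\DD - \AA}\DD\inv = \II - \AA\DD\inv$. For the remaining term, the factorization above gives $\pr{\DD - \AA}\pr{\II + \DD\inv\AA} = \DD - \AA\DD\inv\AA$, so
\eq{
  \pr{\DD - \AA}\pr{\II + \DD\inv\AA}\pr{\DD - \AA\DD\inv\AA}\inv\pr{\II + \AA\DD\inv} = \II + \AA\DD\inv.
}
Averaging the two contributions yields $\tfrac12\pr{\pr{\II - \AA\DD\inv} + \pr{\II + \AA\DD\inv}} = \II$, which is exactly what we wanted.

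There is essentially no obstacle here: once the factorization of $\DD - \AA\DD\inv\AA$ is isolated, the argument is a two-line computation. The only point deserving a word of care is the well-definedness of $\pr{\DD - \AA\DD\inv\AA}\inv$, which the first step handles rather than leaving as an unstated hypothesis.
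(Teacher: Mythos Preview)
The paper does not give its own proof of this lemma; it simply cites the identity from~\cite{peng2014efficient} and uses it. Your verification is correct: the factorization $\DD - \AA\DD\inv\AA = (\DD - \AA)(\II + \DD\inv\AA) = (\II + \AA\DD\inv)(\DD - \AA)$ is exactly the right observation, and the left-multiplication check that follows is clean and complete. Your remark that the statement tacitly assumes $\DD$ invertible and $\DD - \AA\DD\inv\AA$ invertible (equivalently $\DD + \AA$ invertible, given the other hypotheses) is a useful clarification the paper leaves implicit.
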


The main identity~\eqref{eq:D-A} gives rise to our definition for partial-block-eliminated Laplacian of $\LL$.
Let $\DD = \Diag{\LL}$ and $\AA = \DD - \LL$, let $\PP$ be the permutation matrix such that $\PP\LL\PP\tp = \mx{\LL_{FF}& \LL_{FC} \\ \LL_{CF} & \LL_{CC}}$.
The partial-block-eliminated operator $\Ly{\LL} $ is defined as
\eq{
    \Ly{\LL} = \PP\tp\mx{\LL_{FF} & - \AA_{FC} \\ - \AA_{CF} & 2\LL_{CC}}\PP - \AA_{:, F}\DD_{FF}\inv\AA_{F, :}.
}

We define the first exact partial-block-elimination of $\LL$ by $\Lt{1} = \Ly{\LL}$.
Then, $\frac{1}{2}\Lt{1}$ is an Eulerian Laplacian which has the same Schur complement of $F$ in $\LL$, i.e.,
\eq{
    \sc{\LL, F} = \frac{1}{2}\sc{\Lt{1}, F }.
}
The $2$-nd to the $K$-th partially-block-eliminated Laplacians are defining iteratively as $\Lt{2} = \Ly{\Lt{1}}$, $\cdots$, $\Lt{K} = \Ly{\Lt{K-1}}$.
$\Lt{k}$ can also be regarded as a partially powered matrix of $\LL$, which uses the powering to obtain better spectral properties.
Specifically, when we focus on the $(F, F)$ block of $\Lt{k}$, it is easy to see that $\ni{\DD_{FF}\inv\At{k}_{FF}}$ converges at a quadratic rate, where $\At{k}_{FF} = \Lt{k}_{FF} - \DD_{FF}$.
Formal construction of the $k$-th partially-block-eliminated Laplacians and their properties are deferred to Appendix~\ref{sec:exactPBE}.

To encounter the increasing density of $\Lt{k}$, sparsification blackboxes in Section~\ref{sec:sparseblkb} are naturally accompanied with the partial block elimination to yield a Schur complement sparsification method (Algorithm~\ref{alg:SparSchur}).
Our algorithm is essentially a directed variant of the one
from~\cite{kyng2016sparsified}.
A slight difference is that in the last step,
we need to fix the degree discrepancies caused by approximating
the strongly RCDD matrix by its diagonal.

The running time of Algorithm~\ref{alg:SparSchur} is shown in Theorem~\ref{thm:SparSchur}.
The $k$-th iterand $\Ltt{k}$ of Algorithm~\ref{alg:SparSchur} is termed the approximate $k$-th partially-block-eliminated Laplacian, while its exact version is just the (exact) $k$-th partially-block-eliminated Laplacian $\Lt{k}$ defined above.
To guarantee the performance of Algorithm~\ref{alg:SparSchur}, the most important thing is to provide relatively tight bounds for the difference $\Ltt{k} - \Lt{k}$.

\SparSchurPseuC

\newcommand\thmSparSchur{
    (Schur complement sparsification)
    For a \strc\ Eulerian Laplacian $\LL\in\MS{n}{n}$,
    let $F, C$ be a partition of $[n]$ such that $\LL_{FF}$ is $\alp$-RCDD $(\alp = O(1))$ and  let $\dlt \in (0, 1)$ be an error parameter,
    the subroutine $\SparseSchur$ (Algorithm~\ref{alg:SparSchur}) runs in time
    \eq{ O\pr{\TSE\pr{m, n, \dlt}} + \Otil{\TSE\pr{\NSE\pr{n, \dlt}\dlt^{-2}  , n, \dlt}\log n  }   }
    to return with high probability a \strc\ Eulerian Laplacian $\SS$ satisfying
    $\nnz{\SS} = O\pr{\NSE\pr{\abs{C}, \dlt}} $
    and
    \eql{\label{eq:sttgood}}{
        \SS - \sc{\LL, F} \aleq \dlt \cdot \U{\sc{\LL, F}}.
    }
}
\begin{theorem}\label{thm:SparSchur}
    \thmSparSchur

\end{theorem}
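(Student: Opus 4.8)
The plan is to analyze Algorithm~\ref{alg:SparSchur} by separately controlling (a) the error introduced in each of the $K$ partial-block-elimination steps, (b) the error in the final step that extracts $\Stt{0}$ from $\Ltt{K}$, and (c) the patching and final sparsification. The central quantity is $\Ltt{k} - \Lt{k}$, the gap between the approximate and exact $k$-th partially-block-eliminated Laplacians, and the running-time bound will follow by counting calls to $\SparE$, $\SE$, $\SparP$, $\SP$ and tracking the densities of the intermediate matrices.

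First I would set up the exact recursion: from Appendix~\ref{sec:exactPBE} (the construction following identity~\eqref{eq:D-A}), the exact iterands satisfy $\Lt{k}=\PP\tp\begin{bmatrix}\DD_{FF} & -\At{k-1}_{FC}\\ -\At{k-1}_{CF} & 2\Lt{k-1}_{CC}\end{bmatrix}\PP - \sum_{i\in F}\frac{1}{\DD_{ii}}(\At{k-1}_{:,i})(\At{k-1}_{i,:})$, with $\sc{\LL,F}=\frac{1}{2^K}\sc{\Lt{K},F}$, and that $\ni{\DD_{FF}\inv\At{k}_{FF}}$ decays quadratically because $\LL_{FF}$ is $\alpha$-RCDD with $\alpha=O(1)$; after $K=O(\log\log(n/\dlt))$ steps this is at most $\poly(\dlt/n)$, so $\Diag{\Ltt{K}}_{CC}$ dominates $\Att{K}_{CC}$ and the ``approximate by diagonal'' step in computing $\Stt{0}$ costs only $O(\dlt)$ error. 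Then I would write the one-step error decomposition: $\Ltt{k}-\Lt{k}$ is the sum of (i) the propagated error $\PP\tp\begin{bmatrix}\zero & -(\Att{k-1}-\At{k-1})_{FC}\\ -(\Att{k-1}-\At{k-1})_{CF} & 2(\Ltt{k-1}-\Lt{k-1})_{CC}\end{bmatrix}\PP$, (ii) the $\SP$/$\SparP$ sampling error $-\sum_{i\in F}\frac{1}{\DD_{ii}}(\Ytt{k,i}-\At{k-1}_{:,i}\At{k-1}_{i,:})$, each term asymmetrically bounded by $\eps\cdot\UG{\cdot}$ via Lemma~\ref{lem:SparP}, and (iii) the $\SE$ sparsification error from Line~\ref{line:Attk1}, bounded by $\eps\cdot\UG{\Ltt{k,0}}$ via Theorem~\ref{thm:SparEoracle1}.

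The main obstacle—and the place where the ``augmented matrix'' technique advertised in the introduction must be deployed—is turning these per-step bounds, which are stated relative to the \emph{undirectification of the current (perturbed) matrix} $\UG{\Ltt{k,0}}$, into a bound on $\Ltt{K}-\Lt{K}$ relative to a \emph{fixed} reference matrix, and ultimately on $\SS-\sc{\LL,F}$ relative to $\U{\sc{\LL,F}}$ as in~\eqref{eq:sttgood}. The difficulty is that directed Schur complements are not robust in the naive way: one cannot simply triangle-inequality the errors in the $\ndd{\U{\sc{\LL,F}}}{\cdot}$ norm because the Schur-complement map is nonlinear and the intermediate undirectifications $\UG{\Ltt{k,0}}$ can drift from $\UG{\Lt{k}}$. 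The resolution is to build, for each step, an augmented Laplacian on a slightly larger vertex set whose Schur complement onto the original coordinates equals exactly the error matrix $\Ltt{k}-\Lt{k}$ (generalizing Sherman–Morrison–Woodbury); since Schur complements of asymmetrically-bounded perturbations of Eulerian Laplacians are themselves asymmetrically bounded (this is the additional robustness property the paper claims, which I would invoke or prove as a lemma), the $K=O(\log\log(n/\dlt))$ errors each of magnitude $O(\eps)=O(\dlt/K)$ compose additively to $O(\dlt)$. One then checks that $\UG{\Ltt{k-1}}\apx{O(\dlt)}\UG{\Lt{k-1}}$ throughout (so the reference matrices in the inductive hypothesis are interchangeable up to constants), closes the induction, and concludes $\Ltt{K}-\Lt{K}\aleq O(\dlt)\cdot\U{\sc{\LL,F}}$ after rescaling by $2^{-K}$ and using $\sc{\LL,F}=2^{-K}\sc{\Lt{K},F}$ together with the fact that Schur complementation contracts the relevant norm.

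Finally I would handle the tail: the diagonal-approximation step gives $\Stt{0}-\sc{\LL,F}\aleq O(\dlt)\cdot\U{\sc{\LL,F}}$ since $\ni{\DD_{CC}\inv\Att{K}_{CC}}$ is polynomially small; the patching matrix $\RR$ has entries bounded by the (tiny) row/column sums of $\Stt{0}$, which are themselves $O(\dlt)$-small because $\Stt{0}$ is close to an Eulerian Laplacian, so $\Sap=\Stt{0}+\RR$ is an exactly Eulerian, strongly connected Laplacian with $\Sap-\sc{\LL,F}\aleq O(\dlt)\cdot\U{\sc{\LL,F}}$; and the concluding $\SparE(\Sap,\dlt/8)$ call adds one more $\dlt/8$ error (by Theorem~\ref{thm:SparEoracle1} and the composition rule for $\aleq$, using that $\UG{\Sap}\apx{O(\dlt)}\U{\sc{\LL,F}}$), yielding~\eqref{eq:sttgood} after adjusting constants. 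For the running time I would observe each of the $K=O(\log\log(n/\dlt))$ outer iterations makes $O(|F|)$ calls to $\SP$ and one call to $\SE$ on a matrix of density $O(\NSE(n,\dlt)\dlt^{-2})$ (the blow-up from the $\eps^{-2}=O((K/\dlt)^2)$ factor in Lemma~\ref{lem:SparP}, repeatedly re-sparsified), giving the claimed $O(\TSE(m,n,\dlt))+\Otil{\TSE(\NSE(n,\dlt)\dlt^{-2},n,\dlt)\log n}$ bound, and $\nnz{\SS}=O(\NSE(|C|,\dlt))$ from the final $\SparE$ call.
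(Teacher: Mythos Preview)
Your high-level outline has the right ingredients, but the description of the augmented-matrix step contains a genuine misunderstanding that, if followed literally, would not close the argument.

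You write that the resolution is ``to build, for each step, an augmented Laplacian on a slightly larger vertex set whose Schur complement onto the original coordinates equals exactly the error matrix $\Ltt{k}-\Lt{k}$.'' That is not what is constructed, and it is not clear how one would construct such a thing. The paper builds two augmented matrices $\Mt{0,k},\Mtt{0,k}\in\MS{(2^k|F|+|C|)}{(2^k|F|+|C|)}$ so that $\sc{\Mt{0,k},-[n]}=\Lt{k}$ and $\sc{\Mtt{0,k},-[n]}=\Ltt{k}$ separately; the error $\Ltt{k}-\Lt{k}$ is then a \emph{difference of Schur complements}, not a Schur complement of anything. The point of lifting to the augmented space is that there the accumulated error $\Mtt{0,k}-\Mt{0,k}=\Ers{k}$ is a genuine \emph{sum} of the $k$ sparsification errors $\Errt{1,k}+\cdots+\Errt{k,k}$, each supported on disjoint blocks; this is the only place additivity holds. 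Your one-step decomposition, with a ``propagated error'' term carrying $\Ltt{k-1}-\Lt{k-1}$ through the quadratic update $\Att{k-1}_{:,F}\DD_{FF}^{-1}\Att{k-1}_{F,:}$, is exactly the direct induction the paper warns leads to exponential blow-up in $k$ (Section~\ref{sec:ULtk}, first paragraph).

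What is missing from your sketch is the reference norm against which $\Ers{k}$ is bounded. The per-step error $\Ett{i}$ is controlled by $\U{\Ltt{i-1}}$, not by $\U{\LL}$, and $\Ltt{i-1}$ itself depends on prior errors, so simply summing gives a self-referential inequality. The paper resolves this by introducing the ``repetition matrices'' $\XL{k}$ (Lemma~\ref{lem:Ersepsz1}): one defines $\gamma_k$ as the error measured against $\sc{\XL{k},-[n]}$, constructs $\XL{k}$ as a carefully weighted combination of repetition matrices of $\U{\Mt{0,i}}$ and of prior $\XL{i}$, and proves inductively that $\gamma_k=O(\dlt)$ by applying the Schur-complement robustness lemma (Lemma~\ref{lem:schurdUL1}) with $\UU=\XL{k}$. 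Only after this induction closes does one know that $\U{\Ltt{k-1}}$ is comparable to the exact version, so your ``one then checks $\UG{\Ltt{k-1}}\approx\UG{\Lt{k-1}}$'' is circular as stated. A smaller point: in your tail analysis, the smallness you need is of $\ni{\DD_{FF}^{-1}\Att{K}_{FF}}$ (the $(F,F)$ block, cf.~\eqref{eq:Attsuperl}), not $\DD_{CC}^{-1}\Att{K}_{CC}$; and the bound on $\Rap$ goes through $\lambda_2(\U{\sc{\LL,F}})\geq\Omega(1/\poly(n))$ via Cheeger (so $K=O(\log\log(n/\dlt))$ is needed precisely to beat this polynomial), not just through $\Rap$ being $O(\dlt)$ in norm.
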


Compared to the undirected analog from~\cite{kyng2016sparsified},
powered directed matrices exhibit significantly more complicated
spectral structures.
To analyze them, we develop new interpretations of directed
Schur complements based on matrix extensions.

\subsection{Bounding Error Accumulations in Partially-Eliminated Laplacians by Augmented Matrices }\label{sec:oveaugm1}

When considering the approximate partial block elimination,
in one update step, not only new sparsification errors are added into $\Ltt{k}$, the errors accumulated from previous steps will multiply with each other and get possibly amplified.
In addition, error accumulations in Schur complements of
directed Laplacians are not as straightforward as their
undirected counterparts.
It's not the case that for two directed Eulerian Laplacians
with the same undirectifiation, the undirectification of their
Schur complements are the same.
For instance, consider the undirected vs. directed cycle,eliminated till only two originally diametrically opposite
vertices remain.
The former has a Schur complement that has weight $2/n$,
while the latter has a Schur complement that has weight $1$.

By the definition of \Lapap{\eps},
we need to essentially show the following inequality
in order to obtain the approximations needed for
a nearly-linear time algorithm:
\eq{
    &\frac{1}{2^k}\U{\Lt{k}} = \frac{1}{2^k}\U{\Ly{\Lt{k-1}}} \\
    =&\frac{1}{2^k}
    \U{\vc{
        2\Lt{k-1}_{CC} - \At{k-1}_{CF}\DD_{FF}\inv\At{k-1}_{FC} & - \At{k-1}_{CF}\pr{\II + \DD_{FF}\inv\At{k-1}_{FF}}  \\
        - \pr{\II + \At{k-1}_{FF}\DD_{FF}\inv}\At{k-1}_{FC} & \DD_{FF} - \At{k-1}_{FF} \DD_{FF}\inv\At{k-1}_{FF }
    }} \\
    \pleq&  O\pr{1}\cdot \U{\LL},\ \forall 1\leq k\leq K.
}
Here significant difficulties arise due to the already
complicated formula of $\Lt{k}$.
So we instead express the exact and approximate partial block elimination as Schur
complements of large augmented matrices introduced below.

In the rest of Section~\ref{sec:overview}, we assume $C = \dr{1, 2, \cdots, \abs{C}}$ for simplicity.

\subsubsection{A Reformulation for Partial Block Elimination }
For the exact and approximate $k$-th partially-block-eliminated matrices $\Lt{k}$, $\Ltt{k}$,
we define augmented matrices $\Mt{0, k}$, $\Mtt{0, k}$ of size $2^{k} |F| + |C|$.
We start with the construction of a desirable $\Mt{0, k}$. 
To this end, we define a sequence of augmented matrices $\dr{\Mt{i, k}}_{i=0}^k$, where $\Mt{k,k} = \Lt{k}$ and each $\Mt{i, k}$ is a Schur complement of $\Mt{i-1, k}$.
Here we only give an informal explanation
of how we construct $\Mt{i,k}$.
The formal definitions of these augmented matrices are given in Section~\ref{sec:reformPBEvAM1}.
To begin with, for some fixed $k\in [K]$, we define
$
    \Mt{k, k} \defeq \Lt{k}.
$
And we write $F_1 = F$ in the remainder of Section~\ref{sec:overview} and the entire Section~\ref{sec:reformPBEvAM1} and Section~\ref{sec:Schurcplstable2}.

Next, we take $\Mt{k-1, k}$ and $\Mt{k-2, k}$ as examples to show how  we  define such a sequence of matrices $\Mt{k-1, k}, \cdots, \Mt{0, k}$.

Define $\Mt{k-1, k}$ as follows
\eq{
    \Mt{k-1, k} \defeq
    \begin{blockarray}{cccl}
        C & F_1 & F_2 & \arl \text{column indexes } \\
        \begin{block}{[ccc]l}
            2\Lt{k-1}_{CC} & - \At{k-1}_{CF} & -\At{k-1}_{CF} & \arl \text{rows indexed by } C  \\
            - \At{k-1}_{FC} & \DD_{FF}  & -\At{k-1}_{FF} & \arl \text{rows indexed by } F_1 \\
            - \At{k-1}_{FC} & - \At{k-1}_{FF} & \DD_{FF}  &  \arl \text{rows indexed by } F_2  \\
        \end{block}
    \end{blockarray}
}

Then, it follows by direct calculations that
\eq{
    \sc{\Mt{k-1, k}, F_2} =
    \mx{
        2\Lt{k-1}_{CC} - \At{k-1}_{CF}\DD_{FF}\inv\At{k-1}_{FC} & - \At{k-1}_{CF}\pr{\II + \DD_{FF}\inv\At{k-1}_{FF}} \\
        - \pr{\II + \At{k-1}_{FF}\DD_{FF}\inv}\At{k-1}_{FC}  & \DD_{FF} - \At{k-1}_{FF}\DD_{FF}\inv\At{k-1}_{FF}
    }
    \comeq{\eqref{line:densebiclique}} \Lt{k}. 
}

Next, we define $\Mt{k-2, k}$ as follows
\eq{
    \Mt{k-2, k} \defeq
    \begin{blockarray}{cccccl}
        C & F_1 & F_2 & F_3 & F_4 & \arl \text{column indexes } \\
        \begin{block}{[ccccc]l}
            4\Lt{k-2}_{CC} & - \At{k-2}_{CF}  & - \At{k-2}_{CF} & - \At{k-2}_{CF}  & -\At{k-2}_{CF} & \arl \text{rows indexed by } C  \\
            - \At{k-2}_{FC} & \DD_{FF} &   & -\At{k-2}_{FF} &  & \arl \text{rows indexed by } F_1  \\
            - \At{k-2}_{FC} &   & \DD_{FF} & & -\At{k-2}_{FF} & \arl \text{rows indexed by } F_2  \\
            - \At{k-2}_{FC} &   & -\At{k-2}_{FF} &  \DD_{FF} &  & \arl \text{rows indexed by } F_3 \\
            - \At{k-2}_{FC} & - \At{k-2}_{FF} &  &  & \DD_{FF} & \arl \text{rows indexed by } F_4  \\
        \end{block}
    \end{blockarray}.
}
It follows by direct calculations that
\eq{
    &\sc{\Mt{k-2, k}, F_3\cup F_4} \\
    =&
    \mx{
        4\Lt{k-2}_{CC} - 2\At{k-2}_{CF}\DD_{FF}\inv\At{k-2}_{FC} & - \At{k-2}_{CF}\pr{\II + \DD_{FF}\inv\At{k-2}_{FF}} & - \At{k-2}_{CF}\pr{\II + \DD_{FF}\inv\At{k-2}_{FF}}   \\
        - \pr{\II + \At{k-2}_{FF}\DD_{FF}\inv}\At{k-2}_{FC} & \DD_{FF}  & - \At{k-2}_{FF}\DD_{FF}\inv\At{k-2}_{FF}   \\
        - \pr{\II + \At{k-2}_{FF}\DD_{FF}\inv}\At{k-2}_{FC} & - \At{k-2}_{FF}\DD_{FF}\inv\At{k-2}_{FF} & \DD_{FF}
    }  \\
    =&
    \mx{
        2\Lt{k-1}_{CC} & - \At{k-1}_{CF} & -\At{k-1}_{CF} \\
        - \At{k-1}_{FC} & \DD_{FF}  & -\At{k-1}_{FF} \\
        - \At{k-1}_{FC} & - \At{k-1}_{FF} & \DD_{FF}
    }  \\
    =& \Mt{k-1, k}.
}

We will show $\frac{1}{2^k}\U{\Lt{k}} \pleq O\pr{1} \cdot \U{\LL}$ later by analyzing the properties of $\Mt{0, k}$.

We believe this representation  may be of independent interest.
We also remark that these augmented matrices only arise during
analysis, and are not used in the algorithms.

\subsubsection{Bounding Error Accumulation of Schur complement sparsification }
Next, we propose Lemma~\ref{lem:schurdUL1} to bound the errors after taking Schur complements.
However, in our analysis, iteratively applying Lemma~\ref{lem:schurdUL1} to bound $\frac{1}{2^k}\pr{\Ltt{k} - \Lt{k}} $ will lead to more $\log n$ factors in the running time.

To derive a tighter bound, we introduce another group of augmented matrices $\dr{\Mtt{0, k}}$ which are defined by
 attaching sparsification errors to $\Mt{0, k}$.
$\Mtt{0, k}$ can help us disentangle the sparsification errors generated from different iterations and see how these errors accumulate as we do partial block eliminations more clearly.

We use another group of augmented matrices $\dr{\XL{k}}$ to bound the difference between $\Lt{k}$ and $\Ltt{k} $.
The augmented matrix $\XL{k}$ is defined as the weighted sum of a group of
``reptition matrix" (Section~\ref{sec:erroraccusml1}).
$\XL{k}$ adopts many properties similar to $\Mt{0, k}$, so it is easy to analyze.
Then, we can give tighter bound for $\Mtt{K} - \Mt{K}$ using the robustness of Schur complements in this case (Section~\ref{sec:Schurcplstable2}).

Here, we use some examples to illustrate how we construct the augmented matrices $\dr{\XL{k}}$, $\dr{\Mtt{0, k}}$.
Formal definitions are in Section~\ref{sec:reformPBEvAM1}.

Firstly, we define the sparsification error $\Ett{k} = \Ltt{k} - \Ly{\Ltt{k-1}}  $.
Take $\Mtt{0, 2}$ as an example.
We define
\eq{
      \Mtt{1, 2} =
      \mx{
            2\Ltt{1}_{CC} & - \Att{1}_{CF} & -\Att{1}_{CF}   \\
            - \Att{1}_{FC} & \DD_{FF}  & -\Att{1}_{FF}  \\
            - \Att{1}_{FC} & - \Att{1}_{FF} & \DD_{FF}
      }
      +
      \mx{
        \Ett{2}_{CC} & \Ett{2}_{CF} &  \\
        \Ett{2}_{FC} & \Ett{2}_{FF} &   \\
         &  & \zerom{F}{F}
      }
}
and
\eq{
    \Mtt{0, 2} =&
    \mx{
        4\LL_{CC} & - \AA_{CF}  & - \AA_{CF} & - \AA_{CF}  & - \AA_{CF}   \\
            - \AA_{FC} & \DD_{FF} &   & - \AA_{FF} &    \\
            - \AA_{FC} &   & \DD_{FF} & & -\AA_{FF}   \\
            - \AA_{FC} &   & -\AA_{FF} &  \DD_{FF} &   \\
            - \AA_{FC} & - \AA_{FF} &  &  & \DD_{FF}
    }  \\
    &+
    \mx{
        2\Ett{1}_{CC} & \Ett{1}_{CF} & \Ett{1}_{CF} &  \\
        \Ett{1}_{FC} & \zero & \Ett{1}_{FF} & \\
        \Ett{1}_{FC} & \Ett{1}_{FF} & \zero & \\
        & & & \zerom{2F}{2F  }
    }
    +
    \mx{
        \Ett{2}_{CC} & \Ett{2}_{CF} &  \\
        \Ett{2}_{FC} & \Ett{2}_{FF} &   \\
         &  & \zerom{3F}{3F  }
    }.
}
Then, it follows by direct calculations that
\eq{
    \sc{\Mtt{0, 2}, F_3\cup F_4} = \Mtt{1, 2},\quad
    \sc{\Mtt{1, 2}, F_2} = \Ltt{2}.
}
Analogously, for larger $k$, we can also define a sequence of augmented matrices $\dr{\Mtt{i, k}}$ such that each $\Mtt{i+1, k}$ is a Schur complement of $\Mtt{i, k}$ and $\Mtt{k,k} = \Ltt{k} $.
Then, $\sc{\Mtt{0, k}, -[n]} = \Ltt{k}$.

Next, we take $\XL{1}$, $\XL{2}$ as examples to show how we construct $\XL{k}$ iteratively.
We remark that the $\XL{1}$, $\XL{2}$ defined below are a little different with those in Lemma~\ref{lem:Ersepsz1}.
In addition, they are not the best choice up to constants.
However, what roles they play in the proofs can be seen easily from the following examples.

We naturally start with $\XL{1} = \U{\Mt{0,1}}$ and
it is easy to show that $\Mt{0, 1} - \Mtt{0, 1} \aleq O(\eps) \cdot \XL{1}$ and $\Ltt{1} - \Lt{1} \aleq O\pr{\eps}\cdot \sc{\U{\XL{1}}, F}$ by Theorem~\ref{thm:SparEoracle1} and Fact~\ref{lem:scrobust}.
Using~\eqref{eq:xEy127} in the proof of Lemma~\ref{lem:Ersepsz1}, we have
\eq{
  &\mx{
    2\Ett{1}_{CC} & \Ett{1}_{CF} & \Ett{1}_{CF} &   \\
        \Ett{1}_{FC} & \zero & \Ett{1}_{FF} &  \\
        \Ett{1}_{FC} & \Ett{1}_{FF} & \zero &  \\
        & & & \zerom{2F}{2F}
  } \aleq
  O\pr{\eps} \cdot
  \Upm{
    2\Ltt{1}_{CC} & \Ltt{1}_{CF} & \Ltt{1}_{CF} &  \\
    \Ltt{1}_{FC} & \Ltt{1}_{FF} & \zero &  \\
    \Ltt{1}_{FC} & \zero & \Ltt{1}_{FF}  & \\
    & & & \zerom{2F}{2F}
  }  \\
  =&
  O\pr{\eps} \cdot
  \Upm{
    2\Ltt{1}_{CC} - 2\Lt{1}_{CC} & \Ltt{1}_{CF} - \Lt{1}_{CF} & \Ltt{1}_{CF} - \Lt{1}_{CF} &  \\
    \Ltt{1}_{FC} - \Lt{1}_{FC} & \Ltt{1}_{FF} - \Lt{1}_{FF} & \zero &  \\
    \Ltt{1}_{FC} - \Lt{1}_{FC} & \zero & \Ltt{1}_{FF} - \Lt{1}_{FF} &   \\
    & & & \zerom{2F}{2F}
  } \\
  & +
  O\pr{\eps} \cdot
  \Upm{
    2\Lt{1}_{CC} & \Lt{1}_{CF} & \Lt{1}_{CF} &  \\
    \Lt{1}_{FC} & \Lt{1}_{FF} & \zero &  \\
    \Lt{1}_{FC} & \zero & \Lt{1}_{FF}  & \\
    & & & \zerom{2F}{2F}
  }.
}
Notice that
\eql{\label{eq:LttLtdiff21}}{
  &\Upm{
    2\Ltt{1}_{CC} - 2\Lt{1}_{CC} & \Ltt{1}_{CF} - \Lt{1}_{CF} & \Ltt{1}_{CF} - \Lt{1}_{CF} &  \\
    \Ltt{1}_{FC} - \Lt{1}_{FC} & \Ltt{1}_{FF} - \Lt{1}_{FF} & \zero &  \\
    \Ltt{1}_{FC} - \Lt{1}_{FC} & \zero & \Ltt{1}_{FF} - \Lt{1}_{FF} &  \\
    & & & \zerom{2F}{2F}
  }  \\
  \pleq&
  O\pr{\eps} \cdot
  \Upm{
    2\XL{1}_{CC} & \XL{1}_{CF} & \XL{1}_{CF} & \XL{1}_{C, F_2} & \XL{1}_{C, F_2}  \\
    \XL{1}_{FC} & \XL{1}_{FF} &  & &  \\
    \XL{1}_{FC} &  & \XL{1}_{FF} & &   \\
    \XL{1}_{F_2, C} &  &  & \XL{1}_{F_2, F_2} &  \\  
    \XL{1}_{F_2, C} &  & &  & \XL{1}_{F_2, F_2}
  }
}
and
\eql{\label{eq:LttLtdiff22}}{
  \Upm{
    2\Lt{1}_{CC} & \Lt{1}_{CF} & \Lt{1}_{CF} &  \\
    \Lt{1}_{FC} & \Lt{1}_{FF} & \zero &  \\
    \Lt{1}_{FC} & \zero & \Lt{1}_{FF}  & \\
    & & & \zerom{2F}{2F}
  }
  \pleq
  \Upm{
    2\Mt{1, 2}_{CC} & \Mt{1, 2}_{CF} & \Mt{1, 2}_{CF} & \Mt{1, 2}_{C, F_2} & \Mt{1, 2}_{C, F_2}  \\
    \Mt{1, 2}_{FC} & \Mt{1, 2}_{FF} &  &  & \\
    \Mt{1, 2}_{FC} &  & \Mt{1, 2}_{FF} &  &  \\
    \Mt{1, 2}_{F_2, C} & & & \Mt{1, 2}_{F_2, F_2} & \\
    \Mt{1, 2}_{F_2, C} &  &  &  & \Mt{1, 2}_{F_2, F_2}
  }.
}
In addition,
\eql{\label{eq:LttLtdiff23}}{
    \mx{
        \Ett{2}_{CC} & \Ett{2}_{CF} &  \\
        \Ett{2}_{FC} & \Ett{2}_{FF} &   \\
         &  & \zerom{3F}{3F  }
    }
    \aleq
    \Upm{\Mt{0, 2}}.
}
Define $\XL{2}$ as a weighted sum of the RHS of~\eqref{eq:LttLtdiff21},~\eqref{eq:LttLtdiff22},~\eqref{eq:LttLtdiff21}.
Then, we will have $\Mtt{0, 2} - \Mt{0, 2} \aleq O\pr{\eps} \cdot \XL{2}$.
Using the robustness of Schur complements, we can derive that $\Ltt{2} - \Lt{2} \aleq O\pr{\eps} \cdot \sc{\XL{2}, -[n]} $.

\begin{figure}
  \centering
  \includegraphics[width=0.9\textwidth]{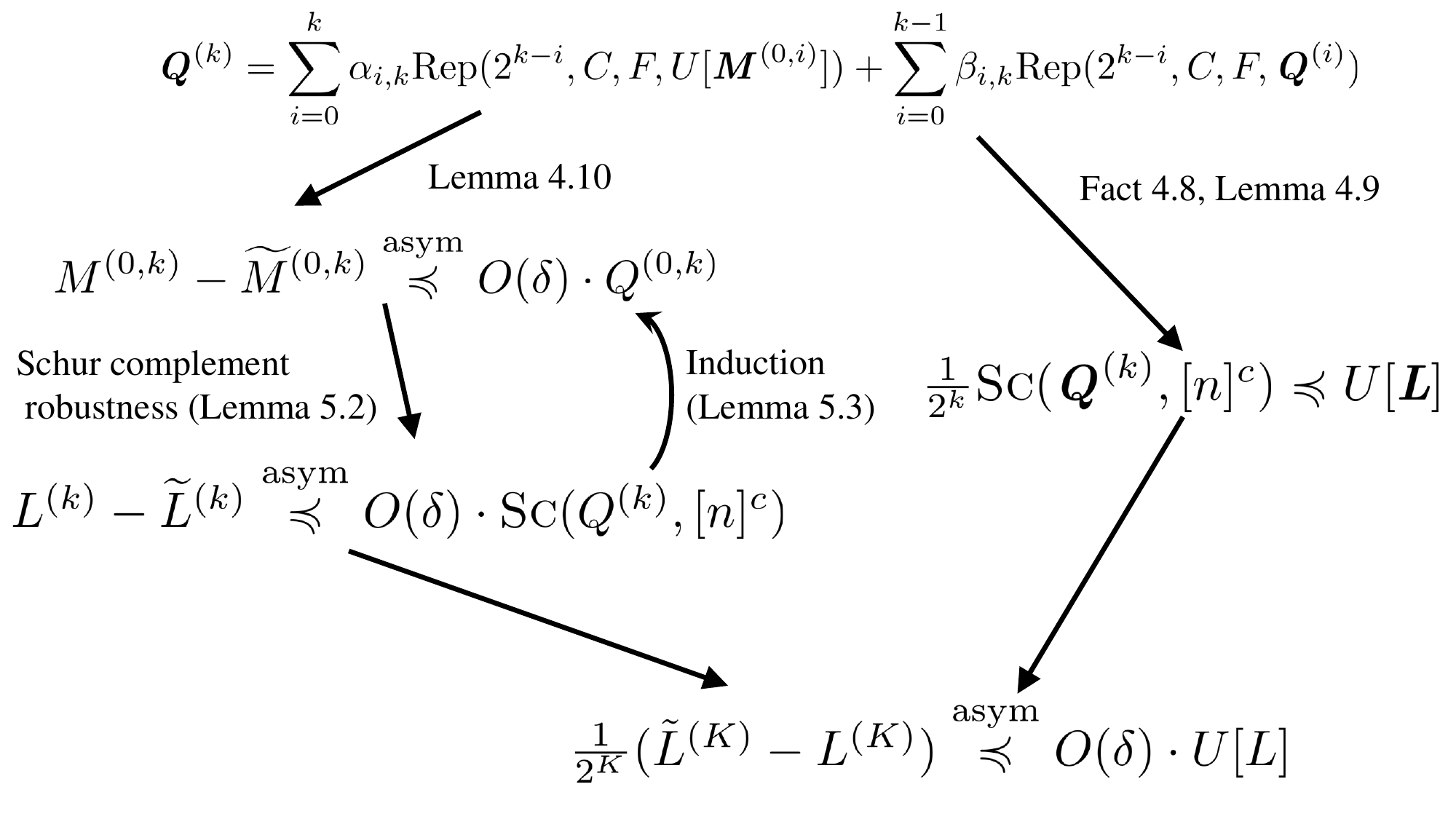}
  \caption{An illustration on our approach to bound the difference $\Lt{k} - \Ltt{k}$}\label{fig:lemmasdiffLtLttk}
\end{figure}

Now, our approach to bound $\Lt{k} - \Ltt{k}$ is summarized in Figure~\ref{fig:lemmasdiffLtLttk}.  
 Essentially, we use augmented matrices to separate the sparsification errors generated from different iterations and see their relations more clearly.
The properties of the augmented matrices $\dr{\Mt{0, k}}$, $\dr{\Mtt{0, k}}$ mentioned in this paper are essentially used to guarantee that we can use
the Schur complement robustness safely to bound the difference $\frac{1}{2^k}\pr{\Ltt{k} - \Lt{k}} $.
Then, we induct on these terms carefully so that the errors $\gam_k = \ndd{\sc{\XL{k}, -[n]}}{\pr{\Ltt{k} - \Lt{k}}} $ increases linearly rather than exponentially in $k$.
In this way, we obtain a sparsified Schur complement with only $\Otil{1}$ calls to the sparsification blackboxes rather than $O\pr{\log^{O\pr{1}} n }$ times.


\section{Partial Block Elimination via Augmented Matrices}
\label{sec:reformPBEvAM1}
In this section, we introduce our augmented matrices based
view of partial block elimination.
As we will show later, after $O\pr{\log\log n}$ steps of partial elimination,
the $(F, F)$ block of the approximate partially-block-eliminated Laplacian $\Ltt{k}$ can be approximated by its diagonal ``safely".
So, what remains is to bound the error accumulations in the difference $\frac{1}{2^k}\pr{\sc{\Ltt{k}, F} - \sc{\Lt{k}, F}}$,
which we do by bounding differences in $\frac{1}{2^k}\pr{\Ltt{k} - \Lt{k}}$.

 In Section~\ref{sec:ULtk},
we represent the exact $k$-th partially-block-eliminated
Laplacian $\Lt{k}$ by a Schur complement of an augmented matrix
$\Mt{0, k}\in \MS{\pr{2^k\abs{F} + \Cn}}{\pr{2^k\abs{F} + \Cn}}$.

In Section~\ref{sec:erroraccusml1}, we define $\Mtt{0,k}$ as an
inexact version of $\Mt{0,k}$ where sparsification errors accumulate.
Then, we introduce a special type of augmented matrices,
which we term \emph{reptition matrices},
and bound the difference $\Mtt{0, k} - \Mt{0, k}$
in norms based on these \emph{reptition matrices}.

\subsection{A Reformulation of the Exact Partial Block Elimination }\label{sec:ULtk}

To bound the difference $\frac{1}{2^k}\pr{\Ltt{k} - \Lt{k}}$,
a direct way is to bound the difference
$\frac{1}{2^k}\pr{\Ltt{k} - \Lt{k}}$ by $\frac{1}{2^{k-1}}\pr{\Ltt{k-1} - \Lt{k-1}}$ recursively.
However, our attempts of doing so lead to errors that grow
exponentially in $k$, the number of partial block elimination steps.

In this section, we provide a reformulation of the exact version of partial block elimination which is more friendly to error analysis.
To be specific,  our strategy is to construct a large matrix $\Mt{0, k} \in \MS{\pr{\puts{k}}}{\pr{\puts{k}}}$ such that $\Lt{k}$ is a Schur complement of the large matrix $\Mt{0, k}$.
And there is a partition of $\Mt{0, k}$ such that each block is a zero matrix or equals some submatrix of $\LL$.
To construct $\Mt{0, k}$, we will construct a sequence of augmented matrices $\dr{\Mt{i, k}}_{i=0}^{k}$ satisfying Lemma~\ref{lem:Mti}. 
Later, by analyzing the large matrix $\Mt{0, k}$, we can derive tighter bounds for quantities related to $\Lt{k}$.

In Section~\ref{sec:reformPBEvAM1} and Section~\ref{sec:Schurcplstable2},
unless otherwise specified,
we assume $F = \dr{n - \Fn +1, \cdots, n-1, n}$ and $C = \dr{1, 2, \cdots, \Cn} = [n]\dele F$ by default.

Now, we give a rigorous way to construct such a sequence of matrices $\dr{\Mt{i, k}}_{i=0}^k \ (0\leq k\leq K)$.
To begin with, we define some sets to be used later.
We define \eq{F_a = \dr{b\in \mathbb{Z}:\  \Cn + \pr{a - 1}\Fn + 1 \leq b \leq \Cn + a\Fn},\ \forall 1\leq a\leq 2^K. }
Note that in our notation, $F = F_1$.

Then,
we construct a sequence of bijections $\dr{\psit{i}\pr{\cdot}}_{i=0}^K$ which indicate the ``positions" of the blocks equalling $- \At{i}_{FF}$  in the large augmented matrix $\Mt{k - i, k}$.

We start with $\psit{0}\pr{\cdot}$ and will define these $\psit{i}$ iteratively.
The mapping $\psit{0}\pr{\cdot}$ is defined as a trivial mapping from $\dr{1}$ to $\dr{1}$ with
\eq{
    \psit{0}\pr{1} = 1.
}
Then, assume we have defined $\psit{i-1}\pr{\cdot}$.
Now, we define $\psit{i}$ as follows:
\eq{
    \psit{i}\pr{a} =
    \left\{
        \begin{split}
            &a + 2^{i-1},\ a\in [2^{i-1}]  \\
            &\psit{i-1}\pr{a - 2^{i-1}},\ 2^{i-1}+1 \leq a\leq 2^{i}
        \end{split}
    \right.
}
If $\psit{i-1}\pr{\cdot}$ is a bijection from $[2^{i-1}]$ to $[2^{i-1}]$, then $\psit{i}\pr{\cdot} \Big|_{\dr{2^{i-1}+1, \cdots, 2^i}}$ is a bijection from $\dr{2^{i-1} + 1, \cdots, 2^i}$ to $[2^{i-1}]$.
And by the definition, $\psit{i}\pr{\cdot}\Big|_{[2^{i-1}]}$ is a bijection from $[2^{i-1}]$ to $\dr{2^{i-1}+1, \cdots, 2^i}$.
Then,  $\psit{i}$ is a bijection from $[2^i]$ to $[2^i]$.

It follows by induction that for any $k\in [K]$, $\psit{k}\pr{\cdot}$ is a bijection from $[2^k]$ to $[2^k]$.
And by the fact that $2^{i-1}+1 \leq  \psit{i}\pr{a} \leq 2^{i}$ for $a\in [2^{i-1}]$ and $\psit{i}\pr{a} \in [2^{i-1}]$ for $2^{i-1}+1 \leq a \leq 2^{i} $, we have the following relation
\eql{\label{eq:Atnotdiag}}{
    \psit{j}\pr{a} \neq a,\ \forall 1\leq j\leq K,\  a\in [2^{j}].
}

With the notations defined above,
we define the matrix $\Mt{i, k}$ as
\eql{\label{eq:Mtik1}}{
    \Mt{i, k} =&
    2^{k-i}\putmat{\Lt{i}_{CC}, C, C, \puts{k-i}}
    + \sum_{a = 1}^{2^{k-i}}\Big(\putmat{\DD_{FF}, F_a, F_a, \puts{k-i} } \\
    & + \putmat{- \At{i}_{FF}, F_a, F_{\psit{k-i}\pr{a}}, \puts{k-i}} + \putmat{- \At{i}_{FC}, F_a, C, \puts{k-i}} \\
    & + \putmat{- \At{i}_{CF}, C, F_{a}, \puts{k-i}}\Big),\ \forall 0\leq k\leq K,\ 0\leq i\leq k.
}
where the notation $\putmat{\XX, A, B, n}$ has been defined in Section~\ref{sec:notation}, which means putting matrix $\XX$ in the submatrix indexed by $\pr{A, B}$ in a zero matrix $\zero_{n\times n}$;
$\Lt{k}$ is the exact $k$-th partially-block-eliminated Laplacian and formal definitions of $\Lt{k}, \At{k}$ are in Appendix~\ref{sec:exactPBE}.

We have the following properties of $\dr{\Mt{i, k}}$.
\begin{lemma}\label{lem:Mti}
    For any $0\leq k\leq K$, $0\leq i\leq k$, $\Mt{i, k}$ is an Eulerian Laplacian;
    $\Mt{i, k}_{ - [n], - [n]}$, $\Mt{i, k}_{- C, - C}$ are $\alp$-RCDD;
    the Schur complement satisfies
    \eql{\label{eq:scMFi+}}{
        \sc{\Mt{i, k}, \cup_{a = 2^{k-i-1}+1}^{2^{k-i}} F_a } = \Mt{i+1, k}.
    }
    Further,
    \eql{\label{eq:scMFtili+}}{
        \sc{\Mt{i, k}, - [n]} = \Lt{k}.
    }
    In addition, for any $\xx\in \Real^n$, let
    \eql{\label{eq:xtil2pk}}{
        \xtil = \big( \xx_C\tp  \underbrace{\xx_{F}\tp \ \cdots \ \xx_F\tp}_{\text{$2^k$ repetitions of $\xx_F\tp$}}  \big)\tp,
    }
    then,
    \eql{\label{eq:morning1}}{
        \xtil\tp\Mt{0, k}\xtil = 2^k \xx\tp \LL  \xx.
    }

\end{lemma}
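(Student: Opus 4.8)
The plan is to prove Lemma~\ref{lem:Mti} by induction on $i$ decreasing from $k$ to $0$, establishing each claimed property in turn and then deriving the two consequences \eqref{eq:scMFtili+} and \eqref{eq:morning1} at the end. First I would verify the base case $\Mt{k,k}$: by the formula \eqref{eq:Mtik1} with $i=k$ this is exactly $\put{\Lt{k}_{CC},C,C,\cdot} + \put{\DD_{FF},F_1,F_1,\cdot} + \put{-\At{k}_{FF},F_1,F_{\psit{0}(1)},\cdot} + \put{-\At{k}_{FC},F_1,C,\cdot} + \put{-\At{k}_{CF},C,F_1,\cdot}$, and since $\psit{0}(1)=1$ and $\Lt{k} = \mx{\DD_{FF} - \At{k}_{FF} & -\At{k}_{FC} \\ -\At{k}_{CF} & \Lt{k}_{CC}}$, this coincides with $\Lt{k}$; that $\Lt{k}$ is Eulerian, and its $(F,F)$ block is $\alp$-RCDD, follows from the properties of the exact partially-block-eliminated Laplacian stated in Appendix~\ref{sec:exactPBE}.

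For the inductive step I would show that if $\Mt{i,k}$ has the stated properties then so does $\Mt{i-1,k}$, and simultaneously that the Schur complement identity \eqref{eq:scMFi+} holds (going from $i-1$ to $i$). The key computation is \eqref{eq:scMFi+}: one eliminates the index set $\Union_{a=2^{k-i}+1}^{2^{k-i+1}} F_a$ from $\Mt{i-1,k}$. Here the structure of $\psit{k-i+1}$ is exactly engineered so that each block $F_a$ being eliminated (for $a$ in the upper half $\{2^{k-i}+1,\dots,2^{k-i+1}\}$) is connected to $F_{\psit{k-i+1}(a)}$ with $\psit{k-i+1}(a)\in[2^{k-i}]$ in the lower half, and the eliminated blocks are mutually non-adjacent (they only touch $\DD_{FF}$ on the diagonal, the retained $F_b$ blocks, and the $C$ block). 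So the Schur complement decomposes as a sum of independent $2\times 2$-block eliminations of the form $\sc{\mx{\DD_{FF} & -\At{i-1}_{FF} \\ -\At{i-1}_{FF} & \DD_{FF}}, \text{first block}} = \DD_{FF} - \At{i-1}_{FF}\DD_{FF}\inv\At{i-1}_{FF}$, together with the cross terms feeding into $C$; collecting these and using the recursion $\At{i}_{FF} = \At{i-1}_{FF}\DD_{FF}\inv\At{i-1}_{FF}$, $\At{i}_{FC} = (\II+\At{i-1}_{FF}\DD_{FF}\inv)\At{i-1}_{FC}$, $\Lt{i}_{CC} = 2\Lt{i-1}_{CC} - \At{i-1}_{CF}\DD_{FF}\inv\At{i-1}_{FC}$ (and the factor $2^{k-i+1}$ halving to $2^{k-i}$ in front of $\Lt{i-1}_{CC}$) gives precisely $\Mt{i,k}$. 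The $\alp$-RCDD claims for $\Mt{i,k}_{-[n],-[n]}$ and $\Mt{i,k}_{-C,-C}$ follow because Schur complementation of an $\alp$-RCDD matrix with respect to a subset of its rows/columns preserves $\alp$-RCDD-ness (a standard fact), and the Eulerian property is preserved under Schur complements of Eulerian Laplacians. I would present the $2\times 2$ elimination once and invoke it $2^{k-i}$ times in parallel.

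Then \eqref{eq:scMFtili+} follows by composing \eqref{eq:scMFi+} down to $i=k$ (Schur complements compose: eliminating nested subsets one after another equals eliminating their union at once), giving $\sc{\Mt{0,k}, \Union_{a=2}^{2^k}F_a} = \Mt{k,k} = \Lt{k}$, and by construction $\Union_{a=2}^{2^k} F_a = [2^k\abs F + \Cn]\setminus[n]$. Finally \eqref{eq:morning1}: with $\xtil$ the $2^k$-fold repetition of $\xx_F$ stacked above $\xx_C$, I expand $\xtil\tp\Mt{0,k}\xtil$ using \eqref{eq:Mtik1} at $i=0$; the $\put{\DD_{FF},F_a,F_a,\cdot}$ terms contribute $2^k\,\xx_F\tp\DD_{FF}\xx_F$, the $\put{-\At{0}_{FF},F_a,F_{\psit{k}(a)},\cdot}$ terms contribute $-2^k\,\xx_F\tp\At{0}_{FF}\xx_F$ since $\psit{k}$ is a bijection on $[2^k]$ and every coordinate block of $\xtil$ restricted to some $F_a$ equals $\xx_F$, the off-diagonal $F$–$C$ and $C$–$F$ terms each contribute $2^k$ copies of $-\xx_F\tp\At{0}_{FC}\xx_C$ and $-\xx_C\tp\At{0}_{CF}\xx_F$, and the single $2^k\put{\Lt{0}_{CC},C,C,\cdot}$ term contributes $2^k\,\xx_C\tp\Lt{0}_{CC}\xx_C$; summing and using $\At{0}=\AA=\DD-\LL$, $\Lt{0}=\LL$ yields $2^k\,\xx\tp\LL\xx$. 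The main obstacle is bookkeeping the index sets correctly in the Schur-complement computation of \eqref{eq:scMFi+} — verifying that the eliminated blocks are genuinely mutually non-interacting so the elimination factors into independent $2\times 2$ pieces, and tracking how the scalar $2^{k-i}$ and the $\psit{\cdot}$ bijections transform — rather than any single algebraic identity.
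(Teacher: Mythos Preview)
Your Schur complement computation for \eqref{eq:scMFi+}, the composition argument for \eqref{eq:scMFtili+}, and the expansion for \eqref{eq:morning1} are all correct and essentially match the paper.

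There is, however, a genuine gap in your treatment of the Eulerian and $\alp$-RCDD properties. Your induction runs downward from $i=k$ to $i=0$, so the inductive hypothesis is on the \emph{smaller} matrix $\Mt{i,k}$ and the goal is the \emph{larger} matrix $\Mt{i-1,k}$. But the Schur complement identity you establish goes the other way: $\sc{\Mt{i-1,k},\cdot}=\Mt{i,k}$. So even granting that ``Schur complementation preserves $\alp$-RCDD'' and ``Schur complementation preserves Eulerianness'' (the latter is Fact~\ref{fact:ESchurE}; the former is not obviously true for a fixed $\alp>0$), these facts would let you conclude properties of $\Mt{i,k}$ from those of $\Mt{i-1,k}$, not the reverse. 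Your inductive step therefore does not establish that $\Mt{i-1,k}$ is Eulerian or that its $(-C,-C)$ block is $\alp$-RCDD.

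The paper avoids this by verifying both properties \emph{directly from the definition} \eqref{eq:Mtik1}, with no induction: since $\psit{k-i}$ is a bijection on $[2^{k-i}]$, each block row $F_a$ of $\Mt{i,k}$ contains exactly the three nonzero blocks $\DD_{FF},\,-\At{i}_{FF},\,-\At{i}_{FC}$ (and similarly for columns), so the row and column sums vanish because $\Lt{i}$ is Eulerian, and the $(-C,-C)$ block is $\alp$-RCDD because every block row there has only $\DD_{FF}$ on the diagonal and $-\At{i}_{FF}$ off-diagonal, with $\ni{\DD_{FF}\inv\At{i}_{FF}}\le (1+\alp)^{-2^i}\le (1+\alp)^{-1}$ by \eqref{eq:DinvAsuperl}. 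This direct check is short and fixes your gap immediately; your inductive framing is unnecessary for these two properties.
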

\begin{proof}
  We only need to prove the cases of  $i < k$.
  Denote $F_i^+ = \cup_{a = 2^{i-1}+1}^{2^i } F_a$ in this proof.
  Since $\psit{k-i}\pr{\cdot}$ is a bijection from $[2^{k-i}]$ to $[2^{k-i}]$,
  then, for any $a\in [2^{k-i}]$,
  $\Mt{i, k}_{F_a, :} $ contains
   3 nonzero blocks equaling $\DD_{FF}, -\At{i}_{FF}, -\At{i}_{FC}$, respectively;
   and the other blocks are all zero matrices.
  Analogously, for any $a\in [2^{k-i}]$,
  $\Mt{i, k}_{:, F_a}$ contains 3 nonzero blocks equaling $\DD_{FF}, -\At{i}_{FF}, -\At{i}_{CF}$, respectively, while the other blocks are all zero matrices.
  Since $\Lt{i}$ is Eulerian (Lemma~\ref{lem:LtkE}),
  we have $\Mt{i,k}_{F_a, :}\one = \DD_{FF}\one - \At{i}_{FF}\one - \At{i}_{FC}\one = \Lt{i}_{F, :}\one = \zero $.
  Analogously $\one\tp\Mt{i,k}_{:, F_a} = \zero\tp $.
  Also by the definition~\eqref{eq:Mtik1}, $\Mt{i, k}_{C, :}\one = 2^{k-i}\pr{\Lt{i}_{CC}\one - \At{i}_{CF}\one} = 2^{k-i}\Lt{i}_{C, :}\one = \zero$.
  Analogously,
   $\one\tp\Mt{i, k}_{:, C} = \zero\tp$.
  All off-diagonal entries of $\Mt{i,k}$ are non-positive by the definition~\eqref{eq:Mtik1}.
  Thus, $\Mt{i, k}$ is an Eulerian Laplacian.

Notice that all blocks equaling $-\At{i}_{CF}$ are on the rows indexed by $C$; and all blocks equaling $- \At{i}_{FC}$ are on the columns indexed by $C$, thus, on each block row or block column of $\Mt{i, k}_{- C, - C}$, there is exactly one block equaling $\DD_{FF}$ and exactly one block equaling $- \At{i}_{FF}$, and the other blocks are all-zeros matrices.
  Thus, by~\eqref{eq:DinvAsuperl}, $\Mt{i, k}_{- C, - C}$ is $\alp$-RCDD.
  Then, $\Mt{i, k}_{- [n], - [n]}$ is also $\alp$-RCDD as it is a submatrix of $\Mt{i, k}_{ - C, - C}$.

  Also, as there are only 3 nonzero blocks on each row or column of $\Mt{i, k}$,
  when computing the Schur complement of $F_a \  (2^{k-i-1}+1\leq a \leq 2^{k-i})$, we only need to focus on the submatrix
  \eq{
    \Mt{i, k}_{C\cup F_{a - 2^{k-i-1}} \cup F_a,\ C \cup F_{\psit{k-i}\pr{a}} \cup F_a}
    =
    \mx{
        2^{k-i}\Lt{i}_{CC} & - \At{i}_{CF} & - \At{i}_{CF}  \\
        - \At{i}_{FC} & \zero & - \At{i}_{FF}  \\
        - \At{i}_{FC} & - \At{i}_{FF} & \DD_{FF}
    },
  }
  where the block $\Mt{i, k}_{F_{a - 2^{k-i-1}}, F_{\psit{k-i}\pr{a}}} = \zero$ is by~\eqref{eq:Atnotdiag}.

  Then, by direct calculations,
  \eq{
    &\sc{\Mt{i, k}_{C\cup F_{a - 2^{k-i-1}} \cup F_a,\ C \cup F_{\psit{k-i}\pr{a}} \cup F_a}, F_a}  \\
    =&
    \mx{
         2^{k-i}\Lt{i}_{CC} - \At{i}_{CF}\DD_{FF}\inv\At{i}_{FC} & - \At{i}_{CF}\pr{\II + \DD_{FF}\inv\At{i}_{FF}}  \\
        - \pr{\II + \At{i}_{FF}\DD_{FF}\inv}\At{i}_{FC} & - \At{i}_{FF}\DD_{FF}\inv\At{i}_{FF}
    } \\
    =&
    \mx{
        2^{k-i}\Lt{i}_{CC} - \At{i}_{CF}\DD_{FF}\inv\At{i}_{FC} & - \At{i+1}_{CF}  \\
        - \At{i+1}_{FC} & - \At{i+1}_{FF}
    }.
  }
  When computing the Schur complement of $F_{k-i }^+ $, the term $\At{i}_{CF}\DD_{FF}\inv\At{i}_{FC} $ is subtracted from $\Mt{i, k}_{CC} = 2^{k-i}\Lt{i}_{CC}$ for $2^{k-i-1}$ times.   By combining with the equality $\Lt{i+1}_{CC} = 2\Lt{i}_{CC} - \At{i}_{CF}\DD_{FF}\inv\At{i}_{FC}$ from~\eqref{line:densebiclique}, we have
  $
    \sc{\Mt{i, k}, F_{k-i}^+}_{CC} = \Mt{i+1, k}_{CC}.
  $

  To derive~\eqref{eq:scMFi+},
  what remains is to show that the ``positions" of the blocks equaling $-\At{i+1}_{FF}$ are the same in $\sc{\Mt{i, k}, F^+_{k-i} }$ and $\Mt{i+1, k}$.
  This can be seen easily from the observation that
  for $2^{k-i-1}+1 \leq a\leq 2^{k-i}$,
  $
    \sc{\Mt{i, k}_{C\cup F_{a - 2^{k-i-1}} \cup F_a,\ C \cup F_{\psit{k-i}\pr{a}} \cup F_a}, F_a}
  $
  is supported on the submatrix indexed by
  \eq{
    \pr{C \cup  F_{a - 2^{k-i-1}}, C \cup  F_{\psit{k-i}\pr{a  }}}.
  }
  Thus, the ``positions" of blocks equaling $-\At{i+1}_{FF}$ in $\sc{\Mt{i, k}, F_{k-i}^+}$ are
  \eq{
    &\dr{ \pr{F_{a - 2^{k-i-1}}, F_{\psit{k-i}\pr{a  }}}: 2^{k-i-1}+1 \leq a \leq 2^{k-i}}  \\
    =&\dr{ \pr{F_{a - 2^{k-i-1}}, F_{\psit{k-i-1}\pr{a - 2^{k-i-1}  }}}: 2^{k-i-1}+1 \leq a \leq 2^{k-i}}  \\
    =& \dr{ \pr{F_a, F_{\psit{k-i-1}\pr{a}}}: a\in [2^{k-i-1}]},
  }
  which are the exact ``positions" of the blocks equalling $-\At{i+1}_{FF}$ in $\Mt{i+1, k}$ by~\eqref{eq:Mtik1}.
  Therefore, we have proved~\eqref{eq:scMFi+}.

  The relation~\eqref{eq:scMFtili+} follows by Fact~\ref{fact:sctran}
  and induction:
  \eq{
    \sc{\Mt{i, k}, - [n]} = \sc{\sc{\Mt{i, k}, F_{k-i}^+}, -[n]} = \sc{\Mt{i+1, k}, -[n]} = \cdots = \Mt{k, k} = \Lt{k}.
  }
  The relation~\eqref{eq:morning1} follows by the fact that $\psit{k}\pr{\cdot}$ is a bijection from $[2^k]$ to $[2^k]$ and~\eqref{eq:Mtik1}.

\end{proof}

The following lemma answers a question in Section~\ref{sec:oveaugm1}.
That is, $\frac{1}{2^k}\U{\Lt{k}} \pleq O\pr{1}\U{\LL}$.

\begin{lemma}\label{lem:ULtm}
    For any $0\leq k\leq K$,
    \eql{\label{eq:Ltm}}{
        \frac{1}{2^k} \U{\Lt{k}} \pleq  \pr{3 + \frac{2}{\alp}} \U{\LL}.
    }

\end{lemma}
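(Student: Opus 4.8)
The plan is to exploit the augmented-matrix reformulation from Lemma~\ref{lem:Mti}, which says that $\Lt{k} = \sc{\Mt{0,k}, -[n]}$ and, more importantly, that the quadratic form of $\Mt{0,k}$ on a ``repeated'' vector $\xtil$ (as in~\eqref{eq:xtil2pk}) equals $2^k\,\xx\tp\LL\xx$. Since the map $\xx\mapsto\xtil$ sends $\xx\in\Real^n$ into the domain over which the Schur complement is taken, we get $\xx\tp\Lt{k}\xx \le \min_{\yy}\,\widetilde{(\xx,\yy)}\tp\Mt{0,k}\widetilde{(\xx,\yy)}$-type relations; more precisely, for the symmetrizations, $\xx\tp\,\U{\Lt{k}}\,\xx = \xx\tp\,\U{\sc{\Mt{0,k},-[n]}}\,\xx$, and the standard variational characterization of Schur complements of PSD-undirectifiable Eulerian Laplacians gives $\xx\tp\,\U{\Lt{k}}\,\xx \le \xtil\tp\,\U{\Mt{0,k}}\,\xtil = 2^k\,\xx\tp\,\U{\LL}\,\xx$ whenever the quadratic-form identity~\eqref{eq:morning1} is an equality. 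Wait — that identity is exact, not an inequality, so this route would give the clean bound $\frac{1}{2^k}\U{\Lt{k}}\pleq \U{\LL}$ directly; the factor $3+2/\alpha$ suggests the Schur-complement-minimization step is not quite this tight, so I should be careful about whether $\U{\Lt{k}}$ is really $\U{\sc{\Mt{0,k},-[n]}}$ versus $\sc{\U{\Mt{0,k}},-[n]}$, which for directed Laplacians are different (exactly the subtlety flagged in Section~\ref{sec:oveaugm1}).

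Accordingly, the cleaner approach is direct, by induction on $k$ using the block formula for $\Lt{k}$ in terms of $\Lt{k-1}$. From the display in Section~\ref{sec:oveaugm1}, $\U{\Lt{k}}$ decomposes into the $(C,C)$ block $2\U{\Lt{k-1}_{CC}} - \U{\At{k-1}_{CF}\DD_{FF}\inv\At{k-1}_{FC}}$ plus an $(F,F)$-involving part. First I would handle the $(F,F)$ block $\DD_{FF} - \At{k-1}_{FF}\DD_{FF}\inv\At{k-1}_{FF}$: since $\LL_{FF}$ is $\alpha$-RCDD, one shows $\ni{\DD_{FF}\inv\At{k-1}_{FF}} \le (1+\alpha)^{-1}$ (this is the ``$\ni{\DD_{FF}\inv\At{k}_{FF}}$ converges quadratically'' fact alluded to, and Lemma~\ref{lem:Mti}'s proof already cites $\eqref{eq:DinvAsuperl}$), so $\At{k-1}_{FF}\DD_{FF}\inv\At{k-1}_{FF} \aleq \DD_{FF}$ in the asymmetric sense and the $(F,F)$ block stays RCDD-bounded. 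Then I would bound the whole symmetrized matrix $\U{\Lt{k}}$ against $\U{\Lt{k-1}}$ expanded appropriately: the term $\pr{\II + \At{k-1}_{FF}\DD_{FF}\inv}\At{k-1}_{FC}$ is controlled by writing $\II + \At{k-1}_{FF}\DD_{FF}\inv$ as a bounded-norm operator (norm $\le 1 + (1+\alpha)^{-1} \le 2$) times $\DD_{FF}\inv$ applied to $\At{k-1}_{FC}$, and using Fact~\ref{lem:ne} / Definition~\ref{def:aleq1} to absorb cross terms $2\xx\tp(\cdot)\yy \le \xx\tp\U\cdot\xx + \yy\tp\U\cdot\yy$ against the diagonal. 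Collecting these, one gets $\U{\Lt{k}} \pleq 2\,\U{\Lt{k-1}} + (\text{error absorbed into }\U{\Lt{k-1}})$, and the constants $3+2/\alpha$ come from tallying how much slack the RCDD bound $(1+\alpha)^{-1}$ and the cross-term applications eat.

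The main obstacle, I expect, is the $(C,C)$-block bookkeeping: I need $2\U{\Lt{k-1}_{CC}} - \U{\At{k-1}_{CF}\DD_{FF}\inv\At{k-1}_{FC}} \pleq 2\,\U{\Lt{k-1}}_{CC}$-type control, and the subtracted term has a sign issue (it is PSD, so subtracting helps in one direction but I need the lower-order interaction with the off-diagonal $F$--$C$ blocks to not destroy positivity of the whole). This is precisely where the augmented-matrix view pays off, so in the actual writeup I would likely combine the two approaches: use the exact identity $\xtil\tp\Mt{0,k}\xtil = 2^k\xx\tp\LL\xx$ together with $\sc{\Mt{0,k},-[n]}=\Lt{k}$ and the fact that for an Eulerian Laplacian $\MM$ with PSD undirectification, $\xx\tp\,\U{\sc{\MM,-[n]}}\,\xx \le \inf\{\,\widetilde{\xx\oplus\yy}\tp\,\U{\MM}\,\widetilde{\xx\oplus\yy}\}$ is \emph{not} automatic — rather one needs the weaker $\U{\sc{\MM,S}} \pleq (1+O(1/\alpha))\,\sc{\U{\MM},S}$ robustness (the content of Section~\ref{sec:Schurcplstable2}, usable since $\Mt{0,k}_{-C,-C}$ is $\alpha$-RCDD per Lemma~\ref{lem:Mti}), and then $\sc{\U{\MM},S}\pleq$ the relevant principal block. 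That chain — RCDD-ness of the augmented matrix, Schur-complement robustness under RCDD, and the quadratic-form identity — is what yields the explicit constant $3+\frac{2}{\alpha}$, with the $2/\alpha$ tracing directly to the RCDD robustness loss.
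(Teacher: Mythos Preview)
Your final paragraph has it exactly right, and this is precisely the paper's proof: use $\Lt{k}=\sc{\Mt{0,k},-[n]}$ from Lemma~\ref{lem:Mti}, apply the RCDD Schur-complement robustness (Fact~\ref{lem:scrobust}, which supplies the explicit $3+\tfrac{2}{\alpha}$) to get $\U{\Lt{k}} \pleq (3+\tfrac{2}{\alpha})\sc{\U{\Mt{0,k}},-[n]}$, and then bound $\sc{\U{\Mt{0,k}},-[n]}\pleq 2^k\U{\LL}$ via Fact~\ref{fact:Schurxusmall} evaluated at the repeated vector $\xtil$ together with identity~\eqref{eq:morning1}. The induction-on-$k$ detour in your second paragraph is unnecessary and would be harder to close; your own first- and third-paragraph diagnosis that the $3+\tfrac{2}{\alpha}$ loss is exactly the $\U{\sc{\cdot}}$ versus $\sc{\U{\cdot}}$ gap is the whole point, and the relevant robustness fact is Fact~\ref{lem:scrobust} (not Section~\ref{sec:Schurcplstable2}), applied with the eliminated set $[\puts{k}]\setminus[n]$ so that the $\alpha$-RCDD block you need is $\Mt{0,k}_{-[n],-[n]}$.
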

\begin{proof}
  By Lemma~\ref{lem:Mti},
  $\Mt{0, k}_{-[n], -[n]} $ is $\alp$-RCDD, and $\Mt{0, k}$ is an Eulerian Laplacian.
  Thus, using Fact~\ref{lem:scrobust},
  \eq{ \U{\sc{\Mt{0, k}, -[n]}} \pleq \pr{3 + \frac{2}{\alp}}\sc{\U{\Mt{0, k }}, -[n]}.  }

  By~\eqref{eq:scMFtili+},
  $\U{\Lt{k}} \pleq \pr{3 + \frac{2}{\alp}}\sc{\U{\Mt{0, k}}, -[n] }.  $

  For any $\xx\in \Real^n$, define $\xtil$ as in~\eqref{eq:xtil2pk}.
  By Fact~\ref{fact:alpRCDDPSDpPD1}, $\U{\Mt{0, k}}_{-[n], -[n]}$ is PD.
  Then, by Fact~\ref{fact:Schurxusmall} and~\eqref{eq:morning1}, we have
  \eq{
    \xx\tp\sc{\U{\Mt{0, k}}, -[n]}\xx \leq \xtil\tp\U{\Mt{0, k}}\xtil 
    = 2^k \xx\tp \U{\LL} \xx,
  }
  i.e., $\sc{\U{\Mt{0, k}}, -[n] } \pleq 2^k \U{\LL} $.

  Combining the above equations yields that
  \eq{
    \U{\Lt{k}} \pleq \pr{3 + \frac{2}{\alp}}\sc{\U{\Mt{0, k}}, -[n]} \pleq 2^k\pr{3 + \frac{2}{\alp}} \U{\LL}.
  }

\end{proof}

\subsection{Bounding Error Accumulation Using Repetition Matrices }
\label{sec:erroraccusml1}
Before we define $\Mtt{0,k}$, we introduce our notations for the errors induced by sparsification.
\eq{
    &\EY{k,i} = \Att{k-1}_{:,i}\Att{k-1}_{i,:} - \Ytt{k,i},\
    \EY{k} = \sum_{i\in F }\frac{1}{\DD_{ii}}\EY{k,i} =  \Att{k-1}_{:,F}\DD_{FF}\inv\Att{k-1}_{F,:} - \Ytt{k}, \\
    &\Ett{k,0} = \Ltt{k} - \Ltt{k,0},\ \Ett{k} = \EY{k} + \Ett{k,0} = \Ltt{k} - \Ly{\Ltt{k-1}}
}
and
\eq{
    \EXX{i} =  \Att{K}_{C,i}\Att{K}_{i,C} - \Xtt{i},\
    \EX = \sum_{i\in F }\frac{1}{\DD_{ii}}\EXX{i} =  \Att{K}_{CF}\DD_{FF}\inv\Att{K}_{FC} - \Xap.
}

We also denote in the rest of this paper,
\eql{\label{eq:defRap}}{
    \Rap = \RR + \frac{1}{2^K}\pr{\Att{K}_{CF}\pr{\DD_{FF} - \Att{K}_{FF}}\inv - \Att{K}_{CF}\DD_{FF}\inv\Att{K}_{FC}}.
}

Some elementary facts of the results of Algorithm~\ref{alg:SparSchur} is given by Lemma~\ref{lem:LapetcpropSparSchurCpmt1} in Appendix~\ref{sec:someprfs}.

By Lemma~\ref{lem:SparP} and Lemma~\ref{lem:SE}, 
 we can provide bounds for the one-step errors in the next lemma.
 Its proof is deferred to Appendix~\ref{sec:someprfs}.
\begin{lemma}\label{lem:EYEXEtt}
    The error matrices satisfies
    \al{
        &\Ett{k} \aleq \epsz \U{\Ltt{k-1}}, \label{eq:EYLtt}  \\
        &\EX \aleq \eps \U{\sc{\Ltt{K}, F}}, \label{eq:EXscLtt}
    }
    where $\epsz = 2\pr{3 + \frac{2}{\alp}}\pr{2\eps + \eps^2}.  $

\end{lemma}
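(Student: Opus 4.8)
The plan is to establish both \eqref{eq:EYLtt} and \eqref{eq:EXscLtt} by the same three-step recipe. First, decompose the error matrix into its constituent sparsification errors: $\EY{k}=\sum_{i\in F}\frac1{\DD_{ii}}\EY{k,i}$, $\Ett{k}=\EY{k}+\Ett{k,0}$, and $\EX=\sum_{i\in F}\frac1{\DD_{ii}}\EXX{i}$. Second, bound each summand, using the sparsification guarantees, by $\eps$ times the undirectification of an explicit auxiliary matrix. Third, prove a Loewner-domination estimate showing that the resulting sum of auxiliary undirectifications is dominated by a constant (depending only on $\alp$) times $\U{\Ltt{k-1}}$ for \eqref{eq:EYLtt}, and by $\U{\sc{\Ltt{K},F}}$ with no constant loss for \eqref{eq:EXscLtt}. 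Throughout I use two elementary consequences of Fact~\ref{lem:ne}: the relation $\aleq$ is additive ($\AA_1\aleq\UU_1$, $\AA_2\aleq\UU_2$ imply $\AA_1+\AA_2\aleq\UU_1+\UU_2$), and $\AA\aleq\UU\pleq\UU'$ implies $\AA\aleq\UU'$.

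For the second step of \eqref{eq:EYLtt}: since $\Ytt{k,i}=\SP\pr{\Att{k-1}_{:,i},\tpp{\Att{k-1}_{i,:}},\eps,F}$ is $\SparP$ applied to the four blocks cut out by $(F,C)$, Lemma~\ref{lem:SE} (via Lemma~\ref{lem:SparP}) gives $\EY{k,i}=\Att{k-1}_{:,i}\Att{k-1}_{i,:}-\Ytt{k,i}\aleq\eps\cdot\UG{\Gt{k,i}}$, where $\Gt{k,i}$ is the explicit PSD ``completed biclique'' (a sum of four biclique Laplacians, one per block) supported on the neighborhood of $i$, with total weight $\asymp\DD_{ii}$ since $\Ltt{k-1}$ Eulerian forces $\one\tp\Att{k-1}_{:,i}=\Att{k-1}_{i,:}\one=\DD_{ii}$. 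Summing, $\EY{k}\aleq\eps\sum_{i\in F}\frac1{\DD_{ii}}\UG{\Gt{k,i}}$. Likewise $\Ltt{k}=\SE\pr{\Ltt{k,0},\eps,F}$, so Theorem~\ref{thm:SparEoracle1}\eqref{enum:LapLLdltUdef1} gives $\Ett{k,0}=\Ltt{k,0}-\Ltt{k}\aleq\eps\cdot\UG{\Ltt{k,0}}$. Writing $\MM$ for the exact one-step partially-block-eliminated Laplacian of $\Ltt{k-1}$, the definition of $\Ltt{k,0}$ together with \eqref{eq:D-A} gives $\Ltt{k,0}=\MM+\EY{k}$, with $\MM$ Eulerian so $\UG{\MM}=\U{\MM}$; using that $\SP$ preserves in/out degrees (hence $\UG{\EY{k}}=\U{\EY{k}}\pleq\eps\sum_{i\in F}\frac1{\DD_{ii}}\UG{\Gt{k,i}}$), we obtain $\UG{\Ltt{k,0}}\pleq\U{\MM}+\eps\sum_{i\in F}\frac1{\DD_{ii}}\UG{\Gt{k,i}}$.

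The third step rests on two domination claims: (a) $\sum_{i\in F}\frac1{\DD_{ii}}\UG{\Gt{k,i}}\pleq 2\pr{3+\tfrac2{\alp}}\U{\Ltt{k-1}}$, and (b) $\U{\MM}\pleq 2\pr{3+\tfrac2{\alp}}\U{\Ltt{k-1}}$. Claim (b) is Lemma~\ref{lem:ULtm} applied to a single elimination step starting from $\Ltt{k-1}$, whose $(F,F)$-block is $\alp$-RCDD (Lemma~\ref{lem:LapetcpropSparSchurCpmt1}); the factor $2$ is the doubling of the $C$-block in \eqref{eq:D-A}. Claim (a) follows by the same mechanism: the $\Gt{k,i}$ are the Laplacian realizations of the fill-in created by eliminating the vertices of $F$ one at a time in $\Ltt{k-1}$, so their weighted sum is dominated by $\U{\MM}$ up to the $\alp$-RCDD slack quantified by Fact~\ref{lem:scrobust} (this is where one reconciles the reference diagonal $\DD_{FF}$ with the genuine out-degrees $\Ltt{k-1}_{FF}$ and the self-loop weights $\Att{k-1}_{FF}$). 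Assembling, $\Ett{k}=\EY{k}+\Ett{k,0}\aleq\eps\cdot 2\pr{3+\tfrac2{\alp}}\U{\Ltt{k-1}}+\eps\bigl(2\pr{3+\tfrac2{\alp}}\U{\Ltt{k-1}}+\eps\cdot 2\pr{3+\tfrac2{\alp}}\U{\Ltt{k-1}}\bigr)=2\pr{3+\tfrac2{\alp}}(2\eps+\eps^2)\,\U{\Ltt{k-1}}=\epsz\,\U{\Ltt{k-1}}$, which is \eqref{eq:EYLtt}. For \eqref{eq:EXscLtt}, each $\Xtt{i}=\SparP\pr{\Att{K}_{C,i},\tpp{\Att{K}_{i,C}},\eps}$ yields $\EXX{i}\aleq\eps\cdot\UG{\Gt{i}}$ with $\Gt{i}=\pr{\one\tp\Att{K}_{C,i}}\Diag{\tpp{\Att{K}_{i,C}}}-\Att{K}_{C,i}\Att{K}_{i,C}$, so $\EX\aleq\eps\sum_{i\in F}\frac1{\DD_{ii}}\UG{\Gt{i}}$; here the target is tight because after $K=O\pr{\log\log(n/\dlt)}$ squaring steps the residual $\Att{K}_{FF}$ is so small (quadratic convergence) that $\sum_{i\in F}\frac1{\DD_{ii}}\Gt{i}$ agrees, to the accuracy tracked, with the single-vertex-elimination realization of $\sc{\Ltt{K},F}$, whose $(F,F)$-block is effectively diagonal, and eliminating a diagonal block commutes with undirectification, so $\sum_{i\in F}\frac1{\DD_{ii}}\UG{\Gt{i}}\pleq\U{\sc{\Ltt{K},F}}$ with no loss.

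I expect the third step — the domination claim (a) and the no-loss estimate behind \eqref{eq:EXscLtt} — to be the main obstacle, since it requires controlling the undirectification of \emph{directed} completed bicliques and, in particular, reconciling the reference diagonal $\DD_{FF}$ (used both in the weights $1/\DD_{ii}$ and in the $(F,F)$-block of $\Ltt{k,0}$) with the genuine out-degrees of $\Ltt{k-1}$ and the self-loop weights $\Att{k-1}_{FF}$ that accumulate under squaring. It is precisely the $\alp$-RCDD structure maintained along the algorithm (Lemma~\ref{lem:LapetcpropSparSchurCpmt1}) together with the Schur-complement robustness fact (Fact~\ref{lem:scrobust}) that keeps the constant at $O\pr{1+1/\alp}$ rather than letting it grow with $k$; carrying through the routine but delicate bookkeeping between the exact objects ($\MM$, $\Gt{k,i}$) and the approximate ones ($\Ltt{k,0}$, $\Ytt{k}$, $\Xtt{i}$) is the remaining technical work.
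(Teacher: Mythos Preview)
Your outline matches the paper's proof closely: split $\Ett{k}=\EY{k}+\Ett{k,0}$, bound each by $\eps$ times a sum of biclique undirectifications, dominate that sum by $\U{\MM}$ (the paper writes $\Wt{k}$ for your $\MM$), and control $\U{\Wt{k}}$ via Lemma~\ref{lem:ULtm}. Your arithmetic assembling the pieces into $\epsz$ is exactly the paper's.

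The gap is precisely where you flag it, but the resolution is a one-line structural observation, not the ``delicate bookkeeping'' via Fact~\ref{lem:scrobust} that you anticipate. Write $\Ut{k}$ for your sum $\sum_{i\in F}\frac{1}{\DD_{ii}}\UG{\Gt{k,i}}$. Both $\Ut{k}$ and $\U{\Wt{k}}$ are symmetric Laplacians, and the \emph{only} off-diagonal contribution to $\Ut{k}$ is the symmetrization of $\Att{k-1}_{:,F}\DD_{FF}^{-1}\Att{k-1}_{F,:}$, which is also present verbatim in $\U{\Wt{k}}$; every remaining off-diagonal entry of $\U{\Wt{k}}$ (coming from $\Ltt{k-1}_{CC}$, $\Att{k-1}_{FC}$, $\Att{k-1}_{CF}$) is non-positive. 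Hence $\U{\Wt{k}}-\Ut{k}$ has non-positive off-diagonals and zero row sums, so it is itself a symmetric Laplacian and therefore PSD. This gives $\Ut{k}\pleq\U{\Wt{k}}$ with no $\alp$-dependent loss, and your claim (a) follows immediately from (b). No reconciliation of $\DD_{FF}$ with ``genuine out-degrees'' is needed.

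The same trick settles \eqref{eq:EXscLtt} \emph{exactly}. Your sketch---``after $K$ squaring steps $\Att{K}_{FF}$ is negligibly small, so the diagonal-block approximation agrees to the accuracy tracked''---would only yield an approximate inequality, not the clean $\EX\aleq\eps\,\U{\sc{\Ltt{K},F}}$ as stated. Instead, observe that $(\DD_{FF}-\Att{K}_{FF})^{-1}$ dominates $\DD_{FF}^{-1}$ entrywise (Neumann series with nonnegative terms), so the true Schur complement $\sc{\Ltt{K},F}=\Ltt{K}_{CC}-\Att{K}_{CF}(\DD_{FF}-\Att{K}_{FF})^{-1}\Att{K}_{FC}$ carries at least as much off-diagonal weight as your sum $\sum_{i\in F}\frac{1}{\DD_{ii}}\Gt{i}$; the difference of undirectifications is again a symmetric Laplacian, giving $\sum_{i\in F}\frac{1}{\DD_{ii}}\UG{\Gt{i}}\pleq\U{\sc{\Ltt{K},F}}$ with constant $1$.
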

In the remainder of this paper, we write $\epsz = 2\pr{3 + \frac{2}{\alp}}\pr{2\eps + \eps^2}$.

Recall that we define an augmented matrix $\Mt{0, k}$
such that $\Lt{k}$ is its Schur complement
in Section~\ref{sec:ULtk}.
Now, we define $\Mtt{0, k}$ which is an inexact version of $\Mt{0, k} $ to analyze the properties of $\Ltt{k}$.
We first define
\eq{
    \unierr{k, a, \Ett{i}} =& \putmat{\Ett{i}_{FF}, F_a, F_{\psit{k-i}\pr{a}}, \puts{k-i}} + \putmat{\Ett{i}_{FC}, F_a, C, \puts{k-i}} \\
    & + \putmat{\Ett{i}_{CF}, C, F_{\psit{k-i}\pr{a}}, \puts{k-i}} + \putmat{\Ett{i}_{CC}, C, C, \puts{k-i} }.
}
Then, we define the error matrices
\eql{\label{eq:defErrtik1}}{
    \Errt{i, k} =  \sum_{a=1}^{2^{k-i}} \unierr{k, a,  \Ett{i}}
}
and
\eq{
    \Ers{k} = \sum_{i=1}^{k} \Errt{i, k}.
}

The matrix $\Mtt{0,k}$ is defined as follows
\eq{
    \Mtt{0, k} = \Mt{0, k}
    + \Ers{k}.
}

\begin{lemma}\label{lem:Mtti}
    The Schur complement of $[\puts{k}]\dele [n]$ in $\Mtt{0, k}$ satisfies:
    \eql{\label{eq:Mttsc1}}{
        \sc{\Mtt{0,k}, -[n]} = \Ltt{k}.
    }

\end{lemma}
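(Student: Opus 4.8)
The plan is to prove Lemma~\ref{lem:Mtti} by the same Schur-complement-transitivity argument that established~\eqref{eq:scMFtili+} in Lemma~\ref{lem:Mti}, but now tracking how the added error blocks $\Errt{i,k}$ transform under successive partial eliminations. The key point is that $\Mtt{0,k} = \Mt{0,k} + \sum_{i=1}^k \Errt{i,k}$ is built so that, stage by stage, eliminating the $2^{k-i-1}$ blocks $F^+_{k-i}$ turns the augmented matrix carrying errors $\Errt{i,k},\dots,\Errt{k,k}$ into the next augmented matrix carrying errors indexed appropriately, and at the bottom level one recovers precisely the approximate partially-block-eliminated Laplacian $\Ltt{k}$ rather than the exact $\Lt{k}$.

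First I would set up the intermediate matrices: define $\Mtt{i,k} \defeq \Mt{i,k} + \sum_{j=i+1}^{k}\Errt{j,k}$ for $0\le i\le k$ (so $\Mtt{k,k} = \Mt{k,k} = \Lt{k}$ — wait, this needs care, since the top should be $\Ltt{k}$, not $\Lt{k}$; I would instead index so that $\Mtt{i,k}$ carries exactly the errors $\Errt{i,k},\dots$ that have ``not yet been absorbed'', with $\Mtt{k,k}$ chosen to equal $\Ltt{k}$). Concretely, I expect the right statement is an analog of~\eqref{eq:scMFi+}: namely $\sc{\Mtt{i,k}, F^+_{k-i}} = \Mtt{i+1,k}$ for an appropriate family $\{\Mtt{i,k}\}$ with $\Mtt{k,k} = \Ltt{k}$. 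To get it, I would use the block-$3\times3$ submatrix computation already carried out in the proof of Lemma~\ref{lem:Mti}: because $\Errt{i,k}$ has the same block sparsity pattern as the $-\At{i}$ blocks of $\Mt{i,k}$ (this is exactly how $\unierr{k,a,\Ett{i}}$ was designed), adding $\Errt{i,k}$ perturbs each relevant $3\times 3$ block from
\[
\mx{\DD_{FF} & -\At{i}_{FF} & -\At{i}_{FC}\\ -\At{i}_{FF} & \zero & -\At{i}_{FC}\\ -\At{i}_{CF} & -\At{i}_{CF} & 2^{k-i}\Lt{i}_{CC}}
\]
to the corresponding block with $\At{i}$ replaced by the ``inexact'' quantities $\At{i} - \Ett{i}$ in the right places. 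Then the explicit Schur-complement formula, combined with the definitions of $\EY{k}$, $\Ett{k,0}$, $\Ett{k}$ and the update rule in Line~\ref{line:Lttk01} of Algorithm~\ref{alg:SparSchur}, should show the eliminated block equals the next-level inexact matrix. Finally I would invoke Fact~\ref{fact:sctran} (Schur transitivity) and induct: $\sc{\Mtt{0,k},-[n]} = \sc{\sc{\Mtt{0,k},F^+_{k}},-[n]} = \sc{\Mtt{1,k},-[n]} = \cdots = \Mtt{k,k} = \Ltt{k}$.

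The main obstacle I anticipate is bookkeeping the error terms precisely — in particular verifying that the ``one-step'' reduction really produces $\Ltt{k}$ and not some shifted variant. There are three places errors enter Algorithm~\ref{alg:SparSchur}'s $k$-th iteration: (a) the $\SP$ calls producing $\Ytt{k,i}$, hence $\EY{k}$; (b) the formation of $\Ltt{k,0}$ in Line~\ref{line:Lttk01}; and (c) the $\SE$ sparsification $\Ltt{k} = \SE(\Ltt{k,0},\eps,F)$, giving $\Ett{k,0} = \Ltt{k,0} - \Ltt{k}$. The definition $\Ett{k} = \EY{k} + \Ett{k,0}$ bundles these, and I need to check that $\Errt{k,k}$ (built from $\Ett{k}$ via $\unierr{k,a,\cdot}$, summed over $a=1,\dots,2^{k-k}=1$... no, over $a=1,\dots,2^{0}$ at the top level — again an indexing subtlety) accounts for exactly this bundle at the correct level of the augmented hierarchy. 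I would pin this down by writing $\Ltt{k,0}$ in terms of $\Ltt{k-1}$ and $\Ytt{k}$, substituting $\Ytt{k} = \Att{k-1}_{:,F}\DD_{FF}\inv\Att{k-1}_{F,:} - \EY{k}$, and matching term-by-term against the $3\times3$ Schur computation from Lemma~\ref{lem:Mti}'s proof with the appropriate error insertions. Modulo this careful matching, which is a finite check, the lemma follows. I would also double-check that no RCDD/invertibility hypothesis is needed here — unlike Lemma~\ref{lem:ULtm}, this is a purely algebraic identity valid whenever the relevant $(F,F)$-type blocks are invertible, which holds because $\DD_{FF}$ is the diagonal and the errors are small (or, more robustly, one can phrase it so the identity holds formally wherever all indicated inverses exist).
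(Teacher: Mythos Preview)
Your plan is essentially the paper's proof: define intermediate augmented matrices, verify a one-step Schur reduction via the same $3\times3$ block computation as in Lemma~\ref{lem:Mti}, and chain via Fact~\ref{fact:sctran}. The point you correctly flag as needing care is the only substantive one, and your tentative fix (``$\At{i}$ replaced by the inexact quantities $\At{i}-\Ett{i}$'') is not right. The correct intermediate objects are the matrices $\Ntt{i,k}$ obtained from the formula~\eqref{eq:Mtik1} for $\Mt{i,k}$ by substituting the \emph{algorithmic} iterates $\Att{i},\Ltt{i}$ for $\At{i},\Lt{i}$ everywhere; then $\Ntt{0,k}=\Mt{0,k}$ (since $\Ltt{0}=\LL$), $\Ntt{k,k}=\Ltt{k}$, and the inductive invariant is
\[
\sc{\Ntt{i,k}+\textstyle\sum_{j\ge i+1}\Errt{j,k},\,F^+_{k-i}}\;=\;\Ntt{i+1,k}+\textstyle\sum_{j\ge i+2}\Errt{j,k}.
\]
Two observations make this go through: first, $\sum_{j\ge i+2}\Errt{j,k}$ is supported on $[\puts{k-i-2}]$, disjoint from $F^+_{k-i}$, so it passes through the elimination untouched; second, $\Errt{i+1,k}$ is supported on $[\puts{k-i-1}]$, also disjoint from $F^+_{k-i}$, so it does not touch the pivot block $\DD_{FF}$ but lands exactly in the lower-right $2\times2$ of each $3\times3$ submatrix, which is precisely what is needed so that the Schur complement plus $\Ett{i+1}$ reproduces the $\Att{i+1}$ and $\Ltt{i+1}$ blocks via the definition $\Ett{i+1}=\EY{i+1}+\Ett{i+1,0}$. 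Note that $\Att{i}$ is not $\At{i}-\Ett{i}$: it depends on all of $\Ett{1},\dots,\Ett{i}$ through the algorithm's recursion, so after the first elimination you can no longer describe the intermediate matrices as ``exact $\Mt{i,k}$ plus remaining errors''.
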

\begin{proof}

  We define the following auxiliary matrices in this proof
  \eq{
    \Ntt{i, k} =&
    2^{k-i}\putmat{\Ltt{i}_{CC}, C, C, \puts{k-i}}
    + \sum_{a = 1}^{2^{k-i}}\Big(\putmat{\DD_{FF}, F_a, F_a, \puts{k-i} } \\
    & + \putmat{- \Att{i}_{FF}, F_a, F_{\psit{k-i}\pr{a}}, \puts{k-i}} + \putmat{- \Att{i}_{FC}, F_a, C, \puts{k-i}} \\
    & + \putmat{- \Att{i}_{CF}, C, F_{a}, \puts{k-i}}\Big).
  }
  Since $\Ltt{0} = \LL$, we have $\Ntt{0, k} = \Mt{0, k}$.
  Also, from definition, $\Ntt{k, k} = \Ltt{k}$.
  We also denote $F_i^+ = \cup_{a = 2^{i-1}+1}^{2^{i} } F_a $ in this proof.

  Define
  \eq{
    \Mtt{i, k} = \Ntt{i,k} + \sum_{j=i+1}^k \Errt{j,k}.
  }
  Similar with the arguments in Lemma~\ref{lem:Mti}, when computing the Schur complement of $F_a$ in $\Ntt{i,k} + \Errt{i+1,k}$ where $2^{k-i-1}+1 \leq a \leq 2^{k-i} $, we only need to focus on the following submatrix
  \eq{
    &\pr{\Ntt{i, k} + \Errt{i+1, k}}_{C \cup  F_{a - 2^{k-i-1}} \cup F_a,\ C\cup F_{\psit{k-i}\pr{a }}\cup F_a}  \\
    =&
    \mx{
        2^{k-i}\Ltt{i}_{CC} & - \Att{i}_{CF}  & - \Att{i}_{CF}   \\
        - \Att{i}_{FC}  & \zero & - \Att{i}_{FF}  \\
        - \Att{i}_{FC}  & - \Att{i}_{FF}  & \DD_{FF}
    }
    +
    \mx{
        2^{k-i-1}\Ett{i+1}_{CC} & \Ett{i+1}_{CF} & \zero \\
        \Ett{i+1}_{FC} & \Ett{i+1}_{FF} & \zero \\
        \zero & \zero & \zero
    },
  }
  where the block $\Ntt{i,k}_{F_{a - 2^{k-i-1}}, F_{\psit{k-i}\pr{a}}} = \zero$ is also by~\eqref{eq:Atnotdiag}.

  By the definition of $\Ett{i+1 } $, we have
  \eq{
    &\sc{\pr{\Ntt{i, k} + \Errt{i+1,k}}_{C\cup F_{a - 2^{k-i-1}} \cup F_a,\ C\cup F_{\psit{k-i }\pr{a  }}\cup F_a}, F_a}  \\
    =&
    \mx{
         2^{k-i}\Ltt{i}_{CC} - \Att{i}_{CF}\DD_{FF}\inv\Att{i}_{FC} + 2^{k-i-1}\Ett{i+1}_{CC} & - \Att{i}_{CF}\pr{\II + \DD_{FF}\inv\Att{i}_{FF}} + \Ett{i+1}_{CF} \\
        - \pr{\II + \Att{i}_{FF}\DD_{FF}\inv}\Att{i}_{FC} + \Ett{i+1}_{FC}  & - \Att{i}_{FF}\DD_{FF}\inv\Att{i}_{FF} + \Ett{i+1}_{FF}
    } \\
    =&
    \mx{
        \pr{2^{k-i} - 2}\Ltt{i}_{CC} + \Ltt{i+1}_{CC} & - \Att{i+1}_{CF}  \\
        - \Att{i+1}_{FC} & - \Att{i+1}_{FF}
    }.
  }
  By similar arguments with Lemma~\ref{lem:Mti}, the ``positions" of the blocks equalling $- \Att{i+1}_{FF}$ in \\ $\sc{\Ntt{i, k} + \Errt{i+1,k}, F_{k-i}^+}$ are exactly the same as those in $\Ntt{i+1, k}$.

  Since $F_{k-i}^+$ contains $2^{k-i-1}$ sets of size $\abs{F}$, we have
  \eq{
    \sc{\Ntt{i, k} + \Errt{i+1,k}, F_{k-i}^+}_{CC} = 2^{k-i-1}\pr{2\Ltt{i}_{CC} - \Att{i}_{CF}\DD_{FF}\inv\Att{i}_{FC} + \Ett{i+1}_{CC}}  = 2^{k-i-1}\Ltt{i+1}_{CC}.
  }
  Thus, we have shown
  \eq{
     \sc{\Ntt{i, k} + \Errt{i+1,k}, F_{k-i}^+} = \Ntt{i+1, k}.
  }

  Since the support set of $\sum_{j=i+2}^{k}\Errt{j, k} $ is $[\puts{k-i-2}]$ which is disjoint with $F_{k-i}^+ $, we have
  \eq{
    \sc{\Mtt{i,k}, F_{k-i}^+} = \sc{\Ntt{i, k} + \sum_{j=i+1}^{k}\Errt{j,k}, F_{k-i}^+} = \Ntt{i+1, k} + \sum_{j=i+2}^{k}\Errt{j,k} = \Mtt{i+1, k}.
  }
  By induction,
  \eq{
    \sc{\Mtt{0, k}, -[n]} = \Mtt{k,k} = \Ntt{k,k} = \Ltt{k}.
  }

\end{proof}

To help bound $\Ers{k}$,
we define some special kinds of matrices termed \emph{``repetition matrices"}. We will construct the matrices $\dr{\XL{k}}$ as linear combinations of \emph{``repetition matrices"}.
\begin{definition}
    (``Repetition matrices")
    We will use the following 3 kinds of \emph{``repetition matrices"}:
    consider a matrix $\AA\in \MS{m}{m}$ and subset $C \sleq [m]$ and $E = [m]\dele C$,
    \begin{enumerate}[(i)]
      \item
      the $k$-\repmat\ of $\AA$ is defined as follows:
    \eq{
        \rep{k, C, \AA} =
        \mx{
            k \AA_{CC} & \AA_{CE} & \AA_{CE} & \cdots & \AA_{CE} \\
            \AA_{EC} & \AA_{EE} & \zero & \cdots & \zero  \\
            \AA_{EC} & \zero & \AA_{EE} & \ddots & \vdots   \\
            \vdots & \vdots & \ddots& \ddots & \zero  \\
            \AA_{EC} & \zero & \cdots & \zero & \AA_{EE}
        } \in \MS{\pr{k\abs{E} + \abs{C}}}{\pr{k\abs{E} + \abs{C}}},
    }
    where the repetition numbers of the blocks $\AA_{CE}, \AA_{EC}, \AA_{EE}$ are $k$;

    \item $\repp{k, C, \AA, N}$ is a larger matrix by appending all-zeros rows and columns to $\rep{k, C, \AA}$:
    \eq{
        \repp{k, C, \AA, N} =
        \mx{
            \rep{k, C, \AA}  & \zero \\
            \zero  & \zero
        }\in \MS{N}{N},
    }
    where $N \geq k\abs{E} + \abs{C}$ is used to indicate the size of $\repp{k, C, \AA, N}$  ;

    \item if $F, F_+$ is a partition of $E$, $\repFC{k, C, F, \AA}$ is defined as a permutation of the $k$-\repmat\ of $\AA$, which has the following form:
        \eq{
            \repFC{k, C, F, \AA} =
            \mx{
                k\AA_{CC} & \AA_{CF} & \cdots & \AA_{CF} & \AA_{C F^+} & \cdots & \AA_{C F^+}  \\
                \AA_{FC} & \AA_{FF} & & & & & \\
                \vdots & & \ddots & & & &  \\
                \AA_{FC} & & & \AA_{FF} & & & \\
                \AA_{F^+ C} & & & & \AA_{F^+ F^+} & & \\
                \vdots & & & & & \ddots & \\
                \AA_{F^+ C} & & & & & & \AA_{F^+ F^+ }
            }.
        }

    \end{enumerate}

\end{definition}

Now, we define the matrices $\dr{\XL{k}}_{0\leq k\leq K}$ which are used to bound $\Mt{0, k} - \Mtt{0, k} $ and then $\Lt{k} - \Ltt{k}  $.
We define $\dr{\XL{k}}_{0\leq k\leq K}$ iteratively together with the error quantities $\dr{\gam_k}_{0\leq k\leq K}$.

We start from $\XL{0} = \U{\LL}$ and $\gam_0 = 0$.
If we have defined $\dr{\XL{i}}_{0\leq i < k}$, $\dr{\gam_i}_{0\leq i < k}$, then
 $\XL{k}\in \MS{\pr{\puts{k}}}{\pr{\puts{k}}}  $ and $\gam_k \in \Real_+ $ are defined as follows:
\eql{\label{eq:newday1}}{
    \XL{k} \defeq & \frac{k}{4k + \frac{2k}{\alp} + \sum_{i=0}^{k-1}\gam_i}
    \U{\Mt{0,k}}  \\
    & + \frac{1}{4k + \frac{2k}{\alp} + \sum_{i=0}^{k-1}\gam_i}
    \sum_{i=0}^{k-1}  \gam_i\rep{2^{k-i},  C, F, \XL{i} }
     \\
    & + \frac{{3 + \frac{2}{\alp}}}{4k + \frac{2k}{\alp} + \sum_{i=0}^{k-1}\gam_i}
     \sum_{i=0}^{k-1} \rep{2^{k-i},  C, F, \U{\Mt{0,i}} },
}
\eql{\label{eq:defgamk}}{
          \gam_k \defeq \sup_{\xx, \yy\notin \ker\pr{\sc{\XL{k }, -[n]}}} \frac{2\xx\tp\pr{\Ltt{k} - \Lt{k}}\yy}{\xx\tp\sc{\XL{k}, -[n]}\xx + \yy\tp\sc{\XL{k}, -[n]}\yy } ,
}
\begin{remark}
    The first term $\U{\Mt{0, k}}$ in~\eqref{eq:newday1} is only used to guarantee that \\ $\XL{k} \pgeq \frac{k }{4k + \frac{2k}{\alp} + \sum_{i=0}^{k-1}\gam_i}\U{\Mt{0,k}}$ (Fact~\ref{fact:frMti1}\eqref{enum:Q2}).
    Without this term,~\eqref{eq:XLb1} still holds with only slight changes in the constants.
\end{remark}

\begin{remark}
We use the more complicated definition of $\gam_k$ as
in~\eqref{eq:defgamk} because the relations between the kernels of
$\sc{\XL{k}, -[n]}$ and $\Ltt{k} - \Lt{k}$ is only shown in
Section~\ref{sec:Schurcplstable2}.
In Lemma~\ref{lem:scLttyes}, we show $\gam_k < +\infty$.
Then, by Fact~\ref{lem:ne},
we obtain the simplification,
    \eq{ \gam_k = \ndd{\sc{\XL{k}, -[n]}}{\pr{\Ltt{k} - \Lt{k}}}.}

\end{remark}

The following elementary properties of $\XL{k}$ follows directly by Lemma~\ref{lem:Mti}, Fact~\ref{fact:DLD} and the fact that the coefficients on the RHS of~\eqref{eq:newday1} equals $1$.
\begin{fact}\label{fact:frMti1}
   $\XL{k}$ is a Laplacian satisfying:
    \begin{enumerate}[(i)]
      \item $\U{\Mt{0,k}} \pleq \pr{4 + \frac{2}{\alp} + \frac{\sum_{i=0}^{k-1}\gam_i}{k}}\XL{k}  $; \label{enum:Q2}
      \item $\Diag{\XL{k}} = \Diag{\Mt{0,k}} $; \label{enum:Q3}
      \item $\XL{k}_{ - C, - C } $, $\XL{k}_{-[n], -[n]}$ are $\alp$-RCDD; \label{enum:Q4}
      \item $\nt{\pr{\XL{k}}^{1/2}\Diag{\Mt{0,k}}^{-1/2}}^2 \leq 2 $; \label{enum:Q5}

    \end{enumerate}
\end{fact}

\begin{lemma}\label{enum:Q6}
     $\sc{\XL{k}, - C } \pleq 2^{k}\U{\sc{\LL, F}} $.
\end{lemma}

The following lemma shows how we bound the sparsification errors attached to $\Mtt{0, k}$ by $\XL{k}$.
\begin{lemma}\label{lem:Ersepsz1}

    \eql{\label{eq:XLb1}}{
        \Ers{k}
        \aleq \epsz \pr{4k + \frac{2k}{\alp} + \sum_{i=0}^{k-1}\gam_i} \XL{k}.
    }

\end{lemma}
\begin{proof}
We only prove the case when all $\gam_k < +\infty$, and the proof for the case when some $\gam_k = +\infty$ follows trivially.\footnote{Actually, we will show in Lemma~\ref{lem:scLttyes} that all $\gam_k < +\infty$.  }

We define $\XL{0} = \U{\LL}$,
and will construct $\XL{k}$ iteratively.
That is, we want to show that given
$\XL{0}, \cdots, \XL{k-1}$ satisfying these conditions,
we can find $\XL{k}$.

  Firstly, we fix some $i\in \dr{0, 1, \cdots, k-1}$.
  By Lemma~\ref{lem:EYEXEtt} and Fact~\ref{lem:ne}, we have
  \eq{
    2\xx\tp \Ett{i+1} \yy \leq \epsz \pr{\xx\tp \U{\Ltt{i}} \xx + \yy\tp \U{\Ltt{i}} \yy},\
\qquad \forall \xx, \yy\in \Real^n.
  }

Then, for any
$\xx, \yy\in \Real^{\puts{k}},  $
by the definition of $\Errt{i+1, k}$ in~\eqref{eq:defErrtik1}, we have
\eq{
    &2\xx\tp \Errt{i+1, k} \yy
    = 2\sum_{a=1}^{2^{k-i-1} } \xx\tp \unierr{k, a, \Ett{i+1}} \yy
    = 2\sum_{a=1}^{2^{k-i-1} } \vc{\xx_C \\ \xx_{F_a}}\tp \Ett{i+1} \vc{\yy_C \\ \yy_{F_{\psit{k - i - 1}\pr{a}}}}  \\
    \leq&  \epsz \sum_{a=1}^{2^{k-i-1} } \pr{ \vc{\xx_C \\ \xx_{F_a}}\tp \U{\Ltt{i}} \vc{\xx_C \\ \xx_{F_{a}}} + \vc{\yy_C \\ \yy_{F_{\psit{k-i-1}\pr{a}}}}\tp \U{\Ltt{i} } \vc{\yy_C  \\ \yy_{F_{\psit{k-i-1}\pr{a}}}} }.
}
By the fact that  $\psit{k-i-1}\pr{\cdot}$ is a bijection from $[2^{k-i-1}]$ to $[2^{k-i-1} ]$ and $\U{\Ltt{i}}$ is PSD (because $\Ltt{i}$ is an Eulerian Laplacian from Lemma~\ref{lem:LapetcpropSparSchurCpmt1}),
we have
  \eql{\label{eq:xEy127}}{
    &2\xx\tp \Errt{i+1, k} \yy  \\
    \leq& \epsz \sum_{a=1}^{2^{k-i-1} } \pr{ \vc{\xx_C \\ \xx_{F_a}}\tp \U{\Ltt{i}} \vc{\xx_C \\ \xx_{F_{a}}} + \vc{\yy_C \\ \yy_{F_{a}}}\tp \U{\Ltt{i} } \vc{\yy_C  \\ \yy_{F_{a  }}} }  \\
    \leq& \epsz \sum_{a=1}^{2^{k-i} } \pr{ \vc{\xx_C \\ \xx_{F_a}}\tp \U{\Ltt{i}} \vc{\xx_C \\ \xx_{F_{a}}} + \vc{\yy_C \\ \yy_{F_{a}}}\tp \U{\Ltt{i} } \vc{\yy_C  \\ \yy_{F_{a  }}} }  \\
    =& \epsz \pr{\xx\tp \repp{2^{k-i}, C, \U{\Ltt{i}}, \puts{k}}\xx  + \yy\tp \repp{2^{k-i}, C, \U{\Ltt{i}}, \puts{k}}\yy  }.
  }

  To bound the repetition matrix $\repp{2^{k-i}, C, \U{\Ltt{i}}, \puts{k}}$, we bound the matrices
  $\repp{2^{k-i}, C, \U{\Ltt{i} - \Lt{i}}, \puts{k}}$ and $\repp{2^{k-i}, C, \U{\Lt{i}}, \puts{k}} $, respectively.

  By the definition of $\gam_i $ in~\eqref{eq:defgamk},
  \eq{
    \U{\Ltt{i} - \Lt{i}} \pleq \gam_i \sc{\XL{i}, -[n] }.
  }
  Denote the set $E_k = C \cup \pr{\cup_{a=1}^{2^{k}}F_a} $.
  It is straightforward that
  \eq{
    \sc{\repFC{2^{k-i}, C, F, \XL{i}}, - E_{k-i} }
    = \rep{2^{k-i}, C, \sc{\XL{i}, -[n]} }.
  }
  Then, by Fact~\ref{fact:repprvpleq} and Fact~\ref{fact:scz}, we have
  \eql{\label{eq:XLLtt-Lt1}}{
    &\repp{2^{k-i}, C, \U{\Ltt{i} - \Lt{i}}, \puts{k}} \\
    \pleq& \gam_i \repp{2^{k-i}, C, \sc{\XL{i}, -[n]}, \puts{k}} \\
    \pleq& \gam_i \rep{2^{k-i}, C, F, \XL{i}}.
  }

  By Lemma~\ref{lem:Mti}, $\Mt{0, i}_{-[n], -[n] }$ is $\alp$-RCDD and $\sc{\Mt{0,i}, -[n]} = \Lt{i} $.
  Then, by Fact~\ref{lem:scrobust}, we have
  \eq{
    \U{\Lt{i}} = \U{\sc{\Mt{0, i}, -[n]}} \pleq \pr{3 + \frac{2}{\alp}}\sc{\U{\Mt{0, i}}, -[n]}.
  }
  Similar to~\eqref{eq:XLLtt-Lt1}, we also have
  \eql{\label{eq:XLLt1}}{
    &\repp{2^{k-i}, C, \U{\Lt{i}}, \puts{k}}
    \pleq \pr{3 + \frac{2}{\alp}} \rep{2^{k-i}, C, F, \U{\Mt{0, i}}}.
  }

Then, by substituting~\eqref{eq:XLLtt-Lt1},~\eqref{eq:XLLt1}
into~\eqref{eq:xEy127}, summing over $i=0, 1, \cdots, k-1$,
and combining with~\eqref{eq:newday1} and Fact~\ref{lem:ne}, we have~\eqref{eq:XLb1}.

\end{proof}

\section{Robustness of Schur Complements and Full Error Analysis}\label{sec:Schurcplstable2}

In this section, we show additional robustness properties
of Schur complements suitable for analyzing errors on the
augmented matrices.
Specifically, we establish conditions on $\AA, \BB, \UU$
where $\AA - \BB \aleq \eps\cdot \UU $,
as well as the set to be eliminated, $F$,
so that $\sc{\AA, F} - \sc{\BB, F} \aleq \dlt\cdot \sc{\UU, F} $.

Using these properties, we bound the norms of errors in
Schur complements of the $\XL{k}$ and $\gam_k $.
Such bounds allow us to complete the proof of
Theorem~\ref{thm:SparSchur}.

\subsection{Schur Complement Robustness }
The next lemma is a perturbed version of Fact~\ref{fact:LDL}. 
It is used to prove Lemma~\ref{lem:schurdUL1} below.
\begin{lemma}\label{lem:pertbLDL}
    Suppose that $\LL\in \MatSize{n}{n}$ is an Eulerian Laplacian, $\DD = \Diag{\LL}$, $\WW$ is PSD, $\nt{\WW^{1/2}\DD^{-1/2}} \leq a $,
    and the matrix $\EE\in \MatSize{n}{n}$ satisfies $\EE \aleq b\WW$
    with $a^2 b < 2$.
    Then the matrix $\MM = \LL + \EE$ satisfies:  
    \eq{
        &\MM\tp\DD\inv\MM \pleq \frac{1}{2 - a^2b}\pr{\pr{4 + 2a^2b}\U{\LL} + 2b \WW}, \\
        &\MM\DD\inv\MM\tp \pleq \frac{1}{2 - a^2b}\pr{\pr{4 + 2a^2b}\U{\LL} + 2b \WW}.
    }

\end{lemma}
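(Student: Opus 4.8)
The plan is to expand $\MM\tp\DD\inv\MM=(\LL+\EE)\tp\DD\inv(\LL+\EE)$ into the unperturbed term $\LL\tp\DD\inv\LL$, the two cross terms $\LL\tp\DD\inv\EE+\EE\tp\DD\inv\LL$, and the error term $\EE\tp\DD\inv\EE$, to bound each of these three pieces separately, and then to recombine them with a carefully tuned weight in an AM--GM step so that the constants collapse to the stated form. The bound on $\MM\DD\inv\MM\tp$ then follows by applying the same argument to $\LL\tp$ and $\EE\tp$ in place of $\LL$ and $\EE$ (equivalently, transposing everything, using that $\LL\tp$ is again an Eulerian Laplacian with the same diagonal and symmetrization), so I would only carry out the first inequality in detail.

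Next I would assemble three ingredients. (a) Since $\LL$ is Eulerian with $\DD=\Diag{\LL}$, Fact~\ref{fact:LDL} gives $\LL\tp\DD\inv\LL\pleq 2\,\U{\LL}$ (and, applied to $\LL\tp$, $\LL\DD\inv\LL\tp\pleq 2\,\U{\LL}$). (b) For every $t>0$, positive semidefiniteness of $\DD\inv$ yields, by completing the square, $\LL\tp\DD\inv\EE+\EE\tp\DD\inv\LL\pleq t\,\LL\tp\DD\inv\LL+\tfrac1t\,\EE\tp\DD\inv\EE$. (c) $\EE\tp\DD\inv\EE\pleq a^2b^2\,\WW$ and $\EE\DD\inv\EE\tp\pleq a^2b^2\,\WW$. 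To prove (c) I would unpack $\EE\aleq b\WW$ using Definition~\ref{def:aleq1}: the kernel condition $\ker(\WW)\sleq\ker(\EE)\cap\ker(\EE\tp)$ implies that the orthogonal projection $\WW^{1/2}\WW^{\dagger/2}$ onto $\mathrm{col}(\WW)$ fixes both $\EE$ and $\EE\tp$, hence $\EE=\WW^{1/2}\NN\WW^{1/2}$ with $\NN:=\WW^{\dagger/2}\EE\WW^{\dagger/2}$; the norm part of $\EE\aleq b\WW$ gives $\nt{\NN}\leq b$; and $\nt{\WW^{1/2}\DD^{-1/2}}\leq a$ rewrites as $\WW^{1/2}\DD\inv\WW^{1/2}\pleq a^2\II$. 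Combining these, $\EE\tp\DD\inv\EE=\WW^{1/2}\,\NN\tp\big(\WW^{1/2}\DD\inv\WW^{1/2}\big)\NN\,\WW^{1/2}\pleq a^2\nt{\NN}^2\,\WW\pleq a^2b^2\,\WW$, and the transposed statement follows the same way.

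Combining (a)--(c) gives
\[
\MM\tp\DD\inv\MM\pleq (1+t)\,\LL\tp\DD\inv\LL+\Big(1+\tfrac1t\Big)\EE\tp\DD\inv\EE\pleq 2(1+t)\,\U{\LL}+\Big(1+\tfrac1t\Big)a^2b^2\,\WW.
\]
I would then take $t=\dfrac{2a^2b}{2-a^2b}$, which is positive exactly because $a^2b<2$; this makes $2(1+t)=\dfrac{4+2a^2b}{2-a^2b}$ and $\big(1+\tfrac1t\big)a^2b^2=\dfrac{(2+a^2b)\,b}{2}\leq\dfrac{2b}{2-a^2b}$, the last inequality being the scalar fact $(2+s)(2-s)\leq4$ with $s=a^2b$. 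This is exactly the claimed bound on $\MM\tp\DD\inv\MM$, and running the same computation with $\LL\DD\inv\LL\tp$ and $\EE\DD\inv\EE\tp$ gives the bound on $\MM\DD\inv\MM\tp$.

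I expect the main obstacle to be ingredient (c): converting the one-sided ``asymmetric'' bound $\EE\aleq b\WW$ into the two-sided semidefinite bound $\EE\tp\DD\inv\EE\pleq a^2b^2\,\WW$. It is conceptually short but needs care with the pseudoinverses and the range/kernel inclusions --- in particular deducing $\mathrm{col}(\EE)\sleq\mathrm{col}(\WW)$ and $\mathrm{col}(\EE\tp)\sleq\mathrm{col}(\WW)$ from the kernel condition --- so that the factorization $\EE=\WW^{1/2}\NN\WW^{1/2}$ is legitimate; everything afterwards is routine scalar bookkeeping to land on the precise constants $\tfrac{4+2a^2b}{2-a^2b}$ and $\tfrac{2b}{2-a^2b}$.
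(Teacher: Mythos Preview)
Your proof is correct and reaches the stated bound, but it takes a genuinely different route from the paper's argument. The paper uses an asymmetric three-term split of the quadratic form,
\[
\xx\tp\MM\tp\DD\inv\MM\xx=\xx\tp\LL\tp\DD\inv\LL\xx+\xx\tp\LL\tp\DD\inv\EE\xx+\xx\tp\EE\tp\DD\inv\MM\xx,
\]
and bounds the last two terms directly via the bilinear characterization of $\aleq$ (Fact~\ref{lem:ne}) with test vectors $\DD\inv\LL\xx$ and $\DD\inv\MM\xx$. The third term then produces $a^2b\,\xx\tp\MM\tp\DD\inv\MM\xx$ on the right-hand side, giving a self-referential inequality; rearranging yields the $(2-a^2b)$ factor without any parameter tuning. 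Your route instead does the full symmetric four-term expansion, applies AM--GM with a free parameter $t$, and crucially relies on ingredient~(c), the matrix bound $\EE\tp\DD\inv\EE\pleq a^2b^2\WW$, obtained by factorizing $\EE=\WW^{1/2}\NN\WW^{1/2}$ through the kernel condition in Definition~\ref{def:aleq1}. The paper never needs this factorization, which makes its proof a bit shorter; on the other hand your approach actually yields a slightly sharper coefficient $\tfrac{(2+a^2b)b}{2}$ on $\WW$ before you relax it to $\tfrac{2b}{2-a^2b}$, and it works purely at the level of operator inequalities rather than quadratic forms.
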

\begin{proof}
  For any $\xx \in \Real^n$,
\eq{
\xx\tp\MM\tp\DD\inv\MM\xx = \xx\tp\LL\tp\DD\inv\LL\xx  + \xx\tp\LL\tp\DD\inv\EE\xx + \xx\tp\EE\tp\DD\inv\MM\xx.
}
We bound each of these terms on the RHS separately.
\begin{itemize}
\item For the first term, by Fact~\ref{fact:LDL},
  $
    \xx\tp\LL\tp\DD\inv\LL\xx \leq 2\xx\tp \U{\LL} \xx.
  $
\item For the second term,
  by Fact~\ref{lem:ne},
  the conditions $\EE \aleq b \WW $ and $\nt{\WW^{1/2  }\DD^{-1/2}} \leq a $ and Fact~\ref{fact:LDL},
  \eq{
    &2\xx\tp\LL\tp\DD\inv\EE\xx \leq b\pr{\xx\tp \WW \xx + \xx\tp \LL\tp \DD\inv \WW \DD\inv \LL \xx}  \\
    =& b\pr{\xx\tp \WW \xx + \nt{\WW^{1/2}\DD\inv\LL\xx}^2  }
    \leq  b\pr{\xx\tp \WW \xx + \nt{\WW^{1/2}\DD^{-1/2}}^2 \nt{\DD^{-1/2}\LL\xx}^2  }  \\
    \leq&  b\pr{\xx\tp \WW \xx + a^2 \xx\tp \LL\tp \DD\inv \LL \xx  }
    \leq b\pr{\xx\tp \WW \xx + 2 a^2 \xx\tp \U{\LL} \xx  }.
  }
\item For the third term,
using the conditions $ \EE \aleq b\WW       $ and $\nt{\WW^{\dagger/2}\DD^{-1/2}} \leq a $ again yields that
  \eq{
     &2\xx\tp\EE\tp\DD\inv\MM\xx \leq b\pr{\xx\tp \WW \xx + \xx\tp \MM\tp \DD\inv \WW \DD\inv \MM \xx  } \\
    =& b\pr{\xx\tp \WW \xx + \nt{\WW^{1/2}\DD\inv\MM\xx}^2 } \leq b\pr{\xx\tp \WW \xx + a^2\nt{\DD^{-1/2}\MM\xx}^2 }  \\
    =& b\pr{\xx\tp \WW \xx + a^2 \xx\tp \MM\tp \DD\inv\MM\xx}.
  }
  By combining the above equations, we have
  \eq{
    2\xx\tp\MM\tp\DD\inv\MM\xx \leq  \pr{4 + 2a^2b}\xx\tp\U{\LL} \xx + 2b \xx\tp\WW\xx + a^2 b \xx\tp \MM\tp\DD\inv\MM \xx.
  }
\end{itemize}

  Rearranging the above equations yields that
  \eq{
    \xx\tp\MM\tp\DD\inv\MM\xx \leq  \frac{1}{2 - a^2b}\pr{\pr{4 + 2a^2b}\xx\tp\U{\LL}\xx + 2b \xx\tp\WW\xx}.
  }
  which gives the result for $\MM\tp\DD\inv\MM$.
  The bound for $\MM\DD\inv\MM\tp$ follows analogously.
\end{proof}

The following lemma shows the robustness of the Schur complements.
It is used in the proof of Lemma~\ref{lem:scLttyes}
to bound $\gam_k$.

\begin{lemma}\label{lem:schurdUL1}
    Let  $\NN\in \MatSize{n}{n}$ be an Eulerian Laplacian, let $\MM$ be an $n$-by-$n$ matrix, let $\UU \in\MatSize{n}{n}$ be PSD and $F, C$ a partition of $[n]$.
    Suppose that
    $\UU_{FF}  $ is nonsingular,
    $\UU\one = \zero$,
    $\NN_{FF}$ is $\rho$-RCDD $(\rho > 0)$,
    $\U{\NN}_{FF} \pgeq \frac{1}{\mu}\UU_{FF}$,
    $\U{\NN} \pleq \beta \UU$,
    $\nt{\UU^{1/2}\Diag{\NN}^{-1/2}} \leq a$,
    and the matrix  $\EE = \MM - \NN$  satisfies $\EE\aleq b \cdot \UU $
    with $b < \min\dr{\frac{2}{a^2}, \frac{1}{\mu}}$.

    Then, $\MM_{FF}$, $\NN_{FF}$ are nonsingular and
    \eql{\label{eq:scdiff1}}{
        {\sc{\MM, F} - \sc{\NN, F}} \aleq  b \pr{1 + \frac{1}{\rho}}\frac{\mu\pr{\bet\pr{4 + 2a^2 b} + 2b}}{\pr{1 - \mu b}^2\pr{2 - a^2 b}} \cdot \sc{\UU, F}.
    }

\end{lemma}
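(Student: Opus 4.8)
The plan is to reduce~\eqref{eq:scdiff1} to a bilinear inequality via Fact~\ref{lem:ne}: writing $K$ for the claimed coefficient $b(1+\tfrac1\rho)\tfrac{\mu(\beta(4+2a^2b)+2b)}{(1-\mu b)^2(2-a^2b)}$, it suffices to prove $2\xx\tp(\sc{\MM,F}-\sc{\NN,F})\yy\le K\pr{\xx\tp\sc{\UU,F}\xx+\yy\tp\sc{\UU,F}\yy}$ for all $\xx,\yy\in\Real^{\Cn}$, the kernel condition coming for free from the algebraic identity below together with $\ker\UU\sleq\ker\EE\cap\ker\EE\tp$. First I would settle nonsingularity: $\NN_{FF}$ is $\rho$-RCDD with $\rho>0$, so $\U{\NN_{FF}}\pgeq\tfrac{\rho}{1+\rho}\Diag{\NN_{FF}}$ is positive definite and $\NN_{FF}$ is invertible; restricting $\EE\aleq b\UU$ to $F$ gives $\EE_{FF}\aleq b\UU_{FF}\pleq b\mu\,\U{\NN_{FF}}$, and since $b\mu<1$ this yields $\U{\MM_{FF}}\pgeq(1-\mu b)\U{\NN_{FF}}\succ\zero$, so $\MM_{FF}$ is invertible too.

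Next, the key identity. For any matrix $\AA$ with $\AA_{FF}$ invertible, set $\PP_\AA\defeq\mx{-\AA_{FF}\inv\AA_{FC}\\ \II_{\Cn}}$; a block computation shows $\AA\PP_\AA=\mx{\zero\\\sc{\AA,F}}$ and $\PP_{\AA\tp}\tp\AA=\mx{\zero&\sc{\AA,F}}$, from which
\[
  \sc{\MM,F}-\sc{\NN,F}=\PP_{\NN\tp}\tp\EE\,\PP_\MM \qquad\text{(and symmetrically }=\PP_{\MM\tp}\tp\EE\,\PP_\NN\text{)}.
\]
Hence $2\xx\tp(\sc{\MM,F}-\sc{\NN,F})\yy=2(\PP_{\NN\tp}\xx)\tp\EE(\PP_\MM\yy)\le b\pr{\nA{\UU}{\PP_{\NN\tp}\xx}^2+\nA{\UU}{\PP_\MM\yy}^2}$ by Fact~\ref{lem:ne}. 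So the whole problem reduces to bounding the ``extension energy'' $\nA{\UU}{\PP_\AA\yy}^2$ by a multiple of $\yy\tp\sc{\UU,F}\yy$, separately for $\AA\in\{\NN,\NN\tp\}$ and for $\AA=\MM$; then $K$ can be taken to be $b$ times the worse multiplier.

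For the energy estimate, since $\PP_\AA\yy-\PP_\UU\yy$ is supported on $F$ and $\UU\PP_\UU\yy$ vanishes on $F$, one gets the Pythagorean split $\nA{\UU}{\PP_\AA\yy}^2=\yy\tp\sc{\UU,F}\yy+\ss_\AA\tp\UU_{FF}\ss_\AA$ with $\ss_\AA=(\PP_\AA\yy-\PP_\UU\yy)_F=-\AA_{FF}\inv(\AA\PP_\UU\yy)_F$, so it remains to bound $\ss_\AA\tp\UU_{FF}\ss_\AA\le\mu\,\ss_\AA\tp\U{\NN_{FF}}\ss_\AA$. For $\AA\in\{\NN,\NN\tp\}$ I would use $\AA_{FF}\inv\U{\NN_{FF}}\AA_{FF}^{-\top}=\U{(\AA_{FF}\inv)}\pleq(\U{\NN_{FF}})\inv\pleq(1+\tfrac1\rho)\Diag{\NN_{FF}}\inv$ (the inequality $\U{\BB\inv}\pleq(\U\BB)\inv$ for $\U\BB\succ\zero$ follows from $\sigma_{\min}((\U\BB)^{-1/2}\BB(\U\BB)^{-1/2})\ge 1$, and the RCDD bound from Step~1), followed by $(\AA\PP_\UU\yy)\tp\Diag{\NN}\inv(\AA\PP_\UU\yy)\le 2(\PP_\UU\yy)\tp\U{\NN}(\PP_\UU\yy)\le 2\beta\,\yy\tp\sc{\UU,F}\yy$ via Fact~\ref{fact:LDL}, $\U{\NN}\pleq\beta\UU$, and $(\PP_\UU\yy)\tp\UU(\PP_\UU\yy)=\yy\tp\sc{\UU,F}\yy$; this gives multiplier $1+2\mu\beta(1+\tfrac1\rho)$. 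For $\AA=\MM$ the same scheme runs with two substitutions: $(\MM_{FF}\inv)\tp\U{\NN_{FF}}\MM_{FF}\inv\pleq\tfrac1{(1-\mu b)^2}(\U{\NN_{FF}})\inv$, since $(\U{\NN_{FF}})^{-1/2}\MM_{FF}(\U{\NN_{FF}})^{-1/2}$ has smallest singular value at least the smallest eigenvalue of its symmetric part, which is $\ge 1-\mu b$ by Step~1; and $(\MM\PP_\UU\yy)\tp\Diag{\NN}\inv(\MM\PP_\UU\yy)=(\PP_\UU\yy)\tp\MM\tp\Diag{\NN}\inv\MM(\PP_\UU\yy)\le\tfrac{\beta(4+2a^2b)+2b}{2-a^2b}\,\yy\tp\sc{\UU,F}\yy$, which is exactly where Lemma~\ref{lem:pertbLDL} is invoked (its $\LL,\WW,\DD$ instantiated by $\NN,\UU,\Diag{\NN}$, using $\nt{\UU^{1/2}\Diag{\NN}^{-1/2}}\le a$, $a^2b<2$, then $\U{\NN}\pleq\beta\UU$). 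Multiplying the three factors $\mu$, $(1+\tfrac1\rho)$, $\tfrac{\beta(4+2a^2b)+2b}{(1-\mu b)^2(2-a^2b)}$ gives the claimed multiplier, once one checks it dominates the additive $\yy\tp\sc{\UU,F}\yy$ term (this multiplier is always at least a fixed constant), and hence $K$.

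The main obstacle is the $\AA=\MM$ case of the energy estimate: $\MM$ is no longer an Eulerian Laplacian, so Fact~\ref{fact:LDL} is unavailable, and one simultaneously needs (i) quantitative control on how close $\MM_{FF}$ stays to invertible/RCDD under the perturbation — the source of the $(1-\mu b)^{-2}$, handled by the singular-value-versus-symmetric-part estimate — and (ii) the perturbed replacement for $\LL\tp\Diag{\LL}\inv\LL\pleq 2\U{\LL}$, which is precisely the purpose of Lemma~\ref{lem:pertbLDL}. Everything else is routine: the block identities of Step~2, the Pythagorean split, and careful bookkeeping of constants (including the verification that the identity of Step~2 forces the required kernel inclusion and the harmless treatment of the case where $\sc{\UU,F}$ is singular).
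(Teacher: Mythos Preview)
Your proposal is correct and follows the same overall architecture as the paper's proof: the same bilinear identity expressing $\sc{\MM,F}-\sc{\NN,F}$ as $(\text{extension})^\top\EE(\text{extension})$, the same Pythagorean split $\nA{\UU}{\PP_\AA\yy}^2=\yy\tp\sc{\UU,F}\yy+\ss_\AA\tp\UU_{FF}\ss_\AA$, the same chain $\UU_{FF}\pleq\mu\,\U{\NN_{FF}}$, $(\U{\NN_{FF}})\inv\pleq(1+\tfrac1\rho)\Diag{\NN_{FF}}\inv$, and the same invocation of Fact~\ref{fact:LDL} for the $\NN$-extension and Lemma~\ref{lem:pertbLDL} for the $\MM$-extension. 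The resulting constants coincide with the paper's intermediate bound~\eqref{eq:kercupadditional1}.

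The one substantive difference is in how you pass from the $F$-block quadratic form to the full quadratic form. The paper inserts the projection $\Con$ onto the image of $\NN$ (respectively $\Contil$ onto the image of $\MM\tp$), rewrites everything through $\NN\dg,\MM\dg$, and then needs the extra hypothesis $\ker(\NN\tp)\cup\ker(\MM)\sleq\ker(\UU)$ to conclude $\Con\UU\Con=\UU$; this hypothesis is not among the lemma's assumptions, so the paper spends the second half of the proof on a perturbation/limit argument to remove it. Your route instead writes $\ss_\AA=-\AA_{FF}\inv(\AA\PP_\UU\yy)_F$, bounds $(\AA\PP_\UU\yy)_F\tp\Diag{\NN_{FF}}\inv(\AA\PP_\UU\yy)_F\le(\AA\PP_\UU\yy)\tp\Diag{\NN}\inv(\AA\PP_\UU\yy)$ by nonnegativity of the added $C$-block, and applies Fact~\ref{fact:LDL}/Lemma~\ref{lem:pertbLDL} directly to the vector $\PP_\UU\yy$. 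No projection, no pseudoinverse, no kernel hypothesis, and therefore no limiting argument. This is a genuine simplification of the paper's proof while yielding the identical constant.
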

\begin{proof}
    Without loss of generality, we assume $F = \dr{1, \cdots, \abs{F}}$ and $C = [n]\backslash F$ in this proof.

    By the condition $\MM - \NN \aleq b\cdot \UU $, we have $2\xx\tp\pr{\MM - \NN}\yy \leq b\pr{\xx\tp\UU\xx + \yy\tp\UU\yy},\ \forall \xx\in \Real^n $.
    Then, $\xx\tp\pr{\U{\MM}_{FF} - \U{\NN}_{FF}}\xx \leq b\xx\tp\UU_{FF}\xx,\ \forall \xx\in \Real^{\abs{F}}. $
    Thus, $\U{\MM}_{FF} \pgeq \U{\NN}_{FF} - b\UU_{FF} $.
    By the condition $\U{\NN}_{FF} \pgeq \frac{1}{\mu}\UU_{FF}$, we have
    \eql{\label{eq:MFFpgeqNFF1}}{
        \U{\MM}_{FF} \pgeq \U{\NN}_{FF} - b\UU_{FF} \pgeq \pr{1 - b\mu}\U{\NN}_{FF}
    }
    and
    \eql{\label{eq:MFpgeqUFF1}}{
        \U{\MM}_{FF} \pgeq \pr{1 - b\mu}\U{\NN}_{FF} \pgeq \pr{\frac{1}{\mu} - b}\UU_{FF} \succ \zero.
    }
    Since $\UU$ is PSD, $\UU_{FF}$ is also PSD. Since $\UU_{FF}$ is also nonsingular, $\UU_{FF} \succ \zero$.
    Then, by the condition $b < \frac{1}{\mu} $ and~\eqref{eq:MFpgeqUFF1}, we have $\U{\MM}_{FF} \succ \zero $.
    Then, by Fact~\ref{fact:kerAsmltkUAaPSD}, $\MM_{FF}$ and $\NN_{FF}$ are nonsingular.
    And by Fact~\ref{fact:LUL}, we have
    \eql{\label{eq:MLUL1}}{\MM_{FF}\inv \U{\MM}_{FF} \tpp{\MM_{FF}\inv} \pleq \U{\MM}_{FF}\inv,  }
    \eql{\label{eq:NUNNbyfactLUL}}{\tpp{\NN_{FF}\inv} \U{\NN}_{FF} \NN_{FF}\inv \pleq \U{\NN_{FF}}\inv. }

  For any $\xx, \yy\in \Real^{\abs{C}}$, define
  \eq{
    \xn =
    \vc{
        - \NN_{FF}\inv\NN_{FC}\xx \\
        \xx
    },
    \quad
    \xu =
    \vc{
        - \UU_{FF}\inv\UU_{FC}\xx  \\
        \xx
    }
  }
  and
  \eq{
    \ym =
    \vc{
        - \tpp{\MM_{FF}\inv}\MM_{CF}\tp \yy  \\
        \yy
    },
    \quad
    \yu =
    \vc{
        - {\UU_{FF}\inv}\UU_{CF}\tp\yy  \\
        \yy
    }.
  }
  Then,
  \eq{
    \NN\xn =
    \vc{
        \zerov{F}  \\
        \sc{\NN, F}\xx
    },\quad
    \ym\tp\MM =
    \vc{
        \zerov{F}\tp & \yy\tp\sc{\MM, F}
    }.
  }
  Thus, we have
  \eq{
    \ym\tp\MM\xn = \vc{\zerov{F}\tp & \yy\tp\sc{\MM, F} } \vc{ - \NN_{FF}\inv\NN_{FC}\xx  \\ \xx } = \yy\tp\sc{\MM, F}\xx.
  }
  Similarly,
  $
    \ym\tp \NN \xn = \yy\tp\sc{\NN, F}\xx.
  $
  Therefore,
    \eq{
        \yy\tp\pr{\sc{\MM, F} - \sc{\NN, F}}\xx = \ym\tp\pr{\MM - \NN}\xn.
    }
    Combining the above equation with $\EE \aleq b \cdot \UU $ and Fact~\ref{lem:ne} yields that
    \eq{
        2\yy\tp\pr{\sc{\MM, F} - \sc{\NN, F}}\xx \leq b \pr{\ym\tp\UU\ym + \xn\tp\UU\xn  }.
    }
    Denote the projection matrix onto the image space of $\NN$ by $\Con $.
   It follows by direct calculations that $\pr{\xn}_F + \UU_{FF}\inv\UU_{FC}\xx = - \NN_{FF}\inv\pr{\NN\xu}_F = - \NN_{FF}\inv\pr{\Con\NN\xu}_F $.

  We denote the matrix
  \eq{
    \PP = \Con
    \mx{
        \tpp{\NN_{FF}\inv} \UU_{FF} \NN_{FF}\inv & \zerom{F}{C}  \\
        \zerom{C}{F} & \zerom{C}{C}
    } \Con
  }
  in this proof.
  Then by Fact~\ref{fact:Schurxusmall}, we have
  \eql{\label{eq:xnUP1}}{
    &\xn\tp\UU\xn = \nA{\sc{\UU, F}}{\xx}^2 +  \nA{\UU_{FF}}{{\NN_{FF}}\inv\pr{\Con\NN\xu}_F}^2 = \nA{\sc{\UU, F}}{\xx}^2 +  \nA{\PP}{\NN\xu}^2.
  }

  By the condition $\U{\NN_{FF}} \pgeq \frac{1}{\mu}\UU_{FF}$, we have
  \eql{\label{eq:NFUNbyNFbU}}{\tpp{\NN_{FF}\inv} \UU_{FF} \NN_{FF}\inv \pleq \mu\tpp{\NN_{FF}\inv} \U{\NN_{FF}} \NN_{FF}\inv. }

  Since $\NN_{FF}$ is $\rho$-RCDD, we have $\U{\NN_{FF}}  \pgeq \frac{\rho}{1 + \rho}\Diag{\NN}_{FF}$, i.e.,
  \eql{\label{eq:DiagUNFF1}}{\U{\NN_{FF}}\inv  \pleq \pr{1 + \frac{1}{\rho}}\Diag{\NN}_{FF}\inv.  }

  Then, combining~\eqref{eq:NFUNbyNFbU},~\eqref{eq:NUNNbyfactLUL},~\eqref{eq:DiagUNFF1} with  Fact~\ref{fact:LDL} yields that
  \eql{\label{eq:Pmorning1}}{
    &\PP
    \pleq
    \pr{1 + \frac{1}{\rho}} \mu \Con
    \mx{
        \Diag{\NN}_{FF}\inv  & \zerom{F}{C}  \\
        \zerom{C}{F} & \zero{C}{C}
    } \Con
    \pleq
    \pr{1 + \frac{1}{\rho}} \mu
    \Con\Diag{\NN}\inv\Con  \\
    =& \pr{1 + \frac{1}{\rho}}\mu \pr{\NN\dg}\tp\NN\tp\Diag{\NN}\inv\NN\NN\dg
    \pleq 2\pr{1 + \frac{1}{\rho}}\mu \pr{\NN\dg}\tp \U{\NN} \NN\dg   \\
    \pleq& 2\pr{1 + \frac{1}{\rho}}\mu \bet \pr{\NN\dg}\tp \UU \NN\dg.
  }

  We also define the matrix
  \eq{
    \QQ = \Contil
    \mx{
        \MM_{FF}\inv \UU_{FF} \tpp{\MM_{FF}\inv} & \zerom{F}{C}  \\
        \zerom{C}{F} & \zerom{C}{C}
    }
    \Contil
  }
  in this proof, where $\Contil$ is the projection matrix onto the image space of $\MM\tp$.
  Similar to~\eqref{eq:xnUP1}, we have
  \eql{\label{eq:ymUQ1}}{
    \ym\tp \UU \ym = \nA{\sc{\UU, F}}{\yu}^2 + \nA{\QQ}{\MM\tp\yu}^2.
  }

Then, combining~\eqref{eq:MFpgeqUFF1},~\eqref{eq:MLUL1},~\eqref{eq:MFFpgeqNFF1},~\eqref{eq:DiagUNFF1} yields that
\eql{\label{eq:QMc1}}{
    &\MM_{FF}\inv \UU_{FF} \tpp{\MM_{FF}\inv}
    \pleq \frac{\mu}{1 - \mu b} \MM_{FF}\inv \U{\MM}_{FF} \tpp{\MM_{FF}\inv}
    \pleq \frac{\mu}{1 - \mu b}\U{\MM}_{FF}\inv  \\
    \pleq& \frac{\mu}{\pr{1 - \mu b}^2}\U{\NN}_{FF}\inv
    \pleq \frac{\mu}{\pr{1 - \mu b}^2}\pr{1 + \frac{1}{\rho}}\Diag{\NN}_{FF}\inv.
}

  By~\eqref{eq:QMc1}, we have
  \eq{
    \QQ
    &\pleq
    \frac{\mu}{\pr{1 - \mu b}^2}\pr{1 + \frac{1}{\rho}} \Contil
    \mx{
        \Diag{\NN}_{FF}\inv  & \zerom{F}{C}  \\
        \zerom{C}{F} & \zerom{C}{C}
    } \Contil
    \pleq
    \frac{\mu}{\pr{1 - \mu b}^2}\pr{1 + \frac{1}{\rho}}
    \Contil\Diag{\NN}\inv\Contil  \\
    &= \frac{\mu}{\pr{1 - \mu b}^2}\pr{1 + \frac{1}{\rho}} \MM\dg\MM\Diag{\NN}\inv\MM\tp\pr{\MM\dg}\tp.
  }

  Then, using Lemma~\ref{lem:pertbLDL} and $\U{\NN} \pleq \beta \UU$, we have
  \eql{\label{eq:QM511}}{
    \QQ &\pleq \frac{\mu}{\pr{1 - \mu b}^2}\pr{1 + \frac{1}{\rho}}\MM\dg\pr{\frac{1}{2 - a^2b}\pr{\pr{4 + 2a^2b}\U{\NN} + 2b \UU}}\pr{\MM\dg}\tp  \\
    &\pleq \pr{1 + \frac{1}{\rho}}\frac{\mu\pr{\bet\pr{4 + 2a^2 b} + 2b}}{\pr{1 - \mu b}^2\pr{2 - a^2 b}} \MM\dg \UU \tpp{\MM\dg}.
  }

  With the above preparations, our proof for~\eqref{eq:scdiff1} has 2 steps.
  Firstly, we prove~\eqref{eq:scdiff1} with an additional condition: $\ker\pr{\NN\tp}\cup\ker\pr{\MM} \sleq \ker\pr{\UU}$.
  Then, we remove this extra condition by taking limits.

  To begin with, we prove~\eqref{eq:scdiff1} under the condition $\ker\pr{\NN\tp}\cup\ker\pr{\MM} \sleq \ker\pr{\UU}$.
  Since $\Con$ is the projection matrix onto the image space of $\NN$ and $\ker\pr{\NN\tp}\sleq \ker{\UU}$, we have
  $\Con\UU\Con = \UU. $
  Then, by~\eqref{eq:Pmorning1}, we have
  $\nA{\PP}{\NN\xu}^2 \leq 2\pr{1 + \frac{1}{\rho}}\mu\bet \xu\tp \NN\tp\tpp{\NN\dg}\UU\NN\dg\NN \xu = 2\pr{1 + \frac{1}{\rho}}\mu\bet \xu\tp\Con\UU\Con\xu = 2\pr{1 + \frac{1}{\rho}}\mu\bet \xu\tp\UU\xu.    $

  Then, combining with~\eqref{eq:xnUP1} yields that
  \eq{
    \xn\tp \UU \xn 
    \leq \nA{\sc{\UU, F}}{\xx}^2 + 2\pr{1 + \frac{1}{\rho}}\mu \bet  \xu\tp \UU \xu = \pr{1 + 2\pr{1 + \frac{1}{\rho}}\mu \bet} \xx\tp \sc{\UU, F} \xx.
  }

  Analogously, we have $\Contil\UU\Contil = \UU$.
  Then, by~\eqref{eq:QM511} and~\eqref{eq:ymUQ1}, we have
  \eq{
    \ym\tp \UU \ym \leq \pr{1 + \pr{1 + \frac{1}{\rho}}\frac{\mu\pr{\bet\pr{4 + 2a^2 b} + 2b}}{\pr{1 - \mu b}^2\pr{2 - a^2 b}}} \yy\tp \U{\sc{\UU, F}} \yy.
  }

Combining the above equations yields that
when $\ker\pr{\NN\tp}\cup\ker\pr{\MM} \sleq \ker\pr{\UU}$,
  \eql{\label{eq:kercupadditional1}}{
    &2\yy\tp \pr{\sc{\MM, F} - \sc{\NN, F}} \xx  \\
    \leq&  b \pr{\pr{1 + 2\pr{1 + \frac{1}{\rho}}\mu{\bet}}\xx\tp\sc{\UU, F}\xx + \pr{1 + \pr{1 + \frac{1}{\rho}}\frac{\mu\pr{\bet\pr{4 + 2a^2 b} + 2b}}{\pr{1 - \mu b}^2\pr{2 - a^2 b}}}\yy\tp\sc{\UU, F}\yy}  \\
    \leq& b \pr{1 + \pr{1 + \frac{1}{\rho}}\frac{\mu\pr{\bet\pr{4 + 2a^2 b} + 2b}}{\pr{1 - \mu b}^2\pr{2 - a^2 b}}}\pr{\xx\tp\sc{\UU, F}\xx + \yy\tp\sc{\UU, F}\yy},\ \forall \xx, \yy\in \Real^{\abs{C}}.
  }

Next, we remove the condition $\ker\pr{\NN\tp}\cup\ker\pr{\MM} \sleq \ker\pr{\UU}$.
The PSD matrix $\UU$ has the spectral decomposition as $\UU = \sum_{i=1}^{n} \la_{i}\pr{\UU} \zz_i\zz_i\tp$.
Since $\UU\one = \zero$, without loss of generality, we may assume $\zz_1 = \one$.  
Denote $d = {\rm rank}\pr{\UU}$.
Then, by adding small symmetric perturbations of the form $\sum_{i=2}^{n-d}\dlt^{\pr{i}}\zz_i\zz_i\tp \ (\dlt^{(i)} > 0)$, we can have a sequence of PSD matrices $\dr{\Uzoo{j}}_{j\geq 0}$ such that for each $j \geq 0$, $\ker\pr{\Uzoo{j}} = \spanrm{\one} $, $\Uzoo{j} \succ \UU$  and $\lim_{j\arr +\infty} \Uzoo{j} = \UU$.

By adding undirected edges with small weights in $\NN$, we can find a sequence of Eulerian Laplacians $\dr{\Nzoo{j}}_{j\geq 0}$ such that each
$\Nzoo{j}$ is strongly connected,
$\Nzoo{j}_{FF}$ is nonsingular and
$\lim_{j\arr +\infty} \Nzoo{j} = \NN.  $
Then, by Fact~\ref{fact:strcrank1},
\eq{ \ker\pr{\Nzoo{j}} = \ker\pr{\tpp{\Nzoo{j}}} = \spanrm{\one} = \ker(\Uzoo{j}).  }
We can let the weights of the new undirected edges in $\NN$ tend to zero faster than $\Uzoo{j} - \UU$.
Then,
\eq{
    \lim_{j\arr +\infty} \ndd{\pr{\Uzoo{j}}}{\pr{\NN - \Nzoo{j}}} = 0.
}

Since $\EE\aleq b\UU$ and $\UU\one = \zero$, we have $\EE\one = \EE\tp\one = \zero$.
Thus, $\MM\one = \MM\tp\one = \zero$.
By adding small symmetric perturbations of the form $\sum_{i=2}^{n } \widehat{\dlt}^{(i)}\zz_i\zz_i\tp$ into $\MM$,
we can find a sequence of matrices $\{\Mzoo{j}\}_{j\geq 0} $
such that $\Mzoo{j}_{FF}$ is nonsingular, $\lim_{j\arr +\infty} \Mzoo{j} = \MM $ and \eq{\ker(\Mzoo{j}) = \ker\pr{\tpp{\Mzoo{j}}} = \spanrm{\one} = \ker(\Uzoo{j}).  }
Also, by setting the perturbations added into $\MM$ to be small enough (with respect to the magnitudes of the perturbations in $\UU $), we can let
\eq{
    \lim_{j\arr +\infty} \ndd{\pr{\Uzoo{j}}}{\pr{\MM - \Mzoo{j}}} = 0.
}
Define $\bzoo{j} = b + \ndd{\pr{\Uzoo{j}}}{\pr{\NN - \Nzoo{j}}} + \ndd{\pr{\Uzoo{j}}}{\pr{\MM - \Mzoo{j}}}$.
Then, combining the above equations and using the relation $\Uzoo{j} \succ \UU $ and $\MM - \NN \aleq b\UU $, we have
$\lim_{j\arr +\infty} \bzoo{j} = b  $ and
\eq{
    \Mzoo{j} - \Nzoo{j} \aleq \bzoo{j} \Uzoo{j}.
}

As the perturbations mentioned above tend to zero,
 we can also define real numbers \\
 $\dr{\azoo{j},  \rhozoo{j}, \muzoo{j}, \betzoo{j}}_{j\geq J}$ easily such that
the matrix sequence $\{\Mzoo{j}, \Nzoo{j}, \Uzoo{j}\}_{j\geq 0}$ satisfy
\begin{itemize}
\item $\Nzoo{j}_{FF}  $ is $\rhozoo{j}$-RCDD $(\rhozoo{j} > 0)$,
\item $\U{\Nzoo{j}}_{FF} \pgeq \frac{1}{\muzoo{j}}\Uzoo{j}_{FF}     $,
\item $\U{\Nzoo{j}} \pleq \betzoo{j} \Uzoo{j} $,
\item $\nt{\pr{\Uzoo{j}}^{1/2}\Diag{\Nzoo{j}}^{-1/2}} \leq \azoo{j}  $,

\end{itemize}
for any $j \geq J$,
and $\azoo{j}$, $\rhozoo{j}$, $\muzoo{j}$, $\betzoo{j}$  tend to $a$,  $\rho$, $\mu$, $\bet$,   respectively.

Since $b < \min\dr{\frac{2}{a^2}, \frac{1}{\mu}}$ and $\bzoo{j}, \azoo{j}, \muzoo{j}$ tend to $b, a, \mu$ respectively, then, there exists a $J' > 0$ such that for any $j \geq J'$, $\bzoo{j} < \min\dr{\frac{2}{\pr{\azoo{j}}^2}, \frac{1}{\muzoo{j}} }.  $

Then by~\eqref{eq:kercupadditional1}, we have for any $ j\geq \max\dr{J, J'}$ and $\xx, \yy\in \Real^{\abs{C}}$,
\eql{\label{eq:perturbedscr1fknl1}}{
    &2\yy\tp \pr{\sc{\Mzoo{j}, F} - \sc{\Nzoo{j}, F}} \xx  \\
    \leq& \bzoo{j} \pr{1 + \pr{1 + \frac{1}{\rhozoo{j}}}\frac{\muzoo{j}\pr{\betzoo{j}\pr{4 + 2(\azoo{j})^2 \bzoo{j}} + 2\bzoo{j}}}{\pr{1 - \muzoo{j} \bzoo{j}}^2\pr{2 - (\azoo{j})^2 \bzoo{j}}}}\pr{\xx\tp\sc{\Uzoo{j}, F}\xx + \yy\tp\sc{\Uzoo{j}, F}\yy}.
}
Since $\MM_{FF} $ is nonsingular, we have $\lim_{j\arr +\infty} \iv{\Mzoo{j}_{FF}} = \MM_{FF}\inv$.
Thus, $\lim_{j\arr +\infty} \sc{\Mzoo{j}, F} = \sc{\MM, F}$.
Analogously, $\lim_{j\arr +\infty} \sc{\Uzoo{j}, F} = \sc{\UU, F}$, $\lim_{j\arr +\infty} \sc{\Nzoo{j}, F} = \sc{\NN, F}$.
Then, taking limits on both sides of~\eqref{eq:perturbedscr1fknl1} and combining with Fact~\ref{lem:ne} lead to~\eqref{eq:scdiff1}.
\end{proof}

\subsection{Inductive Accumulation of Errors}

We can now obtain a relatively tight bound for
$\sc{\Ltt{K}, F} - \sc{\Lt{K}, F}$
by bounding $\gam_k$ iteratively.

\begin{lemma}\label{lem:scLttyes}
    For any $\dlt_0 \in (0, 1)$,
    with a small $\eps = O\pr{\frac{\dlt_0}{K}} $ in Algorithm~\ref{alg:SparSchur},
    the exact and approximate $K$-th partially-block-eliminated Laplacians $\Lt{K}, \Ltt{K}$ satisfies
    \eql{\label{eq:scLttgood}}{
    \frac{1}{2^K}  \pr{\sc{\Ltt{K}, F} - \sc{\Lt{K}, F}} \aleq  O\pr{\dlt_0} \cdot \U{\sc{\LL, F}}.
    }

\end{lemma}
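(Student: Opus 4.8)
The plan is to run the whole argument through the augmented matrices, never touching the recursion $\tfrac{1}{2^{k-1}}(\Ltt{k-1}-\Lt{k-1})\to\tfrac{1}{2^k}(\Ltt{k}-\Lt{k})$ directly. By \eqref{eq:scMFtili+} and \eqref{eq:Mttsc1} we have $\sc{\Mt{0,k},-[n]}=\Lt{k}$ and $\sc{\Mtt{0,k},-[n]}=\Ltt{k}$, while $\Mtt{0,k}-\Mt{0,k}=\Ers{k}$ by construction. So any bound $\Ers{k}\aleq b\,\XL{k}$ can be transported through the Schur-complement map onto $\Ltt{k}-\Lt{k}$ by invoking Lemma~\ref{lem:schurdUL1} with $\NN=\Mt{0,k}$, $\MM=\Mtt{0,k}$, $\UU=\XL{k}$ and eliminated set $[\puts{k}]\setminus[n]$. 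The first task is to supply the hypotheses of Lemma~\ref{lem:schurdUL1} from Lemmas~\ref{lem:Mti} and~\ref{lem:Ersepsz1}: $\XL{k}$ is a Laplacian with $\XL{k}\one=\zero$ and $\XL{k}_{-[n],-[n]}$ positive definite (Properties~(i),(iv) and Fact~\ref{fact:alpRCDDPSDpPD1}); $\Mt{0,k}_{-[n],-[n]}$ is $\alp$-RCDD (Lemma~\ref{lem:Mti}), so $\rho=\alp$; $\U{\Mt{0,k}}\pleq\beta_k\XL{k}$ with $\beta_k=4+\tfrac{2}{\alp}+\tfrac{1}{k}\sum_{i<k}\gam_i$ (Property~(ii)); $\nt{(\XL{k})^{1/2}\Diag{\Mt{0,k}}^{-1/2}}^2\le 2$, i.e.\ $a=\sqrt{2}$ (Property~(v)); and $\Ers{k}\aleq b_k\XL{k}$ with $b_k=\epsz\bigl(4k+\tfrac{2k}{\alp}+\sum_{i<k}\gam_i\bigr)=\epsz k\beta_k$ by~\eqref{eq:XLb1}. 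The one hypothesis not handed over directly is $\U{\Mt{0,k}}_{-[n],-[n]}\pgeq\tfrac{1}{\mu}\XL{k}_{-[n],-[n]}$; here one uses that $\U{\Mt{0,k}}_{-[n],-[n]}$ and $\XL{k}_{-[n],-[n]}$ are both $\alp$-RCDD with the \emph{same} diagonal (Property~(iii)), so each is sandwiched between $\tfrac{\alp}{1+\alp}$ and $\tfrac{2+\alp}{1+\alp}$ times that diagonal by the elementary Gershgorin estimate, giving $\mu=1+\tfrac{2}{\alp}$.

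The core of the proof is then an induction on $k$ showing that, with $\eps=c\dlt_0/K$ for a small enough constant $c=c(\alp)$, one has $\gam_k\le\dlt_0$ for all $0\le k\le K$. The base case $k=0$ is immediate since $\Ltt{0}=\Lt{0}=\LL$, so $\gam_0=0$. For the inductive step, assuming $\gam_i\le\dlt_0\le 1$ for $i<k$ forces $\beta_k=O(1)$ and $b_k=\epsz k\beta_k=O(\epsz K)$, which is below $\min\{2/a^2,1/\mu\}$ once $\eps$ is small; Lemma~\ref{lem:schurdUL1} then gives
\[
\Ltt{k}-\Lt{k}=\sc{\Mtt{0,k},-[n]}-\sc{\Mt{0,k},-[n]}\aleq b_k\Bigl(1+\tfrac{1}{\alp}\Bigr)\frac{\mu\bigl(\beta_k(4+2a^2 b_k)+2b_k\bigr)}{(1-\mu b_k)^2(2-a^2 b_k)}\,\sc{\XL{k},-[n]}.
\]
Reading this against the definition~\eqref{eq:defgamk}, $\gam_k\le b_k\cdot O(\beta_k)=O(\epsz k\beta_k^2)=O(\epsz K)$ with the hidden constant depending only on $\alp$; choosing $c$ small so that this is $\le\dlt_0$ (recall $\epsz=2(3+\tfrac{2}{\alp})(2\eps+\eps^2)=O(\eps)$) closes the induction.

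To finish, I would apply Lemma~\ref{lem:schurdUL1} once more with the same $\NN,\MM,\UU$ but now eliminating $[\puts{K}]\setminus C$, which amounts to composing the elimination of $[\puts{K}]\setminus[n]$ with that of $F$. The hypotheses are checked exactly as before, with $[n]$ replaced by $C$ throughout ($\Mt{0,K}_{-C,-C}$ and $\XL{K}_{-C,-C}$ are $\alp$-RCDD by Lemma~\ref{lem:Mti} and Property~(iv)); since the induction now gives $\sum_{i<K}\gam_i\le K\dlt_0$, we get $\beta=O(1)$ and $b=\epsz K\beta=O(\epsz K)=O(\dlt_0)$, so \eqref{eq:scdiff1} yields
\[
\sc{\Mtt{0,K},-C}-\sc{\Mt{0,K},-C}\aleq O(\dlt_0)\,\sc{\XL{K},-C}.
\]
By transitivity of Schur complements (Fact~\ref{fact:sctran}), $\sc{\Mtt{0,K},-C}=\sc{\sc{\Mtt{0,K},-[n]},F}=\sc{\Ltt{K},F}$ and likewise $\sc{\Mt{0,K},-C}=\sc{\Lt{K},F}$, while the last property of Lemma~\ref{lem:Ersepsz1} gives $\sc{\XL{K},-C}\pleq 2^K\U{\sc{\LL,F}}$. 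Since $\aleq$ is monotone in its bounding matrix (immediate from Fact~\ref{lem:ne}), substituting the latter into the previous display gives $\sc{\Ltt{K},F}-\sc{\Lt{K},F}\aleq O(2^K\dlt_0)\,\U{\sc{\LL,F}}$, as claimed.

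The step I expect to be the real obstacle is the induction of the second paragraph. The bounding matrix $\XL{k}$ against which $\gam_k$ is measured is itself assembled, in Lemma~\ref{lem:Ersepsz1}, out of $\gam_0,\dots,\gam_{k-1}$, and the error-amplification factor in Lemma~\ref{lem:schurdUL1} grows through $\beta_k$, which contains $\tfrac{1}{k}\sum_{i<k}\gam_i$. One must therefore be careful that the inductive hypothesis $\gam_i\le\dlt_0$ keeps every $\beta_k$ — and hence the whole amplification constant — bounded by an absolute, $\alp$-only constant, so that a single choice $\eps=\Theta(\dlt_0/K)$ works uniformly over all $K$ elimination steps; this is precisely why Algorithm~\ref{alg:SparSchur} runs its sparsification calls at accuracy $\Theta(\dlt/K)$ rather than a constant. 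The remaining bookkeeping — nonsingularity of the various $(F,F)$-blocks, the kernel/support conditions implicit in $\aleq$, and the validity of the Schur-complement compositions — is routine given the RCDD structure recorded in Lemmas~\ref{lem:Mti} and~\ref{lem:Ersepsz1}.
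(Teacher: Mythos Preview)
Your proposal is correct and follows essentially the same approach as the paper: induct on $\gam_k$ by applying Lemma~\ref{lem:schurdUL1} with $\NN=\Mt{0,k}$, $\MM=\Mtt{0,k}$, $\UU=\XL{k}$, verifying the hypotheses from Lemmas~\ref{lem:Mti} and~\ref{lem:Ersepsz1}, then apply Lemma~\ref{lem:schurdUL1} once more eliminating down to $C$ and close with Fact~\ref{fact:sctran} and Property~\eqref{enum:Q6}. Your $\mu=1+\tfrac{2}{\alp}$ is in fact slightly sharper than the paper's $\mu=\tfrac{2+2\alp}{\alp}$ (the paper uses the looser bound $\XL{k}_{-[n],-[n]}\pleq 2\,\Diag{\XL{k}}_{-[n],-[n]}$ rather than the full $\alp$-RCDD estimate), but this is an inconsequential constant.
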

\begin{proof}

  First,
  we will prove $\gam_k \leq O\pr{\dlt_0} \ (\forall 0\leq k\leq K)$ by induction, where $\dr{\gam_k}$ are defined in~\eqref{eq:defgamk}.

  Since $\Ltt{0} = \LL $, we have $\gam_0 = 0$.
  Now, assume $\gam_{i} \leq O\pr{\dlt_0},\ \forall 0\leq i\leq k-1 $, we will show that $\gam_k \leq O\pr{\dlt_0}$.

  By Lemma~\ref{lem:Mti}, $\Mt{0, k}$ is an Eulerian Laplacian. $\Mt{0, k}_{-C, -C}$, $\Mt{0, k}_{-[n], -[n]}$ are $\alp$-RCDD.

  By Fact~\ref{fact:frMti1}, $\XL{k}$ is Laplacian, thus, $\XL{k}$ is PSD and $\XL{k}\one = \zero$.

  By Fact~\ref{fact:frMti1}\eqref{enum:Q4},$\XL{k}_{-C, -C}$, $\XL{k}_{-[n], -[n]}$ are $\alp$-RCDD.
  Thus, $\XL{k}_{-[n], -[n]} \pleq \frac{2 + \alp}{1 + \alp}\Diag{\XL{k}}_{-[n], -[n]} \pleq 2\Diag{\XL{k}}_{-[n], -[n]}. $
  By combining with Fact~\ref{fact:frMti1}\eqref{enum:Q3},  we have
  \eq{
    \U{\Mt{0, k}}_{-[n], -[n]} \pgeq \frac{\alp}{1 + \alp}\Diag{\Mt{0, k}}_{-[n], -[n]} = \frac{\alp}{1 + \alp}\Diag{\XL{k}}_{-[n], -[n]} \pgeq \frac{\alp}{2 + 2\alp}\XL{k}_{-[n], -[n]}.
  }

  By the induction hypothesis, $\sum_{i=0}^{k-1}\gam_i \leq O(k\dlt_0)$.
  Then, by combining with Lemma~\ref{lem:Ersepsz1}, Fact~\ref{fact:frMti1}\eqref{enum:Q2}, Fact~\ref{fact:frMti1}\eqref{enum:Q5} and the condition $\epsz = O\pr{\frac{\dlt_0}{K}}$, we have
  \eq{
    &\U{\Mt{0,k}} \pleq \pr{4 + \frac{2}{\alp} + O(\dlt_0)}\XL{k}  \\
    &\nt{\pr{\XL{k}}^{1/2}\Diag{\Mt{0,k}}^{-1/2}}^2 \leq 2  \\
    &\Ers{k} \aleq \epsz \pr{4k + \frac{2k}{\alp} + k\dlt_0}\cdot \XL{k} \pleq O\pr{\dlt_0}\cdot \XL{k}.
  }

Now, we invoke Lemma~\ref{lem:schurdUL1}
with $\NN:=\Mt{0,k}$, $\MM := \Mtt{0,k}$, $\UU := \XL{k}$, $\EE:= \Ers{k}$, $m:= 2^k \Fn + \Cn$, $F:= [\puts{k}]\dele [n] $, $C:= [n] $, $\rho := \alp$, $\mu := \frac{2 + 2\alp}{\alp}$, $\bet:= 4 + \frac{2}{\alp} + O\pr{\dlt_0} $, $a^2:= 2$, $b:= O(\dlt_0)$.
By  the arguments above,
all the conditions of Lemma~\ref{lem:schurdUL1} are satisfied.
Then,  we have
  \eql{\label{eq:scMFtil0k}}{
    &\sc{\Mtt{0,k}, -[n]} - \sc{\Mt{0,k}, -[n]}  \\
    \aleq& b \pr{1 + \pr{1 + \frac{1}{\rho}}\frac{\mu\pr{\bet\pr{4 + 2a^2 b} + 2b}}{\pr{1 - \mu b}^2\pr{2 - a^2 b}}} \cdot \sc{\XL{k}, -[n]} \\
    \pleq& O\pr{\dlt_0} \cdot \sc{\XL{k}, -[n]}.
  }
  By combining with the fact $\Lt{k} = \sc{\Mt{0, k}, -[n]}$, $\Ltt{k} = \sc{\Mtt{0, k}, -[n]}$, the definition of $\gam_k$ in~\eqref{eq:defgamk} and Fact~\ref{lem:ne}, we have $\gam_k \leq O(\dlt_0)$.
  Then, by induction, we have $\gam_K\leq O\pr{\dlt_0}$.

  By setting $\NN:=\Mt{0,k}$, $\MM := \Mtt{0,k}$, $\UU := \XL{k}$, $\EE:= \Ers{k}$, $m:= 2^k \Fn + \Cn$, $F:= [\puts{k}]\dele C $, $C:= C $, $\rho := \alp$, $\mu := \frac{2 + 2\alp}{\alp}$, $\bet:= 4 + \frac{2}{\alp} + O\pr{\dlt_0} $, $a^2:= 2$, $b:= O(\dlt_0)$ in Lemma~\ref{lem:schurdUL1},
  it is easy to check that all conditions of Lemma~\ref{lem:schurdUL1} are satisfied by similar arguments as above.
  Thus, similar to~\eqref{eq:scMFtil0k}, we have
  \eql{\label{eq:scGs}}{
     & \sc{\Mtt{0,K}, -C} - \sc{\Mt{0,K}, -C}
    \aleq O\pr{\dlt_0}\cdot \sc{\XL{K}, -C}.
  }
  By Fact~\ref{fact:sctran}, Lemma~\ref{lem:Mti}, Lemma~\ref{lem:Mtti},
  \eq{
    &\sc{\Mt{0,K}, -C} = \sc{\sc{\Mt{0,K},-[n]}, F} = \sc{\Lt{K}, F}.  \\
    &\sc{\Mtt{0,K}, -C} = \sc{\sc{\Mtt{0,K}, -[n]}, F} = \sc{\Ltt{K}, F}.
  }
  By Lemma~\ref{enum:Q6}, we have
  $\sc{\XL{K}, - C  } \pleq 2^{K}\U{\sc{\LL, F}} $.

  Substituting the above 3 equations into~\eqref{eq:scGs} and combining with Fact~\ref{fact:aleqpleq1} complete this proof.

\end{proof}
\begin{remark}
    Since $\Stt{0} = \frac{1}{2^K}\pr{\Ltt{K}_{CC} - \Xap} $ (in Algorithm~\ref{alg:SparSchur}), the $\frac{1}{2^K}$ factor on the LHS of~\eqref{eq:scLttgood} doesn't matter.
\end{remark}

Now, we are prepared to prove Theorem~\ref{thm:SparSchur}.
\begin{proof}[Proof of Theorem~\ref{thm:SparSchur}]
  By Lemma~\ref{lem:scLtkequal1}, we have the expansion
  \eq{
    &\SS - \sc{\LL, F}
    = \SS - \Sap + \Stt{0}  + \RR - \frac{1}{2^K}\sc{\Lt{K}, F}  \\
    =& \SS - \Sap + \frac{1}{2^K}\pr{\Ltt{K}_{CC} - \Xap - \sc{\Lt{K}, F}} + \RR  \\
    =& \SS - \Sap + \frac{1}{2^K}\pr{\sc{\Ltt{K}, F} - \sc{\Lt{K}, F}} + \frac{1}{2^K}\pr{\Att{K}_{CF}\DD_{FF}\inv\Att{K}_{FC} - \Xap }  \\
    & + \frac{1}{2^K}\pr{\Att{K}_{CF}\pr{\DD_{FF} - \Att{K}_{FF}}\inv\Att{K}_{FC} - \Att{K}_{CF}\DD_{FF}\inv\Att{K}_{FC}} + \RR    \\
    =& \SS - \Sap + \frac{1}{2^K}\pr{\sc{\Ltt{K}, F} - \sc{\Lt{K}, F}}  + \frac{1}{2^K}\EX  + \Rap.
  }
  By Lemma~\ref{lem:scLttyes} and choosing a small $\eps = O\pr{\frac{\dlt}{K}}$ in Algorithm~\ref{alg:SparSchur}, we can have
  \eq{
    \frac{1}{2^K}\pr{\sc{\Ltt{K}, F} - \sc{\Lt{K}, F}} \aleq \frac{\dlt}{4}\U{\sc{\LL, F}}.
  }
  Then, by Lemma~\ref{lem:scLtkequal1} and Fact~\ref{fact:aleqU},
  \eq{ \U{\sc{\Ltt{K}, F}} \pleq \U{\sc{\Lt{K}, F}} + 2^K\cdot\frac{\dlt}{4}\U{\sc{\LL, F}} = 2^K\pr{1 + \frac{\dlt}{4}}\U{\sc{\LL, F}}.    }

  Combining the above equation with~\eqref{eq:EXscLtt} and Fact~\ref{fact:aleqpleq1} yields that by choosing $\eps \leq \frac{\dlt}{4 + \dlt} $, we have
  \eq{
    \frac{1}{2^K}\EX \aleq \frac{\eps}{2^K}\U{\sc{\Ltt{k}, F}} \pleq \pr{1 + \frac{\dlt}{4}}\eps \U{\sc{\LL, F}} \pleq \frac{\dlt}{4}\U{\sc{\LL, F}}.
  }
  By Fact~\ref{fact:scUpleqUsc}, Fact~\ref{fact:la2scup1} and Cheeger's inequality, we have
  $
    \la_2\pr{\U{\sc{\LL, F}}} \geq \la_2\pr{\sc{\U{\LL}, F}} \geq \la_2\pr{\U{\LL}} \geq \frac{\min_{i\in [n]}\DD_{ii}\pr{\min_{(i,j): \LL_{ij}\neq 0}\abs{\LL_{ij}}}^2}{8\pr{\sum_{i\in [n]}\DD_{ii}}^2} = \Omega\pr{\frac{1}{\poly{n}}}.
  $
  Then, by choosing $K = O\pr{\log\log \frac{n}{\dlt}}$  and using~\eqref{eq:R}, we can let
  $
    \nd{\sc{\LL, F}}{\Rap} \leq \frac{1}{\la_2\pr{\U{\sc{\LL, F}}}} \nt{\Rap} \leq \frac{\dlt}{4}.
  $
  Since $\LL$ is strongly connected, we have $\sc{\LL, F}$ is strongly connected.
  So,  $\ker\pr{\U{\sc{\LL, F}}} = \spanrm{\one} $.
  By combining with the fact $\Rap\one = \Rap\tp\one = \zero $ from Lemma~\ref{lem:LapetcpropSparSchurCpmt1}, we have \eq{
    \Rap \aleq \frac{\dlt}{4} \U{\sc{\LL,  F}}.
  }

  Combining the above equations with Fact~\ref{fact:aleqsum}  yields that
  \eq{
    \Sap - \sc{\LL, F} \aleq \frac{3\dlt}{4}\U{\sc{\LL, F}}.
  }
  Then, $\U{\Sap}\pleq \pr{1 + \frac{3\dlt}{4}}\U{\sc{\LL, F}} \pleq 2\U{\sc{\LL, F}}$.
  Thus,
  \eq{
    \Sap - \SS \aleq \frac{\dlt}{8}\U{\Sap} \pleq \frac{\dlt}{4}\U{\sc{\LL, F}}.
  }
  Then,~\eqref{eq:sttgood} follows.

  The connectivity of $\SS$ can be readily checked as follows.
  If $\SS$ is not \strc, then, there is a vector $\xx\neq \zero$ and $\xx$ not parallel to $\one$ such that $\SS\xx = \zero$.
  Since $\LL$ is a \strc\ Eulerian Laplacian, we have $\sc{\LL, F}$ is \strc, thus, $\xx\tp\U{\sc{\LL, F}}\xx > 0$.
  Then, by~\eqref{eq:sttgood},
  \eq{
    \xx\tp\U{\sc{\LL, F}}\xx = \xx\tp\pr{\U{\sc{\LL, F}} - \U{\SS}}\xx \leq \dlt \xx\tp\U{\sc{\LL, F}}\xx,
  }
  which contradicts the condition $\dlt\in (0, 1)$.
  Thus, $\SS$ is \strc.

  Since we call $\SE$ in each iteration and $\eps = \Otil{\dlt}$, we have $\nnz{\Ltt{k}} = \Otil{\NSE\pr{n, \dlt}}$.
  Since $\eps = \Otil{\dlt}$ and $K = \Otil{1}$,  the total running time of $\SP$ and $\SparP$ is $\Otil{\NSE\pr{n, \dlt}\dlt^{-2}\log n }$ and $\nnz{\Ltt{k, 0}} = \Otil{\NSE\pr{n, \dlt}\dlt^{-2}\log n }$.
  As $K = \Otil{1}$, $\eps = \Otil{\dlt}$, by Theorem~\ref{thm:SparEoracle1} and Lemma~\ref{lem:SE},
  the total running time of $\SE$ is
  $O\pr{\TSE\pr{m,n,\dlt}}+ \Otil{\TSE\pr{\NSE\pr{n, \dlt}\dlt^{-2}\log n  , n, \dlt} } = O\pr{\TSE\pr{m,n,\dlt}}+ \Otil{\TSE\pr{\NSE\pr{n, \dlt}\dlt^{-2}  , n, \dlt}\log n } $
  which gives the overall running time bound for Algorithm~\ref{alg:SparSchur}.
\end{proof}

\section{A Nearly-linear Time Solver }\label{sec:solver}

In this section,
we complete the Sparsified Schur Complement based algorithm
by invoking the nearly-linear time Schur complement sparsification procedure derived above in Sections~\ref{sec:reformPBEvAM1}
and~\ref{sec:Schurcplstable2}.
We first call this Schur complement sparsification procedure
repeatedly to construct a sparse Schur complement chain,
in Section~\ref{sec:scc}.
Then, in Section~\ref{sec:preconditioner},
we show that this Schur complement chain gives a preconditoner $\PreC$
for the initial Eulerian Laplacian matrix.
The full high accuracy solver then follows from invoking this
preconditioner inside Richardson iteration.
\begin{algorithm}[!htb]
\caption{Block Cholesky solver for directed Laplacians }
\label{alg:DLap}

\KwIn{strongly connected Eulerian Laplacian $\LL\in \MS{n}{n}$; query vectors $\dr{\bt{q}}_{q=1}^Q\sleq \Real^n$ with each $\bt{q} \perp \one$; error parameters $\dr{\eps_q}_{q=1}^Q\sleq (0, 1)$ }

\KwOut{solutions $\dr{\xt{q }}_{q=1}^Q\sleq \Real^n$ }

Call $\SCC\pr{\LL, 0.25, 0.1}$ to compute a $\dr{0.25, 0.05, \dr{\frac{ 0.1}{ i^2}}_{i=1}^{O(\log n)} }$-Schur complement chain $\dr{\dr{\Stt{i}}_{i=1}^d, \dr{F_i}_{i=1}^d}$ (Sections~\ref{sec:reformPBEvAM1},~\ref{sec:Schurcplstable2},~\ref{sec:scc})

Generate the operator $\ZZ\pr{\xx} = \PreC\pr{\dr{\dr{\Stt{i}}_{i=1}^d, \dr{F_i}_{i=1}^d}, \xx, O\pr{\log n } }$  (Section~\ref{sec:preconditioner}) \;

Using the preconditioned Richardson iteration with the preconditioner $\ZZ$ to solve the Laplacian systems: for each query vector $\bt{q}$, 
compute
$\xt{q} \arl \PRI\pr{\LL, \bt{q},  \ZZ\pr{\cdot},  1, O\pr{\log \pr{n/\eps_q}} }  $

\end{algorithm}

\subsection{Schur Complement Chains }\label{sec:scc}
We first define Schur complement chains over directed graphs, which is a variant of the Schur complement chain for undirected graphs in~\cite{kyng2016sparsified}.
\begin{definition}\label{def:SCC1}
    (Schur complement chain)
    Given a \strc\ Eulerian Laplacian $\LL\in\MS{n}{n}$, an $\pr{\alp, \bet, \dr{\dlt_i}_{i=1}^d}$-Schur complement chain of $\LL$ is a sequence of \strc\ Eulerian Laplacians and subsets $\dr{\dr{\Stt{i}}_{i=1}^d, \dr{F_i}_{i=1}^d }$ satisfying
    \begin{enumerate}[(i)]
      \item $\dr{F_i}_{i=1}^d$ is a partition of $[n]$; each $\Stt{i}$ is supported on $\pr{C_{i-1}, C_{i-1}} $, where $C_i \defeq [n]\dele \pr{\cup_{j=1}^{i} F_j} \ (i=0, 1, \cdots, d-1) $; $\abs{C_i} \leq \pr{1 - \bet}^i n$; $\abs{F_d} = \abs{C_{d-1}} = O\pr{1 }. $
      \item For $1\leq i\leq d - 1 $, $\Stt{i}_{F_{i} F_{i}} $ is $\alp$-RCDD.
      \item $\Stt{1} - \LL \aleq \dlt_1 \cdot \U{\LL} $
        and $\Stt{i+1} - \sc{\Stt{i}, F_i} \aleq \dlt_{i+1} \cdot \U{\sc{\Stt{i}, F}},\ 1\leq i\leq d-1.  $

      \item $\U{\Stt{1}} \pgeq \U{\LL}$ and $\U{\Stt{i+1}} \pgeq \U{\sc{\Stt{i}, F_i} },\ 1\leq i\leq d-1$. \label{enum:ErrPSD1}

    \end{enumerate}

\end{definition}
We also denote $F_0 = C_d = \emptyset$, $C_0 = [n]$ for notational simplicity.

\begin{remark}
Compared with the Schur complement chains for undirected graphs
from~\cite{kyng2016sparsified},
the only new condition is Condition~\eqref{enum:ErrPSD1}.
It guarantees the positive semi-definiteness of the symmetrization of the sparsified approximate Eulerian Laplacian $\Lap$ and the error-bounding
matrix $\Bap$ defined in Section~\ref{sec:preconditioner}.
\end{remark}

To construct a Schur complement chain, we first use the following lemma to find an $\alp$-RCDD subset $F_1$, and then apply the Schur complement sparsification method $\SparseSchur$  to compute $\Stt{1}$ which is an approximation for $\sc{\LL, F_1}$.
Then, we repeat this process to get a desirable Schur complement chain.
\begin{lemma}\label{lem:FindRCDD}
    (Theorem~A.1 of~\cite{cohen2018solving})
    Given an Eulerian Laplacian $\LL\in \MS{n}{n}$ with $\nnz{\LL} = m$, the routine $\FindRCDD$ outputs a subset $F\sleq [n]$ such that $\abs{F} \geq \frac{n}{16\pr{1 + \alp}}$ and $\LL_{FF}$ is $\alp$-RCDD in time $O\pr{m\log\frac{1}{p}}$ with probability at least $1 - p$.

\end{lemma}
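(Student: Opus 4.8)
The plan is to exhibit and analyze the randomized sampling-and-filtering procedure underlying $\FindRCDD$, amplified by $O(\log(1/p))$ independent repetitions. A \emph{single run} works as follows: include each vertex $i\in[n]$ in a candidate set $S$ independently with probability $p_0 = \tfrac{1}{4(1+\alpha)}$; then declare $i\in S$ to be \emph{bad} if $\sum_{j\in S,\, j\neq i}|\LL_{ij}| > \tfrac{1}{1+\alpha}\LL_{ii}$ or $\sum_{j\in S,\, j\neq i}|\LL_{ji}| > \tfrac{1}{1+\alpha}\LL_{ii}$; and return $F = S\setminus B$, where $B$ is the set of bad vertices. All the partial row/column sums restricted to $S$ can be accumulated in $O(m+n)$ time by one pass over the nonzero entries of $\LL$ (each nonzero $\LL_{ij}$ with $i,j\in S$ contributes $|\LL_{ij}|$ to a row counter of $i$ and a column counter of $j$), after which $B$ and $F$ are read off in $O(n)$ time.

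Correctness of one run is immediate from monotonicity: since $F\subseteq S$, every surviving $i\in F$ has $\sum_{j\in F,\,j\neq i}|\LL_{ij}| \le \sum_{j\in S,\,j\neq i}|\LL_{ij}| \le \tfrac{1}{1+\alpha}\LL_{ii}$ and likewise for the column sum, so $\LL_{FF}$ is \emph{always} $\alpha$-RCDD regardless of the random outcomes. Randomness is needed only for the size bound. Since $\LL$ is Eulerian, $\LL\one=\one\tp\LL=\zero$ forces $\sum_{j\neq i}|\LL_{ij}| = \sum_{j\neq i}|\LL_{ji}| = \LL_{ii}$ for every $i$; hence, conditioned on $i\in S$, the expectation of $\sum_{j\in S,\,j\neq i}|\LL_{ij}|$ equals $p_0\LL_{ii}$, and Markov's inequality gives $\Pr[\text{row fails}\mid i\in S]\le p_0(1+\alpha)$, with the column condition symmetric. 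A union bound then yields $\Pr[i\text{ bad}\mid i\in S]\le 2p_0(1+\alpha)=\tfrac12$ (vertices with $\LL_{ii}=0$ are never bad), so by linearity of expectation $\E[|F|] = \sum_i p_0\bigl(1-\Pr[i\text{ bad}\mid i\in S]\bigr) \ge \tfrac12 p_0 n = \tfrac{n}{8(1+\alpha)}$.

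To convert this into a high-probability statement cheaply, I would invoke a reverse-Markov argument on $Z=|F|\in[0,n]$: with $\tau = \tfrac{n}{16(1+\alpha)}$ and $q=\Pr[Z\ge\tau]$ we have $\E[Z]\le qn+(1-q)\tau\le qn+\tau$, so $q\ge \tfrac{\E[Z]-\tau}{n}\ge\tfrac{1}{16(1+\alpha)}$. Thus one run yields $|F|\ge\tfrac{n}{16(1+\alpha)}$ with probability at least $\tfrac{1}{16(1+\alpha)}$, and running $T=\lceil 16(1+\alpha)\ln(1/p)\rceil$ independent copies and returning the largest of the (always valid) outputs fails with probability at most $\bigl(1-\tfrac{1}{16(1+\alpha)}\bigr)^T\le e^{-T/(16(1+\alpha))}\le p$. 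For the constant $\alpha$ used in this paper, $T=O(\log(1/p))$, and the total cost is $O((m+n)\log(1/p))$, which is $O(m\log(1/p))$ in the relevant regime $m=\Omega(n)$ (e.g.\ when $\LL$ is connected).

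The analysis is essentially routine; the only points needing care are (i) calibrating $p_0=\Theta(1/(1+\alpha))$ so that the per-vertex bad probability stays at or below $\tfrac12$ while $\E[|F|]$ remains within a factor $2$ of the target $\tfrac{n}{16(1+\alpha)}$ — this is exactly what makes the stated constant fall out of the reverse-Markov step — and (ii) noting that the Eulerian hypothesis is what equalizes in- and out-degrees, so that the single diagonal entry $\LL_{ii}$ simultaneously controls the row tail and the column tail and Markov's inequality applies symmetrically to both RCDD conditions. I do not foresee a genuine obstacle beyond bookkeeping these constants.
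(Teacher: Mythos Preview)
Your proof is correct and complete. The paper itself does not prove this lemma at all --- it simply quotes it as Theorem~A.1 of~\cite{cohen2018solving} --- so there is no ``paper's own proof'' to compare against here. Your argument (independent sampling at rate $\Theta(1/(1+\alpha))$, Markov on the restricted row/column sums using the Eulerian degree identity, prune the bad vertices, reverse Markov for a constant success probability, then boost by $O(\log(1/p))$ independent trials) is exactly the standard construction underlying that cited result, with constants chosen to hit the stated $n/(16(1+\alpha))$ threshold.
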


By Lemma~\ref{lem:FindRCDD}, we can choose for instance $\alp = 0.1$ in practice.
So, we assume $\alp = O(1)$, when analyzing the complexities below.
Our method to construct a Schur complement chain  is illustrated in Algorithm~\ref{alg:SCC}.
It performance is shown in Theorem~\ref{thm:SCC} whose proof is deferred to Appendix~\ref{sec:someprfs}.

\begin{algorithm}[!htb]
\caption{$\SCC\pr{\LL, \alp, \dlt }$ }
\label{alg:SCC}

\KwIn{strongly connected Eulerian Laplacian $\LL\in\MS{n}{n}$; parameters $\alp > 0$, $\dlt\in (0, 1] $  }

\KwOut{$\pr{\alp, \frac{1}{16\pr{1 + \alp}}, \dr{\frac{\dlt}{i^2}}_{i=1}^d}$-Schur complement chain $\dr{\dr{\Stt{i}}_{i=1}^d, \dr{F_i}_{i=1}^d} $ }

Set $\dlt_i' = \frac{\dlt}{3 i^2}$ for $i \geq 1$. \;
Compute $\St{1} \arl \SparE\pr{\LL, \dlt_1'}$  \;
Let $\Stt{1} \arl \St{1} + \frac{\dlt_1'}{1 - \dlt_1' }\U{\St{1}}. $ \;
Set $i \arl 0$, $C_0 = [n]$ \;

\While{$\abs{C_i} > 100$}{
    $i \arl i + 1$  \;
    $F_i \arl \FindRCDD\pr{\Stt{i}, \alp}$ \;
    $C_i \arl C_{i-1}\dele F_i$  \;
    $\St{i+1} \arl \SparseSchur\pr{\Stt{i}, F_i, \dlt_{i+1}} $ \;
    $\Stt{i+1} \arl \St{i} + \frac{\dlt_{i+1}' }{1 - \dlt_{i+1}' }\U{\St{i}}  $

}

Return $\dr{\dr{\Stt{i}}_{i=1}^d, \dr{F_i}_{i=1}^d} $

\end{algorithm}

\begin{theorem}\label{thm:SCC}
    Given a \strc\ Eulerian Laplacian $\LL\in \MS{n}{n}$ and parameters $\alp = O(1)$, $\dlt\in (0, 1] $,
    the routine $\SCC$ runs in time \eq{ O\pr{\TSE\pr{m, n, \dlt}} + \Otil{\TSE\pr{\NSE\pr{n, \dlt}\dlt^{-2}, n, \dlt }\log n  } } with high probability  to return an $\pr{\alp, \frac{1}{16\pr{1 + \alp}}, \dr{\frac{\dlt}{i^2}}_{i=1}^d}$-Schur complement chain,
    where $d = O\pr{\log n}$.
    In addition, $\sum_{i=1}^{d}\nnz{\Stt{i}} = O\pr{\NSE\pr{n, \dlt}}$.

\end{theorem}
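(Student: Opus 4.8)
The plan is to analyze Algorithm~\ref{alg:SCC} by induction on the loop counter $i$, carrying as invariant that $\Stt{i}$ is a \strc\ Eulerian Laplacian supported on $(C_{i-1},C_{i-1})$ with $\nnz{\Stt{i}} = O(\NSE(\abs{C_{i-1}},\dlt_i'))$, and that the partial sequence $\{\{\Stt{j}\}_{j\le i},\{F_j\}_{j<i}\}$ already satisfies Conditions~(i)--(iv) of Definition~\ref{def:SCC1} up to index $i$. Given the invariant, the one-step update is: Lemma~\ref{lem:FindRCDD} applied to $\Stt{i}$ (an Eulerian Laplacian on $\abs{C_{i-1}}$ vertices) returns $F_i$ with $(\Stt{i})_{F_iF_i}$ being $\alp$-RCDD and $\abs{F_i}\ge \abs{C_{i-1}}/(16(1+\alp))$, giving Condition~(ii); then Theorem~\ref{thm:SparSchur} applies to $\Stt{i}$ with partition $F_i,C_i$ (its hypotheses --- \strc\ Eulerian $\Stt{i}$, $\alp$-RCDD $F_i$-block, $\alp=O(1)$ --- are exactly the invariant plus the output of $\FindRCDD$) and returns $\St{i+1}=\SparseSchur(\Stt{i},F_i,\dlt_{i+1}')$, a \strc\ Eulerian Laplacian supported on $(C_i,C_i)$ with $\nnz{\St{i+1}}=O(\NSE(\abs{C_i},\dlt_{i+1}'))$ and $\St{i+1}-\sc{\Stt{i},F_i}\aleq \dlt_{i+1}'\cdot\U{\sc{\Stt{i},F_i}}$; finally we set $\Stt{i+1}=\St{i+1}+\tfrac{\dlt_{i+1}'}{1-\dlt_{i+1}'}\U{\St{i+1}}$. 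The base case is identical, with $\St{1}=\SparE(\LL,\dlt_1')$ from Theorem~\ref{thm:SparEoracle1} in place of $\SparseSchur$'s output and $\LL$ in place of $\sc{\Stt{i},F_i}$.

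The crux of correctness is Conditions~(iii) and~(iv), for which the additive correction $\tfrac{\dlt'}{1-\dlt'}\U{\St{\cdot}}$ is designed. Writing $\NN:=\sc{\Stt{i},F_i}$, which is a \strc\ Eulerian Laplacian so $\ker\U{\NN}=\spanrm{\one}$, from $\St{i+1}-\NN\aleq\dlt_{i+1}'\U{\NN}$ and Fact~\ref{lem:ne} (taking $\xx=\yy$ and $\xx=-\yy$, using that $\aleq$ is sign-symmetric) one gets the two-sided bound $(1-\dlt_{i+1}')\U{\NN}\pleq\U{\St{i+1}}\pleq(1+\dlt_{i+1}')\U{\NN}$; the kernel hypothesis of $\aleq$ holds throughout since $\St{i+1}$ Eulerian forces $\U{\St{i+1}}\one=\zero$. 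As $\U{\St{i+1}}$ is symmetric, $\U{\Stt{i+1}}=\tfrac{1}{1-\dlt_{i+1}'}\U{\St{i+1}}\pgeq\U{\NN}$, which is Condition~(iv). For Condition~(iii), decompose $\Stt{i+1}-\NN=(\St{i+1}-\NN)+\tfrac{\dlt_{i+1}'}{1-\dlt_{i+1}'}\U{\St{i+1}}$, bound the (PSD) second summand by $\tfrac{\dlt_{i+1}'(1+\dlt_{i+1}')}{1-\dlt_{i+1}'}\U{\NN}$, and add the two via Fact~\ref{fact:aleqsum} to get $\Stt{i+1}-\NN\aleq\tfrac{2\dlt_{i+1}'}{1-\dlt_{i+1}'}\U{\NN}$; the schedule $\dlt_{i+1}'=\dlt_{i+1}/3\le 1/3$ makes $\tfrac{2\dlt_{i+1}'}{1-\dlt_{i+1}'}\le 3\dlt_{i+1}'=\dlt_{i+1}$. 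Finally $\Stt{i+1}$ is again a \strc\ Eulerian Laplacian: adding a nonnegative multiple of the symmetric Laplacian $\U{\St{i+1}}$ to the Eulerian Laplacian $\St{i+1}$ keeps off-diagonals nonpositive and row/column sums zero, so $\Stt{i+1}$ is Eulerian; and for an Eulerian Laplacian strong connectivity is equivalent to its undirectification having kernel exactly $\spanrm{\one}$ (a weakly connected Eulerian graph is strongly connected), which follows here from $\U{\Stt{i+1}}\pgeq\U{\NN}$ together with $\U{\Stt{i+1}}\one=\zero$. The support and partition claims of Condition~(i) are immediate from $C_i=C_{i-1}\dele F_i$ and the support of $\SparseSchur$'s output, with $F_d:=C_{d-1}$ taken as the final block.

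For the quantitative claims, Lemma~\ref{lem:FindRCDD} gives $\abs{C_i}=\abs{C_{i-1}}-\abs{F_i}\le(1-\bet)\abs{C_{i-1}}$ with $\bet=\tfrac{1}{16(1+\alp)}=\Theta(1)$, hence $\abs{C_i}\le(1-\bet)^i n$ (the size bound in Condition~(i)), the loop terminates after $d=O(\log n)$ iterations, and $\abs{F_d}=\abs{C_{d-1}}\le 100=O(1)$. Since $i\le d=O(\log n)$, $\dlt_i'=\Theta(\dlt/i^2)=\Omega(\dlt/\log^2 n)$, so every $(\dlt_i')^{-O(1)}$ factor in $\NSE$ or $\TSE$ is absorbed into $\log^{O(1)}n$ and the $\dlt_i'$-arguments may be replaced by $\dlt$ at the cost of polylog factors. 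Then $\nnz{\Stt{i}}\le 3\nnz{\St{i}}=O(\NSE(\abs{C_{i-1}},\dlt))=O((1-\bet)^{i-1}n\log^{O(1)}n\,\dlt^{-O(1)})$, and $\sum_{i=1}^d\nnz{\Stt{i}}=O(\NSE(n,\dlt))$ by the geometric series $\sum_i(1-\bet)^{i-1}$. Likewise the running time is the single $\SparE(\LL,\dlt_1')$ call costing $O(\TSE(m,n,\dlt))$ (as $\nnz{\LL}=m$ and $\dlt_1'=\Theta(\dlt)$), plus, for each $i$, an $\FindRCDD$ call of cost $\Otil{\nnz{\Stt{i}}}$ and a $\SparseSchur(\Stt{i},F_i,\dlt_{i+1}')$ call of cost (by Theorem~\ref{thm:SparSchur}, on the $\abs{C_{i-1}}$-vertex graph) $O(\TSE(\nnz{\Stt{i}},\abs{C_{i-1}},\dlt))+\Otil{\TSE(\NSE(\abs{C_{i-1}},\dlt)\dlt^{-2},\abs{C_{i-1}},\dlt)\log\abs{C_{i-1}}}$; since $\TSE$ and $\NSE$ are near-linear in their size arguments and $\abs{C_{i-1}}\le(1-\bet)^{i-1}n$, each term is at most its $i=1$ value times $(1-\bet)^{i-1}$ up to polylog factors, so the sum over the $O(\log n)$ iterations is again a convergent geometric series and equals $O(\TSE(m,n,\dlt))+\Otil{\TSE(\NSE(n,\dlt)\dlt^{-2},n,\dlt)\log n}$. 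Running each of the $O(\log n)$ randomized subroutines with failure probability $n^{-c}$ and a union bound gives the high-probability guarantee.

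I expect the main obstacle to be a matter of discipline rather than depth: keeping every structural invariant (Eulerian, \strc, the $\alp$-RCDD block, nonsingularity of the eliminated block, and hence well-definedness of $\sc{\Stt{i},F_i}$) intact through the induction so that Theorem~\ref{thm:SparSchur} and Lemma~\ref{lem:FindRCDD} are legitimately applicable at each step --- the new Condition~(iv) and the correction term exist precisely to keep the exact intermediate Schur complements \strc\ Eulerian with the right PSD domination --- together with the two bookkeeping reductions in the complexity analysis: turning the per-level bounds into a convergent geometric series via $\abs{C_i}\le(1-\bet)^i n$, and absorbing the $\dlt_i'=\Theta(\dlt/i^2)$ error schedule into polylog factors using $d=O(\log n)$.
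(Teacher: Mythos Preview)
Your proposal is correct and follows essentially the same approach as the paper's own proof: verify Conditions~(iii) and~(iv) via the two-sided PSD sandwich $(1-\dlt_{i+1}')\U{\NN}\pleq\U{\St{i+1}}\pleq(1+\dlt_{i+1}')\U{\NN}$ induced by the $\aleq$-bound, then use $\dlt_{i+1}'=\dlt_{i+1}/3$ to convert $\tfrac{2\dlt_{i+1}'}{1-\dlt_{i+1}'}$ into $\dlt_{i+1}$, and handle the complexity by summing a geometric series over the $(1-\bet)^{i-1}n$-sized levels. Your write-up is in fact more detailed than the paper's (which leaves the strong-connectivity of $\Stt{i+1}$ implicit and compresses the running-time sum into a single sentence about $\sum_i(1-\bet)^{i-1}\poly{i^2}=O(1)$).
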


\subsection{Construction of the Preconditioner and the Solver }\label{sec:preconditioner}
After constructing a desirable Schur complement chain, we  use the Schur complement chain to construct a preconditioner and solve $\LL\xx = \bb$ via the preconditioned Richardson iteration.

Consider a linear system $\AA\xx = \bb$, where $\bb$ is in the image space of $\AA$.
Given a preconditioner $\ZZ$, the classical preconditioned Richardson iteration updates as follows:
\eq{
    \xt{k+1} \arl \xt{k} + \eta \ZZ\pr{\bb - \AA \xt{k}}.
}
We initialize $\xt{0} = \zero$ for simplicity.
This procedure is denoted by $\xt{N} = \PRI\pr{\AA, \bb, \ZZ, \eta, N}$.

We will use the following fundamental lemma to guarantee the performance of the preconditioned Richardson iteration in our methods.
\begin{lemma}\label{lem:PRIconverge1}
    (Lemma~4.2 of~\cite{cohen2017almost})
    Let $\AA, \ZZ, \UU\in \MS{n}{n}$, where $\UU$ is PSD  and  $\ker\pr{\UU} \sleq \ker\pr{\ZZ} = \ker\pr{\ZZ\tp} = \ker\pr{\AA} = \ker\pr{\AA\tp}$.
    Let $\bb\in \Real^n$ be a vector inside the image space of $\AA$.
    Denote the projection onto the image space of $\AA$ by $\PA$.
    Denote $\xt{N} = \PRI\pr{\AA, \bb, \ZZ, \eta, N}$.
    Then, $\xt{N}$ satisfies
    \eq{
        \nA{\UU}{\xt{N} - \AA\dg\bb} \leq \narr{\UU}{\PP_{\AA} - \eta\ZZ\AA}^N \nA{\UU}{\AA\dg\bb}.
    }
    In addition, the preconditioned Richardson iteration is a linear operator with
    \eql{\label{eq:PrecondiRichardsonitera}}{
        \xt{N} = \eta \sum_{k=0}^{N-1}\pr{\PA - \eta\ZZ \AA  }^{k}\ZZ\bb.
    }

\end{lemma}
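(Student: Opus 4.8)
The plan is to unroll the Richardson recursion and track the error vector directly. Writing the update of $\PRI\pr{\AA, \bb, \ZZ, \eta, N}$ as $\xt{k+1} = \pr{\II - \eta\ZZ\AA}\xt{k} + \eta\ZZ\bb$ and starting from $\xt{0} = \zero$, a one-line induction gives $\xt{N} = \eta\sum_{k=0}^{N-1}\pr{\II - \eta\ZZ\AA}^{k}\ZZ\bb$, which already exhibits $\xt{N}$ as a linear function of $\bb$. To replace $\II$ by $\PA$ and obtain~\eqref{eq:PrecondiRichardsonitera}, I would first use the kernel hypotheses to observe that ${\rm im}\pr{\ZZ} = \ker\pr{\ZZ\tp}^{\perp} = \ker\pr{\AA\tp}^{\perp} = {\rm im}\pr{\AA}$, so $\ZZ\bb \in {\rm im}\pr{\AA}$; and that $\II - \eta\ZZ\AA$ maps ${\rm im}\pr{\AA}$ into itself and agrees there with $\PA - \eta\ZZ\AA$, since $\PA$ is the identity on ${\rm im}\pr{\AA}$. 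Hence each term $\pr{\II - \eta\ZZ\AA}^{k}\ZZ\bb$ equals $\pr{\PA - \eta\ZZ\AA}^{k}\ZZ\bb$.

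For the error estimate, set $\xst = \AA\dg\bb$; because $\bb$ lies in the image of $\AA$ we have $\AA\xst = \AA\AA\dg\bb = \PA\bb = \bb$, so the residual is $\bb - \AA\xt{k} = \AA\pr{\xst - \xt{k}}$ and the recursion rewrites as $\xt{k+1} - \xst = \pr{\II - \eta\ZZ\AA}\pr{\xt{k} - \xst}$. Iterating from $\xt{0} = \zero$ yields $\xt{N} - \xst = \pr{\II - \eta\ZZ\AA}^{N}\pr{-\xst}$. Since $-\xst \in {\rm im}\pr{\AA\dg} = {\rm im}\pr{\AA\tp} = {\rm im}\pr{\AA}$ — the last equality from $\ker\pr{\AA} = \ker\pr{\AA\tp}$ — the same substitution replaces $\II$ by $\PA$, giving $\xt{N} - \xst = \pr{\PA - \eta\ZZ\AA}^{N}\pr{-\xst}$. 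Applying the operator seminorm $\narr{\UU}{\cdot}$ a total of $N$ times then gives $\nA{\UU}{\xt{N} - \AA\dg\bb} \leq \narr{\UU}{\PA - \eta\ZZ\AA}^{N}\nA{\UU}{\AA\dg\bb}$.

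The one delicate point, which I would treat as the main obstacle, is that $\narr{\UU}{\BB}$ is defined as a supremum over $\xx$ with $\nA{\UU}{\xx}\neq 0$, so the bound $\nA{\UU}{\BB\xx} \leq \narr{\UU}{\BB}\nA{\UU}{\xx}$ and hence its $N$-fold chaining is not automatic when an intermediate vector lies in $\ker\pr{\UU}$. I would dispose of this by checking that $\PA - \eta\ZZ\AA$ annihilates $\ker\pr{\UU}$: if $\xx \in \ker\pr{\UU}$ then $\xx \in \ker\pr{\AA} = \ker\pr{\AA\tp}$, so $\AA\xx = \zero$ and $\PA\xx = \zero$ (as $\ker\pr{\AA\tp} = {\rm im}\pr{\AA}^{\perp}$), whence $\pr{\PA - \eta\ZZ\AA}\xx = \zero$; consequently a $\UU$-null vector stays $\UU$-null under every power of $\PA - \eta\ZZ\AA$, making the chaining legitimate and finishing the proof.
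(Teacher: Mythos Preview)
Your proof is correct. Note that the paper does not supply its own proof of this lemma: it is quoted verbatim as Lemma~4.2 of~\cite{cohen2017almost} and used as a black box, so there is nothing in the present paper to compare your argument against. Your unrolling of the Richardson recursion, the replacement of $\II$ by $\PA$ via the observation that $\ZZ\bb$ and $-\xst$ lie in ${\rm im}\pr{\AA}$ on which $\II-\eta\ZZ\AA$ and $\PA-\eta\ZZ\AA$ agree, and your handling of the $\ker\pr{\UU}$ edge case in the chaining of $\narr{\UU}{\cdot}$ are all sound and constitute a complete, self-contained proof.
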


Our construction for the preconditioner is illustrated in Algorithm~\ref{alg:precondition}.

To analyze Algorithm~\ref{alg:precondition}, we define the following matrices.

$\Con = \II - \frac{\one\one\tp}{n}$ is the projection matrix onto the image space of $\LL$.

$\Dap$ is an $n$-by-$n$ diagonal matrix with $\Dap_{F_i F_i} = \Diag{\Stt{i}}_{F_i F_i}$ for $i\in [d]$.

$\Mpt{i, N}$ is the linear operator corresponding to the preconditioned Richardson iterations
\eq{
    \Mpt{i, N} = \frac{1}{2}\sum_{k=0}^{N-1}\pr{\II - \frac{1}{2}\Dap_{F_i F_i}\inv\Stt{i}_{F_i F_i}}^k\Dap_{F_i F_i}\inv = \frac{1}{2}\sum_{k=0}^{N-1}\Dap_{F_i F_i}\inv\pr{\II - \frac{1}{2}\Stt{i}_{F_i F_i}\Dap_{F_i F_i}\inv}^k,\ i \in [d - 1].
}

$\Ltilt{i, N}$ and $\Utilt{i, N}$ are block lower  triangular and block upper  triangular matrices of the block Cholesky factorization with
\eq{
    \Ltilt{i, N} =
    \mx{
        \II_{\sum_{j=1}^{i-1}\abs{F_i}} & & \\
        & \II &  \\
        & \Stt{i}_{C_i F_i}\Mpt{i, N} & \II
    },\
    \Utilt{i, N} =
    \mx{
        \II_{\sum_{j=1}^{i-1}\abs{F_i}} & &   \\
        & \II & \Mpt{i, N}\Stt{i}_{F_i C_i}  \\
        &  & \II
    },
}
where $\II_k$ denotes the $k$-by-$k$ identity matrix.
$\DS{N}$ is the block diagonal matrix corresponding to the block Cholesky factorization
\eq{
    \DS{N} =
    \mx{
        \iv{\Mpt{1, N}} & & & \\
        & \ddots & &  \\
        & & \iv{\Mpt{d-1, N}} & \\
        & & & \Stt{d}_{F_d F_d}
    },
}
where the invertibility of $\Mpt{i, N}$ is given by Lemma~\ref{lem:Mpt}.

Note that
\eq{
    \pr{\Ltilt{i}}\inv =
    \mx{
        \II_{\sum_{j=1}^{i-1}\abs{F_i}} & & \\
        & \II &  \\
        & - \Stt{i}_{C_i F_i}\Mpt{i, N} & \II
    },\
    \pr{\Utilt{i}}\inv =
    \mx{
        \II_{\sum_{j=1}^{i-1}\abs{F_i}} & &   \\
        & \II & - \Mpt{i, N}\Stt{i}_{F_i C_i}  \\
        &  & \II
    }.
}

Then, the routine $\PRI$ is a linear operator which is equivalent to multiplying vector $\xx$ with the matrix $\Con\Zhat$, where $\Zhat\in\MS{n}{n}$ is defined as follows:
\eq{
    \Zhat = \pr{\Utilt{1, N}}\inv  \bigcdot \cdots \bigcdot \iv{\Utilt{d-1, N}}  \pr{\DS{N}}\dg \iv{\Ltilt{d-1, N}} \bigcdot  \cdots \bigcdot \iv{\Ltilt{1, N}}.
}

We also define the following matrices which are counterparts of $\dr{\Ltilt{i, N}}$ when $N = +\infty$ in Algorithm~\ref{alg:precondition}:
\eq{
    \Ltilt{i, \infty} =
    \mx{
        \II_{\sum_{j=1}^{i-1}\abs{F_i}} & & \\
        & \II &  \\
        &  \Stt{i}_{C_i F_i}\iv{\Stt{i}_{F_i F_i}} & \II
    }
}
The matrices $\dr{\Utilt{i, \infty}}$, $\dr{\DS{\infty}}$ are defined similarly by replacing $\Mpt{i, N}$  with  $\iv{\Stt{i}_{F_i F_i}}$ in $\dr{\Utilt{i, N}}$, $\dr{\DS{N}}$.

Define $\Lap$ as an approximation for $\LL$ with the errors induced by the Schur complement sparsification procedure
\eql{\label{eq:defLap}}{
    \Lap = \Stt{1}  + \sum_{i=1}^{d-1} \putmat{\Stt{i+1} - \sc{\Stt{i}, F_{i}}, C_{i}, C_{i}, n},
}
where the notation $\putmat{\cdot}$ is defined in Section~\ref{sec:ULtk} which means putting a matrix on the designated position in an all-zeros matrix with designated size.

Then, by direct calculations,
\eql{\label{eq:LapLform1}}{
    \Lap = \Ltilt{1, \infty} \cdots \Ltilt{d-1, \infty} \DS{\infty} \Utilt{d-1, \infty} \cdots \Utilt{1, \infty}.
}

The following matrices $\BB$ and $\Bap$ are playing the role of $\UU$ in Lemma~\ref{lem:PRIconverge1}:
\eq{
    &\BB = \dlt_1\U{\LL} + \sum_{i=2}^{d} \dlt_i\putmat{\U{\Stt{i}}, C_{i-1}, C_{i-1}, n}, \\
    &\Bap = \dlt_1\U{\Lap} + \sum_{i=1}^{d-1} \dlt_{i+1} \putmat{\U{\sc{\Lap, \cup_{j=1}^{i}F_j}}, C_i, C_i, n }.
}

The proofs of the following lemmas are deferred to Appendix~\ref{sec:someprfs}.
\begin{lemma}\label{lem:BBpleqo1Bap}
    If the input $\dr{\alp, \bet, \dr{\dlt_i}_{i=1}^d}$-Schur complement chain satisfies $\sum_{i=1}^{d}\dlt_i \leq 1  $, then
    $
        \BB \pleq \Bap \pleq 2\BB.
    $

\end{lemma}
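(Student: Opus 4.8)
The plan is to expand $\Bap$ around $\BB$ using the block‑Cholesky structure of $\Lap$, and then read off the two inequalities from Conditions~(iii) and~(iv) of Definition~\ref{def:SCC1}, which play complementary roles: Condition~(iv) gives the lower bound $\BB\pleq\Bap$ for free, while Condition~(iii) together with $\sum_{i=1}^d\dlt_i\le 1$ gives the upper bound $\Bap\pleq 2\BB$. Write $T_i=\cup_{j=1}^i F_j$ (so $T_0=\emptyset$ and $C_i=[n]\setminus T_i$), and introduce the error symmetrizations $\Delta_0=\U{\Stt{1}-\LL}$ and $\Delta_j=\U{\Stt{j+1}-\sc{\Stt{j},F_j}}$ for $1\le j\le d-1$. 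By Condition~(iv) each $\Delta_j\pgeq\zero$ for $0\le j\le d-1$; by Condition~(iii) and Fact~\ref{fact:aleqU} we get $\Delta_0\pleq\dlt_1\U{\LL}$ and $\Delta_j\pleq\dlt_{j+1}\U{\sc{\Stt{j},F_j}}\pleq\dlt_{j+1}\U{\Stt{j+1}}$ (the last step again by Condition~(iv)); and $\U{\Stt{1}}=\U{\LL}+\Delta_0$.

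The first step is to telescope the Schur complements of $\Lap$. Since by~\eqref{eq:defLap} the rows and columns of $\sc{\Lap,T_{i-1}}$ indexed by $F_i\subseteq C_{i-1}$ receive no contribution from the embedded blocks $\put{\cdot,C_j,C_j,n}$ with $j\ge i$, its $(F_i,F_i)$, $(F_i,C_i)$, $(C_i,F_i)$ blocks coincide with those of $\Stt{i}$; hence eliminating $F_i$ subtracts exactly $\sc{\Stt{i},F_i}$ from the $(C_i,C_i)$ block, cancelling the $-\sc{\Stt{i},F_i}$ already present there. Combining this one‑line computation with transitivity of Schur complements (Fact~\ref{fact:sctran}) gives, by induction on $i$,
\[
\sc{\Lap,T_i}=\Stt{i+1}+\sum_{j=i+1}^{d-1}\put{\Stt{j+1}-\sc{\Stt{j},F_j},\,C_j,C_j,\abs{C_i}},\qquad 0\le i\le d-1,
\]
the case $i=0$ being the definition~\eqref{eq:defLap} of $\Lap$. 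Taking matrix symmetrizations (which commute with the embeddings $\put{\cdot,C,C,\cdot}$), substituting into $\Bap=\dlt_1\U{\Lap}+\sum_{i=1}^{d-1}\dlt_{i+1}\put{\U{\sc{\Lap,T_i}},C_i,C_i,n}$, collapsing nested embeddings via $\put{\put{\Delta_j,C_j,C_j,\abs{C_i}},C_i,C_i,n}=\put{\Delta_j,C_j,C_j,n}$ (valid since $C_j\subseteq C_i$), and reindexing the main terms $\dlt_1\U{\Stt{1}}+\sum_{i=1}^{d-1}\dlt_{i+1}\put{\U{\Stt{i+1}},C_i,C_i,n}=\sum_{i=1}^d\dlt_i\put{\U{\Stt{i}},C_{i-1},C_{i-1},n}=\BB+\dlt_1\Delta_0$, a careful bookkeeping of coefficients yields the identity
\[
\Bap=\BB+\dlt_1\Delta_0+\sum_{j=1}^{d-1}\Bigl(\sum_{k=1}^{j}\dlt_k\Bigr)\put{\Delta_j,C_j,C_j,n},
\]
where the coefficient of $\put{\Delta_j,\cdots}$ is the $\dlt_1$ from $\dlt_1\U{\Lap}$ plus the $\dlt_{i+1}$ contributed by each outer term with $i\le j-1$, i.e.\ $\dlt_1+\dlt_2+\cdots+\dlt_j$.

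From this identity both inequalities are immediate. For $\BB\pleq\Bap$: each $\Delta_j\pgeq\zero$, hence each $\put{\Delta_j,C_j,C_j,n}\pgeq\zero$ (a PSD block padded with zeros is PSD), and all coefficients $\dlt_1,\sum_{k\le j}\dlt_k$ are nonnegative, so the two added sums are PSD. For $\Bap\pleq 2\BB$: since $\sum_{k=1}^j\dlt_k\le\sum_{k=1}^d\dlt_k\le 1$ and $\dlt_1\le 1$, the upper bounds on the $\Delta_j$ give $\dlt_1\Delta_0\pleq\dlt_1^2\U{\LL}\pleq\dlt_1\U{\LL}$ and $\bigl(\sum_{k=1}^j\dlt_k\bigr)\put{\Delta_j,C_j,C_j,n}\pleq\dlt_{j+1}\put{\U{\Stt{j+1}},C_j,C_j,n}$, so the added terms are $\pleq\dlt_1\U{\LL}+\sum_{j=1}^{d-1}\dlt_{j+1}\put{\U{\Stt{j+1}},C_j,C_j,n}=\BB$ after reindexing; hence $\Bap\pleq 2\BB$. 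The only genuinely delicate point is the coefficient bookkeeping that produces the displayed identity for $\Bap$ — correctly tracking which embeddings $\put{\cdot}$ collapse and checking that the coefficient of each $\put{\Delta_j,\cdots}$ is exactly $\sum_{k\le j}\dlt_k$; once that identity is in hand, the two bounds separate cleanly along the lines "Condition~(iv)$\Rightarrow$lower bound" and "Condition~(iii) plus $\sum_i\dlt_i\le 1\Rightarrow$upper bound", and I do not expect the telescoping Schur‑complement identity itself to cause difficulty since it is the expected consequence of the block factorization~\eqref{eq:LapLform1}.
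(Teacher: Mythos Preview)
Your proposal is correct and follows essentially the same route as the paper: telescope the Schur complements of $\Lap$ to obtain $\sc{\Lap,T_i}=\Stt{i+1}+\sum_{j>i}\put{\Stt{j+1}-\sc{\Stt{j},F_j},C_j,C_j,\cdot}$, substitute into the definition of $\Bap$, and read off the identity $\Bap=\BB+\dlt_1\Delta_0+\sum_{j=1}^{d-1}(\sum_{k\le j}\dlt_k)\put{\Delta_j,C_j,C_j,n}$, from which Condition~(iv) gives $\BB\pleq\Bap$ and Condition~(iii) with $\sum_i\dlt_i\le 1$ gives $\Bap\pleq 2\BB$. Your coefficient bookkeeping is in fact slightly cleaner than the paper's, which records the coefficient of $\put{\Delta_i,\cdots}$ as $\sum_{j=1}^{i+1}\dlt_j$ rather than the correct $\sum_{j=1}^{i}\dlt_j$; this off-by-one is harmless since both are bounded by~$1$.
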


\begin{lemma}\label{lem:Mpt}
    For $N \geq 1 $, $\Mpt{i, N}$ is nonsingular and
    $\ni{\iv{\Utilt{i, N}} - \iv{\Utilt{i, \infty}}} \leq \frac{\pr{1 + \alp}}{\alp}\pr{\frac{2 + \alp}{2\pr{1 + \alp}}}^N,   $
    $\no{\iv{\Ltilt{i, N}} - \iv{\Ltilt{i, \infty}}} \leq \frac{\pr{1 + \alp}}{\alp}\pr{\frac{2 + \alp}{2\pr{1 + \alp}}}^N,  $
    $
        \ni{\iv{\Stt{i}_{F_i F_i}} - \Mpt{i, N}  } \leq \frac{\pr{1 + \alp}}{\alp}\pr{\frac{2 + \alp}{2\pr{1 + \alp}}}^N \ni{\Dap_{F_i F_i}\inv}.
    $

\end{lemma}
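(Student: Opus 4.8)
The plan is to reduce the whole lemma to a single Neumann‑series estimate, since by construction $\Mpt{i,N}$ is the $N$‑term truncation of the series whose limit is $\iv{\Stt{i}_{F_iF_i}}$, and the matrices $\Utilt{i,N},\Ltilt{i,N}$ differ from $\Utilt{i,\infty},\Ltilt{i,\infty}$ only through the single off‑diagonal block built from $\Mpt{i,N}$. Fix $i\in[d-1]$ and abbreviate $M=\Stt{i}_{F_iF_i}$, $D=\Dap_{F_iF_i}=\Diag{\Stt{i}}_{F_iF_i}$, $T=\II-\tfrac12 D\inv M$, $T'=\II-\tfrac12 M D\inv$, and $\theta=\frac{2+\alp}{2(1+\alp)}$, so that $1-\theta=\frac{\alp}{2(1+\alp)}$ and $\tfrac{1}{2(1-\theta)}=\frac{1+\alp}{\alp}$. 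The two equivalent forms of $\Mpt{i,N}$ recorded after its definition are exactly $\tfrac12\sum_{k=0}^{N-1}T^kD\inv$ and $\tfrac12\sum_{k=0}^{N-1}D\inv T'^k$.

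First I would establish the contraction estimates $\ni{T}\le\theta<1$ and $\no{T'}\le\theta<1$, where $\ni{\cdot}$ (resp. $\no{\cdot}$) is the induced $\ell^\infty$ (resp. $\ell^1$) operator norm, i.e. the maximum absolute row (resp. column) sum. Since $\Stt{i}$ is a strongly connected Eulerian Laplacian its diagonal entries are positive, so $D\succ\zero$; and since $\Stt{i}_{F_iF_i}$ is $\alp$‑RCDD, the off‑diagonal absolute row sums of $D\inv M$ are at most $\frac1{1+\alp}$, so a row‑by‑row inspection of $T$ gives $\ni{T}\le\tfrac12+\tfrac1{2(1+\alp)}=\theta$, and a column‑by‑column inspection of $T'$ gives $\no{T'}\le\theta$. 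Next I would derive closed forms: the telescoping identity $\Mpt{i,N}M=\tfrac12\sum_{k=0}^{N-1}T^kD\inv M=\sum_{k=0}^{N-1}(T^k-T^{k+1})=\II-T^N$ shows, since $\ni{T^N}\le\theta^N<1$ makes $\II-T^N$ invertible, that $M$ has a left inverse (hence is nonsingular) and $\Mpt{i,N}=(\II-T^N)M\inv$ is nonsingular. Summing the Neumann series then gives $\iv{\Stt{i}_{F_iF_i}}=M\inv=\tfrac12\sum_{k\ge0}T^kD\inv=\tfrac12\sum_{k\ge0}D\inv T'^k$, hence the uniform tail formula
\[
 M\inv-\Mpt{i,N}=\tfrac12\sum_{k\ge N}T^kD\inv=\tfrac12\sum_{k\ge N}D\inv T'^k .
\]

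The three inequalities then follow by bounding this tail termwise. For the third, $\ni{M\inv-\Mpt{i,N}}\le\tfrac12\sum_{k\ge N}\ni{T}^k\,\ni{D\inv}\le\tfrac12\cdot\frac{\theta^N}{1-\theta}\,\ni{D\inv}=\frac{1+\alp}{\alp}\theta^N\,\ni{\Dap_{F_iF_i}\inv}$. For the first, $\iv{\Utilt{i,N}}-\iv{\Utilt{i,\infty}}$ is zero except on the $(F_i,C_i)$ block, where it equals $(M\inv-\Mpt{i,N})\Stt{i}_{F_iC_i}=\tfrac12\sum_{k\ge N}T^k\bigl(D\inv\Stt{i}_{F_iC_i}\bigr)$; because $\Stt{i}$ is Eulerian with non‑positive off‑diagonal entries, each row $r\in F_i$ of $D\inv\Stt{i}$ has off‑diagonal $\ell^1$‑mass exactly $1$, so $\ni{D\inv\Stt{i}_{F_iC_i}}\le1$ and submultiplicativity of $\ni{\cdot}$ yields $\ni{\iv{\Utilt{i,N}}-\iv{\Utilt{i,\infty}}}\le\tfrac12\sum_{k\ge N}\theta^k=\frac{1+\alp}{\alp}\theta^N$. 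The second inequality is the mirror image: $\iv{\Ltilt{i,N}}-\iv{\Ltilt{i,\infty}}$ is zero except on the $(C_i,F_i)$ block, equal to $\Stt{i}_{C_iF_i}(M\inv-\Mpt{i,N})=\tfrac12\sum_{k\ge N}\bigl(\Stt{i}_{C_iF_i}D\inv\bigr)T'^k$, and the column‑sum‑zero property of the Eulerian Laplacian $\Stt{i}$ gives $\no{\Stt{i}_{C_iF_i}D\inv}\le1$, whence $\no{\iv{\Ltilt{i,N}}-\iv{\Ltilt{i,\infty}}}\le\tfrac12\sum_{k\ge N}\no{T'}^k=\frac{1+\alp}{\alp}\theta^N$.

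None of these steps is genuinely hard; the calculations are all elementary. The main points requiring care are bookkeeping rather than analysis: (i) one must pair the right operator norm with the right block and hence with the right one of the two equivalent series for $\Mpt{i,N}$ — $\ni{\cdot}$ with $T^kD\inv$ for the ``row‑type'' $(F_i,C_i)$ block and $\no{\cdot}$ with $D\inv T'^k$ for the ``column‑type'' $(C_i,F_i)$ block; and (ii) one must use the full Eulerian structure of $\Stt{i}$ (row and column sums zero), not merely the $\alp$‑RCDD property of its $(F_i,F_i)$ block, to see that $\ni{D\inv\Stt{i}_{F_iC_i}}$ and $\no{\Stt{i}_{C_iF_i}D\inv}$ are at most $1$, which is what prevents an extra $\ni{\Dap_{F_iF_i}}$ factor from appearing in the first two bounds.
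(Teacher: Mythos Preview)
Your proposal is correct and follows essentially the same approach as the paper: both identify $M^{-1}-\Mpt{i,N}$ as the Neumann tail $\tfrac12\sum_{k\ge N}T^kD^{-1}$, bound $\ni{T}\le\theta$ from the $\alp$-RCDD property, and use the Eulerian row/column-sum-zero structure of $\Stt{i}$ to get $\ni{D^{-1}\Stt{i}_{F_iC_i}}\le1$ (the paper leaves these two bounds implicit, and just writes ``Analogously'' for the $\ell^1$ case). The only cosmetic difference is that the paper invokes the Gershgorin circle theorem to bound the eigenvalues of $T$ and hence certify that $\II-T^N$ is invertible, whereas you obtain this directly from $\ni{T^N}\le\theta^N<1$ via the telescoping identity $\Mpt{i,N}M=\II-T^N$; your route has the small bonus of also yielding the nonsingularity of $M$ for free.
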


From~\eqref{eq:PrecondiRichardsonitera},
to analyze the quality of the predconditioner $\Con\Zhat$,
we need to provide bounds for $\Con - \Con\Zhat\LL$.

\begin{lemma}\label{lem:precondiquality1}
    Given $\dr{\alp, \bet, \dr{\dlt_i}_{i=1}^d}$-Schur complement chain with $d = O\pr{\log n}$ and $\sum_{i=1}^{d} \dlt_i \leq  \frac{1}{4}   $,
    by setting $N = O\pr{\log n }$ in Algorithm~\ref{alg:precondition}, we can have
    $
        \narr{\Bap}{\Con - \Con\Zhat\LL} \leq \frac{1}{2}.
    $

\end{lemma}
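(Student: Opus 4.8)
The plan is to express the preconditioner error $\Con - \Con\Zhat\LL$ as the sum of two contributions: an "exact" part, coming from the discrepancy between $\LL$ and the approximate Laplacian $\Lap$ defined by~\eqref{eq:defLap}, and a "truncation" part, coming from the fact that $\Mpt{i,N}$ only approximates $\iv{\Stt{i}_{F_iF_i}}$ after finitely many Richardson steps. First I would observe that, had we used $N = +\infty$, the operator $\Zhat$ would factor as in~\eqref{eq:LapLform1} into $\Lap\dg$ (on the image of $\LL$, using $\Con$), so the exact part of the error is governed by $\Con - \Con\Lap\dg\LL$. By the Schur complement chain guarantee (Condition~(iii) of Definition~\ref{def:SCC1}) together with the definition of $\Lap$ and $\Bap$, one has $\Lap - \LL \aleq \bigl(\sum_i \dlt_i\bigr)\U{\LL} \pleq \Bap$, and since $\sum_i \dlt_i \le 1/4$ this makes $\narr{\Bap}{\Con - \Con\Lap\dg\LL}$ small (a constant like $1/4$); here I would also use Lemma~\ref{lem:BBpleqo1Bap} to pass freely between $\BB$ and $\Bap$, and use Condition~(iv) to ensure $\U{\Lap} \pgeq$ the relevant pieces so norms are well-defined and $\ker\Bap$ behaves correctly. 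This is essentially the undirected argument from~\cite{kyng2016sparsified}, adapted through the $\aleq$ / undirectification formalism set up in Section~\ref{sec:sparseblkb}.

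Next I would handle the truncation part. Writing $\Zhat$ as the product $\iv{\Utilt{1,N}}\cdots\iv{\Utilt{d-1,N}}(\DS{N})\dg \iv{\Ltilt{d-1,N}}\cdots\iv{\Ltilt{1,N}}$ and comparing factor-by-factor with the $N=\infty$ version, Lemma~\ref{lem:Mpt} gives that each factor differs from its limit by at most $\frac{1+\alp}{\alp}\bigl(\frac{2+\alp}{2(1+\alp)}\bigr)^N$ in the appropriate $\ni{\cdot}$ or $\no{\cdot}$ norm (and similarly for $\iv{\Stt{i}_{F_iF_i}} - \Mpt{i,N}$ relative to $\Diag{\cdot}\inv$). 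Since $\alp = O(1)$ the base $\frac{2+\alp}{2(1+\alp)}$ is a constant strictly below $1$, so choosing $N = O(\log n)$ (with a large enough constant, absorbing the $d = O(\log n)$ factors and the polynomially-bounded operator norms of the other factors and of $\LL$, $\Bap$) drives the total truncation discrepancy below, say, $1/4$ in the $\Bap$-operator norm. I would bound the telescoped product of perturbed factors by a standard "$\prod(1+\epsilon_j) - 1 \le$ (number of factors)$\cdot \max_j\epsilon_j \cdot e^{\sum}$" estimate, controlling the norms of the ideal factors $\Utilt{i,\infty}$, $\DS{\infty}$ etc.\ via the $\alp$-RCDD property (Condition~(ii)) which bounds $\iv{\Stt{i}_{F_iF_i}}$ and $\Stt{i}_{C_iF_i}$, and via the fact that all intermediate Schur complements are strongly connected Eulerian Laplacians with $\poly(n)$-bounded condition number on $\mathrm{span}(\one)^\perp$.

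Finally I would combine: $\narr{\Bap}{\Con - \Con\Zhat\LL} \le \narr{\Bap}{\Con - \Con\Lap\dg\LL} + \narr{\Bap}{\Con(\Lap\dg - \Zhat)\LL} \le 1/4 + 1/4 = 1/2$, where the first term uses the $N=\infty$ identity and the chain error bound, and the second uses the telescoping estimate with $N = O(\log n)$. The main obstacle I anticipate is the bookkeeping in the second, truncation, step: converting the entrywise $\ell_\infty / \ell_1$ bounds of Lemma~\ref{lem:Mpt} into a bound in the $\Bap$-weighted operator norm requires carefully tracking how each block factor interacts with $\Bap$ — in the directed setting $\Bap$ is not simply the natural norm for each $\iv{\Ltilt{i}}$ block — and ensuring the accumulated constants over $d = O(\log n)$ levels, plus the $\poly(n)$ factors from the spectral gaps, are still killed by the geometric $(\cdot)^N$ decay with $N = O(\log n)$. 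I would also need to check that $\ker\Bap = \mathrm{span}(\one) = \ker\LL$ so that $\Con$ is exactly the right projection and Lemma~\ref{lem:PRIconverge1}'s hypotheses are met when this lemma is later invoked.
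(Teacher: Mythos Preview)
Your decomposition into an ``exact'' part $\Lap^\dagger(\Lap - \LL)$ and a ``truncation'' part $(\Lap^\dagger - \Con\Zhat\Con)\LL$ matches the paper, and your treatment of the truncation part is essentially correct and follows the paper's approach (Lemma~\ref{lem:Mpt} plus a telescoping product estimate, with $N = O(\log n)$ absorbing the $\poly(n)$ factors).

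The gap is in your handling of the exact part. Two related issues:

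\begin{enumerate}
\item Your claimed bound $\Lap - \LL \aleq \bigl(\sum_i \dlt_i\bigr)\U{\LL}$ is not correct. The error at level $i$ is $\Stt{i+1} - \sc{\Stt{i},F_i} \aleq \dlt_{i+1}\,\U{\sc{\Stt{i},F_i}}$, and $\U{\sc{\Stt{i},F_i}}$ is in general \emph{not} dominated by $\U{\LL}$; in fact the paper only establishes $\U{\Stt{i}} \pleq O(\poly(n))\cdot\U{\LL}$ (see the proof of Theorem~\ref{thm:TSENSEsolver1}). The right bound is $\Lap - \LL \aleq \BB \pleq \Bap$, which gives $\ndd{\Bap}{(\Lap-\LL)} \leq 1$ but with no extra factor of $\sum_i\dlt_i$.
\item Even granting $\Lap - \LL \aleq \Bap$, you still need to bound $\nt{\Bap^{1/2}\Lap^\dagger(\Lap-\LL)\Bap^{\dagger/2}}$, and for that you need control of $(\Lap^\dagger)^\top\Bap\,\Lap^\dagger$. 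In the undirected case one would use $(1-\epsilon)\LL \pleq \Lap$ to bound $\Lap^\dagger$ spectrally, but the asymmetric approximation $\Lap - \LL \aleq \epsilon\cdot\U{\LL}$ gives no such two-sided spectral relation between $\Lap^\dagger$ and $\LL^\dagger$. The paper instead invokes Fact~\ref{fact:FLFgood} (Lemma~6.3 of~\cite{cohen2018solving}), a structural result stating that for $\Bap$ built as a convex combination of symmetrizations of Schur complements of $\Lap$, one has $(\Lap^\dagger)^\top\Bap\,\Lap^\dagger \pleq \Bap^\dagger$. Since the weights $\dlt_i$ in $\Bap$ sum to at most $1/4$, this yields $(\Lap^\dagger)^\top\Bap\,\Lap^\dagger \pleq (\sum_i\dlt_i)^2\Bap^\dagger \pleq \tfrac{1}{16}\Bap^\dagger$, and \emph{that} is where the small constant comes from.
\end{enumerate}

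So the ``this is essentially the undirected argument'' handwave does not go through; the directed case genuinely needs Fact~\ref{fact:FLFgood}, and the definition of $\Bap$ (as a weighted sum of \emph{Schur complements of $\Lap$} rather than of $\LL$) is tailored precisely so that this fact applies.
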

\begin{proof}
    From the fact $\dr{\Stt{i}}$ are all Eulerian Laplacians and Fact~\ref{fact:ESchurE}, we have $\Lap\one = \Lap\tp\one = \zero$.
    By~\eqref{eq:LapLform1} and the strong connectivity of $\Stt{d}$, we have ${\rm rank}\pr{\Lap} = n-1$.
    Then, $\ker\pr{\Lap} = \ker\pr{\Lap\tp } = {\rm span}\pr{\one}$.
    Thus,  $\Lap\Lap\dg = \Lap\dg\Lap = \Con$.

    Now, we expand $\Con - \Con\Zhat\LL$ as follows
    \eq{
        \Con - \Con\Zhat\LL = \Lap\dg\pr{\Lap - \LL} + \pr{\Lap\dg - \Con\Zhat\Con}\LL.
    }
     By the fact that $\pr{\Con - \Con\Zhat\LL}\one = \zero $ and $\Bap \pgeq \U{\LL} $, we have $\ker\pr{\Con - \Con\Zhat\LL} \sgeq \ker\pr{\Bap} $.
    Then, by combining with the definition  of $\narr{\Bap}{\cdot}$, we have
    \eql{\label{eq:percondiexpand1}}{
        \narr{\Bap}{\Con - \Con\Zhat\LL}^2 &= \narr{\Bap}{\Lap\dg\pr{\Lap - \LL} + \pr{\Lap\dg - \Con\Zhat\Con}\LL}^2  \\
        &= \nt{\Bap^{1/2}\pr{\Lap\dg\pr{\Lap - \LL} + \pr{\Lap\dg - \Con\Zhat\Con}\LL}\Bap^{\dagger/2}}^2  \\
        &\leq 2\nt{\Bap^{1/2}\Lap\dg\pr{\Lap - \LL}\Bap^{\dagger/2}}^2 + 2\nt{\Bap^{1/2}\pr{\Lap\dg - \Con\Zhat\Con}\LL\Bap^{\dagger/2}}^2.
    }

    By the definitions of $\Lap, \BB$, we have
    \eq{
        &2\xx\tp \pr{\Lap - \LL} \yy  \\
        \leq& \xx\tp \pr{\dlt_1 \U{\LL} + \sum_{i=2}^{d} \dlt_i \putmat{\U{\Stt{i}}, C_{i-1}, C_{i-1}, n}} \xx  \\
         & + \yy\tp \pr{\dlt_1 \U{\LL} + \sum_{i=2}^{d} \dlt_i \putmat{\U{\Stt{i}}, C_{i-1}, C_{i-1}, n}} \yy  \\
        =& \xx\tp\BB\xx + \yy\tp\BB\yy.
    }
    By combining with Lemma~\ref{lem:BBpleqo1Bap}, we have
    \eql{\label{eq:Lap-LBap1}}{
        \ndd{\Bap}{\pr{\Lap - \LL}} \leq \ndd{\BB}{\pr{\Lap - \LL}} \leq 1.
    }

    Next, we bound $\pr{\Lap\dg - \Con\Zhat\Con}\LL $.
    From the definition of $\dr{\alp, \bet, \dr{\dlt_i}_{i=1}^d}$-Schur complement chain, we have
    $\U{\Stt{i+1}} \pgeq  \U{\sc{\Stt{i}, F_i}}. $
    Combining with Fact~\ref{fact:scUpleqUsc}, we have $\U{\Stt{i+1}} \pgeq  \sc{\U{\Stt{i}}, F_i}$.
    Then, by Fact~\ref{fact:la2scup1}, $\la_2\pr{\U{\Stt{i+1}}} \geq  \la_2\pr{\sc{\U{\Stt{i}}, F_i}} \geq  \la_2\pr{\U{\Stt{i}}}. $
    By induction, $\la_2\pr{\U{\Stt{i}}} \geq \la_2\pr{\U{\LL}}.  $
    By Cheeger's inequality, $\la_2\pr{\U{\LL}} = \Omega\pr{\frac{1}{\poly{n}}}$.
    Then, for any $i\in [d]$, $\la_2\pr{\U{\Stt{i}}} = \Omega\pr{\frac{1}{\poly{n}}}$.
    By Fact~\ref{fact:la2scup1}, $\ni{\Dap_{F_i F_i}\inv } \leq \frac{2}{\la_2\pr{\U{\Stt{i}}} } = O\pr{\poly{n}}. $
    Also, $\la_2\pr{\Bap} \geq  \la_2\pr{\BB} = \Omega\pr{\frac{1}{\poly{n}}}$.
    It follows by induction easily that $\nt{\Bap} = O\pr{\poly{n}} $.

    By Fact~\ref{fact:MpinvABC} and~\eqref{eq:LapLform1}, we have
    \eq{
        \Lap\dg = \Con\iv{\Utilt{1, \infty}} \cdots \iv{\Utilt{d-1, \infty}} \pr{\DS{\infty}}\dg \iv{\Ltilt{d-1, \infty}} \cdots \iv{\Ltilt{1, \infty}}\Con.
    }
    Since $\ni{\iv{\Stt{i}_{F_i F_i}}\Stt{i}_{F_i C_i}} = \frac{1}{2}\ni{\sum_{k=0}^{+\infty} \pr{\II - \frac{1}{2}\Dap_{F_i F_i}\inv\Stt{i}_{F_i F_i}}\Dap_{F_i F_i}\inv \Stt{i}_{F_i C_i} } \leq \frac{1 + \alp}{\alp},    $
    we have
    $\no{\iv{\Ltilt{i, \infty}}} \leq \frac{1 + 2\alp}{\alp} $.
    Analogously, $\ni{\iv{\Utilt{i, \infty}}} \leq \frac{1 + 2\alp}{\alp} $.
    Then, by Fact~\ref{fact:sequenceproductelemta}, Lemma~\ref{lem:Mpt}, the fact that $\ni{\Dap_{F_i F_i}\inv} = O\pr{\poly{n}} $ and $\pr{\frac{1 + 2\alp}{\alp}}^{O\pr{\log n }} = O\pr{\poly{n}} $, $\la_2\pr{\Bap} = \Omega\pr{\frac{1}{\poly{n}}}$, we have
    \eq{
        \nt{\Bap^{1/2}\pr{\Lap\dg - \Con\Zhat\Con}\LL\Bap^{\dagger/2}} \leq \pr{\frac{2 + \alp}{2\pr{1 + \alp}}}^N \cdot O\pr{\poly{n}}.
    }
    Then, by setting $N = O\pr{\log n}$ in Algorithm~\ref{alg:precondition}, we can let
    \eql{\label{eq:NexpZLinv1}}{
        \nt{\Bap^{1/2}\pr{\Lap\dg - \Con\Zhat\Con}\LL\Bap^{\dagger/2}}^2 \leq \frac{1}{16}.
    }

    By Fact~\ref{fact:FLFgood}, we have
    \eql{\label{eq:LBL}}{
        \tpp{\Lap\dg}\Bap \Lap\dg \pleq \pr{\sum_{i=1}^{d}\dlt_i}^2 \Bap\dg \pleq \frac{1}{16} \Bap\dg.
    }
    Continuing with~\eqref{eq:percondiexpand1},
    \eq{
        &\narr{\Bap}{\Con - \Con\Zhat\LL}^2 \\ 
        \leq& 2\nt{\Bap^{1/2}\Lap\dg\pr{\Lap - \LL}\Bap^{\dagger/2}}^2 + 2\nt{\Bap^{1/2}\pr{\Lap\dg - \Con\Zhat\Con}\LL\Bap^{\dagger/2}}^2  \\
        =& 2 \nt{\Bap^{\dagger/2} \pr{\Lap - \LL}\tp \tpp{\Lap\dg} \Bap \Lap\dg\pr{\Lap - \LL} \Bap^{\dagger/2}} + 2\nt{\Bap^{1/2}\pr{\Lap\dg - \Con\Zhat\Con}\LL\Bap^{\dagger/2}}^2  \\
        \leq& \frac{1}{8 } \nt{\Bap^{\dagger/2} \pr{\Lap - \LL}\tp \Bap\dg \pr{\Lap - \LL} \Bap^{\dagger/2}} + 2\nt{\Bap^{1/2}\pr{\Lap\dg - \Con\Zhat\Con}\LL\Bap^{\dagger/2}}^2  \\
        \leq& \frac{1}{8} + \frac{1 }{8} = \frac{1}{4 },
    }
    where the second inequality is by~\eqref{eq:LBL}; the last inequality is from~\eqref{eq:Lap-LBap1} and~\eqref{eq:NexpZLinv1}.

\end{proof}


\begin{proof}[Proof of Theorem~\ref{thm:TSENSEsolver1}]
  By Fact~\ref{lem:scrobust} and
  the definition of $\dr{\alp, \bet, \dr{\dlt_i}_{i=1}^d}$-Schur complement chain, we have
  \eq{
    \U{\Stt{i+1}} \pleq \pr{1 + \dlt_{i+1}} \pr{3 + \frac{2}{\alp}} \U{\sc{\Stt{i}, F_i}}.
  }
  It follows by Fact~\ref{fact:scprvpleq} and induction that $\U{\Stt{i}} \pleq \pr{3 + \frac{2}{\alp}}^{i-1}\prod_{j=2}^{i}\pr{1 + \dlt_{j}} \sc{\U{\Stt{1}}, \cup_{j=1}^{i-1}F_j} \pleq \pr{3 + \frac{2}{\alp}}^{i-1} \prod_{j=1}^{i}\pr{1 + \dlt_{j}} \sc{\U{\LL}, \cup_{j=1}^{i-1}F_j} $.

  Since $\sum_{i=1}^{d}\dlt_i = O(1)$, $d = O(\log n)$, we have $\pr{3 + \frac{2}{\alp}}^{d}\prod_{j=1}^{d}\pr{1 + \dlt_{j}} = O\pr{\poly{n}},  $
  i.e., $\U{\Stt{i}} \pleq O\pr{\poly{n}}\cdot \sc{\U{\LL}, \cup_{j=1}^{i-1} F_j}$ for any $i\in [d]$.
  Thus, by Fact~\ref{fact:scz}, we have
  $\putmat{\U{\Stt{i}}, C_{i-1}, C_{i-1}, n} \pleq O\pr{\poly{n}}\cdot\U{\LL}.  $
  By combining with Lemma~\ref{lem:BBpleqo1Bap}, we have \eq{\Bap \pleq 2\BB \pleq 2\pr{\U{\LL} + \sum_{i=2}^{d}\putmat{\U{\Stt{i}}, C_{i-1}, C_{i-1}, n}} = O\pr{\poly{n}}\cdot \U{\LL}.        }
  By Lemma~\ref{lem:precondiquality1}, after running $N'$ iterations of the preconditioned Richardson iteration, $ \nA{\U{\LL}}{\xt{N'} - \LL\dg\bb} \leq  \nA{\Bap}{\xt{N'} - \LL\dg\bb} \leq \narr{\Bap}{\II - \Con\Zhat\LL}^N\nA{\Bap}{\LL\dg\bb} \leq \pr{\frac{1}{2}}^{N'}\nA{\Bap}{\LL\dg\bb} \leq \pr{\frac{1}{2}}^{N'}\cdot O\pr{\poly{n}} \cdot \nA{\U{\LL}}{\LL\dg\bb}.  $
  By setting $N' = O\pr{\log\pr{n/\eps}}$, we can let
  $
    \nA{\U{\LL}}{\xt{N'} - \LL\dg\bb} \leq \eps \nA{\U{\LL}}{\LL\dg\bb}.
  $

  The processing time follows by Theorem~\ref{thm:SCC} directly.
  By Theorem~\ref{thm:SCC}, $\sum_{i=1}^{d}\nnz{\Stt{i}} = \sum_{i=1}^d O\pr{\NSE\pr{\pr{1 - \bet}^{i-1}n, \frac{\dlt}{i^2}}} $.
  As $\dlt = O(1)$, $\bet = \frac{1}{16(1+\alp)} = O(1) $, we have $\sum_{i=1}^{d}\nnz{\Stt{i}} = O\pr{\NSE\pr{n, 1}}$.
  Thus, as we set $N = O\pr{\log n}$ in $\PreC$, running $\PreC $ for one time takes $O\pr{\NSE\pr{n, 1}\log n }$ time.
  Then, after obtaining a desirable Schur complement chain, an $\eps$-accurate vector $\xx$ can be computed in $O\pr{\NSE\pr{n, 1}\log n \log\pr{n/\eps} } $ time.

\end{proof}

Using the smaller Eulerian Laplacian sparsifiers based on short cycle decompositions to sparsify the approximate Schur complements returned by Algorithm~\ref{alg:SparSchur},
we get the following solver which has quadratic processing time, but faster solve time.
Its proof is deferred to Appendix~\ref{sec:sparsify}.
\begin{corollary}\label{coro:shortcyclep1}
    Given a strongly connected Eulerian Laplacian $\LL\in \MS{n}{n}$, we can process it time $O(n^2\log^{O(1)} n)$.
    Then, for each query vector $\bb\in \Real^n$ with $\bb \perp \one$, we can compute a vector $\xx\in \Real^n$ with $\nA{\U{\LL}}{\xx - \LL\dg\bb} \leq \eps \nA{\U{\LL}}{\LL\dg\bb}$ in time $O(n\log^5 n\log(n/\eps))$.

\end{corollary}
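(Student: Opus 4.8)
The plan is to instantiate the entire machinery of Sections~\ref{sec:reformPBEvAM1}--\ref{sec:solver} (Theorems~\ref{thm:SparSchur},~\ref{thm:SCC}, and the solver of Theorem~\ref{thm:TSENSEsolver1}) with the abstract sparsification oracle of Theorem~\ref{thm:SparEoracle1} replaced by a \emph{slower but smaller} one: a routine with $\NSE\pr{n,\dlt}=O\pr{n\log^{4}n\,\dlt^{-2}}$ whose construction time is quadratic in the number of edges rather than nearly-linear. Such an oracle comes from the Eulerian sparsifiers of~\cite{CGPSSW18}. One decomposes the (weighted, directed, Eulerian) input into directed cycles of length $O(\log n)$ together with $O(n)$ leftover edges (weights are handled by the usual $O(\log n)$-way geometric bucketing, and cancelling one unit of flow around a directed cycle preserves all in- and out-degrees), then sub-samples the cycle edges with rescaled weights, keeping $O\pr{n\log^{O(1)}n\,\dlt^{-2}}$ of them in total and keeping all leftover edges exactly. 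A matrix-concentration argument shows the result is a directed Laplacian $\Lap$ with $\Diag{\Lap}=\Diag{\LL}$, $\Lap\one=\LL\one$, and $\Lap-\LL\aleq\dlt\cdot\UG{\LL}$, so it meets Conditions~(i)--(iii) of Theorem~\ref{thm:SparEoracle1}; its running time is governed by the naive $O(m^{2})$-time construction of an $O(\log n)$-length cycle decomposition (peel off vertices of in/out-degree $<3$, then repeatedly take the lowest cross edge of a BFS tree), giving $\TSE\pr{m,n,\dlt}=O\pr{(m^{2}+n)\log^{O(1)}n\,\dlt^{-O(1)}}$. Making this construction precise is the content of Appendix~\ref{sec:sparsify}, and is where I expect the only real difficulty (see the last paragraph).

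Given this oracle, I would preprocess in two stages. First, apply the \emph{fast} expander-based oracle $\SparE$ of Theorem~\ref{thm:SparEoracle1} once, with constant error, to the input $\LL$; since $m=\nnz{\LL}\le n^{2}$, this costs $\TSE\pr{m,n,1}=O\pr{(m+n)\log^{O(1)}n}=O\pr{n^{2}\log^{O(1)}n}$ and replaces $\LL$, up to constant relative error in $\UG{\cdot}$ (which is absorbed into the chain's error budget), by a graph with $O\pr{n\log^{O(1)}n}$ edges. Second, run $\SCC$ (Algorithm~\ref{alg:SCC}) and, inside it, $\SparseSchur$ (Algorithm~\ref{alg:SparSchur}) exactly as analyzed in Theorems~\ref{thm:SCC} and~\ref{thm:SparSchur}, but routing every \emph{Laplacian}-sparsification call through the slow oracle above. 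With $\dlt,\eps=\Otil{1}$, every graph that these routines hand to a Laplacian-sparsification call (the $\Ltt{k}$, $\Ltt{k,0}$ inside $\SparseSchur$ and the chain matrices $\St{i}$) has $O\pr{n\log^{O(1)}n}$ edges, so each slow call costs $O\pr{(n\log^{O(1)}n)^{2}}=O\pr{n^{2}\log^{O(1)}n}$, while the genuinely fast product sparsifications $\SparP$ and $\SP$ still total only $\Otil{n\log^{O(1)}n}$. Since $\SCC$ runs $d=O(\log n)$ Schur steps, each of which performs $K=O(\log\log n)$ partial-elimination iterations, there are $\Otil{\log n}$ slow calls in all, for total preprocessing time $O\pr{n^{2}\log^{O(1)}n}$.

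The query bound then follows verbatim from the proof of Theorem~\ref{thm:TSENSEsolver1}. By Theorem~\ref{thm:SCC}, the chain satisfies $\sum_{i=1}^{d}\nnz{\Stt{i}}=O\pr{\NSE\pr{n,1}}=O\pr{n\log^{4}n}$ (the per-level sizes decay geometrically because $\bet=\tfrac{1}{16(1+\alp)}=\Theta(1)$, so the sum is dominated by its first level), hence $\PreC$ with $N=O(\log n)$ inner iterations costs $O\pr{\sum_{i}\nnz{\Stt{i}}\cdot N}=O\pr{n\log^{5}n}$ per application; plugging $\PreC$ into $O\pr{\log(n/\eps)}$ rounds of preconditioned Richardson iteration, exactly as in Theorem~\ref{thm:TSENSEsolver1}, yields query time $O\pr{n\log^{5}n\log(n/\eps)}$ with $\nA{\U{\LL}}{\xx-\LL\dg\bb}\le\eps\,\nA{\U{\LL}}{\LL\dg\bb}$. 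The reason nothing else has to change is that Theorems~\ref{thm:SparSchur},~\ref{thm:SCC}, and~\ref{thm:TSENSEsolver1} access the sparsifier only through its abstract interface (the functions $\TSE$ and $\NSE$ and Conditions~(i)--(iii)), so swapping oracles alters only these parameters, not the analysis.

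The main obstacle is thus entirely inside Appendix~\ref{sec:sparsify}: verifying that the short-cycle-decomposition sparsifier, once adapted from undirected graphs to weighted directed Eulerian graphs, really produces a directed Laplacian that is simultaneously exactly degree-preserving and asymmetrically bounded by $\dlt\cdot\UG{\LL}$. Degree preservation is immediate once each sampled cycle cancels exactly one unit of flow, but the asymmetric bound requires a matrix-Bernstein-type concentration argument on the per-cycle sampling phrased with respect to $\UG{\LL}$ (rather than a spectral norm), together with a check that the $O(\log n)$-way weight bucketing costs only one extra logarithmic factor, so that $O(\log n)$-length cycles yield $O\pr{n\log^{4}n}$ edges and not a larger power of $\log n$. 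Everything downstream is a mechanical substitution of $\NSE\pr{n,1}=O\pr{n\log^{4}n}$ and the quadratic-time $\TSE$ into results already established above.
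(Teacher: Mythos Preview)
Your proposal is essentially correct and follows the same route as the paper: invoke Lemma~\ref{lem:shortcycleexistsparseEL} (the short-cycle Eulerian sparsifier of~\cite{CGPSSW18} with $\NSE(n,\dlt)=O(n\log^{4}n\,\dlt^{-2})$) as the abstract oracle inside $\SCC$, after one fast initial sparsification to bring the edge count down to $O(n\log^{O(1)}n)$; the resulting chain has total size $O(n\log^{4}n)$, so the solve cost is $O(n\log^{5}n\log(n/\eps))$ exactly as you computed. The only unnecessary worry is your ``main obstacle'': the degree-preserving asymmetric-approximation guarantee for the short-cycle sparsifier on Eulerian directed graphs is precisely what~\cite{CGPSSW18} establishes, and the paper (like Appendix~\ref{sec:sparsify}) simply cites it as a black box rather than reproving it.
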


\begin{remark}
Combining Theorem~\ref{thm:TSENSEsolver1} or Corollary~\ref{coro:shortcyclep1} with
Appendix~D of~\cite{cohen2017almost}
yields full solvers for strongly connected directed Laplacians.
\end{remark}

\bibliography{ref}

\begin{appendices}

\section{Matrix Facts}

Some elementary lemmas used frequently in this paper are listed in this section.
Unless otherwise specified, we assume $F = \dr{1, 2, \cdots, \abs{F}}$ and $C = [n]\backslash F$ by default in this section.

\newtheorem*{fact:ne}{Fact~\ref{lem:ne}}
\begin{fact:ne}
    \lemne
\end{fact:ne}

\begin{proof}
  Compared with Lemma~B.2 of~\cite{cohen2017almost},
  we need to show that under the condition
  \eq{2\xx\tp\AA\yy
  \leq \xx\tp\UU\xx + \yy\tp\UU\yy,
  \qquad
  \forall \xx, \yy \in \Real^n,}
  we have
  \eq{\ker\pr{\UU} \sleq \ker\pr{\AA} \cap \ker\pr{\AA\tp}.}

By the assumption,
we have for any $\vv\in \ker\pr{\UU} \dele \dr{\zero }$,
  \eq{
    2\xx\tp\AA\vv \leq \xx\tp\UU\xx,\quad 2\vv\tp\AA\yy \leq \yy\tp\UU\yy,\qquad \forall \xx, \yy\in \Real^n.
  }
  By choosing $\xx = c\AA\vv$, $\yy = c\AA\tp\vv$, we have
  \eq{
    2c \nt{\AA\vv}^2 \leq c^2 \nt{\UU}\nt{\AA}^2\nt{\vv}^2,\quad 2c \nt{\AA\tp\vv}^2 \leq c^2\nt{\UU}\nt{\AA}^2\nt{\vv}^2,\qquad \forall c > 0.
  }
  By letting $c \arr 0^+$, we have $\nt{\AA\vv} = 0$ and $\nt{\AA\tp\vv} = 0$, i.e., $\vv\in \ker\pr{\AA}\cap \ker\pr{\AA\tp}$.
  Thus, $\ker\pr{\UU} \sleq \ker\pr{\AA} \cap \ker\pr{\AA\tp}. $

\end{proof}

\begin{fact}\label{fact:aleqsum}
    If $\AA \aleq a\CC$, $\BB \aleq b\CC$, then $\AA + \BB \aleq \pr{a + b}\CC$.
\end{fact}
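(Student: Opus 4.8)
The plan is to reduce everything to the symmetric bilinear-form characterization of $\aleq$ supplied by Fact~\ref{lem:ne}. Since the hypotheses $\AA \aleq a\CC$ and $\BB \aleq b\CC$ already presuppose that $a\CC$ and $b\CC$ are PSD, the matrix $\pr{a+b}\CC$ is PSD as well, and Fact~\ref{lem:ne} is applicable to all three error matrices. This is the only structural check that needs to be made before the argument proceeds purely algebraically.

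First I would apply the forward direction of Fact~\ref{lem:ne} to each hypothesis: $\AA \aleq a\CC$ yields $2\xx\tp\AA\yy \leq a\pr{\xx\tp\CC\xx + \yy\tp\CC\yy}$ for all $\xx, \yy \in \Real^n$, and similarly $\BB \aleq b\CC$ yields $2\xx\tp\BB\yy \leq b\pr{\xx\tp\CC\xx + \yy\tp\CC\yy}$. Adding these two inequalities gives
\[
  2\xx\tp\pr{\AA + \BB}\yy \leq \pr{a+b}\pr{\xx\tp\CC\xx + \yy\tp\CC\yy}, \qquad \forall \xx, \yy \in \Real^n.
\]
Then, applying the reverse direction of Fact~\ref{lem:ne} with $\UU := \pr{a+b}\CC$ gives exactly $\AA + \BB \aleq \pr{a+b}\CC$; in particular the kernel containment $\ker\pr{\pr{a+b}\CC} \sleq \ker\pr{\AA+\BB} \cap \ker\pr{\pr{\AA+\BB}\tp}$ is part of what Fact~\ref{lem:ne} packages into the equivalence, so it does not need to be verified separately.

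There is no genuine obstacle here: the statement follows in three lines from the equivalence of Fact~\ref{lem:ne} and the fact that the inequality $2\xx\tp(\cdot)\yy \le (\cdot)$ is closed under addition. The only point worth a sentence of care is noting that $a\CC$, $b\CC$, and $\pr{a+b}\CC$ are PSD so that Fact~\ref{lem:ne} may be invoked, which is forced by the meaning of the hypotheses.
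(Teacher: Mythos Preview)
Your proof is correct. The paper states Fact~\ref{fact:aleqsum} without proof, treating it as an immediate consequence of the bilinear-form characterization in Fact~\ref{lem:ne}; your three-line argument via that equivalence is exactly the intended route.
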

\begin{fact}\label{fact:aleqpleq1}
    If $\AA \aleq \BB$, $\BB \pleq \CC$, then $\AA \aleq \CC$.
\end{fact}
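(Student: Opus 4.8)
The plan is to deduce this immediately from the bilinear reformulation of $\aleq$ provided by Fact~\ref{lem:ne}. First I would observe that $\BB$ is PSD — this is built into the meaning of $\AA \aleq \BB$, since Definition~\ref{def:aleq1} only bounds a matrix against a PSD matrix — and that $\CC$ is PSD as well, because $\CC \pgeq \BB \pgeq \zero$ and the relation $\pleq$ is only defined between symmetric matrices. Hence Fact~\ref{lem:ne} is applicable both with $\UU := \BB$ and with $\UU := \CC$.

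Next, applying the forward direction of Fact~\ref{lem:ne} to the hypothesis $\AA \aleq \BB$ yields
\[
2\xx\tp\AA\yy \leq \xx\tp\BB\xx + \yy\tp\BB\yy, \qquad \forall \xx, \yy \in \Real^n.
\]
Since $\BB \pleq \CC$ gives $\xx\tp\BB\xx \leq \xx\tp\CC\xx$ and $\yy\tp\BB\yy \leq \yy\tp\CC\yy$ for every $\xx, \yy$, chaining these bounds produces
\[
2\xx\tp\AA\yy \leq \xx\tp\CC\xx + \yy\tp\CC\yy, \qquad \forall \xx, \yy \in \Real^n.
\]

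Finally, I would invoke the reverse direction of Fact~\ref{lem:ne} with $\UU := \CC$, which converts the displayed bilinear inequality back into $\AA \aleq \CC$; in particular the kernel containment $\ker(\CC) \sleq \ker(\AA) \cap \ker(\AA\tp)$ demanded by Definition~\ref{def:aleq1} comes for free, since the proof of Fact~\ref{lem:ne} derives exactly this inclusion from the bilinear bound. There is essentially no obstacle here: the single point deserving a line of care is the verification that $\CC$ is PSD, which is what makes Fact~\ref{lem:ne} legitimately applicable on the $\CC$ side, and this is immediate from $\CC \pgeq \BB \pgeq \zero$.
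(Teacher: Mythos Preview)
Your proposal is correct. The paper states Fact~\ref{fact:aleqpleq1} without proof, so there is nothing to compare against; your argument via the bilinear characterization in Fact~\ref{lem:ne} is exactly the intended one-line verification.
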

\begin{fact}\label{fact:aleqU}
    If $\AA\aleq \BB$, then $\U{\AA} \pleq \BB$.
\end{fact}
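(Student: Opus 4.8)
\textbf{Proof proposal for Fact~\ref{fact:aleqU}.}
The plan is to reduce everything to the bilinear reformulation of asymmetric boundedness already established in Fact~\ref{lem:ne}. Concretely, since we are given $\AA \aleq \BB$, Fact~\ref{lem:ne} tells us that
\[
  2\xx\tp\AA\yy \leq \xx\tp\BB\xx + \yy\tp\BB\yy, \qquad \forall \xx,\yy\in\Real^n .
\]
The only remaining work is to specialize this inequality and then translate it back into a statement about the symmetrization $\U{\AA}$.

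First I would set $\yy = \xx$ in the displayed inequality, which immediately yields $2\xx\tp\AA\xx \leq 2\xx\tp\BB\xx$, i.e.\ $\xx\tp\AA\xx \leq \xx\tp\BB\xx$ for every $\xx\in\Real^n$. Next I would observe that the antisymmetric part of $\AA$ contributes nothing to its quadratic form: $\xx\tp\AA\tp\xx = \pr{\xx\tp\AA\xx}\tp = \xx\tp\AA\xx$, so $\xx\tp\U{\AA}\xx = \tfrac12\pr{\xx\tp\AA\xx + \xx\tp\AA\tp\xx} = \xx\tp\AA\xx$. Combining these two facts gives $\xx\tp\U{\AA}\xx \leq \xx\tp\BB\xx$ for all $\xx$, which is exactly the defining inequality of $\U{\AA}\pleq\BB$ (note $\U{\AA}$ is symmetric and $\BB$ is PSD, so the comparison is well posed). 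There is essentially no obstacle here; the only point requiring the slightest care is the identity $\xx\tp\U{\AA}\xx = \xx\tp\AA\xx$, which lets us discard the skew-symmetric component of $\AA$ at no cost.
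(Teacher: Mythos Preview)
Your proof is correct. The paper itself states this fact without proof, treating it as an immediate consequence of Fact~\ref{lem:ne}; your argument is exactly the natural one-line derivation (set $\yy=\xx$ and use $\xx\tp\U{\AA}\xx=\xx\tp\AA\xx$), so there is nothing to compare.
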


\begin{fact}\label{fact:ESchurE}
    Schur complements of Eulerian Laplacians are Eulerian Laplacians;
    Schur complements of strongly connected Eulerian Laplacians are strongly connected Eulerian Laplacians.
\end{fact}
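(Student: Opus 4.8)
The plan is to verify, one by one, the defining properties of an Eulerian Laplacian for $\sc{\LL, F}$, where $F, C$ is the partition of $[n]$ for which $\LL_{FF}$ is nonsingular; the sign condition on off-diagonal entries will be the crux. Writing $\LL$ in $(F,C)$-block form, the row-sum identity $\sc{\LL,F}\one_C = \zero$ follows from $\LL\one = \zero$: the $F$-block row gives $\one_F = - \LL_{FF}\inv\LL_{FC}\one_C$ and the $C$-block row gives $\LL_{CF}\one_F + \LL_{CC}\one_C = \zero$, so $\sc{\LL,F}\one_C = \LL_{CC}\one_C - \LL_{CF}\LL_{FF}\inv\LL_{FC}\one_C = \LL_{CC}\one_C + \LL_{CF}\one_F = \zero$; symmetrically $\one_C\tp\sc{\LL,F} = \zero\tp$ follows from $\one\tp\LL = \zero\tp$. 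Once the off-diagonal entries of $\sc{\LL,F}$ are known to be non-positive, the identity $\one_C\tp\sc{\LL,F} = \zero\tp$ makes $\sc{\LL,F}$ a directed Laplacian (the diagonal-equals-out-degree condition is just a restatement of zero column sums), and $\sc{\LL,F}\one_C = \zero$ then upgrades it to Eulerian.

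For the sign condition I would show $\LL_{FF}\inv \geq \zero$ entrywise. The block $\LL_{FF}$ is a $Z$-matrix (non-positive off-diagonal entries) and is weakly column diagonally dominant: for $j\in F$, $\LL_{jj} - \sum_{i\in F, i\neq j}\abs{\LL_{ij}}$ equals the total weight of the out-edges from $j$ into $C$ in $\calG[\LL]$, which is $\geq 0$. Since $\LL_{FF}$ is also nonsingular, none of its diagonal entries can vanish (a zero $\LL_{jj}$ would, with weak column dominance, force column $j$ of $\LL_{FF}$ to be zero), so writing $\LL_{FF}\tp = \DD - \NN$ with $\DD = \Diag{\LL_{FF}} \succ \zero$ and $\NN = \DD - \LL_{FF}\tp \geq \zero$, the weak row dominance of $\LL_{FF}\tp$ gives $\|\DD\inv\NN\|_\infty \leq 1$. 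By Perron--Frobenius the spectral radius of the nonnegative matrix $\DD\inv\NN$ is one of its eigenvalues, so if it equaled $1$ then $\LL_{FF}\tp = \DD\pr{\II - \DD\inv\NN}$ would be singular; hence $\rho\pr{\DD\inv\NN} < 1$, the Neumann series gives $\pr{\II - \DD\inv\NN}\inv \geq \zero$, and therefore $\LL_{FF}\inv \geq \zero$. Consequently $\LL_{CF}\LL_{FF}\inv\LL_{FC} \geq \zero$ entrywise --- a product of a matrix $\leq \zero$, one $\geq \zero$, and one $\leq \zero$ --- so for $i\neq j$ in $C$ we get $\pr{\sc{\LL,F}}_{ij} = \pr{\LL_{CC}}_{ij} - \pr{\LL_{CF}\LL_{FF}\inv\LL_{FC}}_{ij} \leq \pr{\LL_{CC}}_{ij} \leq 0$. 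I expect this $M$-matrix step to be the main obstacle. An alternative is to reduce to single-vertex elimination via transitivity of Schur complements (Fact~\ref{fact:sctran}): eliminating a single vertex $v$ with pivot $\LL_{vv} > 0$ sends an off-diagonal $\LL_{ik}$ to $\LL_{ik} - \LL_{iv}\LL_{vv}\inv\LL_{vk} \leq \LL_{ik}$ and preserves all row and column sums, but one still has to certify that the intermediate pivots stay positive, which again comes down to the same dominance argument.

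For the second claim, let $\LL$ be strongly connected and Eulerian. By the block Cholesky factorization of $\LL$ recalled in Section~\ref{sec:overview}, $\LL$ is a unit block-lower-triangular matrix times the block-diagonal matrix with diagonal blocks $\LL_{FF}$ and $\sc{\LL,F}$ times a unit block-upper-triangular matrix, whence ${\rm rank}\pr{\LL} = {\rm rank}\pr{\LL_{FF}} + {\rm rank}\pr{\sc{\LL,F}} = \abs{F} + {\rm rank}\pr{\sc{\LL,F}}$. A strongly connected Eulerian Laplacian on $n$ vertices has kernel $\spanrm{\one}$ and hence rank $n-1$ (Fact~\ref{fact:strcrank1}), so ${\rm rank}\pr{\sc{\LL,F}} = \abs{C}-1$. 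Since $\sc{\LL,F}$ has already been shown to be an Eulerian Laplacian on $\abs{C}$ vertices, and an Eulerian Laplacian on $m$ vertices with $k$ weakly connected components has rank $m-k$, the graph $\calG[\sc{\LL,F}]$ must be (weakly) connected; and a weakly connected Eulerian digraph is strongly connected --- in its condensation a source strongly connected component has balanced total in- and out-degree but receives no inter-component edges, hence sends none either, so it is the whole graph. Therefore $\sc{\LL,F}$ is a strongly connected Eulerian Laplacian.
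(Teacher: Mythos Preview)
The paper states Fact~\ref{fact:ESchurE} without proof (it is listed among standard matrix facts in the appendix), so there is no paper argument to compare against; your proof is correct in both parts, and the $M$-matrix/Perron--Frobenius step for $\LL_{FF}\inv \geq \zero$ is exactly the right way to get the sign condition.

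One minor remark on the second part: your intermediate claim that an Eulerian Laplacian on $m$ vertices with $k$ weakly connected components has rank $m-k$ itself relies on ``weakly connected Eulerian $\Rightarrow$ strongly connected'' (applied to each component, then Fact~\ref{fact:strcrank1}), which you only establish afterward via the condensation argument. This is not a logical gap---all the pieces are present---but the condensation argument should come first in the exposition. Alternatively, once you have ${\rm rank}\,\sc{\LL,F} = \abs{C}-1$, you can bypass the component count entirely: if $\calG\br{\sc{\LL,F}}$ were not strongly connected, a nonempty proper $S \subsetneq C$ with no outgoing edges would make $\sc{\LL,F}$ block upper-triangular with each diagonal block annihilating $\one$, forcing rank $\le \abs{C}-2$.
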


\begin{fact}
    For any Eulerian Laplacian $\LL$, $\U{\LL}$ is PSD.
\end{fact}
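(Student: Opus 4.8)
The plan is to reduce $\U{\LL}$ to an ordinary undirected graph Laplacian and then invoke the standard sum-of-squares formula for its quadratic form. First I would write $\DD = \Diag{\LL}$ and $\AA = \DD - \LL$. By the definition of a directed Laplacian all off-diagonal entries of $\LL$ are non-positive, so $\AA$ has non-negative off-diagonal entries and vanishing diagonal; in particular $\AA$ is entrywise non-negative. The next step is to extract what the Eulerian hypothesis buys us: from $\LL\one = \zero$ we get $\DD\one = \AA\one$, i.e.\ $\DD_{ii} = \sum_{j}\AA_{ij}$ (the $i$-th row sum of $\AA$), and from $\one\tp\LL = \zero\tp$ we likewise get $\DD_{ii} = \sum_{j}\AA_{ji}$ (the $i$-th column sum of $\AA$).

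Then I would symmetrize: set $\barA = \frac{1}{2}(\AA + \AA\tp)$, so that $\U{\LL} = \DD - \barA$. The matrix $\barA$ is symmetric, entrywise non-negative, and has $i$-th row sum $\frac{1}{2}\pr{\sum_j \AA_{ij} + \sum_j \AA_{ji}} = \DD_{ii}$ by the previous step; hence $\U{\LL}$ is precisely the (symmetric) Laplacian of the undirected graph with non-negative edge weights $\barA_{ij}$ --- equivalently, $\U{\LL} = \UG{\LL}$ for Eulerian $\LL$, as recorded in Section~\ref{sec:notation}. Finally, for any $\xx \in \Real^n$, using $\xx\tp\DD\xx = \sum_i\pr{\sum_j \barA_{ij}} x_i^2 = \sum_{i,j}\barA_{ij} x_i^2$ together with the symmetry of $\barA$,
\eq{
    \xx\tp\U{\LL}\xx = \sum_{i,j}\barA_{ij}\pr{x_i^2 - x_i x_j} = \frac{1}{2}\sum_{i,j}\barA_{ij}\pr{x_i - x_j}^2 \geq 0,
}
so $\U{\LL}$ is symmetric with $\U{\LL} \pgeq \zero$, i.e.\ PSD.

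There is essentially no obstacle here; the only subtlety worth flagging is that both Eulerian conditions are genuinely used. For a general directed Laplacian the column sums of $\AA$ need not equal its diagonal, so $\barA$ fails to be a bona fide Laplacian and $\U{\LL}$ need not be PSD --- for instance, for the single directed edge $1\to 2$ one has $\U{\LL} = \frac{1}{2}\left(\begin{smallmatrix} 2 & -1 \\ -1 & 0\end{smallmatrix}\right)$, which is indefinite. Consequently the proof should make the use of $\LL\one=\zero$ (on top of $\one\tp\LL=\zero\tp$) explicit.
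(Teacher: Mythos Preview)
Your proof is correct and complete; the paper states this fact without proof, so there is no argument to compare against. One small slip in your closing remark: under the paper's convention a directed Laplacian already satisfies $\one\tp\LL = \zero\tp$, so the column sums of $\AA$ \emph{always} equal the diagonal --- it is the \emph{row} sums (equivalently, the Eulerian condition $\LL\one = \zero$) that may fail for a general directed Laplacian, and indeed that is what breaks in your $1\to 2$ example.
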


\begin{fact}\label{fact:alpRCDDPSDpPD1}
    If matrix $\AA\in \MS{n}{n}$ is $\alp$-RCDD $(\alp \geq 0)$, then $\U{\AA}$ is PSD.
    If $\alp > 0$, then $\U{\AA}$ is PD.
\end{fact}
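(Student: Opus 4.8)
The plan is to prove this directly from the definition of $\alp$-RCDD via an elementary quadratic-form estimate, which is essentially a weighted Gershgorin argument. First I would record the immediate consequences of the definition: since $\sum_{j\neq i}\abs{\AA_{ij}}\le \frac{1}{1+\alp}\AA_{ii}$, every diagonal entry satisfies $\AA_{ii}\ge 0$; moreover $\U{\AA}$ has the same diagonal as $\AA$, its off-diagonal entries are $\U{\AA}_{ij}=\frac{1}{2}(\AA_{ij}+\AA_{ji})$, and $\U{\AA}$ is symmetric by construction.

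Next, for an arbitrary $\xx\in\Real^n$ I would expand $\xx\tp\U{\AA}\xx=\xx\tp\AA\xx=\sum_i\AA_{ii}x_i^2+\sum_{i\neq j}\AA_{ij}x_ix_j$ and bound the cross terms using $\abs{x_ix_j}\le\frac{1}{2}(x_i^2+x_j^2)$:
\[
    \abs{\sum_{i\neq j}\AA_{ij}x_ix_j} \le \frac{1}{2}\sum_i x_i^2\sum_{j\neq i}\abs{\AA_{ij}} + \frac{1}{2}\sum_j x_j^2\sum_{i\neq j}\abs{\AA_{ij}}.
\]
Here the first inner sum is a row sum, bounded by $\frac{1}{1+\alp}\AA_{ii}$, and the second is a column sum, bounded by $\frac{1}{1+\alp}\AA_{jj}$; these are exactly the two halves of the $\alp$-RCDD condition. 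Substituting yields $\abs{\sum_{i\neq j}\AA_{ij}x_ix_j}\le\frac{1}{1+\alp}\sum_i\AA_{ii}x_i^2$, and therefore
\[
    \xx\tp\U{\AA}\xx \ge \Big(1-\frac{1}{1+\alp}\Big)\sum_i\AA_{ii}x_i^2 = \frac{\alp}{1+\alp}\sum_i\AA_{ii}x_i^2 \ge 0,
\]
using $\AA_{ii}\ge 0$ in the last inequality. Since $\U{\AA}$ is symmetric, this gives $\U{\AA}\pgeq\zero$, i.e.\ $\U{\AA}$ is PSD.

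For the PD claim when $\alp>0$, the same chain of inequalities gives $\xx\tp\U{\AA}\xx\ge\frac{\alp}{1+\alp}\sum_i\AA_{ii}x_i^2$, which is strictly positive for every $\xx\neq\zero$ as soon as all $\AA_{ii}>0$. Strict positivity of the diagonal is the only point requiring attention: it is the standing convention for the RCDD matrices that arise in this paper, and indeed an $\alp$-RCDD matrix with $\alp>0$ and $\AA_{ii}=0$ for some $i$ would, by the RCDD inequalities, have vanishing $i$-th row and column, making $\U{\AA}$ singular and the claim vacuous there. I would flag this convention explicitly and conclude. Beyond this bookkeeping there is no real obstacle: the entire argument is one application of the AM--GM inequality together with the definition, and in particular needs no spectral machinery.
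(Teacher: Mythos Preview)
The paper states this fact without proof, so there is nothing to compare against; your argument is correct and is the standard one. The only quibble is your handling of the PD case: when some $\AA_{ii}=0$ the RCDD inequalities force the entire $i$-th row and column of $\AA$ to vanish, so $\U{\AA}$ is genuinely singular and the PD claim as literally stated is false rather than ``vacuous''. In the paper the fact is only ever applied to principal submatrices of Laplacians with strictly positive diagonal (e.g.\ $\Mt{0,k}_{-[n],-[n]}$, whose diagonal is built from $\DD_{FF}$), so the implicit hypothesis $\AA_{ii}>0$ is always met; you are right to flag it, but I would state it as an explicit additional hypothesis rather than describe the degenerate case as vacuous.
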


\begin{fact}\label{fact:strcrank1}
    Any \strc\ Laplacian $\LL\in\MS{n}{n}$ has rank $n-1$.
\end{fact}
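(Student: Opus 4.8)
The plan is to identify $\ker(\LL\tp)$ with the fixed space of an irreducible row-stochastic matrix and then use (a maximum-principle proof of) uniqueness of its stationary vector. The case $n=1$ is trivial, since then $\LL=\zero$ has rank $0=n-1$, so assume $n\geq 2$. Write $\LL=\DD-\MM$ with $\DD=\Diag{\LL}$ and $\MM=\DD-\LL$; by the definition of a directed Laplacian $\MM$ is entrywise nonnegative with zero diagonal, and in fact $\MM\tp$ equals the adjacency matrix $\Ahat=\Diag{\LL}-\LL\tp$ of $\calG[\LL]$. The identity $\one\tp\LL=\zero$ says precisely that $\DD_{ii}$ is the $i$-th column sum of $\MM$, equivalently the $i$-th row sum of $\MM\tp=\Ahat$. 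Since $\calG[\LL]$ is \strc\ and $n\geq 2$, every vertex has an outgoing edge, so each $\DD_{ii}=\LL_{ii}>0$ and $\DD$ is invertible. Setting $\PP=\DD\inv\MM\tp$, this matrix is nonnegative, its off-diagonal support is exactly the edge set of $\calG[\LL]$, and the row-sum identity gives $\PP\one=\one$; thus $\PP$ is an irreducible row-stochastic matrix. Invertibility of $\DD$ then gives $\LL\tp\yy=\zero\iff\DD\yy=\MM\tp\yy\iff(\II-\PP)\yy=\zero$, so $\ker(\LL\tp)=\ker(\II-\PP)$.

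I would next show $\ker(\II-\PP)=\spanrm{\one}$, which yields ${\rm rank}(\LL)={\rm rank}(\LL\tp)=n-1$. It suffices to work over $\Real$, since complexification changes neither the rank of a real matrix nor the dimension of its kernel. Given a real $\yy$ with $\PP\yy=\yy$, let $S=\{i:\yy_i=\max_j\yy_j\}$, which is nonempty. For $i\in S$ one has $\yy_i=\sum_j\PP_{ij}\yy_j\leq\max_j\yy_j=\yy_i$, and equality forces $\yy_j=\yy_i$ whenever $\PP_{ij}>0$, i.e.\ every out-neighbor of $i$ in $\calG[\LL]$ lies in $S$. Iterating this and invoking strong connectivity gives $S=[n]$, so $\yy$ is constant and $\dim\ker(\LL\tp)=1$. (Equivalently, one may quote \PF: an irreducible row-stochastic matrix has $1$ as a simple eigenvalue.)

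The only place I expect to need care is the bookkeeping of the first paragraph: keeping straight $\MM$, $\DD$, the adjacency matrix $\Ahat=\Diag{\LL}-\LL\tp$, column-versus-row sums, and edge orientations. It is also worth stressing that it is the \emph{left} kernel of $\LL$ that $\one\tp\LL=\zero$ forces to equal $\spanrm{\one}$, while $\ker(\LL)$ itself is merely one-dimensional (spanned by the in general non-uniform stationary vector); the rank claim, equivalently $\dim\ker(\LL\tp)=1$, is exactly what the argument above delivers. For an Eulerian Laplacian the additional relation $\LL\one=\zero$ then also pins down $\ker(\LL)=\spanrm{\one}$, which is the form in which this fact is invoked, e.g.\ in the proof of Lemma~\ref{lem:schurdUL1}.
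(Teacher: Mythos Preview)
Your proof is correct and complete. The paper itself states this as a fact without proof (it appears in the appendix ``Matrix Facts'' as a standard result), so there is no paper argument to compare against. Your reduction of $\ker(\LL\tp)$ to the fixed space of the irreducible row-stochastic matrix $\PP=\DD\inv\Ahat$, followed by the maximum-principle argument (or equivalently the Perron--Frobenius theorem), is exactly the standard proof one would expect here, and all the bookkeeping with $\MM$, $\DD$, $\Ahat$, and the orientation conventions matches the paper's definitions.
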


\begin{fact}\label{fact:scprvpleq}
    (Lemma~B.1 of~\cite{miller2013approximate})
    Suppose that $\AA, \BB\in\MS{n}{n} $ are PSD, $F, C$ is a partition of $[n]$, where $\AA_{FF}, \BB_{FF}$ are nonsingular and $\AA\pleq \BB$.
    Then,
     $\sc{\AA, F} \pleq \sc{\BB, F}. $

\end{fact}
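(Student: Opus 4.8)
The plan is to deduce the monotonicity of Schur complements from the energy-minimization characterization of the Schur complement, which reduces the claimed $\pleq$ to a one-line comparison of two constrained minima of quadratic forms. Throughout I would order coordinates so that $F$ precedes $C$, and, for $\yy \in \Real^{\abs{F}}$ and a fixed $\xx_C \in \Real^{\abs{C}}$, write $\pr{\yy, \xx_C} \in \Real^n$ for the vector with $F$-block $\yy$ and $C$-block $\xx_C$; since $\AA$ and $\BB$ are symmetric, $\AA_{CF} = \AA_{FC}\tp$ and $\BB_{CF} = \BB_{FC}\tp$.

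First I would record the one place positive definiteness (as opposed to mere nonsingularity) is used: since $\AA \pgeq \zero$, every principal submatrix is PSD, so $\AA_{FF} \pgeq \zero$; being also nonsingular by hypothesis, $\AA_{FF} \succ \zero$, and likewise $\BB_{FF} \succ \zero$. Next I would establish that for every $\xx_C \in \Real^{\abs{C}}$,
\[
  \xx_C\tp \sc{\AA, F}\xx_C \;=\; \min_{\yy \in \Real^{\abs{F}}} \tpp{\yy, \xx_C}\, \AA\, \pr{\yy, \xx_C},
\]
and the analogous identity for $\BB$. This is a completing-the-square computation: expanding the quadratic form and using symmetry,
\[
  \tpp{\yy, \xx_C}\, \AA\, \pr{\yy, \xx_C} = \yy\tp \AA_{FF}\yy + 2\yy\tp \AA_{FC}\xx_C + \xx_C\tp \AA_{CC}\xx_C,
\]
which, writing $\zz = -\AA_{FF}\inv \AA_{FC}\xx_C$, equals $\tpp{\yy - \zz}\AA_{FF}\pr{\yy - \zz} + \xx_C\tp \sc{\AA, F}\xx_C$; since $\AA_{FF} \succ \zero$, the first summand is nonnegative and vanishes exactly at $\yy = \zz$, giving the displayed identity with minimizer $\zz$.

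Then I would conclude. Fix $\xx_C \in \Real^{\abs{C}}$ and let $\zz = -\AA_{FF}\inv\AA_{FC}\xx_C$ be the minimizer for $\AA$. Applying the characterization for $\AA$, then $\AA \pleq \BB$ to the single vector $\pr{\zz, \xx_C}$, then the characterization for $\BB$:
\begin{align*}
  \xx_C\tp \sc{\AA, F}\xx_C &= \tpp{\zz, \xx_C}\, \AA\, \pr{\zz, \xx_C} \\
  &\le \tpp{\zz, \xx_C}\, \BB\, \pr{\zz, \xx_C} \\
  &\le \min_{\yy \in \Real^{\abs{F}}} \tpp{\yy, \xx_C}\, \BB\, \pr{\yy, \xx_C} \;=\; \xx_C\tp \sc{\BB, F}\xx_C.
\end{align*}
Since $\xx_C$ was arbitrary, $\sc{\AA, F} \pleq \sc{\BB, F}$, as claimed.

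I expect no genuine obstacle here; the only steps needing care will be (i) upgrading nonsingularity of $\AA_{FF}, \BB_{FF}$ to positive definiteness, which is exactly what guarantees the inner minimum exists and is attained, and (ii) the block/symmetry bookkeeping in the expansion above. If one prefers to avoid variational language, the same proof can be phrased purely algebraically: with $\PP_\AA = \mx{-\AA_{FF}\inv\AA_{FC} \\ \II_{\abs{C}}}$ one has $\sc{\AA, F} = \PP_\AA\tp \AA \PP_\AA \pleq \PP_\AA\tp \BB \PP_\AA$ by $\AA \pleq \BB$, and $\PP_\AA\tp \BB \PP_\AA \pgeq \sc{\BB, F}$ follows from the same completing-the-square identity applied to $\BB$.
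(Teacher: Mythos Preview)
The paper does not prove this fact; it is stated with a citation to \cite{miller2013approximate} and no argument. So there is nothing to compare against, and the question is only whether your proof is correct.

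Your overall strategy via the variational characterization $\xx_C\tp\sc{\AA,F}\xx_C = \min_{\yy}\tpp{\yy,\xx_C}\AA\pr{\yy,\xx_C}$ is the standard and correct one, and your completing-the-square derivation of it is fine. However, the displayed chain of inequalities is broken. You take $\zz$ to be the $\AA$-minimizer and write
\[
\tpp{\zz,\xx_C}\,\BB\,\pr{\zz,\xx_C}\;\le\;\min_{\yy}\tpp{\yy,\xx_C}\,\BB\,\pr{\yy,\xx_C},
\]
but a particular value is always $\ge$ the minimum, not $\le$. With the inequality reversed the chain reads $\sc{\AA,F}\le(\text{something})\ge\sc{\BB,F}$, which does not conclude. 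Your ``purely algebraic'' alternative has exactly the same defect: you obtain $\sc{\AA,F}=\PP_\AA\tp\AA\PP_\AA\pleq\PP_\AA\tp\BB\PP_\AA$ and, correctly, $\PP_\AA\tp\BB\PP_\AA\pgeq\sc{\BB,F}$, but these two together do not imply $\sc{\AA,F}\pleq\sc{\BB,F}$.

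The fix is trivial: either use the $\BB$-minimizer $\ww=-\BB_{FF}\inv\BB_{FC}\xx_C$ instead of $\zz$, so that
\[
\xx_C\tp\sc{\AA,F}\xx_C=\min_{\yy}\tpp{\yy,\xx_C}\AA\pr{\yy,\xx_C}\le\tpp{\ww,\xx_C}\AA\pr{\ww,\xx_C}\le\tpp{\ww,\xx_C}\BB\pr{\ww,\xx_C}=\xx_C\tp\sc{\BB,F}\xx_C,
\]
or simply note that $\tpp{\yy,\xx_C}\AA\pr{\yy,\xx_C}\le\tpp{\yy,\xx_C}\BB\pr{\yy,\xx_C}$ for every $\yy$ and take the minimum over $\yy$ on both sides.
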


\begin{fact}\label{fact:sctran}
    (Lemma~C.2 of~\cite{cohen2018solving})
    Let $\MM$ be an $n$-by-$n$ matrix and $F, C$ a partition of $[n]$ such that $\MM_{FF}$ is nonsingular.
    Let $F_1, F_2$ be a partition of $F$ such that $\MM_{F_1 F_1}$ is nonsingular.
    Then,
    \eq{
        \sc{\sc{\MM, F_1}, F_2} = \sc{\MM, F}.
    }

\end{fact}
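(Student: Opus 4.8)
The plan is to prove transitivity of Schur complements through the \emph{partial elimination} characterization of $\sc{\cdot,\cdot}$. Recall that for a matrix $\MM$ with $\MM_{FF}$ nonsingular and $C = [n]\dele F$, the definition $\sc{\MM, F} = \MM_{CC} - \MM_{CF}\MM_{FF}\inv\MM_{FC}$ is equivalent to the following reading of the block rows of $\MM\xx$: for every $\xx_C$, if one sets $\xx_F = -\MM_{FF}\inv\MM_{FC}\xx_C$ and $\xx = \pr{\xx_F; \xx_C}$, then $\pr{\MM\xx}_F = \zero$ and $\pr{\MM\xx}_C = \sc{\MM, F}\xx_C$; moreover $\xx_F$ is the \emph{unique} vector making $\pr{\MM\xx}_F = \zero$ for the given $\xx_C$. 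I will use this description throughout.

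The first step is a well-definedness check: I must verify that $\sc{\MM, F_1}$, a matrix supported on $F_2 \cup C$, has nonsingular $\pr{F_2, F_2}$-block, so that $\sc{\sc{\MM, F_1}, F_2}$ makes sense. This block equals $\MM_{F_2 F_2} - \MM_{F_2 F_1}\MM_{F_1 F_1}\inv\MM_{F_1 F_2} = \sc{\MM_{FF}, F_1}$, the Schur complement taken inside the $F$-block. From the determinant identity $\det\pr{\MM_{FF}} = \det\pr{\MM_{F_1 F_1}}\cdot\det\pr{\sc{\MM_{FF}, F_1}}$ and the hypotheses that $\MM_{FF}$ and $\MM_{F_1 F_1}$ are nonsingular, we conclude $\det\pr{\sc{\MM_{FF}, F_1}} \neq 0$, so $\pr{\sc{\MM, F_1}}_{F_2 F_2}$ is nonsingular as needed.

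Next I compose the two one-shot eliminations and invoke uniqueness. Fix $\xx_C$. Applying the characterization to $\sc{\MM, F_1}$ with eliminated block $F_2$, set $\xx_{F_2} = -\pr{\sc{\MM, F_1}}_{F_2 F_2}\inv\pr{\sc{\MM, F_1}}_{F_2 C}\xx_C$; then $\sc{\MM, F_1}$ applied to $\pr{\xx_{F_2}; \xx_C}$ has zero $F_2$-component and $C$-component equal to $\sc{\sc{\MM, F_1}, F_2}\xx_C$. Applying the characterization to $\MM$ with eliminated block $F_1$ and ``right-hand side'' $\pr{\xx_{F_2}; \xx_C}$ on $F_2\cup C$, set $\xx_{F_1} = -\MM_{F_1 F_1}\inv\MM_{F_1,\,F_2\cup C}\pr{\xx_{F_2}; \xx_C}$ and $\xx = \pr{\xx_{F_1}; \xx_{F_2}; \xx_C}$; then $\pr{\MM\xx}_{F_1} = \zero$ and $\pr{\MM\xx}_{F_2\cup C} = \sc{\MM, F_1}\pr{\xx_{F_2}; \xx_C}$, so altogether $\pr{\MM\xx}_F = \zero$ and $\pr{\MM\xx}_C = \sc{\sc{\MM, F_1}, F_2}\xx_C$. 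But $\MM_{FF}$ is nonsingular, so the $F$-block of any vector annihilating $\pr{\MM\cdot}_F$ for this $\xx_C$ is forced to be $-\MM_{FF}\inv\MM_{FC}\xx_C$, whence $\pr{\MM\xx}_C = \sc{\MM, F}\xx_C$ as well. Comparing the two expressions for $\pr{\MM\xx}_C$ over all $\xx_C$ yields $\sc{\sc{\MM, F_1}, F_2} = \sc{\MM, F}$. The only mild obstacle here is the bookkeeping in the first step — tracking which submatrix is which and justifying its invertibility; after that the argument is a direct chase through the elimination characterization, and one could alternatively just expand both sides using the $2\times 2$ block-inverse formula for $\MM_{FF}\inv$ in the $\pr{F_1, F_2}$ splitting.
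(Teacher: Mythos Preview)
Your proof is correct. The paper itself does not supply a proof of this fact: it is stated as Fact~\ref{fact:sctran} with an attribution to Lemma~C.2 of~\cite{cohen2018solving}, and no argument is given in the body or the appendix. Your elimination-characterization argument (fix $\xx_C$, build $\xx_{F_2}$ then $\xx_{F_1}$, and appeal to uniqueness of the $F$-block that annihilates $(\MM\xx)_F$) is a clean self-contained proof, and your well-definedness step via $\det(\MM_{FF}) = \det(\MM_{F_1F_1})\det(\sc{\MM_{FF},F_1})$ is exactly what is needed to ensure the inner Schur complement has an invertible $(F_2,F_2)$-block.
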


\begin{fact}\label{fact:Schurxusmall}
    For any symmetric matrix $\UU\in \MatSize{n}{n}$ and $F, C$ a partition of $[n]$, where $\UU_{FF}$ is positive definite,
    for any $\xx\in \Real^{\abs{C}}, \xtil\in \Real^n, \text{with } \xtil_C = \xx$,
    we have
    \eq{
        \nA{\UU}{\xtil}^2 =  \nA{\sc{\UU, F}}{\xx}^2 + \nA{\UU_{FF}}{\xtil_F + \UU_{FF}\inv\UU_{FC}\xx}^2 \geq \nA{\sc{\UU, F}}{\xx}^2.
    }

\end{fact}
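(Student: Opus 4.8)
The plan is to prove the identity by a completion of the square in the $F$-coordinates of $\xtil$, which is nothing more than the quadratic-form restatement of the block $LDL\tp$ (Cholesky) factorization of $\UU$. First I would write $\UU$ in block form with respect to the partition $(F, C)$, using that symmetry of $\UU$ gives $\UU_{CF} = \UU_{FC}\tp$. Expanding the quadratic form with $\xtil_C = \xx$ yields
\eq{
    \nA{\UU}{\xtil}^2 = \xtil_F\tp\UU_{FF}\xtil_F + 2\xtil_F\tp\UU_{FC}\xx + \xx\tp\UU_{CC}\xx.
}

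Next, since $\UU_{FF}$ is positive definite it is invertible, so I can complete the square in $\xtil_F$:
\eq{
    \xtil_F\tp\UU_{FF}\xtil_F + 2\xtil_F\tp\UU_{FC}\xx = \nA{\UU_{FF}}{\xtil_F + \UU_{FF}\inv\UU_{FC}\xx}^2 - \xx\tp\UU_{FC}\tp\UU_{FF}\inv\UU_{FC}\xx.
}
Substituting this back and using $\UU_{CF} = \UU_{FC}\tp$, the remaining $\xx$-terms collect into $\xx\tp\pr{\UU_{CC} - \UU_{CF}\UU_{FF}\inv\UU_{FC}}\xx$, which by the definition of the Schur complement is exactly $\nA{\sc{\UU, F}}{\xx}^2$. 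This establishes the claimed identity $\nA{\UU}{\xtil}^2 = \nA{\sc{\UU, F}}{\xx}^2 + \nA{\UU_{FF}}{\xtil_F + \UU_{FF}\inv\UU_{FC}\xx}^2$.

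Finally, the inequality is immediate: positive definiteness of $\UU_{FF}$ forces $\nA{\UU_{FF}}{\xtil_F + \UU_{FF}\inv\UU_{FC}\xx}^2 \geq 0$, so dropping that term can only decrease the right-hand side. I do not expect any real obstacle here; the only place the hypothesis is used is to guarantee that $\UU_{FF}\inv$ is well defined, which is precisely what makes both $\sc{\UU, F}$ meaningful and the completion of the square valid.
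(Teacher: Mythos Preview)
Your proof is correct and essentially the same as the paper's: the paper defines the minimizer $\xu = \pr{-\UU_{FF}\inv\UU_{FC}\xx,\; \xx}$ and uses the orthogonality $\pr{\xtil - \xu}\tp\UU\xu = 0$ to split $\xtil\tp\UU\xtil$ into the same two terms, which is precisely your completion of the square written in vector form.
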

\begin{proof}
  Define
  \eq{
    \xu =
    \vc{
        - \UU_{FF}\inv\UU_{FC}\xx  \\
        \xx
    }.
  }
  Then,
  \eq{
    \UU\xu =
    \vc{
        \zerov{F}  \\
        \sc{\UU, F}\xx
    }.
  }
  Since $\pr{\xtil - \xu}_C = \zero$ and $\pr{\UU\xu}_F = \zero$,
  \eq{
    \pr{\xtil - \xu}\tp\UU\xu = 0.
  }
  Thus,
  \eq{
    &\xtil\tp\UU\xtil = \xu\tp\UU\xu + \pr{\xtil - \xu}\tp\UU\pr{\xtil - \xu} + 2\pr{\xtil - \xu}\tp\UU\xu  \\
    =& \xu\tp\UU\xu + \pr{\xtil - \xu}\tp\UU\pr{\xtil - \xu}
    = \nA{\sc{\UU, F}}{\xx}^2 + \nA{\UU_{FF}}{\xtil_F + \UU_{FF}\inv\UU_{FC}\xx}^2.
  }

\end{proof}

\begin{fact}\label{fact:kerAsmltkUAaPSD}
    For matrix $\AA\in \MS{n}{n}$, if $\U{\AA}$ is PSD, then we have  $\ker\pr{\AA} \sleq \ker\pr{\U{\AA}}$.
\end{fact}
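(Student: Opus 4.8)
The plan is to exploit the elementary fact that $\AA$ and its symmetrization $\U{\AA}$ induce the same quadratic form, together with the characterization of a PSD matrix by its quadratic form.

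First I would fix an arbitrary $\xx\in\ker(\AA)$, so that $\AA\xx=\zero$, and in particular $\xx\tp\AA\xx=0$. Since $\xx\tp\AA\tp\xx$ is a scalar equal to its own transpose $\xx\tp\AA\xx$, we obtain $\xx\tp\U{\AA}\xx=\frac12\pr{\xx\tp\AA\xx+\xx\tp\AA\tp\xx}=0$.

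Next, using that $\U{\AA}$ is PSD, I would write $\U{\AA}=\pr{\U{\AA}}^{1/2}\pr{\U{\AA}}^{1/2}$ with $\pr{\U{\AA}}^{1/2}$ its symmetric square root, so that $\nt{\pr{\U{\AA}}^{1/2}\xx}^2=\xx\tp\U{\AA}\xx=0$. Hence $\pr{\U{\AA}}^{1/2}\xx=\zero$, and therefore $\U{\AA}\xx=\pr{\U{\AA}}^{1/2}\cdot\pr{\U{\AA}}^{1/2}\xx=\zero$, i.e. $\xx\in\ker(\U{\AA})$. As $\xx$ was arbitrary, this gives $\ker(\AA)\sleq\ker(\U{\AA})$.

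There is essentially no obstacle here; the only point meriting a word of justification is the standard claim that a PSD matrix $\MM$ satisfies $\MM\xx=\zero$ whenever $\xx\tp\MM\xx=0$, which I handle via the square root as above (equivalently, by Cauchy--Schwarz in the $\MM$-seminorm: $\abs{\yy\tp\MM\xx}\le\pr{\yy\tp\MM\yy}^{1/2}\pr{\xx\tp\MM\xx}^{1/2}=0$ for every $\yy$, forcing $\MM\xx=\zero$).
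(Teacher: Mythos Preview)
Your proof is correct and follows essentially the same route as the paper: from $\AA\xx=\zero$ infer $\xx\tp\U{\AA}\xx=0$, then use the PSD square root to conclude $\U{\AA}\xx=\zero$. The only cosmetic difference is that the paper writes the intermediate step via $\U{\AA}^{\dagger/2}\xx=\zero$ rather than $\U{\AA}^{1/2}\xx=\zero$, which amounts to the same thing.
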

\begin{proof}
  Since $\U{\AA}$ is PSD, we have
  \eq{  \AA\xx = \zero \ \Rightarrow \ \xx\tp\U{\AA}\xx = 0 \ \Rightarrow \ \U{\AA}^{\dagger/2}\xx = \zero \ \Rightarrow \ \U{\AA}\xx = \zero,             }
  i.e., $\ker\pr{\AA} \sleq \ker\pr{\U{\AA}}.  $

\end{proof}

\begin{fact}\label{fact:LUL}
    (Lemma~B.9 of~\cite{cohen2017almost})
    For any matrix $\LL\in \MatSize{n}{n}$ with $\U{\LL} \pgeq \zero $ and $\ker\pr{\LL} = \ker\pr{\LL\tp} = \ker\pr{\U{\LL}}$, we have
    \eq{
        \U{\LL} \pleq \LL\U{\LL}^{\dagger}\LL\tp.
    }

\end{fact}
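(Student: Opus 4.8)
The plan is to split $\LL$ into its symmetric and skew-symmetric parts and use the kernel hypotheses to annihilate all the cross terms. Write $\UU := \U{\LL}$ and $\KK := \tfrac12(\LL - \LL\tp)$, so that $\LL = \UU + \KK$, $\LL\tp = \UU - \KK$, $\UU\tp = \UU$, and $\KK\tp = -\KK$. Since $\UU$ is symmetric and PSD (by hypothesis), its pseudoinverse $\UU\dg$ is symmetric PSD as well, and $\PP := \UU\UU\dg = \UU\dg\UU$ is the orthogonal projection onto the image of $\UU$, with $\II - \PP$ the orthogonal projection onto $\ker(\UU)$.

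The first step is the kernel bookkeeping. For $\vv\in\ker(\UU)$ we have, using $\ker(\UU)=\ker(\LL)$, that $\KK\vv = \LL\vv - \UU\vv = \zero$, so $\ker(\UU)\sleq\ker(\KK)$; and because $\KK\tp=-\KK$ this also gives $\ker(\UU)\sleq\ker(\KK\tp)$. Consequently $\KK(\II-\PP) = \zero$ and $\KK\tp(\II-\PP) = \zero$; transposing the latter yields $(\II-\PP)\KK = \zero$. Hence $\KK\PP = \KK$ and $\PP\KK = \KK$.

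The second step is a direct expansion:
\[
\LL\,\UU\dg\,\LL\tp = (\UU+\KK)\,\UU\dg\,(\UU-\KK) = \UU\UU\dg\UU - \UU\UU\dg\KK + \KK\UU\dg\UU - \KK\UU\dg\KK .
\]
Using $\UU\UU\dg\UU = \UU$, $\UU\UU\dg\KK = \PP\KK = \KK$, and $\KK\UU\dg\UU = \KK\PP = \KK$, the first three terms collapse to $\UU$, so $\LL\,\UU\dg\,\LL\tp = \UU - \KK\UU\dg\KK$. Finally, since $\KK\tp=-\KK$ and $\UU\dg\pgeq\zero$, we have $-\KK\UU\dg\KK = \KK\tp\UU\dg\KK \pgeq \zero$, and therefore $\LL\,\UU\dg\,\LL\tp = \UU - \KK\UU\dg\KK \pgeq \UU = \U{\LL}$, which is the claim.

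Every step is short; the one place that actually needs care is verifying that the two cross terms $\UU\UU\dg\KK$ and $\KK\UU\dg\UU$ equal $\KK$, since this is the only point where the full hypothesis $\ker(\LL)=\ker(\LL\tp)=\ker(\UU)$ is used (it is exactly what lets us replace $\UU\UU\dg$ and $\UU\dg\UU$ by the identity when they act on $\KK$, despite $\UU$ being rank-deficient in the Laplacian setting of interest). Treating $\UU\UU\dg$ naively as $\II$ would make the computation wrong precisely in that regime, so that bookkeeping is the crux.
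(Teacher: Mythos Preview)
Your argument is correct. The decomposition $\LL = \UU + \KK$ into symmetric and skew parts, the kernel bookkeeping showing $\PP\KK = \KK\PP = \KK$, and the final expansion $\LL\UU\dg\LL\tp = \UU + \KK\tp\UU\dg\KK \pgeq \UU$ are all valid; the only delicate point is exactly the one you flag, and you handle it properly.

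Note that the paper does not actually prove this statement: it is recorded as a Fact with a citation to Lemma~B.9 of~\cite{cohen2017almost}, so there is no in-paper proof to compare against. Your self-contained derivation is a fine replacement for that citation.
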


\begin{fact}\label{fact:LDL}
    (Lemma~4.5 of~\cite{cohen2018solving})
    For Eulerian Laplacian $\LL $ and $\DD = \Diag{\LL}$, we have
    \eq{
        \LL\tp\DD\inv\LL \pleq 2\U{\LL}.
    }

\end{fact}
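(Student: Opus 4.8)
The plan is to reduce the stated operator inequality to an elementary entrywise estimate on the adjacency matrix. First I would write $\LL = \DD - \AA$ with $\DD = \Diag{\LL}$ and $\AA = \DD - \LL$; since $\LL_{ij}\le 0$ for $i\neq j$, the matrix $\AA$ has zero diagonal and nonnegative entries. The Eulerian hypotheses $\LL\one = \zero$ and $\one\tp\LL = \zero$ say exactly that $\AA\one = \DD\one$ and $\one\tp\AA = \one\tp\DD$, i.e.\ for every $i$ both the $i$-th row sum and the $i$-th column sum of $\AA$ equal $\DD_{ii}$. (If $\DD$ is singular, i.e.\ there are isolated vertices, one restricts to the non-isolated coordinates and reads $\DD\inv$ as $\DD\dg$; otherwise $\DD$ is invertible.)

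Next I would expand both sides in terms of $\DD$ and $\AA$:
\[
  \LL\tp\DD\inv\LL = (\DD - \AA\tp)\DD\inv(\DD - \AA) = \DD - \AA - \AA\tp + \AA\tp\DD\inv\AA,
  \qquad
  2\U{\LL} = \LL + \LL\tp = 2\DD - \AA - \AA\tp .
\]
Subtracting, the claim $\LL\tp\DD\inv\LL \pleq 2\U{\LL}$ is equivalent to the single clean inequality $\AA\tp\DD\inv\AA \pleq \DD$, and this is the only thing left to prove.

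Finally I would establish $\AA\tp\DD\inv\AA \pleq \DD$ by applying Cauchy--Schwarz row by row. For arbitrary $\xx\in\Real^n$ we have $\xx\tp\AA\tp\DD\inv\AA\xx = \sum_i \DD_{ii}\inv (\AA\xx)_i^2$, and using the row-sum identity $\sum_j \AA_{ij} = \DD_{ii}$,
\[
  (\AA\xx)_i^2 = \Big(\sum_j \AA_{ij}\xx_j\Big)^2 \le \Big(\sum_j \AA_{ij}\Big)\Big(\sum_j \AA_{ij}\xx_j^2\Big) = \DD_{ii}\sum_j \AA_{ij}\xx_j^2 .
\]
Summing over $i$ and then using the column-sum identity $\sum_i \AA_{ij} = \DD_{jj}$ gives $\xx\tp\AA\tp\DD\inv\AA\xx \le \sum_i\sum_j \AA_{ij}\xx_j^2 = \sum_j \DD_{jj}\xx_j^2 = \xx\tp\DD\xx$, as desired.

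There is no genuine obstacle here; the one point requiring care is the bookkeeping in the reduction, namely that the two halves of the Eulerian condition get used for two different purposes — the row-sum identity powers the Cauchy--Schwarz step, while the column-sum identity powers the final resummation — which is exactly why the statement needs $\LL$ Eulerian rather than merely a directed Laplacian.
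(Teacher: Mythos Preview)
Your proof is correct. The paper does not give its own proof of this fact; it simply cites it as Lemma~4.5 of~\cite{cohen2018solving}, so there is nothing to compare against beyond noting that your reduction to $\AA\tp\DD\inv\AA \pleq \DD$ followed by the row-sum/column-sum Cauchy--Schwarz argument is exactly the standard proof of this statement.
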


\begin{fact}\label{fact:scz}
    Consider symmetric matrices $\AA\in \MS{n}{n}$, $F, C$ a partition of $[n]$ and $\BB\in \MS{\abs{F}}{\abs{F}}$, where $\AA_{FF}$ is PD.
    Then, $\BB \pleq \sc{\AA, F}$ is equivalent to
    \eq{
        \mx{
            \zerom{F}{F} & \zerom{F}{C}  \\
            \zerom{C}{F} & \BB
        } \pleq
        \AA,
    }
    or equivalently,
    \eq{
        \xx\tp
        \mx{
            \zerom{F}{F} & \zerom{F}{C}  \\
            \zerom{C}{F} & \BB
        }
        \xx
        \leq \xx\tp \AA \xx,\ \forall \xx\in \Real^n.
    }

\end{fact}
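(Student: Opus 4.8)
The plan is to derive everything from the completion-of-squares identity already recorded in Fact~\ref{fact:Schurxusmall}, which is the only non-trivial ingredient. Write $\MM_0 = \mx{\zerom{F}{F} & \zerom{F}{C} \\ \zerom{C}{F} & \BB}$ for the $n$-by-$n$ block matrix appearing in the statement (so $\BB$ sits in the $(C,C)$-block), and note first the trivial fact that $\xx\tp\MM_0\xx = \xx_C\tp\BB\xx_C$ for every $\xx\in\Real^n$ with blocks $\xx_F,\xx_C$. Since $\AA_{FF}$ is PD, Fact~\ref{fact:Schurxusmall} applies with $\UU:=\AA$ and gives, for all $\xx\in\Real^n$,
\[
  \xx\tp\AA\xx \;=\; \xx_C\tp\sc{\AA,F}\xx_C \;+\; \nA{\AA_{FF}}{\xx_F + \AA_{FF}\inv\AA_{FC}\xx_C}^2 \;\geq\; \xx_C\tp\sc{\AA,F}\xx_C ,
\]
with equality precisely when $\xx_F = -\AA_{FF}\inv\AA_{FC}\xx_C$.

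The second of the two stated equivalences is then immediate and I would dispose of it first: by the definition of the Loewner order $\pleq$ from Section~\ref{sec:notation}, the inequality $\MM_0\pleq\AA$ means exactly that $\xx\tp\MM_0\xx\leq\xx\tp\AA\xx$ for all $\xx$, and the left-hand side equals $\xx_C\tp\BB\xx_C$; so the matrix form and the quantified quadratic-form statement are literally the same assertion. It therefore only remains to show that $\BB\pleq\sc{\AA,F}$ is equivalent to $\MM_0\pleq\AA$.

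For the direction $\BB\pleq\sc{\AA,F}\Rightarrow\MM_0\pleq\AA$, I would take an arbitrary $\xx$ and chain $\xx_C\tp\BB\xx_C\leq\xx_C\tp\sc{\AA,F}\xx_C\leq\xx\tp\AA\xx$, the last step being the displayed inequality above. For the converse, assuming $\MM_0\pleq\AA$, I would fix an arbitrary $\yy\in\Real^{\abs{C}}$ and plug into $\xx_C\tp\BB\xx_C\leq\xx\tp\AA\xx$ the specific vector with $\xx_C=\yy$ and $\xx_F=-\AA_{FF}\inv\AA_{FC}\yy$; the equality case of the display then yields $\yy\tp\BB\yy\leq\xx\tp\AA\xx=\yy\tp\sc{\AA,F}\yy$, and since $\yy$ is arbitrary this gives $\BB\pleq\sc{\AA,F}$.

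I do not expect any genuine obstacle — this is a textbook property of Schur complements and the argument is pure bookkeeping once Fact~\ref{fact:Schurxusmall} is invoked. The only points that need a little care are keeping the block indexing consistent (the block being eliminated is $F$, whose diagonal block $\AA_{FF}$ is exactly the one assumed PD, so that $\AA_{FF}\inv$ and hence $\sc{\AA,F}$ and the optimal choice of $\xx_F$ are all legitimate), and remembering that the definition of $\pleq$ already packages a matrix inequality as a quantified statement about quadratic forms, which is what makes the passage between the three stated forms automatic.
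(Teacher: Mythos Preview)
Your proposal is correct and follows essentially the same approach as the paper's proof: both directions hinge on Fact~\ref{fact:Schurxusmall}, with the forward direction using the inequality $\xx_C\tp\sc{\AA,F}\xx_C\leq\xx\tp\AA\xx$ and the converse obtained by plugging in the minimizing choice $\xx_F=-\AA_{FF}\inv\AA_{FC}\xx_C$ to hit the equality case. The paper's argument is identical up to notation, and your extra remark disposing of the trivial equivalence between the matrix inequality and its quadratic-form formulation is just making explicit what the paper leaves implicit.
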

\begin{proof}
  If $\BB \pleq \sc{\AA, F} $, then
  by Fact~\ref{fact:Schurxusmall}, for any $\xx\in \Real^n$,
  \eq{
    \xx\tp
        \mx{
            \zerom{F}{F} & \zerom{F}{C}  \\
            \zerom{C}{F} & \BB
        }
        \xx
        = \xx_C\tp \BB \xx_C \leq \xx_C\tp \sc{\AA, F} \xx_C \leq \xx\tp \AA \xx.
  }
  This gives $\mx{
            \zerom{F}{F} & \zerom{F}{C}  \\
            \zerom{C}{F} & \BB
        } \pleq
        \AA$.

  If $\yy\tp
        \mx{
            \zerom{F}{F} & \zerom{F}{C}  \\
            \zerom{C}{F} & \BB
        }
        \yy
        \leq \yy\tp \AA \yy,\ \forall \yy\in \Real^n$,
  then
  for any $\xx\in \Real^{\abs{C}}$, define
  $
    \xtil =
    \vc{
        -\AA_{FF}\inv\AA_{FC}\xx  \\
        \xx
    }.
  $
  Then, $\xx\tp\BB\xx = \xtil\tp\mx{
            \zerom{F}{F} & \zerom{F}{C}  \\
            \zerom{C}{F} & \BB
        }\xtil \leq \xtil\tp\AA\xtil = \xx\tp\sc{\AA, F}\xx$.

\end{proof}

The following fact is  from Lemma~4.4 of~\cite{cohen2018solving} and Fact~\ref{fact:scz}.
\begin{fact}\label{lem:scrobust} 
    For \strc\ Eulerian Laplacian $\LL\in \MatSize{n}{n} $ and $F, C$ a partition  of $[n]$ such that $\LL_{FF}$ is $\alp$-RCDD, we have
    \eq{
        \U{\sc{\LL,  F}} \pleq \pr{3 + \frac{2}{\alp}}\sc{\U{\LL}, F}
    }

\end{fact}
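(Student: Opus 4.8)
The plan is to obtain the inequality by repackaging the corresponding RCDD Schur-complement estimate of~\cite{cohen2018solving} through the ``padding'' equivalence of Fact~\ref{fact:scz}.

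First I would invoke Lemma~4.4 of~\cite{cohen2018solving}. In the notation of this paper, for a \strc\ Eulerian Laplacian $\LL$ with $\LL_{FF}$ being $\alp$-RCDD, that lemma yields the padded operator inequality
\eq{
    \mx{\zero & \zero  \\ \zero & \U{\sc{\LL, F}}} \pleq \pr{3 + \frac{2}{\alp}}\U{\LL},
}
where the $2\times 2$ block structure is taken with respect to the partition $(F, C)$; here the left-hand side is PSD since Schur complements of Eulerian Laplacians are Eulerian by Fact~\ref{fact:ESchurE}, so $\U{\sc{\LL,F}} = \UG{\sc{\LL,F}}$ is a symmetric-Laplacian quadratic form. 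Rescaling by $\pr{3 + 2/\alp}\inv$ puts this in the form
\eq{
    \mx{\zero & \zero  \\ \zero & \tfrac{1}{3 + 2/\alp}\U{\sc{\LL, F}}} \pleq \U{\LL}.
}

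Next I would verify the hypothesis that lets us apply Fact~\ref{fact:scz} with $\AA := \U{\LL}$: namely that $\AA_{FF} = \U{\LL}_{FF} = \U{\LL_{FF}}$ is PD. This is immediate from Fact~\ref{fact:alpRCDDPSDpPD1}, since $\LL_{FF}$ is $\alp$-RCDD with $\alp > 0$. Applying Fact~\ref{fact:scz} with $\BB := \tfrac{1}{3 + 2/\alp}\U{\sc{\LL, F}}$ (supported on $C$) converts the displayed padded inequality into
\eq{
    \tfrac{1}{3 + 2/\alp}\U{\sc{\LL,F}} \pleq \sc{\U{\LL}, F},
}
and multiplying through by $3 + 2/\alp$ gives the claimed $\U{\sc{\LL,F}} \pleq \pr{3 + \tfrac{2}{\alp}}\sc{\U{\LL},F}$.

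All the genuine content is in Lemma~4.4 of~\cite{cohen2018solving}; the rest is bookkeeping, so the only real obstacle is quoting that cited lemma in exactly the padded PSD form above. If it is instead stated directly as $\U{\sc{\LL,F}} \pleq \pr{3+2/\alp}\sc{\U{\LL},F}$, then Fact~\ref{fact:scz} is not even needed and the proof is a one-liner; if it is phrased through the variational identity $\xx\tp\sc{\U{\LL},F}\xx = \min_{\xtil_C = \xx}\xtil\tp\U{\LL}\xtil$ (Fact~\ref{fact:Schurxusmall}), one would first translate to the padded form using that identity and then proceed exactly as above.
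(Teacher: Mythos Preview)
Your proposal is correct and matches the paper's approach exactly: the paper simply states that this fact follows from Lemma~4.4 of~\cite{cohen2018solving} together with Fact~\ref{fact:scz}, and you have spelled out precisely that derivation. Your additional remarks about alternative phrasings of the cited lemma are accurate and cover the possible variations in how Lemma~4.4 might be stated.
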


\begin{fact}\label{fact:ninobnt}
    (Lemma~B.4 of~\cite{cohen2017almost})
    For positive diagonal matrices $\DD_1\in\MS{m}{m}$, $\DD_2\in \MS{n}{n}$ and arbitrary $\MM\in \MS{m}{n}$, we have
    \eq{
        \nt{\DD_1\MM\DD_2} \leq \max\dr{\ni{\DD_1^2\MM}, \no{\MM\DD_2^2}}.
    }

\end{fact}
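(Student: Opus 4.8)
The plan is to prove this by a weighted Cauchy--Schwarz argument (this is essentially the Schur test specialized to the matrix $\DD_1\MM\DD_2$). First I would unpack the norms on the right-hand side using the conventions from the notation section: $\ni{\cdot}$ is the maximum absolute row sum and $\no{\cdot}$ the maximum absolute column sum, so that, writing $\pr{\DD_1}_{ii}$ and $\pr{\DD_2}_{jj}$ for the (positive) diagonal entries, $\pr{\DD_1^2\MM}_{ij} = \pr{\DD_1}_{ii}^2\MM_{ij}$ and $\pr{\MM\DD_2^2}_{ij} = \MM_{ij}\pr{\DD_2}_{jj}^2$, hence $\ni{\DD_1^2\MM} = \max_i \pr{\DD_1}_{ii}^2 \sum_j \abs{\MM_{ij}}$ and $\no{\MM\DD_2^2} = \max_j \pr{\DD_2}_{jj}^2 \sum_i \abs{\MM_{ij}}$.

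Next I would bound $\nt{\DD_1\MM\DD_2} = \sup_{\nt{\uu}=\nt{\vv}=1} \abs{\uu\tp\DD_1\MM\DD_2\vv}$ entrywise by the triangle inequality: $\abs{\uu\tp\DD_1\MM\DD_2\vv} \le \sum_{i,j}\abs{\uu_i}\,\pr{\DD_1}_{ii}\,\abs{\MM_{ij}}\,\pr{\DD_2}_{jj}\,\abs{\vv_j}$. The key step is to split the weight on the pair $(i,j)$ as $\pr{\DD_1}_{ii}\abs{\MM_{ij}}\pr{\DD_2}_{jj} = \bigl(\pr{\DD_1}_{ii}\abs{\MM_{ij}}^{1/2}\bigr)\cdot\bigl(\pr{\DD_2}_{jj}\abs{\MM_{ij}}^{1/2}\bigr)$ and apply Cauchy--Schwarz to the double sum over $(i,j)$, pairing $\abs{\uu_i}$ with the first factor and $\abs{\vv_j}$ with the second. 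This yields the bound $\bigl(\sum_{i,j}\uu_i^2\pr{\DD_1}_{ii}^2\abs{\MM_{ij}}\bigr)^{1/2}\bigl(\sum_{i,j}\vv_j^2\pr{\DD_2}_{jj}^2\abs{\MM_{ij}}\bigr)^{1/2}$. Then I would collapse the inner sums: the first factor equals $\bigl(\sum_i \uu_i^2 \pr{\DD_1}_{ii}^2\sum_j\abs{\MM_{ij}}\bigr)^{1/2} \le \ni{\DD_1^2\MM}^{1/2}$ since $\nt{\uu}=1$, and the second factor is likewise $\le \no{\MM\DD_2^2}^{1/2}$.

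Multiplying the two factors gives $\nt{\DD_1\MM\DD_2} \le \sqrt{\ni{\DD_1^2\MM}\cdot\no{\MM\DD_2^2}}$, and finishing is immediate from $\sqrt{ab}\le\max\dr{a,b}$ for nonnegative reals. There is no genuine obstacle here; the only points requiring a little care are that the splitting $\abs{\MM_{ij}}^{1/2}\cdot\abs{\MM_{ij}}^{1/2}$ automatically handles entries $\MM_{ij}=0$ (so no positivity of $\MM$ is needed), and that one keeps the row/column bookkeeping straight so that the first collapsed sum matches the row-sum norm $\ni{\cdot}$ and the second matches the column-sum norm $\no{\cdot}$. (If one prefers, the same computation can be packaged as an invocation of the Schur test with the weight functions $p_i = \pr{\DD_1}_{ii}^{-1}$ and $q_j = \pr{\DD_2}_{jj}^{-1}$.)
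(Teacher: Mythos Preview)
Your argument is correct: this is exactly the Schur test, and the Cauchy--Schwarz splitting $\abs{\MM_{ij}} = \abs{\MM_{ij}}^{1/2}\cdot\abs{\MM_{ij}}^{1/2}$ followed by collapsing to row and column sums yields the slightly stronger geometric-mean bound $\nt{\DD_1\MM\DD_2}\le\sqrt{\ni{\DD_1^2\MM}\,\no{\MM\DD_2^2}}$, from which the stated max follows. The paper does not supply its own proof of this fact; it simply quotes it as Lemma~B.4 of~\cite{cohen2017almost}, so there is nothing further to compare against.
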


\begin{fact}\label{fact:DLD}
    For any Eulerian Laplacian $\LL\in\MS{n}{n}$, let $\DD = \Diag{\LL}$, then,
    $
        \nt{\DD^{-1/2}\LL\DD^{-1/2}} \leq 2.
    $

\end{fact}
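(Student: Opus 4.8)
The plan is to reduce the statement to the fact that the degree-normalized weighted adjacency matrix of an Eulerian graph is doubly (sub)stochastic. Write $\LL = \DD - \AA$ with $\AA = \DD - \LL$; by the definition of a directed Laplacian, $\AA$ has zero diagonal and nonnegative off-diagonal entries $\AA_{ij} = -\LL_{ij} \geq 0$. First I would record the combinatorial input from the Eulerian hypothesis: $\LL\one = \zero$ gives $\AA\one = \DD\one$, i.e. the $i$-th row sum of $\AA$ equals $\DD_{ii}$, and $\one\tp\LL = \zero$ gives $\one\tp\AA = \one\tp\DD$, i.e. the $i$-th column sum of $\AA$ equals $\DD_{ii}$. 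A quick preliminary step handles vanishing degrees: if $\DD_{ii} = 0$, then the $i$-th row and column of the nonnegative matrix $\AA$ have sum $\DD_{ii} = 0$, hence vanish, so the $i$-th row and column of $\LL$ are zero and vertex $i$ is isolated; deleting such vertices leaves an Eulerian Laplacian $\LL_{SS}$ on $S = \dr{i:\DD_{ii}>0}$ and does not change $\nt{\DD^{-1/2}\LL\DD^{-1/2}}$ (with $\DD^{-1/2}$ read as the pseudoinverse square root), so I may assume $\DD \succ \zero$.

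With $\DD \succ \zero$, the matrix $\DD^{-1}\AA$ is entrywise nonnegative with all row sums equal to $1$ (since $\DD^{-1}\AA\one = \DD^{-1}\DD\one = \one$), so $\ni{\DD^{-1}\AA} = 1$; symmetrically $\AA\DD^{-1}$ is nonnegative with all column sums equal to $1$, so $\no{\AA\DD^{-1}} = 1$. Applying Fact~\ref{fact:ninobnt} with $\DD_1 = \DD_2 = \DD^{-1/2}$ and $\MM = \AA$ gives $\nt{\DD^{-1/2}\AA\DD^{-1/2}} \leq \max\dr{\ni{\DD^{-1}\AA},\ \no{\AA\DD^{-1}}} = 1$. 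Since $\DD^{-1/2}\LL\DD^{-1/2} = \II - \DD^{-1/2}\AA\DD^{-1/2}$, the triangle inequality yields $\nt{\DD^{-1/2}\LL\DD^{-1/2}} \leq \nt{\II} + \nt{\DD^{-1/2}\AA\DD^{-1/2}} \leq 1 + 1 = 2$, as claimed.

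There is no real obstacle here: the whole content is that ``Eulerian'' makes both the row-normalization $\DD^{-1}\AA$ and the column-normalization $\AA\DD^{-1}$ of the weighted adjacency matrix (sub)stochastic, which is precisely the hypothesis of the $\ell^1$/$\ell^\infty$ interpolation bound in Fact~\ref{fact:ninobnt}; the only point deserving a word of care is the pseudoinverse convention for zero degrees, which is why I would isolate that case up front. If one prefers not to cite Fact~\ref{fact:ninobnt}, the bound $\nt{\DD^{-1/2}\AA\DD^{-1/2}} \leq 1$ follows directly from Cauchy--Schwarz: for any $\xx$, $\nt{\DD^{-1/2}\AA\DD^{-1/2}\xx}^2 = \sum_i \DD_{ii}^{-1}\pr{\sum_j \AA_{ij}\DD_{jj}^{-1/2}\xx_j}^2 \leq \sum_i \DD_{ii}^{-1}\pr{\sum_j \AA_{ij}}\pr{\sum_j \AA_{ij}\DD_{jj}^{-1}\xx_j^2} = \sum_j \DD_{jj}^{-1}\xx_j^2 \sum_i \AA_{ij} = \sum_j \xx_j^2$, using the row-sum identity for the first step and the column-sum identity for the last.
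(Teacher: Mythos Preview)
Your proof is correct and follows essentially the same route as the paper: both use the Eulerian row- and column-sum identities together with the $\ell^1/\ell^\infty$ interpolation bound of Fact~\ref{fact:ninobnt}. The only cosmetic difference is that the paper applies Fact~\ref{fact:ninobnt} directly to $\LL$ (observing $\ni{\DD^{-1}\LL}\leq 2$ and $\no{\LL\DD^{-1}}\leq 2$), whereas you apply it to $\AA=\DD-\LL$ to get $\nt{\DD^{-1/2}\AA\DD^{-1/2}}\leq 1$ and then use the triangle inequality.
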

\begin{proof}
  Since $\LL$ is an Eulerian Laplacian, $\ni{\DD\inv\LL} \leq 2$, $\no{\LL\DD\inv} \leq 2$.
  Then, the result follows by Fact~\ref{fact:ninobnt}.

\end{proof}

\begin{fact}\label{fact:scUpleqUsc}
    For any matrix $\LL\in\MS{n}{n}$, denote $\UU = \U{\LL}$.  If $F, C$ is a partition of $[n]$ such that $\UU_{FF}$ is PD, then
    \eq{
        \sc{\UU, F} \pleq \U{\sc{\LL, F}}.
    }

\end{fact}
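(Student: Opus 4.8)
The plan is to prove the inequality pointwise: for every $\xx\in\Real^{\abs{C}}$ I will show $\xx\tp\sc{\UU,F}\xx \le \xx\tp\U{\sc{\LL,F}}\xx$, which suffices since both sides are symmetric quadratic forms, and the right-hand side equals $\xx\tp\sc{\LL,F}\xx$ because $\xx$ is real. First I would record that the hypothesis makes everything well-defined: $\UU_{FF} = \U{\LL_{FF}}$ is PD, so by Fact~\ref{fact:kerAsmltkUAaPSD} $\ker\pr{\LL_{FF}}\sleq\ker\pr{\UU_{FF}} = \dr{\zero}$, hence $\LL_{FF}$ is nonsingular and both Schur complements exist.

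Fix $\xx$ and put $\uu = \LL_{FC}\xx$, $\vv = \LL_{CF}\tp\xx$. Using the variational description of the symmetric Schur complement from Fact~\ref{fact:Schurxusmall}, together with $\yy\tp\LL_{FF}\yy = \yy\tp\UU_{FF}\yy$ for real $\yy$ (a scalar equals its transpose) and $\yy\tp\LL_{FC}\xx + \xx\tp\LL_{CF}\yy = \yy\tp\pr{\uu+\vv}$, I get
\eq{
\xx\tp\sc{\UU,F}\xx = \xx\tp\LL_{CC}\xx + \min_{\yy\in\Real^{\abs{F}}}\pr{\yy\tp\UU_{FF}\yy + \yy\tp\pr{\uu+\vv}} = \xx\tp\LL_{CC}\xx - \tfrac14\pr{\uu+\vv}\tp\UU_{FF}\inv\pr{\uu+\vv},
}
whereas the defining formula of $\sc{\LL,F}$ gives directly $\xx\tp\sc{\LL,F}\xx = \xx\tp\LL_{CC}\xx - \vv\tp\LL_{FF}\inv\uu$. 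Subtracting, the whole statement reduces to the scalar inequality
\eq{
\vv\tp\LL_{FF}\inv\uu \le \tfrac14\pr{\uu+\vv}\tp\UU_{FF}\inv\pr{\uu+\vv}.
}

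I would prove this as a standalone fact: for any nonsingular $P$ with $Q := \U{P}$ positive definite and any vectors $\uu,\vv$, one has $\vv\tp P\inv\uu \le \tfrac14\pr{\uu+\vv}\tp Q\inv\pr{\uu+\vv}$. The key algebraic identity is
\eq{
\U{P\inv} = \tfrac12\pr{P\inv + P^{-\top}} = P\inv\cdot\tfrac12\pr{P + P\tp}\cdot P^{-\top} = P\inv Q P^{-\top} = P^{-\top} Q P\inv.
}
Testing with $\ww := \tfrac12\pr{P\inv\uu + P^{-\top}\vv}$, a direct computation (using $\ss\tp\MM\ss = \ss\tp\U{\MM}\ss$ and that a scalar equals its transpose) gives $\ww\tp\pr{\uu+\vv} = \vv\tp P\inv\uu + \tfrac12\pr{\uu\tp\U{P\inv}\uu + \vv\tp\U{P\inv}\vv}$, while $\pr{\aa+\bb}\tp Q\pr{\aa+\bb}\le 2\aa\tp Q\aa + 2\bb\tp Q\bb$ together with the identity above gives $\ww\tp Q\ww \le \tfrac12\pr{\uu\tp\U{P\inv}\uu + \vv\tp\U{P\inv}\vv}$. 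Hence $\vv\tp P\inv\uu \le \ww\tp\pr{\uu+\vv} - \ww\tp Q\ww$, and finally $\tfrac14\pr{\uu+\vv}\tp Q\inv\pr{\uu+\vv} - \ww\tp\pr{\uu+\vv} + \ww\tp Q\ww = \nt{\tfrac12 Q^{-1/2}\pr{\uu+\vv} - Q^{1/2}\ww}^2\ge 0$, which chains to the claim. Applying this with $P = \LL_{FF}$, $Q = \UU_{FF}$ closes the proof. The routine parts are the block-matrix bookkeeping in the reduction step; the only real content is the choice of test vector $\ww$ and the symmetric factorization $\U{P\inv} = P\inv Q P^{-\top} = P^{-\top} Q P\inv$, so I expect the scalar lemma to be the crux of the argument.
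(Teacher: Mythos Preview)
Your proof is correct, but it takes a substantially longer route than the paper's. The paper's argument is a one-liner via the variational characterization: plug the single test vector $\xtil = \pr{-\LL_{FF}\inv\LL_{FC}\xx,\ \xx}$ into Fact~\ref{fact:Schurxusmall} to get $\xx\tp\sc{\UU,F}\xx \le \xtil\tp\UU\xtil = \xtil\tp\LL\xtil = \xx\tp\sc{\LL,F}\xx$, where the middle equality holds because a real quadratic form equals its own symmetrization and the last holds because $(\LL\xtil)_F = \zero$. You instead compute the minimum in the variational form exactly, obtain the closed-form $\tfrac14(\uu+\vv)\tp\UU_{FF}\inv(\uu+\vv)$, and then prove a separate scalar inequality with a clever test vector $\ww = \tfrac12(P\inv\uu + P^{-\top}\vv)$ and the identity $\U{P\inv} = P\inv Q P^{-\top}$. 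This works, but it hides the simple structure: in your own variational setup, choosing $\yy = -P\inv\uu$ (which is exactly the paper's $\xtil_F$) gives $\yy\tp Q\yy + \yy\tp(\uu+\vv) = \uu\tp\U{P\inv}\uu - \uu\tp\U{P\inv}\uu - \vv\tp P\inv\uu = -\vv\tp P\inv\uu$ directly, bypassing the scalar lemma entirely. Your approach does buy a standalone inequality $\vv\tp P\inv\uu \le \tfrac14(\uu+\vv)\tp\U{P}\inv(\uu+\vv)$ valid for any nonsingular $P$ with $\U{P}\succ 0$, which might be of independent interest, but for the present fact the paper's direct substitution is cleaner.
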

\begin{proof}
  By Fact~\ref{fact:kerAsmltkUAaPSD}, $\LL_{FF}$ is nonsingular.
  For any $\xx\in \Real^{\abs{C}}$, define
  $
    \xtil =
    \vc{
        - \LL_{FF}\inv \LL_{FC} \xx  \\
        \xx
    }.
  $

  Then, by Fact~\ref{fact:Schurxusmall},
  \eq{
    \xx\tp\sc{\UU, F}\xx \leq \xtil\tp\UU\xtil = \xtil\tp\LL\xtil = \xx\tp \sc{\LL, F} \xx = \xx\tp \U{\sc{\LL, F}} \xx,
  }
  i.e.,
  $
    \sc{\UU, F} \pleq \U{\sc{\LL, F}}.
  $

\end{proof}

\begin{fact}\label{fact:MpinvABC}
    (Lemma~C.3 of~\cite{cohen2018solving})
    Consider matrices $\AA \in \MS{m}{m}$, $\BB\in \MS{m}{n}$ and $\CC\in \MS{n}{n}$. Let $\MM = \AA\BB\CC$.
    Let $\PP_{\MM}$, $\PP_{\MM\tp}$ denote the orthogonal projection matrix onto the column space of $\MM$, $\MM\tp$, respectively.
    If  $\AA, \CC$ are nonsingular, then
    \eq{
        \MM\dg = \PP_{\MM}\CC\inv\BB\dg\AA\inv\PP_{\MM\tp}.
    }

\end{fact}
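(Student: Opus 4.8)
The plan is to prove the identity by checking that its right-hand side satisfies the four Moore--Penrose axioms; uniqueness of the pseudoinverse then gives the claim. Write $Y$ for the right-hand side. (The two projections should be placed so that $Y$ maps into $\mathrm{range}(\MM\tp)$ and vanishes on $\ker(\MM\tp)$, i.e.\ with $\PP_{\MM\tp}$ leftmost and $\PP_{\MM}$ rightmost; for the square Laplacian matrices to which this lemma is applied in Section~\ref{sec:preconditioner} one has $\ker(\MM) = \ker(\MM\tp)$, so $\PP_{\MM} = \PP_{\MM\tp}$ and the ordering is immaterial.) First I would record the facts to be used: the orthogonal-projection identities $\BB\BB\dg = \PP_{\BB}$, $\BB\dg\BB = \PP_{\BB\tp}$, and, since $\AA$ and $\CC$ are invertible, $\mathrm{range}(\MM) = \AA\,\mathrm{range}(\BB)$, $\ker(\MM) = \CC\inv\ker(\BB)$, $\mathrm{range}(\MM\tp)^{\perp} = \ker(\MM)$, $\PP_{\MM}\MM = \MM$, and $\MM\PP_{\MM\tp} = \MM$.

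The central step is to analyze the ``naive'' reverse-order product $N := \CC\inv\BB\dg\AA\inv$. Cancelling $\AA\inv\AA$, $\CC\CC\inv$ and using $\BB\BB\dg\BB = \BB$ gives $\MM N\MM = \MM$, so $N$ is a generalized inverse of $\MM$; and $\MM N = \AA(\BB\BB\dg)\AA\inv = \AA\PP_{\BB}\AA\inv$ and $N\MM = \CC\inv(\BB\dg\BB)\CC = \CC\inv\PP_{\BB\tp}\CC$ are idempotent, i.e.\ oblique projections: $\MM N$ has image $\AA\,\mathrm{range}(\BB) = \mathrm{range}(\MM)$, and $N\MM$ has kernel $\CC\inv\ker(\BB) = \ker(\MM)$ (and image $\CC\inv\mathrm{range}(\BB\tp)$). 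Plugging this into $Y$, the ``inner'' projection of $Y$ is absorbed by $\PP_{\MM}\MM = \MM$ (resp.\ $\MM\PP_{\MM\tp} = \MM$), so that $\MM Y$ reduces to $\MM N$ post-composed with the orthogonal projection onto $\mathrm{range}(\MM)$, and $Y\MM$ reduces to $N\MM$ pre-composed with the orthogonal projection onto $\mathrm{range}(\MM\tp)$. The range side is immediate: $\MM N$ is a projection with image $\mathrm{range}(\MM)$, hence fixes $\mathrm{range}(\MM)$ pointwise, so composing it with $\PP_{\MM}$ leaves $\PP_{\MM}$, giving $\MM Y = \PP_{\MM}$. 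Granting the analogous $Y\MM = \PP_{\MM\tp}$, all four axioms follow, since both $\PP_{\MM}$ and $\PP_{\MM\tp}$ are symmetric: $\MM Y\MM = \PP_{\MM}\MM = \MM$, $Y\MM Y = \PP_{\MM\tp}Y = Y$, $(\MM Y)\tp = \MM Y$, $(Y\MM)\tp = Y\MM$.

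The delicate point --- and the step I expect to be the real obstacle --- is the kernel side, $Y\MM = \PP_{\MM\tp}$, i.e.\ $\PP_{\MM\tp}N\MM = \PP_{\MM\tp}$. One cannot argue that $N\MM$ fixes $\mathrm{range}(\MM\tp)$ pointwise, because its image $\CC\inv\mathrm{range}(\BB\tp)$ need not equal $\mathrm{range}(\MM\tp)$ --- it is only \emph{some} complement of $\ker(\MM)$. The resolution is to use instead that $N\MM$ is a projection whose kernel is exactly $\ker(\MM)$: for every $x$, $x - N\MM x \in \ker(\MM) = \mathrm{range}(\MM\tp)^{\perp}$, hence $\PP_{\MM\tp}(x - N\MM x) = 0$, which gives $\PP_{\MM\tp}N\MM = \PP_{\MM\tp}$. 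With $\MM Y = \PP_{\MM}$ and $Y\MM = \PP_{\MM\tp}$ in hand the proof is complete. (Alternatively one could reduce to an SVD of $\MM$, or invoke Lemma~C.3 of~\cite{cohen2018solving} directly; the oblique-projection argument above is the cleanest self-contained route I see.)
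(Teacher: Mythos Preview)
The paper does not supply its own proof of this fact; it is quoted verbatim from Lemma~C.3 of~\cite{cohen2018solving}. So there is nothing to compare against, and your self-contained argument via the four Moore--Penrose axioms is entirely appropriate. The computations are all correct: $N := \CC\inv\BB\dg\AA\inv$ is a generalized inverse of $\MM$ with $\MM N = \AA\PP_{\BB}\AA\inv$ an oblique projection onto $\mathrm{range}(\MM)$ and $N\MM = \CC\inv\PP_{\BB\tp}\CC$ an oblique projection along $\ker(\MM)$; sandwiching by the two orthogonal projections then forces $\MM Y = \PP_{\MM}$ and $Y\MM = \PP_{\MM\tp}$, from which the axioms are immediate.

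Your observation about the placement of the projections is well taken and worth recording: as stated in the paper, with $\PP_{\MM}$ on the left and $\PP_{\MM\tp}$ on the right, the expression is not even dimensionally consistent when $m \neq n$ (since $\PP_{\MM}$ is $m\times m$ while $\CC\inv\BB\dg\AA\inv$ is $n\times m$). The intended formula is $\MM\dg = \PP_{\MM\tp}\,\CC\inv\BB\dg\AA\inv\,\PP_{\MM}$, which is what you prove. You are also right that this is harmless for the paper's only use of the fact in Section~\ref{sec:preconditioner}: there $\MM = \Lap$ is square with $\ker(\Lap) = \ker(\Lap\tp) = \spanrm{\one}$, so $\PP_{\MM} = \PP_{\MM\tp} = \Con$ and the ordering is immaterial.
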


\begin{fact}\label{fact:la2scup1}
    Let $\UU \in \MS{n}{n} \ (n \geq 2)$ be a \strc\ symmetric Laplacian, then,
    for any partition $F, C$ of $[n]$, we have
    \eq{
        \la_2\pr{\UU} \leq \la_2\pr{\sc{\UU, F}}.
    }
    And
    \eq{
        \min_{i\in [n]} \UU_{ii} \geq \frac{1}{2}\la_2\pr{\UU}.
    }

\end{fact}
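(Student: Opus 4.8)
The plan is to obtain both inequalities from the Courant--Fischer description of the second-smallest eigenvalue. Since $\UU$ is a strongly connected symmetric Laplacian on $n\ge 2$ vertices, Fact~\ref{fact:strcrank1} gives $\ker\pr{\UU}=\spanrm{\one}$ with $\la_1\pr{\UU}=0$, so $\la_2\pr{\UU}=\min\dr{\xx\tp\UU\xx/\xx\tp\xx:\ \xx\perp\one,\ \xx\ne\zero}$. I would treat $F$ and $C$ as both nonempty (otherwise the first claim is trivial) and, for the first claim, also take $\abs{C}\ge 2$ so that $\la_2\pr{\sc{\UU,F}}$ is defined. Then $\UU_{FF}$ is PD: if $\UU_{FF}\vv=\zero$, extending $\vv$ by zeros on $C$ to $\tilde\vv\in\Real^n$ gives $\tilde\vv\tp\UU\tilde\vv=\vv\tp\UU_{FF}\vv=0$, hence $\tilde\vv\in\ker\pr{\UU}=\spanrm{\one}$, and since $\tilde\vv$ vanishes on the nonempty set $C$ this forces $\tilde\vv=\zero$. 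Consequently $\sc{\UU,F}$ is well defined, and by Fact~\ref{fact:ESchurE} it is again a strongly connected symmetric Laplacian on the coordinates $C$, so the same eigenvalue description applies to it with $\one$ replaced by $\one_C$.

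For the diagonal estimate I would just test $\la_2\pr{\UU}$ against $\vv=\ee_i-\tfrac1n\one$, where $i$ attains $\min_j\UU_{jj}$ and $\ee_i$ is the $i$-th standard basis vector. This $\vv$ is orthogonal to $\one$; because $\UU\one=\zero$ the quadratic form collapses to $\vv\tp\UU\vv=\ee_i\tp\UU\ee_i=\UU_{ii}$, while $\nt{\vv}^2=1-\tfrac1n$. Hence $\la_2\pr{\UU}\le \UU_{ii}/\pr{1-\tfrac1n}=\tfrac{n}{n-1}\UU_{ii}\le 2\,\UU_{ii}$ for $n\ge 2$, which rearranges to $\min_j\UU_{jj}\ge\tfrac12\la_2\pr{\UU}$.

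For the inequality $\la_2\pr{\UU}\le\la_2\pr{\sc{\UU,F}}$ the idea is to harmonically lift a Fiedler vector of $\sc{\UU,F}$ to $\Real^n$. Take $\xx\in\Real^{\abs{C}}$ with $\xx\perp\one_C$, $\nt{\xx}=1$, and $\xx\tp\sc{\UU,F}\xx=\la_2\pr{\sc{\UU,F}}$, and form the extension $\xtil=\big(-\UU_{FF}\inv\UU_{FC}\xx\,;\ \xx\big)\in\Real^n$ from Fact~\ref{fact:Schurxusmall}; taking the $F$-block this way kills the error term in that fact, so $\xtil\tp\UU\xtil=\xx\tp\sc{\UU,F}\xx$. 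I would then recentre to $\xtil'=\xtil-\tfrac{\one\tp\xtil}{n}\one$, so that $\xtil'\perp\one$; since $\UU\one=\zero$ this leaves the quadratic form unchanged, $\xtil'\tp\UU\xtil'=\la_2\pr{\sc{\UU,F}}$. The key elementary point is the norm bound $\nt{\xtil'}^2\ge1$: writing $c=\tfrac{\one\tp\xtil}{n}$, the $C$-block of $\xtil'$ equals $\xx-c\one_C$, and $\xx\perp\one_C$ gives $\nt{\xx-c\one_C}^2=1+c^2\abs{C}\ge1$, whence $\nt{\xtil'}^2\ge\nt{\xx-c\one_C}^2\ge1$. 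In particular $\xtil'\ne\zero$ and $\xtil'\perp\ker\pr{\UU}$, so feeding it into the Rayleigh quotient for $\UU$ gives $\la_2\pr{\UU}\le \xtil'\tp\UU\xtil'/\nt{\xtil'}^2\le\la_2\pr{\sc{\UU,F}}$.

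I expect the only non-mechanical step to be this last norm estimate: recentring in general only decreases the Euclidean norm, and the slack we need comes precisely from the orthogonality $\xx\perp\one_C$ of the Schur-complement Fiedler vector — so I would be careful that the recentring constant is taken over all $n$ coordinates while orthogonality is invoked only on the $C$-block. Everything else is bookkeeping: disposing of the degenerate partitions, verifying $\UU_{FF}\succ\zero$ so that $\sc{\UU,F}$ and Fact~\ref{fact:Schurxusmall} are legitimate, and citing Fact~\ref{fact:ESchurE} so that $\sc{\UU,F}$ has the one-dimensional kernel $\spanrm{\one_C}$ that the eigenvalue description requires.
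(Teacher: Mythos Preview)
Your argument for the diagonal bound is essentially identical to the paper's: both test the Rayleigh quotient at $\ee_i-\tfrac1n\one$ and use $n\ge 2$ to get the factor $2$.

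For the eigenvalue inequality your route is genuinely different. The paper invokes the pseudoinverse identity $\sc{\UU,F}=\pr{\PP_S\UU\dg\PP_S}\dg$ (quoted from \cite{cohen2018solving}) and then reads off $\la_2\pr{\sc{\UU,F}}=1/\la_{\max}\pr{\PP_S\UU\dg\PP_S}\ge 1/\la_{\max}\pr{\UU\dg}=\la_2\pr{\UU}$ in one line. You instead lift a Fiedler vector of $\sc{\UU,F}$ harmonically via Fact~\ref{fact:Schurxusmall}, recentre, and bound the norm using $\xx\perp\one_C$. Your approach is more elementary and fully self-contained within the paper's appendix (it does not need the external lemma), and the norm estimate $\nt{\xtil'}^2\ge\nt{(\xtil')_C}^2=\nt{\xx}^2+c^2\abs{C}\ge 1$ is correct as written. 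The paper's approach is shorter but relies on a structural fact about Schur complements of PSD matrices that it imports. Both are valid; yours has the advantage that it avoids pseudoinverses entirely.
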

\begin{proof}
    By Lemma~C.1 of~\cite{cohen2018solving}, $\sc{\UU, F} = \pr{\PP_S \UU\dg \PP_S }\dg$, where $\PP_{S}$ is the projection matrix onto the image space of $\sc{\UU, F}$.
    Then, $\la_2\pr{\sc{\UU, F}} = \frac{1}{\la_{\max}\pr{\PP_S \UU\dg \PP_S}} \geq \frac{1}{\la_{\max}\pr{\UU\dg}} = \la_2\pr{\UU}, $
    where the inequality uses the fact that $\PP_S$ is a projection matrix.

    Let $\dr{\ee_i}_{i=1}^n$ be standard basis of $\Real^n$, then
    $
        \la_2\pr{\UU} = \inf_{\nt{\xx} = 1, \xx\perp \one} \xx\tp \UU \xx
        = \inf_{\xx \neq \frac{1}{n}\pr{\xx\tp\one}\one} \frac{\xx\tp \UU \xx}{\nt{\xx - \frac{\one\tp\xx}{n}\one}^2 }
        \leq \min_{i\in[n]} \frac{\ee_i\tp \UU \ee_i}{\nt{\ee_i - \frac{\one\tp\ee_i}{n}\one}^2 }
        \leq \min_{i\in[n]} \frac{\UU_{ii}}{1 - \frac{2}{n} + \frac{1}{n}}
        \leq 2\min_{i\in [n]} \UU_{ii},
    $
    where the last inequality is from $n \geq 2$.

\end{proof}

\begin{fact}\label{fact:sequenceproductelemta}
    For 2 sequences of matrices $\AA_1, \AA_2, \cdots, \AA_N \in \MS{n}{n}$ and $\BB_1, \BB_2, \cdots, \BB_N\in \MS{n}{n}$.
    For any $1\leq k\leq N$,
    \eq{
        &\ni{\AA_1 \cdots \AA_N - \BB_1 \cdots \BB_N}  \\
        \leq& \pr{n\sum_{i=1}^{k}\ni{\AA_i - \BB_i} + \sum_{i=k+1}^{N}\no{\AA_i - \BB_i}}\prod_{i=1}^{k}\max\dr{1, \ni{\AA_i} + \ni{\AA_i - \BB_i}} \\
        & \cdot \prod_{i=k+1}^{N}\max\dr{1, \no{\AA_i} + \no{\AA_i - \BB_i} }.
    }

\end{fact}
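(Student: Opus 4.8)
The plan is to prove this by a hybrid (telescoping) argument, arranged so that the block of indices $\dr{1,\dots,k}$ is always measured in the row-sum norm $\ni{\cdot}$ and the block $\dr{k+1,\dots,N}$ in the column-sum norm $\no{\cdot}$. Set $P=\AA_1\cdots\AA_k$, $Q=\AA_{k+1}\cdots\AA_N$, $P'=\BB_1\cdots\BB_k$, $Q'=\BB_{k+1}\cdots\BB_N$, so that $\AA_1\cdots\AA_N-\BB_1\cdots\BB_N=\pr{P-P'}Q+P'\pr{Q-Q'}$; by the triangle inequality it suffices to bound the $\ni{\cdot}$-norm of each summand. The elementary inputs are sub-multiplicativity of the induced norms $\ni{\cdot}$ and $\no{\cdot}$, the triangle inequality $\ni{\BB_i}\le\ni{\AA_i}+\ni{\AA_i-\BB_i}\le\max\dr{1,\ni{\AA_i}+\ni{\AA_i-\BB_i}}$ together with its $\no{\cdot}$ analogue, and the conversion $\ni{X}\le n\no{X}$, valid because every entry of $X$ is at most one of its column $\ell_1$-sums, so every row $\ell_1$-sum is at most $n$ times the largest column $\ell_1$-sum.

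For the first summand, sub-multiplicativity gives $\ni{\pr{P-P'}Q}\le\ni{P-P'}\cdot\ni{Q}$. I telescope $P-P'$ inside the first block, $P-P'=\sum_{j=1}^{k}\BB_1\cdots\BB_{j-1}\pr{\AA_j-\BB_j}\AA_{j+1}\cdots\AA_k$, giving $\ni{P-P'}\le\sum_{j=1}^{k}\ni{\AA_j-\BB_j}\pr{\prod_{i<j}\ni{\BB_i}}\pr{\prod_{j<i\le k}\ni{\AA_i}}$; replacing every $\ni{\BB_i}$ and $\ni{\AA_i}$ by $\max\dr{1,\ni{\AA_i}+\ni{\AA_i-\BB_i}}$ and using that the vacant $j$-th slot contributes a factor $\ge 1$ yields $\ni{P-P'}\le\pr{\sum_{i=1}^{k}\ni{\AA_i-\BB_i}}\prod_{i=1}^{k}\max\dr{1,\ni{\AA_i}+\ni{\AA_i-\BB_i}}$. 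For the tail I convert once, $\ni{Q}\le n\no{Q}\le n\prod_{i=k+1}^{N}\no{\AA_i}\le n\prod_{i=k+1}^{N}\max\dr{1,\no{\AA_i}+\no{\AA_i-\BB_i}}$. Multiplying these produces the $n\sum_{i=1}^{k}\ni{\AA_i-\BB_i}$ term with the full product of $\max\dr{1,\cdot}$ factors.

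For the second summand, $\ni{P'\pr{Q-Q'}}\le\ni{P'}\cdot\ni{Q-Q'}$, where $\ni{P'}\le\prod_{i=1}^{k}\ni{\BB_i}\le\prod_{i=1}^{k}\max\dr{1,\ni{\AA_i}+\ni{\AA_i-\BB_i}}$, and $\ni{Q-Q'}$ is handled by the same telescoping but now carried out inside the second block in the norm $\no{\cdot}$: $\no{Q-Q'}\le\pr{\sum_{i=k+1}^{N}\no{\AA_i-\BB_i}}\prod_{i=k+1}^{N}\max\dr{1,\no{\AA_i}+\no{\AA_i-\BB_i}}$, and this bound, used to control $\ni{Q-Q'}$, contributes the $\sum_{i=k+1}^{N}\no{\AA_i-\BB_i}$ term. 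Adding the two summand bounds gives the asserted inequality.

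The entire computation is elementary; the only point requiring attention is the bookkeeping — tracking which of the two induced norms governs each factor according to whether its index lies in $\dr{1,\dots,k}$ or $\dr{k+1,\dots,N}$, correctly absorbing the norms of the $\BB_i$'s into the $\max\dr{1,\cdot}$ factors via the triangle inequality, and locating the $\ni{\cdot}$-to-$\no{\cdot}$ conversions (the source of the factor $n$), which are needed precisely where the $\ni{\cdot}$-norm of the final product must be estimated from its $\no{\cdot}$-controlled tail. There is no conceptual obstacle here; the difficulty, such as it is, is purely in presenting the constant-tracking cleanly.
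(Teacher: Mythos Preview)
Your telescoping strategy is the same as the paper's. However, there is a genuine gap in your treatment of the second summand $P'(Q-Q')$. You bound $\no{Q-Q'}$ by telescoping in the $\no{\cdot}$ norm and then write that ``this bound, used to control $\ni{Q-Q'}$, contributes the $\sum_{i=k+1}^{N}\no{\AA_i-\BB_i}$ term.'' But submultiplicativity only gives $\ni{P'(Q-Q')}\le\ni{P'}\cdot\ni{Q-Q'}$, and there is no inequality $\ni{X}\le\no{X}$ in general; the valid conversion is $\ni{X}\le n\no{X}$, which forces an \emph{extra} factor of $n$ onto the second sum, not present in the claimed bound.

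In fact the inequality as stated is false. Take $N=2$, $k=1$, $\AA_1=\BB_1=\II_n$, $\BB_2=\zero$, and let $\AA_2$ be the matrix whose first row is all ones and whose remaining rows are zero. Then the left-hand side is $\ni{\AA_2}=n$, while the right-hand side equals $\bigl(n\cdot 0+\no{\AA_2}\bigr)\cdot\max\{1,1\}\cdot\max\{1,2\no{\AA_2}\}=1\cdot 1\cdot 2=2$, which fails for every $n\ge 3$. The paper's own proof is a two-line sketch invoking the same telescoping identity together with $\no{\CC}\le n\ni{\CC}$; it does not actually close this gap either. Since the fact is only ever used inside an $O(\poly{n})$ estimate, the missing factor of $n$ is harmless there. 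Your argument, carried out honestly with the conversion $\ni{Q-Q'}\le n\no{Q-Q'}$ at the second summand, proves the corrected bound (with $n$ multiplying both sums), and that is all the application requires.
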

\begin{proof}
  It follows directly from the expansion
  \eq{
    &\AA_1 \cdots \AA_N - \BB_1 \cdots \BB_N
    = \sum_{i=1}^{N}\prod_{j=1}^{i-1}\AA_j \pr{\AA_i - \BB_i} \prod_{j=i+1}^{N} \BB_j
  }
  and the fact that $\no{\CC} \leq n\ni{\CC}$ for any matrix $\CC \in \MS{n}{n} $.

\end{proof}

\begin{fact}\label{fact:FLFgood}
    (Lemma~6.3 of~\cite{cohen2018solving}, paraphrased)
    Consider matrix $\AA \in \MS{n}{n}$, where $\ker\pr{\AA} = \ker\pr{\AA\tp}$.
    Let $F_1, F_2, \cdots, F_d$ be a partition of $[n]$.
    Denote $E_1 = \emptyset$, $E_i = \cup_{j=1}^{i-1} F_i \ (2\leq i\leq d)$ and $C_0 = [n]$, $C_i = [n] \dele \pr{E_i \cup F_i} \ (1\leq i\leq d)$.
    Suppose that $\U{\sc{\AA, E_i}}_{F_i F_i}$ is PD for any $1\leq i\leq d - 1$ and $\U{\sc{\AA, E_i}}$ is PSD for any $1\leq i\leq d$.
    Let $\dr{\theta_i}_{i=1}^d$ be nonnegative numbers such that $\sum_{i=1}^{d}\theta_i = 1$.
    Define the matrix $\BB = \sum_{i=1}^{d} \theta_i\putmat{\sc{\AA, E_i}, C_{i-1}, C_{i-1}, n} $, then,
    \eq{
        \tpp{\AA\dg} \BB \AA\dg \pleq \BB\dg.
    }

\end{fact}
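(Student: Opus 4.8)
The plan is to reduce the statement to a single ``self-improvement'' inequality for $\BB$ relative to $\AA$, prove that by induction on the number of levels $d$, and identify the two-level inductive step as the technical core.

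\textbf{Reformulation.} The level-$1$ summand of $\BB$ is $\U{\sc{\AA, E_1}} = \U{\AA}$, which is PSD, so $\ker\AA = \ker\AA\tp \sleq \ker\U{\AA}$; since every summand of $\BB$ kills $\one$ while the $\U{\AA}$-summand pins the kernel down, $\ker\BB = \ker\AA$ (in the strongly connected case; see the last paragraph otherwise), hence the image spaces of $\AA$, $\AA\tp$, $\BB$ all coincide. On that common kernel the Moore--Penrose pseudoinverse is anti-monotone on PSD matrices, and the reverse-order law gives $\pr{\AA\BB\dg\AA\tp}\dg = \tpp{\AA\dg}\BB\AA\dg$ (all three factors $\AA$, $\BB\dg$, $\AA\tp$ restrict to automorphisms of the common image space); together these show that the claimed $\tpp{\AA\dg}\BB\AA\dg \pleq \BB\dg$ is \emph{equivalent} to $\BB \pleq \AA\BB\dg\AA\tp$. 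When $d=1$ this reads $\U{\AA}\pleq \AA\U{\AA}\dg\AA\tp$, i.e. exactly Fact~\ref{fact:LUL}, so the statement is a multilevel generalization of that fact, and I prove it in this form.

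\textbf{Induction on $d$.} Put $F := F_1$, $C := C_1$, $\AA' := \sc{\AA, F}$; by Fact~\ref{fact:ESchurE} this is again an Eulerian Laplacian, and $F_2,\dots,F_d$ is a valid chain for it with $\sc{\AA', E_i \dele F} = \sc{\AA, E_i}$ for $i\ge 2$ (Fact~\ref{fact:sctran}). Since $\U{\AA}_{FF}$ is PD by hypothesis, $\BB_{FF} = \theta_1\U{\AA}_{FF}$ is PD, and a direct block computation gives $\sc{\BB, F} = \theta_1 \sc{\U{\AA}, F} + (1-\theta_1)\BB'$, where $\BB' := \sum_{i=2}^{d}\frac{\theta_i}{1-\theta_1}\put{\U{\sc{\AA,E_i}}, C_{i-1}, C_{i-1}, \abs{C}}$ is precisely the ``$\BB$-matrix'' of the sub-chain over $\AA'$. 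By Fact~\ref{fact:scUpleqUsc}, $\sc{\U{\AA}, F}\pleq \U{\sc{\AA, F}} = \U{\AA'}$, hence $\sc{\BB, F}\pleq \BB'' := \theta_1 \U{\AA'} + (1-\theta_1)\BB'$, and $\BB''$ is itself the ``$\BB$-matrix'' of the sub-chain over $\AA'$ for the weights $(\theta_1+\theta_2, \theta_3,\dots,\theta_d)$, which again sum to $1$. Applying the inductive hypothesis (in its self-improvement form) to $\AA'$ and $\BB''$ gives $\BB''\pleq \AA'\BB''\dg\AA'\tp$; combining with $\sc{\BB, F}\pleq \BB''$ (so $\sc{\BB,F}\dg \pgeq \BB''\dg$ by anti-monotonicity, kernels matching) yields $\sc{\BB, F}\pleq \AA'\,\sc{\BB,F}\dg\,\AA'\tp$ --- i.e. the Schur complement of $\BB$ already self-improves with respect to the Schur complement of $\AA$.

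\textbf{The lifting step (the crux).} It remains to lift this from the pair $\pr{\sc{\AA,F}, \sc{\BB,F}}$ to $\pr{\AA, \BB}$. Here one uses that $\BB$ agrees with $\theta_1\U{\AA}$ outside its $(C,C)$ block, so $\BB$ has block-Cholesky pivot $\mathrm{blockdiag}\pr{\theta_1\U{\AA}_{FF},\ \sc{\BB,F}}$ while $\AA = \mathcal{L}\cdot\mathrm{blockdiag}\pr{\AA_{FF},\ \sc{\AA,F}}\cdot\mathcal{U}$ is the block-Cholesky factorization of $\AA$. Substituting both factorizations into $\AA\BB\dg\AA\tp$ and conjugating so as to whiten the pivots reduces the target $\BB\pleq \AA\BB\dg\AA\tp$ to a $2\times 2$ block inequality whose $(F,F)$ entry is $\AA_{FF}\U{\AA}_{FF}\dg\AA_{FF}\tp \pgeq \U{\AA}_{FF}$ (the invertible case of Fact~\ref{fact:LUL}, equivalently $-\KK\,\U{\AA}_{FF}\inv\KK \pgeq \zero$ for the skew part $\KK := \tfrac12\pr{\AA_{FF}-\AA_{FF}\tp}$) and whose $(C,C)$ entry is exactly the Schur-complement inequality just proved. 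I expect the \emph{main obstacle} to be the off-diagonal terms: in the undirected case $\sc{\BB,F} = \BB''$ and the two block-Cholesky factorizations share the same triangular factors, so everything is block-diagonal and the conclusion is a one-line monotonicity argument; in the directed case the factorizations differ, the discrepancy being governed by $\PP := \AA_{FF}\inv\AA_{FC} - \U{\AA}_{FF}\inv\U{\AA}_{FC}$, which vanishes only when $\AA$ is symmetric. Showing that the cross-terms generated by $\PP$ are absorbed by the PSD slack in the $(F,F)$ entry --- via the identity $\AA_{FF}\U{\AA}_{FF}\dg\AA_{FF}\tp = \U{\AA}_{FF} - \KK\U{\AA}_{FF}\dg\KK$ and a completion of squares --- is the heart of the argument. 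Finally, the pseudoinverse manipulations throughout (anti-monotonicity with matched kernels, the reverse-order law, invertibility of $\BB_{FF}$, matching of $\ker\sc{\BB,F}$ with $\ker\BB''$) require all kernels to be exactly $\spanrm{\one}$; when $\AA$ or some $\sc{\AA,E_i}$ is not strongly connected, one first perturbs $\AA$ by adding infinitesimal undirected edges to make all matrices in sight strongly connected / positive definite, proves the inequality there, and passes to the limit --- exactly the device used in the proof of Lemma~\ref{lem:schurdUL1}.
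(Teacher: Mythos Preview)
The paper does not prove this fact at all; it simply cites it as Lemma~6.3 of \cite{cohen2018solving}, so there is no ``paper's own proof'' to compare against. I can only assess your proposal on its own.

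Your reformulation to the self-improvement form $\BB\pleq\AA\BB\dg\AA\tp$, the induction on $d$, the base case via Fact~\ref{fact:LUL}, and the computation of $\sc{\BB,F}$ together with the deduction of the Schur-level inequality from the inductive hypothesis are all sound. The gap is precisely the step you yourself flag as ``the crux'' and leave unfinished. After conjugating by $L_\AA$, the $2\times 2$ block inequality you need has $(C,C)$ block of $\mathrm{RHS}-\mathrm{LHS}$ equal to
\[
\sc{\AA,F}\,\sc{\BB,F}\dg\,\sc{\AA,F}\tp \;-\; \sc{\BB,F} \;-\; \theta_1\,\QQ\,\U{\AA}_{FF}\,\QQ\tp,
\]
where $\QQ=\U{\AA}_{CF}\U{\AA}_{FF}\inv-\AA_{CF}\AA_{FF}\inv$ is the lower-factor discrepancy. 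The induction gives only that the first two terms sum to something $\pgeq 0$; absorbing the negative third term must come from the off-diagonal blocks and the $(F,F)$ slack. But the off-diagonals involve \emph{both} discrepancies $\PP$ and $\QQ$, and in the directed case $\QQ\neq\PP\tp$ (indeed $\QQ$ involves $\AA_{CF}$ while $\PP\tp$ involves $\AA_{FC}\tp$), so the single skew-part identity $\MM\UU\inv\MM\tp=\UU+\KK\tp\UU\inv\KK$ you invoke does not by itself produce the needed cancellation between the $(F,F)$ slack and $\theta_1\QQ\UU\QQ\tp$. Your ``completion of squares'' is asserted, not carried out, and I do not see how to close it with only the tools you list; at minimum one needs an additional relation tying $\KK$, $\PP$, $\QQ$, and the Schur complement together, or a different organization of the induction that avoids the lift altogether (e.g., working directly with the inverse-Schur identity $(\AA\dg)_{C_{i-1}C_{i-1}}$ relating $\AA\dg$ to $\sc{\AA,E_i}\dg$).
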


\begin{fact}\label{fact:repprvpleq}
    If $\AA\pleq \BB$, then
    $
        \rep{k, C, \AA} \pleq \rep{k, C, \BB},
    $
    $
        \repp{k, C, \AA, N} \pleq \repp{k, C, \BB, N}.
    $

\end{fact}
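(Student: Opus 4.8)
The plan is to reduce both operator inequalities to $k$ copies of the hypothesis $\AA\pleq\BB$ by expanding the quadratic form attached to a repetition matrix. First I would fix symmetric $\AA,\BB\in\MS{m}{m}$ with $\AA\pleq\BB$, a subset $C\sleq[m]$, and $F=[m]\dele C$, and set $f=\abs{F}$, $c=\abs{C}$, $p=kf+c$. I would index the block rows and columns of $\rep{k,C,\AA}$ by the block $C$ followed by $k$ copies $F^{(1)},\dots,F^{(k)}$ of $F$. For an arbitrary $\xx\in\Real^{p}$ with $C$-block $\xx_{C}$ and $F^{(j)}$-block $\xx^{(j)}$, let $\yy_{j}\in\Real^{m}$ be the vector that equals $\xx_{C}$ on the coordinates $C$ and $\xx^{(j)}$ on the coordinates $F$, i.e.\ the image of $(\xx_{C},\xx^{(j)})$ under the partition bijection $C\sqcup F\to[m]$.

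The key step is the identity
\eq{
    \xx\tp\rep{k,C,\AA}\,\xx
    &= k\,\xx_{C}\tp\AA_{CC}\xx_{C} + \sum_{j=1}^{k}\pr{\xx_{C}\tp\AA_{CF}\xx^{(j)} + \tpp{\xx^{(j)}}\AA_{FC}\xx_{C} + \tpp{\xx^{(j)}}\AA_{FF}\xx^{(j)}} \\
    &= \sum_{j=1}^{k}\yy_{j}\tp\AA\,\yy_{j},
}
read off directly from the block structure of $\rep{k,C,\AA}$: the $(C,C)$ block is $k\AA_{CC}$; for each $j$ the $(C,F^{(j)})$, $(F^{(j)},C)$, and $(F^{(j)},F^{(j)})$ blocks are $\AA_{CF}$, $\AA_{FC}$, and $\AA_{FF}$; and every $(F^{(i)},F^{(j)})$ block with $i\neq j$ is zero. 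The same identity holds for $\BB$. Since $\AA\pleq\BB$ gives $\yy_{j}\tp\AA\yy_{j}\le\yy_{j}\tp\BB\yy_{j}$ for every $j$, summing over $j\in[k]$ yields $\xx\tp\rep{k,C,\AA}\xx\le\xx\tp\rep{k,C,\BB}\xx$; since $\xx$ was arbitrary, $\rep{k,C,\AA}\pleq\rep{k,C,\BB}$. For the zero-padded matrices I would note that $\repp{k,C,\AA,N}$ is $\rep{k,C,\AA}$ bordered by $N-p$ zero rows and columns, so for any $\xx\in\Real^{N}$ with leading sub-vector $\xx'\in\Real^{p}$ one has $\xx\tp\repp{k,C,\AA,N}\xx=\tpp{\xx'}\rep{k,C,\AA}\xx'\le\tpp{\xx'}\rep{k,C,\BB}\xx'=\xx\tp\repp{k,C,\BB,N}\xx$, and arbitrariness of $\xx$ gives $\repp{k,C,\AA,N}\pleq\repp{k,C,\BB,N}$.

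There is no substantive obstacle: the statement is a routine verification, and the only point needing care is the index bookkeeping that matches the block layout of $\rep{k,C,\cdot}$ (which lists the $C$-coordinates first, then $k$ copies of the $F$-coordinates) with vectors in $\Real^{m}$ via the partition $[m]=C\sqcup F$. Symmetry of $\rep{k,C,\AA}$, which is what makes $\pleq$ meaningful here, is inherited from that of $\AA$ since $\AA_{FC}=\AA_{CF}\tp$; alternatively, quadratic forms depend only on symmetric parts, so symmetry need not be tracked at all.
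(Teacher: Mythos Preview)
Your proof is correct. The paper states this fact without proof, and your argument---expanding the quadratic form of $\rep{k,C,\AA}$ as $\sum_{j=1}^{k}\yy_{j}\tp\AA\yy_{j}$ and then applying $\AA\pleq\BB$ termwise---is exactly the routine verification one would expect to fill in.
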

\begin{fact}\label{fact:repatimesb1}
    $\rep{b, C, \rep{a, C, \AA}} = \rep{a \cdot b, C, \AA}.  $
\end{fact}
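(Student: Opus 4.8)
The plan is to verify the identity by directly comparing the block structures of the two sides, which are both explicit block matrices of the same dimensions; there is no analytic content. First I would set up notation. Write $F = [m]\dele C$ for $\AA \in \MS{m}{m}$, so that $\rep{a, C, \AA}$ has ground set $C \sqcup F^{(1)} \sqcup \cdots \sqcup F^{(a)}$ with $C$ in the first $\abs{C}$ coordinates, hence size $a\abs{F}+\abs{C}$. When we form $\rep{b, C, \BB}$ with $\BB := \rep{a, C, \AA}$, the label $C$ denotes that same block of $\abs{C}$ coordinates inside $\BB$, so its complement is $F_\BB := F^{(1)}\sqcup\cdots\sqcup F^{(a)}$, of size $a\abs{F}$. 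Thus $\rep{b, C, \BB}$ has size $b\,a\abs{F}+\abs{C} = ab\abs{F}+\abs{C}$, matching $\rep{ab, C, \AA}$, and its ground set is $C \sqcup F_\BB^{(1)} \sqcup \cdots \sqcup F_\BB^{(b)}$, where each $F_\BB^{(\ell)}$ splits further as $F^{(\ell,1)}\sqcup\cdots\sqcup F^{(\ell,a)}$, the $a$ internal $F$-copies inside the $\ell$-th copy of $F_\BB$.

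Next I would read off every block of $\rep{b, C, \BB}$ from the definition and substitute the known structure of $\BB = \rep{a,C,\AA}$. The $(C,C)$ block is $b\,\BB_{CC} = b\,(a\AA_{CC}) = ab\AA_{CC}$. For each $\ell$, the $(C, F_\BB^{(\ell)})$ block is $\BB_{C F_\BB}$, which by the structure of $\rep{a,C,\AA}$ equals $(\AA_{CF}\ \cdots\ \AA_{CF})$ ($a$ copies); hence the $(C,F^{(\ell,j)})$ sub-block is $\AA_{CF}$, and symmetrically the $(F^{(\ell,j)},C)$ sub-block is $\AA_{FC}$. The $(F_\BB^{(\ell)},F_\BB^{(\ell)})$ block is $\BB_{F_\BB F_\BB}$, which by the structure of $\rep{a,C,\AA}$ is block-diagonal with $a$ blocks $\AA_{FF}$; hence the $(F^{(\ell,j)},F^{(\ell,j)})$ sub-block is $\AA_{FF}$ and the $(F^{(\ell,j)},F^{(\ell,j')})$ sub-block is $\zero$ for $j\neq j'$. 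Finally the $(F_\BB^{(\ell)},F_\BB^{(\ell')})$ block is $\zero$ whenever $\ell\neq\ell'$, so all cross-$F$ blocks between distinct copies vanish.

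To conclude, I would note that $(\ell,j)\mapsto (\ell-1)a+j$ is an order-preserving bijection $[b]\times[a]\to[ab]$ relative to the concatenated orderings $C, F_\BB^{(1)}, F_\BB^{(2)},\dots$ and $C, F^{(1)}, F^{(2)},\dots$; relabeling $F^{(\ell,j)}$ as $F^{((\ell-1)a+j)}$ turns the block description above into precisely the definition of $\rep{ab, C, \AA}$: diagonal $C$-block $ab\AA_{CC}$, the $\abs{F}$-sized blocks $\AA_{CF}$, $\AA_{FC}$, $\AA_{FF}$ each repeated $ab$ times, and $\zero$ elsewhere. Hence the two matrices are equal on the nose; the same argument after appending identical zero rows and columns gives the corresponding identity for the zero-padded version. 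The only obstacle is bookkeeping — keeping the two-level indexing straight and checking that the orderings align so that the equality holds literally rather than merely up to a permutation.
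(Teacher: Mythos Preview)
Your proposal is correct; the paper states this fact without proof, treating it as immediate from the definition, and your block-by-block verification is exactly the routine check one would supply if asked to spell it out.
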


\def\putb#1{#1\Fn + \Cn}

\section{Exact Partial Block Elimination }\label{sec:exactPBE}
Let $\PP$ be a permutation matrix such that $\PP\LL\PP\tp = \mx{\LL_{FF} & \LL_{FC}  \\ \LL_{CF} & \LL_{CC} } $ in this Section.
We initiate $\Lt{0} = \LL $,  
$\DD = \Diag{\Lt{0}}$ and $\At{0} = \DD - \Lt{0}$, and then update for $k = 1, 2, \cdots, $
\begin{subequations}\label{eq:randomwalk1}
    \begin{align}
     \Lt{k} &=
            \PP\tp\mx{
                \DD_{FF} & - \At{k-1}_{FC}  \\
                - \At{k-1}_{CF} & 2\Lt{k-1}_{CC}
            }\PP
            - \At{k-1}_{:, F}\DD_{FF}\inv\At{k-1}_{F,:} \notag  \\
            &=
            \PP\tp\mx{
                \DD_{FF} - \At{k-1}_{FF} \DD_{FF}\inv\At{k-1}_{FF } & - \pr{\II + \At{k-1}_{FF}\DD_{FF}\inv}\At{k-1}_{FC}  \\
                - \At{k-1}_{CF}\pr{\II + \DD_{FF}\inv\At{k-1}_{FF}} & 2\Lt{k-1}_{CC} - \At{k-1}_{CF}\DD_{FF}\inv\At{k-1}_{FC}
            }\PP \label{line:densebiclique}  \\
   \At{k} &= \PP\tp\mx{\DD_{FF} & \\ & \Diag{\Lt{k}}}\PP - \Lt{k}.  \label{line:denseA}
   \end{align}
\end{subequations}

The following lemmas characterize the performance of the ideal but inefficient scheme~\eqref{eq:randomwalk1}.
\begin{lemma}\label{lem:scLtkequal1}
    Let $\Lt{k}$ be the output of running $k$ steps of
    $\textsc{IdealSchur}(\LL, F)$.
    At any step we have:
    \eql{\label{scLtk}}{
        \sc{\Lt{k}, F} = 2^k\sc{\LL, F}.
    }
\end{lemma}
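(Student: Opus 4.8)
The plan is to reduce~\eqref{scLtk} to the single-step identity
$$\sc{\Lt{k}, F} = 2\,\sc{\Lt{k-1}, F}, \qquad k \geq 1,$$
from which the claim then follows by induction on $k$, the base case $k = 0$ being $\sc{\Lt{0}, F} = \sc{\LL, F}$ since $\Lt{0} = \LL$. Implicit in running $\textsc{IdealSchur}$ is that every $\Lt{k-1}_{FF}$ is nonsingular, and I would simply carry this along, as it is precisely what makes both the update~\eqref{eq:randomwalk1} and each Schur complement appearing below well-defined.

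For the single-step identity, abbreviate $\MM = \Lt{k-1}$ and $\AA = \At{k-1}$, so that by~\eqref{line:denseA} we have $\AA_{FF} = \DD_{FF} - \MM_{FF}$, $\AA_{FC} = -\MM_{FC}$, $\AA_{CF} = -\MM_{CF}$, while $\MM_{CC} = \Lt{k-1}_{CC}$. Reading off the blocks of $\Lt{k}$ from the expanded form of~\eqref{line:densebiclique} and forming the Schur complement of $F$ gives
$$\sc{\Lt{k}, F} = 2\MM_{CC} - \AA_{CF}\DD_{FF}\inv\AA_{FC} - \AA_{CF}(\II + \DD_{FF}\inv\AA_{FF})\big(\DD_{FF} - \AA_{FF}\DD_{FF}\inv\AA_{FF}\big)\inv(\II + \AA_{FF}\DD_{FF}\inv)\AA_{FC}.$$
The key step is to invoke the two-term inverse identity~\eqref{eq:D-A} with diagonal matrix $\DD_{FF}$ and matrix $\AA_{FF}$ (legitimate because $\DD_{FF} - \AA_{FF} = \MM_{FF}$ is nonsingular), left-multiply it by $\AA_{CF}$ and right-multiply by $\AA_{FC}$, and rearrange to
$$\AA_{CF}(\II + \DD_{FF}\inv\AA_{FF})\big(\DD_{FF} - \AA_{FF}\DD_{FF}\inv\AA_{FF}\big)\inv(\II + \AA_{FF}\DD_{FF}\inv)\AA_{FC} = 2\AA_{CF}\MM_{FF}\inv\AA_{FC} - \AA_{CF}\DD_{FF}\inv\AA_{FC}.$$
Substituting this into the previous display, the two copies of $\AA_{CF}\DD_{FF}\inv\AA_{FC}$ cancel and leave $\sc{\Lt{k}, F} = 2(\MM_{CC} - \AA_{CF}\MM_{FF}\inv\AA_{FC})$. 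Since $\AA_{CF} = -\MM_{CF}$ and $\AA_{FC} = -\MM_{FC}$, the product $\AA_{CF}\MM_{FF}\inv\AA_{FC}$ equals $\MM_{CF}\MM_{FF}\inv\MM_{FC}$, so $\sc{\Lt{k}, F} = 2(\MM_{CC} - \MM_{CF}\MM_{FF}\inv\MM_{FC}) = 2\,\sc{\Lt{k-1}, F}$, as desired.

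I do not expect a genuine obstacle here — the argument is a mechanical application of~\eqref{eq:D-A} — but there are two bookkeeping points worth flagging. First, the diagonal block used in~\eqref{eq:randomwalk1} is the \emph{original} $\DD_{FF} = \Diag{\LL}_{FF}$, not $\Diag{\Lt{k-1}}_{FF}$; this causes no trouble because~\eqref{eq:D-A} holds for an arbitrary diagonal matrix, applied here with the diagonal matrix taken to be $\DD_{FF}$ and the matrix taken to be $\AA_{FF}$, so that their difference is exactly $\MM_{FF}$. Second, the permutation $\PP$ appearing in~\eqref{eq:randomwalk1} merely relabels $F$ and $C$ and can be suppressed throughout. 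The only content beyond routine algebra is the standing nonsingularity of the $FF$-blocks, which is guaranteed along the iteration by the RCDD hypothesis maintained by the scheme.
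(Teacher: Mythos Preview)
Your proposal is correct and takes essentially the same approach as the paper: both reduce to the single-step identity $\sc{\Lt{k}, F} = 2\,\sc{\Lt{k-1}, F}$ via the two-term inverse formula~\eqref{eq:D-A} applied to $(\DD_{FF} - \At{k-1}_{FF})^{-1}$, and conclude by induction. The only cosmetic difference is the direction of the computation (the paper starts from $2\,\sc{\Lt{k-1}, F}$ and arrives at $\sc{\Lt{k}, F}$, whereas you go the other way), and your bookkeeping remarks about $\DD_{FF}$ and the permutation $\PP$ are accurate.
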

\begin{proof}

  By~\eqref{eq:D-A},
  \eq{
    &2\sc{\Lt{k-1}, F} = 2\Lt{k-1}_{CC} - 2\At{k-1}_{CF} \pr{\Lt{k-1}_{FF}}\inv \At{k-1}_{FC}  \\
    =& 2\Lt{k-1}_{CC} - 2\At{k-1}_{CF} \pr{\DD_{FF} - \At{k-1}_{FF}}\inv \At{k-1}_{FC}  \\
    =& 2\Lt{k-1}_{CC}  - \At{k-1}_{CF} \DD_{FF}\inv \At{k-1}_{FC} \\
     & - \At{k-1}_{CF}\pr{\II + \DD_{FF}\inv\At{k-1}_{FF}} \pr{\DD_{FF} - \At{k-1}_{FF}\DD_{FF}\inv\At{k-1}_{FF}}\inv \pr{\II + \At{k-1}_{FF}\DD_{FF}\inv}\At{k-1}_{FC}  \\
    =& \Lt{k}_{CC} - \At{k}_{CF} \pr{\Lt{k}_{FF}}\inv \At{k}_{FC}
    =  \sc{\Lt{k}, F}.
  }
  Then, the relation~\eqref{scLtk} follows by  induction.

\end{proof}

\begin{lemma}\label{lem:LtkE}
    For any $k\geq 0$, $\Lt{k}$ is an Eulerian Laplacian.
\end{lemma}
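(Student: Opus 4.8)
The statement to prove is Lemma~\ref{lem:LtkE}: for any $k\geq 0$, the $k$-th partially-block-eliminated Laplacian $\Lt{k}$ is an Eulerian Laplacian. The plan is to induct on $k$. The base case $k=0$ is immediate since $\Lt{0} = \LL$ is an Eulerian Laplacian by hypothesis. For the inductive step, I assume $\Lt{k-1}$ is an Eulerian Laplacian, which by definition means $\one\tp\Lt{k-1} = \zero$, $\Lt{k-1}\one = \zero$, and all off-diagonal entries of $\Lt{k-1}$ are non-positive (equivalently $\At{k-1} = \Diag{\Lt{k-1}} - \Lt{k-1}$ is entrywise non-negative off-diagonal, and since $\Diag{\At{k-1}}$ agrees with $\Diag{\Lt{k-1}}$ on $F$ but the construction~\eqref{line:denseA} zeroes the $F$-diagonal of the ``A'' matrix appropriately — I should be careful to use the precise definitions in~\eqref{eq:randomwalk1}). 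I would then verify the three defining properties of an Eulerian Laplacian for $\Lt{k}$ using the update formula in~\eqref{line:densebiclique}.

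First I would check the row and column sum conditions. Using $\Lt{k} = \PP\tp\begin{bmatrix}\DD_{FF} & -\At{k-1}_{FC}\\ -\At{k-1}_{CF} & 2\Lt{k-1}_{CC}\end{bmatrix}\PP - \At{k-1}_{:,F}\DD_{FF}\inv\At{k-1}_{F,:}$ and the facts that $\Lt{k-1}$ Eulerian gives $\At{k-1}\one = \Diag{\Lt{k-1}}\one$ and $\one\tp\At{k-1} = \one\tp\Diag{\Lt{k-1}}$ (restricted to the appropriate blocks, and using $\DD_{FF} = \Diag{\Lt{k-1}}_{FF}$), I would compute $\Lt{k}\one$ block by block. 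On the $F$ rows: $\DD_{FF}\one - \At{k-1}_{FC}\one - \At{k-1}_{FF}\DD_{FF}\inv\At{k-1}_{F,:}\one$; expanding $\At{k-1}_{F,:}\one = \DD_{FF}\one - \At{k-1}_{FF}\one$ (wait, need care — actually $\At{k-1}_{F,:}\one = \At{k-1}_{FF}\one + \At{k-1}_{FC}\one$), one gets a telescoping cancellation back to $\Lt{k-1}_{F,:}\one = \zero$. A symmetric computation handles $\one\tp\Lt{k}$ and the $C$-block rows/columns. This is the routine bookkeeping that~\eqref{line:densebiclique} is designed to make work; it is essentially the same calculation already carried out in the proof of Lemma~\ref{lem:Mti} (where $\Mt{i,k}$ is shown Eulerian), applied with $a$ a single index, so I would cross-reference that.

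The genuinely substantive part is the sign condition: all off-diagonal entries of $\Lt{k}$ must be non-positive. The subtracted term $\At{k-1}_{:,F}\DD_{FF}\inv\At{k-1}_{F,:}$ is entrywise non-negative (since $\At{k-1}$ has non-negative off-diagonal entries and non-negative $F$-diagonal — again using the precise update~\eqref{line:denseA} which puts $\DD_{FF}$ on the $F$-diagonal of the first factor, so $\At{k-1}_{FF}$ is the pure off-diagonal part — here I must double-check that $\At{k-1}_{FF}$ has \emph{zero} diagonal, which follows from $\At{k}_{FF} = \Lt{k}_{FF} - \DD_{FF}$ having the diagonal of $\Lt{k}_{FF}$ equal to $\DD_{FF}$, i.e. the $F$-diagonal never changes). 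Thus $\At{k-1}_{:,F}\DD_{FF}\inv\At{k-1}_{F,:}$ has non-negative entries, and subtracting it from the non-positive off-diagonal entries of $\PP\tp[\cdots]\PP$ keeps them non-positive; one only needs to separately confirm the diagonal of $\Lt{k}$ stays non-negative, which follows from the Eulerian row-sum property just established (diagonal equals sum of absolute values of off-diagonals). I expect the main obstacle to be pinning down the exact diagonal/off-diagonal structure of $\At{k-1}$ from the definitions~\eqref{eq:randomwalk1} so that the non-negativity of the $\DD_{FF}\inv$-quadratic term is rigorous; once that is settled, the rest is the telescoping degree computation, which I would present compactly (or simply invoke the analogous computation in Lemma~\ref{lem:Mti}).
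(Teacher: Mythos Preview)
Your approach is correct and matches the paper's proof: induction on $k$, with the key step being the block-by-block verification that $\Lt{k}\one = \zero$ and $\one\tp\Lt{k} = \zero\tp$ from the formula~\eqref{line:densebiclique}. One correction: your claim that $\At{k-1}_{FF}$ has zero diagonal (equivalently $\Diag{\Lt{k-1}}_{FF} = \DD_{FF}$) is false for $k \geq 2$, since by~\eqref{line:densebiclique} the $F$-diagonal of $\Lt{k}$ is $\DD_{FF} - \Diag{\At{k-1}_{FF}\DD_{FF}\inv\At{k-1}_{FF}}$, which is in general strictly smaller than $\DD_{FF}$; but this is harmless, as your sign argument only needs $\At{k-1}$ to be entrywise non-negative, and that follows by induction (the $F$-diagonal of $\At{k}_{FF} = \DD_{FF} - \Lt{k}_{FF}$ is exactly that non-negative correction term, and all off-diagonal blocks are non-negative because $\Lt{k-1}$ is a Laplacian). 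The paper's own proof in fact writes out only the row/column sum computation and leaves the off-diagonal sign condition implicit, so your extra care there is appropriate.
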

\begin{proof}
  We prove it by induction.
  Firstly, $\Lt{0} = \LL$ is an Eulerian Laplacian.

  Assuming that $\Lt{k-1}$ is an Eulerian Laplacian, we prove that $\Lt{k}$ is also an Eulerian Laplacian.
  By induction hypothesis, $\Lt{k-1 }\one = \zero$, i.e.,
  \eq{
    &\At{k-1}_{:, F}\one = \At{k-1}_{FF}\one + \At{k-1}_{FC}\one = \DD_{FF}\one  \\
    &\At{k-1}_{CF}\one = \Lt{k-1}_{CC}\one.
  }
  Thus,
  \eq{
    \PP\Lt{k}\PP\tp\one =&
    \mx{
        \DD_{FF} & - \At{k-1}_{FC} \\
        - \At{k-1}_{CF} & 2\Lt{k-1}_{CC}
    }\one
    - \At{k-1}_{:,F}\DD_{FF}\inv\At{k-1}_{:,F}\one  \\
    =& \vc{
           \DD_{FF}\one - \At{k-1}_{FC}\one  \\
           - \At{k-1}_{CF}\one + 2\Lt{k-1}_{CC}\one
        }
        - \vc{
                \At{k-1}_{FF}\one  \\
                \At{k-1}_{CF}\one
            }  \\
    =& \zero.
  }
  Since $\PP$ is a permutation matrix, we have $\Lt{k}\one = \zero$.
  Similarly,  $\one\tp\Lt{k} = \zero\tp$.
  Therefore, $\Lt{k}$ is an Eulerian Laplacian.
  The result then follows by induction.

\end{proof}

\begin{lemma}\label{lem:LtkCClimequal1}
    The result of the update formula~\eqref{eq:randomwalk1} satisfies
    $
        \lim_{k \arr +\infty}\frac{1}{2^k}\Lt{k}_{CC} = \sc{\LL, F}.
    $

\end{lemma}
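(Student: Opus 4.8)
The plan is to extract $\Lt{k}_{CC}$ from the identity $\sc{\Lt{k},F} = 2^k\sc{\LL,F}$ of Lemma~\ref{lem:scLtkequal1} and to show that the leftover correction term is $o(2^k)$. Since the update~\eqref{line:denseA} defines $\At{k}$ by subtracting $\Lt{k}$ from a matrix that agrees with $\Lt{k}$ off the block diagonal, we have $\Lt{k}_{FC} = -\At{k}_{FC}$ and $\Lt{k}_{CF} = -\At{k}_{CF}$; expanding the Schur complement then gives
\[
\Lt{k}_{CC} = \sc{\Lt{k},F} + \At{k}_{CF}\pr{\Lt{k}_{FF}}\inv\At{k}_{FC} = 2^k\sc{\LL,F} + \At{k}_{CF}\pr{\Lt{k}_{FF}}\inv\At{k}_{FC}.
\]
Hence the statement reduces to proving $\frac{1}{2^k}\At{k}_{CF}\pr{\Lt{k}_{FF}}\inv\At{k}_{FC} \to \zero$, for which it suffices to bound $\At{k}_{FC}$, $\At{k}_{CF}$ and $\pr{\Lt{k}_{FF}}\inv$ in fixed matrix norms uniformly over $k$.

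For $\pr{\Lt{k}_{FF}}\inv$ I would use the quadratic contraction of the $(F,F)$ block. Combining~\eqref{line:densebiclique} with~\eqref{line:denseA} gives $\At{k}_{FF} = \At{k-1}_{FF}\DD_{FF}\inv\At{k-1}_{FF}$, hence $\DD_{FF}\inv\At{k}_{FF} = \pr{\DD_{FF}\inv\At{0}_{FF}}^{2^k}$. Since $\LL_{FF}$ is $\alp$-RCDD we have $\ni{\DD_{FF}\inv\At{0}_{FF}} \leq \frac{1}{1+\alp} < 1$, so $\ni{\DD_{FF}\inv\At{k}_{FF}} \leq (1+\alp)^{-2^k} \to 0$. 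Therefore $\Lt{k}_{FF} = \DD_{FF}\pr{\II - \DD_{FF}\inv\At{k}_{FF}}$ is invertible, with $\pr{\Lt{k}_{FF}}\inv = \pr{\II - \DD_{FF}\inv\At{k}_{FF}}\inv\DD_{FF}\inv$ and $\ni{\pr{\Lt{k}_{FF}}\inv} \leq \frac{1+\alp}{\alp}\,\ni{\DD_{FF}\inv}$, a bound independent of $k$.

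For $\At{k}_{FC}$ and $\At{k}_{CF}$ I would bypass the naive recursion $\At{k}_{FC} = \pr{\II + \At{k-1}_{FF}\DD_{FF}\inv}\At{k-1}_{FC}$, whose obvious estimate is a growing product, and exploit the Laplacian structure instead. A one-line induction on $\At{k}_{FF} = \At{k-1}_{FF}\DD_{FF}\inv\At{k-1}_{FF}$ (base case $\At{0}_{FF} = \DD_{FF} - \LL_{FF} \geq \zero$ entrywise) shows each $\At{k}_{FF}$ has nonnegative entries, so $\Lt{k}_{ii} = \DD_{ii} - \pr{\At{k}_{FF}}_{ii} \leq \DD_{ii}$ for every $i\in F$. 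By Lemma~\ref{lem:LtkE} each $\Lt{k}$ is an Eulerian Laplacian, so for $i\in F$ the absolute values of the off-diagonal entries in row $i$ sum to $\Lt{k}_{ii} \leq \DD_{ii}$; the portion lying in columns of $C$ is exactly $\At{k}_{FC}$, giving $\ni{\At{k}_{FC}} \leq \max_{i\in F}\DD_{ii}$. The symmetric argument on columns (via $\one\tp\Lt{k} = \zero\tp$) bounds the $\ell^1$-norm of each column of $\At{k}_{CF}$ by $\max_{j\in F}\DD_{jj}$. With these three uniform bounds, $\At{k}_{CF}\pr{\Lt{k}_{FF}}\inv\At{k}_{FC}$ is bounded in norm uniformly in $k$, so $\frac{1}{2^k}\At{k}_{CF}\pr{\Lt{k}_{FF}}\inv\At{k}_{FC}\to\zero$ and the displayed identity yields $\frac{1}{2^k}\Lt{k}_{CC}\to\sc{\LL,F}$.

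The only delicate point is the uniform control of the dense off-diagonal blocks $\At{k}_{FC}$, $\At{k}_{CF}$: the direct recursive estimate degrades with $k$, and the clean fix is to route through the fact (Lemma~\ref{lem:LtkE}) that every intermediate matrix $\Lt{k}$ stays Eulerian, together with nonnegativity of $\At{k}_{FF}$, which pins the relevant row/column sums to (a subset of) the fixed degrees $\DD_{FF}$. Everything else is the identity of Lemma~\ref{lem:scLtkequal1}, the closed form $\DD_{FF}\inv\At{k}_{FF} = (\DD_{FF}\inv\At{0}_{FF})^{2^k}$, and the $\alp$-RCDD hypothesis on $\LL_{FF}$.
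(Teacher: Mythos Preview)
Your proposal is correct and follows essentially the same approach as the paper's own proof: both reduce the claim via Lemma~\ref{lem:scLtkequal1} to showing $\frac{1}{2^k}\At{k}_{CF}\pr{\Lt{k}_{FF}}\inv\At{k}_{FC}\to\zero$, control $\pr{\Lt{k}_{FF}}\inv$ through the quadratic contraction $\DD_{FF}\inv\At{k}_{FF}=\pr{\DD_{FF}\inv\At{0}_{FF}}^{2^k}$ and the $\alp$-RCDD hypothesis, and bound $\At{k}_{FC},\At{k}_{CF}$ uniformly using that each $\Lt{k}$ is Eulerian (Lemma~\ref{lem:LtkE}). Your write-up is in fact slightly more explicit about why the diagonal of $\Lt{k}_{FF}$ does not exceed $\DD_{FF}$ (nonnegativity of $\At{k}_{FF}$), which the paper leaves implicit.
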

\begin{proof}
  Since $\LL_{FF} $ is $\alp$-RCDD, $\ni{\DD_{FF}\inv\At{0}_{FF}} \leq \frac{1}{1 + \alp} $.
  By the equality $\Lt{k}_{FF} = \DD_{FF} - \At{k-1}_{FF}\DD_{FF}\inv\At{k-1}_{FF}$ from~\eqref{line:densebiclique},
  we have
  $
    \DD_{FF}\inv\At{k} = \DD_{FF}\inv\At{k-1}_{FF}\DD_{FF}\inv\At{k-1}_{FF}.
  $
  Thus, by induction,
  \eql{\label{eq:DinvAsuperl}}{
    \ni{\DD_{FF}\inv\At{k}} \leq \pr{\frac{1}{1 + \alp}}^{2^k}. 
  }
  By Lemma~\ref{lem:LtkE}, $\ni{\At{k}_{CF}} \leq n\ni{\At{k}_{CF}} \leq n\nt{\DD_{FF}}$, $\ni{\DD_{FF}\inv\At{k}_{FC}} \leq 1$.

  Combining the above arguments,
  \eql{\label{eq:exactAkzero}}{
    &\lim_{k\arr +\infty}\frac{1}{2^k}\ni{\At{k}_{CF}\pr{\Lt{k}_{FF}}\inv\At{k}_{FC}}  \\
    =& \lim_{k\arr +\infty}\frac{1}{2^k}\ni{\At{k}_{CF}\pr{\II - \DD_{FF}\inv\At{k}_{FF}}\inv\DD_{FF}\inv\At{k}_{FC}}  \\
    \leq& \limsup_{k\arr +\infty}\frac{1}{2^k}\ni{\At{k}_{CF}}\ni{\pr{\II - \DD_{FF}\inv\At{k}_{FF}}\inv}\ni{\DD_{FF}\inv\At{k}_{FC}}  \\
    \leq& n \nt{\DD_{FF}} \cdot \limsup_{k\arr +\infty}\frac{1}{2^k} \cdot  \limsup_{k\arr +\infty}\ni{\pr{\II - \DD_{FF}\inv\At{k}_{FF}}\inv} = 0,
  }
  i.e.,
  \eq{
    \lim_{k \arr +\infty} \frac{1}{2^k}\At{k}_{CF}\pr{\Lt{k}_{FF}}\inv\At{k}_{FC} = \zero.
  }

  Then, using Lemma~\ref{lem:scLtkequal1} and the above equation, we have
  \eq{
      &\lim_{k \arr +\infty}\frac{1}{2^k}\Lt{k}_{CC} = \lim_{k \arr +\infty}\frac{1}{2^k}\sc{\Lt{k}, F} + \lim_{k \arr +\infty}\frac{1}{2^k}\At{k}_{CF}\iv{\Lt{k}_{FF}}\At{k}_{FC} 
    = \sc{\LL, F}.
  }

\end{proof}

\def\Lm#1{\widetilde{\mathcal{L}}^{\pr{#1}}}
\section{Sparsifying Directed Laplacians}\label{sec:sparsify}

First, we check that the Eulerian Laplacian sparsifier in
Section~3 of~\cite{cohen2017almost} meets the requirements
of Theorem~\ref{thm:SparEoracle1}.
This procedure can be briefly summarized as:
\begin{enumerate}
    \item decompose $\LL = \sum_{i=1}^{K} \calLt{i} $ such that each $\U{\calLt{i}} $ is an expander;
    \item sample the entries in the adjacency matrix of each
    $\calLt{i}$ and use a patch matrix to keep the row sums and the column sums invariant.
\end{enumerate}
This procedure was analyzed in~\cite{cohen2017almost}
by
(1) using matrix concentration inequalities to bound the errors in each adjacency matrix with respect to the in-degree and out-degree diagonal matrix; (2) using the property of the expander to bound the errors with respect to $\U{\calLt{i}}$ and in turn $\U{\LL}$.

Next, we give a precise bound of the running time of
directed Laplacian sparsification by combining the
expander decomposition in~\cite{saranurak2019expander}
with the degree-fixing on expanders routine from
Section~3 of~\cite{cohen2017almost}.

By setting $\phi = O\pr{1/\log^3 n}$ in Theorem~4.1 of~\cite{saranurak2019expander} and deleting the edges recursively, we can have a $\pr{s, \phi, 1} $-decomposition (Definition~3.14 of~\cite{cohen2017almost}) of the original directed graph $\calG[\LL]$, denoted by $\dr{\calLt{i}}_{i=1}^K$, and each $\U{\calLt{i}}$ is a $\phi$-expander.
Here $s = n\log n$ is the sum of the sizes of the subgraphs $\calLt{i}$.
The running time of this step is $O\pr{m\log^8 n}$.

By Cheeger's inequality, the spectral gap $\delta$ of each $\phi$-expander is $O\pr{\phi^2}$.
Then, by Lemma~3.13 of~\cite{cohen2017almost}, we can have a $\Lm{i}$ by sampling edges in  $\calLt{i}$, such that $\sum_{i=1}^{K}\nnz{\Lm{i}} \leq s\log n / \delta^2 = O\pr{n\log^{14} n}.   $
And $\Lm{i}$ is an $O(1)$-asymmetric approximation of $\calLt{i}$.
By summing up $\widetilde{\calL} = \sum_{i=1}^{K}\calLt{i}$, we have a $O(1)$-asymmetric approximation of $\LL$, with $\nnz{\widetilde{\calL}} = O(n\log^{14} n)$.

Putting these costs into Theorem~\ref{thm:TSENSEsolver1},
specifically setting $\TSE\pr{m, n, 1} = O\pr{m\log^8 n}$,
$\NSE\pr{n, 1} = O\pr{n\log^{14} n}$,
gives that the overall (construction + solve)
running times of our algorithm is $O\pr{m\log^8 n + n\log^{15}n\log\frac{n}{\eps}} + \Otil{n\log^{23} n}$.

With a slower processing time,
we can use short cycle based Eulerian sparsifiers
to get a smaller Schur complement chain,
and solve for each query vector faster.

Given an Eulerian Laplacian $\LL$,
the current best bound for the nonzero entries of its sparsifier is
$\Otil{n \log^4{n} \epsilon^{-2}}$ edges (Lemma~\ref{lem:shortcycleexistsparseEL}).
\begin{lemma}\label{lem:shortcycleexistsparseEL} (Existence of Eulerian Laplacian sparsifier)
\cite{CGPSSW18}
For any Eulerian Laplacian $\LL\in \MS{n}{n}$ with $\nnz{\LL} = m$ and error parameter $\eps \in (0, 1)$, there is a Eulerian Laplacian $\Lap$ such that $\nnz{\Lap} \leq O(n \log^{4}n \eps^{-2})$  and $\Lap - \LL \aleq \eps \cdot \U{\LL}. $
Such an $\Lap$ can be constructed in time $O(mn\log^{O(1)} n)$.

\end{lemma}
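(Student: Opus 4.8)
This is essentially the Eulerian sparsification result of~\cite{CGPSSW18}, and the plan is to recall their construction and account for its cost. The enabling fact is that for an Eulerian Laplacian $\U{\LL}=\UG{\LL}$ is a genuine symmetric PSD Laplacian, so one can run the machinery of \emph{short cycle decompositions} on it. First I would call the oracle of Theorem~\ref{thm:SparEoracle1} with error $\eps/3$: in time $O(m\log^{O(1)}n\,\eps^{-O(1)})$ this replaces $\LL$ by an Eulerian Laplacian with only $O(n\log^{O(1)}n\,\eps^{-O(1)})$ nonzeros while introducing an error term that is $\aleq\tfrac{\eps}{3}\U{\LL}$; since errors under $\aleq$ add (Fact~\ref{fact:aleqsum}), it then suffices to sparsify \emph{that} graph further to total additional error $\tfrac{2\eps}{3}\U{\LL}$. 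Bucketing directed edge weights into $O(\log n)$ geometric scales and handling each scale separately costs only a $\log n$ factor, so I may assume all directed edge weights are, up to rounding, equal.

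The core is an iterated edge/weight-reduction step. Compute a short cycle decomposition of the underlying undirected multigraph --- a partition of its edges into cycles of length $\hat L=O(\log n)$ together with at most $n$ leftover edges --- which by the peel-and-lowest-BFS-cross-edge procedure recalled in the introduction takes $O(m^2)$ time; leave the leftover edges untouched. Traverse each cycle as a closed directed walk and, independently per cycle, flip a fair coin to either delete the odd-position edges and double the even ones, or vice versa (with the standard minor fix for odd-length cycles). Because each walk closes up, every in-degree and out-degree is preserved, so the outcome $\LL'$ is again an Eulerian Laplacian, it equals $\LL$ in expectation, and it has at most $\tfrac12 m+O(n)$ edges. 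The one-round error $\LL'-\LL=\sum_C\Delta_C$ is a sum of independent mean-zero matrices, one per cycle, each supported on a single short cycle; the technical heart of~\cite{CGPSSW18} is a matrix-concentration estimate --- Khintchine/Bernstein type, applied to the ``whitened'' matrices $\U{\LL}^{\dagger/2}(\cdot)\U{\LL}^{\dagger/2}$ and then translated back to the $\aleq$ relation via Fact~\ref{lem:ne} --- showing that the cycles being short lets one force $\LL'-\LL\aleq\eta\,\U{\LL}$ with $\eta$ as small as desired, at the price of a $\mathrm{poly}(1/\eta,\log n)$ bound on $\hat L$ and on the resulting edge count. I would invoke this estimate as a black box.

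Iterating the reduction $O(\log m)$ times with per-round error $\eta=O(\eps/\log m)$, the edge count contracts geometrically to $O(n\log^{4}n\,\eps^{-2})$ (the additive $O(n)$ leftovers accumulating only a $\log$ factor, absorbed into the polylog), and the round errors add by Fact~\ref{fact:aleqsum} to $\aleq O(\eps)\,\U{\LL}$, so rescaling $\eps$ by a constant gives the claimed $\Lap$. For the running time, after the initial Theorem~\ref{thm:SparEoracle1} call every subsequent graph has $O(n\log^{O(1)}n\,\eps^{-O(1)})$ edges, so each of the $O(\log m)$ rounds across $O(\log n)$ weight scales costs $O(n^{2}\log^{O(1)}n\,\eps^{-O(1)})$ for its short cycle decomposition; together with the $O(m\log^{O(1)}n)$ spent on the first call this totals $O(mn\log^{O(1)}n)$ (using $m\ge n$). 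The genuinely delicate point --- the part I would not try to re-derive --- is precisely that matrix-concentration estimate in the \emph{asymmetric} norm, together with the bookkeeping that balances $\hat L$, the number of rounds, and the per-round error against the target $O(n\log^{4}n\,\eps^{-2})$ edge count; everything else (degree preservation, geometric edge decay, summing errors) is routine.
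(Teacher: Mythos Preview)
The paper does not prove this lemma at all; it is stated purely as a black-box citation of~\cite{CGPSSW18}, so there is no paper-side argument to compare against. Your sketch is a faithful outline of the short-cycle construction from that reference --- initial sparsification to near-linear size, weight bucketing, repeated cycle-halving on a short-cycle decomposition of the undirectification (which preserves in- and out-degrees, hence Eulerianness), matrix concentration for the asymmetric error, and summing round errors via Fact~\ref{fact:aleqsum}. The paper's own remark in the related-work section (``we can only use the naive $O(m^2)$ construction of $O(\log n)$-lengthed cycle decompositions after an initial sparsification call to make $m = O(n\log^{O(1)}n)$'') matches exactly the route you take.

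One accounting slip: you correctly state that the Theorem~\ref{thm:SparEoracle1} call costs $O(m\log^{O(1)}n\,\eps^{-O(1)})$ and leaves $O(n\log^{O(1)}n\,\eps^{-O(1)})$ edges, but in your final sentence you silently drop the $\eps^{-O(1)}$ and conclude $O(mn\log^{O(1)}n)$. With your argument as written the total is $O(mn\log^{O(1)}n\,\eps^{-O(1)})$. This is harmless for the paper's only downstream use of the lemma (Corollary~\ref{coro:shortcyclep1}), where the error parameter is always $\Omega(1/\log^{2}n)$ so $\eps^{-O(1)}$ is absorbed into the polylog; but if you want the $\eps$-free time bound as stated, you should either note this restriction explicitly or point to a sharper argument in~\cite{CGPSSW18} rather than route through Theorem~\ref{thm:SparEoracle1}.
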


Invoking the Eulerian sparsifier routine of Lemma~\ref{lem:shortcycleexistsparseEL} above within each iteration of Algorithm~\ref{alg:SCC},
with error set to $\eps = O(\frac{1}{i^2})$
at the $i$-th iteration, we obtain,
after $O(n^2 \log^{O(1)} n )$ preprocessing time,
an $\{\alp, \frac{1}{16(1+\alp)},
\{\frac{O(1)}{i^2}\}_{i=1}^{d}\}$-Schur complement chain
\[
\dr{\dr{\Stt{i}}_{i=1}^d, \dr{F_i}_{i=1}^d }
\]
with $\alp = O(1)$, $d = O(\log n)$ and $\sum_{i=1}^{n}\nnz{\Stt{i}} = O(n\log^4 n)$.
Then, Corollary~\ref{coro:shortcyclep1} follows
similarly to the proof of Theorem~\ref{thm:TSENSEsolver1}.

\section{Supporting Lemmas and Omitted Proofs }\label{sec:someprfs}

\begin{lemma}\label{lem:SE}
    There is a routine $\SE$ which takes in an Eulerian Laplacian $\LL\in\MS{n}{n}$,
    error parameter $\eps\in (0, 1)$ and a subset $F\sleq [n]$, where $\nnz{\LL} = m$.
    And then, $\SE$ runs in $O\pr{\TSE\pr{m, n, \dlt}}$ time to return an Eulerian Laplacian $\Ltil\in \MS{n}{n}$ such that $\Diag{\Ltil} = \Diag{\LL}$, $\Ltil_{FF}\one = \LL_{FF}\one$, $\Ltil_{FF}\tp\one = \LL_{FF}\tp\one$, $\nnz{\Ltil} = O\pr{\NSE\pr{n, \dlt} }$ and
    $
        \Ltil - \LL \aleq \eps \cdot \U{\LL}
    $
    with high probability.

    Analogously, under the same conditions as in Lemma~\ref{lem:SparP}, there is a routine $\SP$ which takes in vectors $\xx, \yy$, error parameter $\eps$, probability $p$ and a subset $F\sleq [n]$, runs in $O\pr{m\eps^{-2}\log\frac{m}{p}}$ to return with high probability a nonnegative matrix $\BB$ which possesses all the properties of $\AA$ in Lemma~\ref{lem:SparP}.
    In addition, $\BB_{FF}\one = \pr{\yy_F\tp\one}\xx_F$ and $\one\tp\BB_{FF} = \pr{\xx_F\tp\one}\yy_{F}\tp$.

\end{lemma}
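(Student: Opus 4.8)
The plan is to obtain $\SE$ (and, identically, $\SP$) by running the black-box sparsifier $\SparE$ (resp.\ $\SparP$) four times, once for each of the edge-groups determined by the split $(F,C)$, and summing the outputs. Concretely, I would write $\LL=\sum_{XY}\LL^{(XY)}$, where $X,Y$ range over $\{F,C\}$ and $\LL^{(XY)}$ is the directed Laplacian of the subgraph of $\calG[\LL]$ keeping only the edges directed from $X$ to $Y$ (so each $\LL^{(XY)}$ is a directed Laplacian whose diagonal records exactly the out-degrees contributed by those edges). Since undirectification is linear over edge-disjoint subgraphs and $\U{\LL}=\UG{\LL}$ for Eulerian $\LL$, this gives $\sum_{XY}\UG{\LL^{(XY)}}=\U{\LL}$, and each $\UG{\LL^{(XY)}}$ is a symmetric Laplacian, hence PSD, so $\UG{\LL^{(XY)}}\pleq\U{\LL}$. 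I would then set $\Lap^{(XY)}=\SparE\pr{\LL^{(XY)},\eps/4}$ and $\Ltil=\sum_{XY}\Lap^{(XY)}$. For $\SP$ the same scheme applies to the rank-one matrix: split $\xx\yy\tp=\sum_{XY}\xx_X\yy_Y\tp$ (where $\xx_X$ is $\xx$ restricted to $X$ and zero-padded), run $\SparP$ on each pair $(\xx_X,\yy_Y)$, and sum; here one uses $\sum_{XY}\GG_{XY}=\GG$ with $\GG_{XY}=(\one\tp\xx_X)\Diag{\yy_Y}-\xx_X\yy_Y\tp$, each a directed Laplacian with $\UG{\GG_{XY}}\pleq\UG{\GG}$.

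Most required properties then fall out mechanically. Diagonal preservation of $\SparE$ (Theorem~\ref{thm:SparEoracle1}(ii)) gives $\Diag{\Ltil}=\sum_{XY}\Diag{\LL^{(XY)}}=\Diag{\LL}$. Each $\Lap^{(XY)}$ is a directed Laplacian, so $\Ltil$ has non-positive off-diagonals and zero column sums; and $\Lap^{(XY)}-\LL^{(XY)}\aleq(\eps/4)\,\UG{\LL^{(XY)}}\pleq(\eps/4)\,\U{\LL}$ by Fact~\ref{fact:aleqpleq1}, so summing the four estimates with Fact~\ref{fact:aleqsum} yields $\Ltil-\LL\aleq\eps\cdot\U{\LL}$; in particular $(\Ltil-\LL)\one=\zero$, hence $\Ltil\one=\LL\one=\zero$ and $\Ltil$ is Eulerian. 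The bounds $\nnz{\Ltil}\le\sum_{XY}\nnz{\Lap^{(XY)}}=O(\NSE\pr{n,\eps})$ and the running time $O(\TSE\pr{m,n,\eps})$ are immediate from four $\SparE$ calls plus the $O(m)$ cost of extracting the four subgraphs (the $4^{O(1)}$ loss from running at error $\eps/4$ is absorbed into the $\eps^{-O(1)}$ dependence of $\TSE$ and $\NSE$). The $\SP$ statement is identical, now invoking Lemma~\ref{lem:SparP}, giving $\BB-\xx\yy\tp\aleq\eps\cdot\UG{\GG}$ for $\BB=\sum_{XY}\AA^{(XY)}$.

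The delicate point — and the step I expect to take the most work — is preservation of the in- and out-degrees of the $(F,F)$-subgraph, i.e.\ $\Ltil_{FF}\one=\LL_{FF}\one$ and $\Ltil_{FF}\tp\one=\LL_{FF}\tp\one$. I would argue it in two parts. First, $\UG{\LL^{(FF)}}$ is supported inside the $(F,F)$ block, so $e_c\in\ker\pr{\UG{\LL^{(FF)}}}$ for every $c\in C$; by the kernel clause in the definition of $\aleq$ this forces all rows and columns of $\Lap^{(FF)}$ indexed by $C$ to vanish, and with diagonal preservation $\Lap^{(FF)}$ is supported on $(F,F)$. Moreover condition (iii) of Theorem~\ref{thm:SparEoracle1} forces $\one\in\ker\pr{\Lap^{(FF)}-\LL^{(FF)}}$, so $\Lap^{(FF)}\one=\LL^{(FF)}\one$, and restricting to $F$-coordinates gives $\Lap^{(FF)}_{FF}\one=\LL^{(FF)}_{FF}\one$. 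Second, the three remaining pieces contribute to the $(F,F)$ block of $\Ltil$ only through their fixed diagonals: this uses the structural fact that the expander-decomposition-plus-sampling construction underlying $\SparE$ (Appendix~\ref{sec:sparsify}) never creates an edge between two vertex groups non-adjacent in the input, so $\Lap^{(CF)},\Lap^{(CC)}$ carry no $(F,F)$ off-diagonal entries while $\Lap^{(FC)}$ contributes only its $F$-diagonal $\Diag{\LL^{(FC)}}_{FF}$. Combined with the identity $\LL_{FF}=\LL^{(FF)}_{FF}+\Diag{\LL^{(FC)}}_{FF}$ this yields $\Ltil_{FF}\one=\LL^{(FF)}_{FF}\one+\Diag{\LL^{(FC)}}_{FF}\one=\LL_{FF}\one$, and the transpose identity follows symmetrically from the column-sum version. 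The same bookkeeping, with $\xx_F,\yy_F$ in place of the $(F,F)$-subgraph, gives $\BB_{FF}\one=(\yy_F\tp\one)\xx_F$ and $\one\tp\BB_{FF}=(\xx_F\tp\one)\yy_F\tp$. I expect the write-up to be dominated by this degree accounting, precisely because it is the one place where one must commit to a concrete, subgraph-respecting instantiation of $\SparE$ (or add an explicit block-respecting re-patching step to $\SE$) rather than relying solely on its stated interface.
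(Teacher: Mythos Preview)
Your approach is the paper's: split $\LL$ (resp.\ $\xx\yy\tp$) into four pieces according to the $(F,C)\times(F,C)$ block structure, call the black-box sparsifier on each, and sum. The paper's proof is essentially a one-line invocation of exactly this construction.

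Your ``delicate point'' is not actually delicate, and you do not need to open the $\SparE$ black box or appeal to the expander-decomposition internals. The stated interface already forces what you want: condition~\eqref{enum:LapLLdltUdef1} of Theorem~\ref{thm:SparEoracle1} places $\one$ in the kernel of $\Lap^{(XY)}-\LL^{(XY)}$, so together with diagonal preservation every $\Lap^{(XY)}$ has the same in- and out-degrees as $\LL^{(XY)}$. In a directed Laplacian, a vertex of zero out-degree has an all-zero column (non-positive off-diagonals with zero column sum), and a vertex of zero in-degree has an off-diagonal row of zeros (same reasoning on the row side). Applying this to the $F$-vertices of $\LL^{(FC)},\LL^{(CF)},\LL^{(CC)}$ immediately gives that their sparsified $(F,F)$ blocks are, respectively, the fixed diagonal, zero, and zero. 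The $\SP$ case is handled identically using nonnegativity of the $\SparP$ output in place of the sign constraint on off-diagonals.
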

\begin{proof}
  We apply $\SparE$ to the directed Laplacians
  \eq{
    &\mx{\LL_{FF} - \Diag{\one\tp\LL_{FF}} & \zerom{F}{C} \\ \zerom{C}{F} & \zerom{C}{C}},
    \mx{\zerom{F}{F} & \zerom{F}{C} \\ \zerom{C}{F} & \LL_{CC} - \Diag{\one\tp\LL_{CC}}},  \\
    &\mx{-\Diag{\one\tp\LL_{CF}} & \zerom{F}{C} \\ \LL_{CF} & \zerom{C}{C} },
    \mx{\zerom{F}{F} & \LL_{FC} \\ \zerom{C}{F} & -\Diag{\one\tp\LL_{FC}}}
  }
respectively
and summing up the resulting sparsified matrices, we have the $\SparE$ in Lemma~\ref{lem:SE}.
Analogously, by setting $\pr{\xx, \yy}$ as
\eq{
    \pr{\vc{\zerov{F} \\ \xx_{C}}, \vc{ \yy_{F} \\ \zerov{C}}},
    \pr{\vc{\xx_{F} \\ \zerov{C} }, \vc{\yy_{F} \\ \zerov{C}}},
    \pr{\vc{\xx_{F} \\ \zerov{C} }, \vc{\zerov{F} \\  \yy_{C} }},
    \pr{\vc{\zerov{F} \\ \xx_{C}}, \vc{\zerov{C} \\ \yy_{C} }}
}
in $\SparP$
respectively, we get $\SP$ in Lemma~\ref{lem:SE}.

The performance of  $\SE$, $\SP$ follows directly by  the fact that $\SparE$ preserves the diagonal entries and the row sums,  $\SP $ preserves the row and column sums.

\end{proof}
\begin{remark}
    By carefully designing a sampling rule, $\SE$ and $\SP$ can be replaced by a single sparsification procedure. Here, we use $\SparE$ and $\SparP$ to construct $\SE$ and $\SP$ merely for simplicity.
    When invoking $\SparP$, $\SP$ in this paper, we set $p = O\pr{\frac{1}{\poly{n}}}$ and omit the probability parameter $p$ for notational simplicity.
    For $\xx = \zero$ or $\yy = \zero$, both $\SparP$ and $\SP$ return $\zero\in \MatSize{n}{n}$ naturally.
\end{remark}

We have the following properties of Algorithm~\ref{alg:SparSchur}.
\begin{lemma}\label{lem:LapetcpropSparSchurCpmt1}
    With high probability, the following statements hold:
    \begin{enumerate}[(i)]
      \item $\dr{\Ltt{k}}_{k=0}^K$ are Eulerian Laplacians; \label{item:LttkE}

      \item $\dr{\Att{k}}_{k=0}^{K} $ are nonnegative matrices satisfying
        \eql{\label{eq:Attsuperl}}{
            \ni{\DD_{FF}\inv\Att{k}_{FF}} \leq \pr{\frac{1}{1 + \alp}}^{2^k},\ \forall 0\leq k\leq K;
        }
        \label{item:Attsuperl}

      \item $\SS, \Sap$ are Eulerian Laplacians; \label{item:SapE}
      \item The matrix $\Rap$ satisfies
        $\Rap\one = \Rap\tp\one = \zero $ and
        \eql{\label{eq:R}}{
                \nt{\Rap} \leq \frac{n^2\nt{\DD_{FF}}}{2^{K-1} \alp}\pr{\frac{1}{1 + \alp}}^{2^K}.
            }

    \end{enumerate}

\end{lemma}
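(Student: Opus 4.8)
The plan is to prove (i) and (ii) simultaneously by induction on $k$, and then to read off (iii) and (iv) from them. The only facts I need about the sparsification routines, beyond sparsity, are ``degree'' preservations, which I would derive uniformly: whenever $\BB - \AA \aleq \eps \cdot \UG{\GG}$ with $\GG$ a directed Laplacian, Definition~\ref{def:aleq1} puts $\one$ into $\ker\pr{\UG{\GG}} \sleq \ker\pr{\BB - \AA} \cap \ker\pr{\tpp{\pr{\BB - \AA}}}$, so $\BB$ has the same row sums and column sums as $\AA$. Applied to the four subgraphs built inside $\SE$ and $\SP$ (Lemma~\ref{lem:SE}), this gives: $\SE\pr{\LL, \eps, F}$ agrees with $\LL$ in diagonal, in row and column sums, and in $(F,F)$-block row and column sums; $\SP\pr{\xx, \yy, \eps, F}$ agrees with $\xx\yy\tp$ in row and column sums and in $(F,F)$-block row and column sums; and $\SparP\pr{\xx, \yy, \eps}$ agrees with $\xx\yy\tp$ in row and column sums. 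Lemma~\ref{lem:SparP} gives the nonnegativity of all these outputs.

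For the base case of (i)--(ii): $\Ltt{0} = \LL$ is Eulerian, $\Att{0} = \DD - \LL$ is entrywise nonnegative, and $\ni{\DD_{FF}\inv\Att{0}_{FF}} \leq \tfrac{1}{1+\alp}$ since $\LL_{FF}$ is $\alp$-RCDD. Assume (i)--(ii) at $k-1$. Each $\Ytt{k,i} = \SP\pr{\Att{k-1}_{:,i}, \tpp{\Att{k-1}_{i,:}}, \eps, F}$ is nonnegative and, by the preservations above, shares row sums, column sums and $(F,F)$-block row/column sums with $\Att{k-1}_{:,i}\Att{k-1}_{i,:}$; summing over $i\in F$ with weights $1/\DD_{ii}$, $\Ytt{k}$ is nonnegative and shares these sums with $\Att{k-1}_{:,F}\DD_{FF}\inv\Att{k-1}_{F,:}$. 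Feeding this into Line~\ref{line:Lttk01} and rerunning the row/column-sum computation from the proof of Lemma~\ref{lem:LtkE} — which uses only the degree identities $\Att{k-1}_{F,:}\one = \DD_{FF}\one$, $\Att{k-1}_{CF}\one = \Ltt{k-1}_{CC}\one$ and their column analogs, all consequences of $\Ltt{k-1}$ being Eulerian — gives $\Ltt{k,0}\one = \zero = \tpp{\Ltt{k,0}}\one$; its off-diagonals are $\leq 0$ since $\Att{k-1}$ is nonnegative and the nonnegative $\Ytt{k}$ is subtracted, so $\Ltt{k,0}$ is Eulerian, hence so is $\Ltt{k} = \SE\pr{\Ltt{k,0},\eps,F}$ by Lemma~\ref{lem:SE}; this is (i). For (ii): $\Ltt{k}$ has non-positive off-diagonals, so $\Att{k}$ (Line~\ref{line:Attk1}) has nonnegative off-diagonals; its $C$-diagonal vanishes, and, since $\SE$ preserves diagonals and $\Ltt{k,0}_{FF} = \DD_{FF} - \Ytt{k}_{FF}$, its $F$-diagonal entries are those of $\Ytt{k}_{FF}\geq\zero$; hence $\Att{k}$ is nonnegative. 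Finally $(F,F)$-block row-sum preservation of $\SE$ gives $\Att{k}_{FF}\one = \DD_{FF}\one - \Ltt{k,0}_{FF}\one = \Ytt{k}_{FF}\one = \Att{k-1}_{FF}\DD_{FF}\inv\Att{k-1}_{FF}\one$, so by nonnegativity $\ni{\DD_{FF}\inv\Att{k}_{FF}} = \ni{\DD_{FF}\inv\Att{k}_{FF}\one} \leq \ni{\DD_{FF}\inv\Att{k-1}_{FF}}^2 \leq \pr{\tfrac{1}{1+\alp}}^{2^k}$, which is~\eqref{eq:Attsuperl}.

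For (iii): each $\Xtt{i} = \SparP\pr{\Att{K}_{C,i}, \tpp{\Att{K}_{i,C}}, \eps}$ is nonnegative, so $\Xap$ is nonnegative and $\Stt{0} = \tfrac{1}{2^K}\pr{\Ltt{K}_{CC} - \Xap}$ has non-positive off-diagonals. Using $\Ltt{K}$ Eulerian ($\Ltt{K}_{CC}\one = \Att{K}_{CF}\one$), row-sum preservation of $\SparP$ ($\Xap\one = \Att{K}_{CF}\DD_{FF}\inv\Att{K}_{FC}\one$), and $\Att{K}_{F,:}\one = \DD_{FF}\one$ (so $\DD_{FF}\inv\Att{K}_{FC}\one = \one - \DD_{FF}\inv\Att{K}_{FF}\one$), a direct computation gives $\Stt{0}\one = \tfrac{1}{2^K}\Att{K}_{CF}\DD_{FF}\inv\Att{K}_{FF}\one \pgeq \zero$ and symmetrically $\one\tp\Stt{0} \pgeq \zero\tp$; hence the first-row and first-column entries of $\RR$, which are negatives of these sums, are $\leq 0$, so $\Sap = \Stt{0} + \RR$ still has non-positive off-diagonals, has vanishing row and column sums by the choice of $\RR$ (a direct check), and hence nonnegative diagonal; thus $\Sap$ is Eulerian and $\SS = \SparE\pr{\Sap, \dlt/8}$ is Eulerian by Theorem~\ref{thm:SparEoracle1}. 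For (iv): since $\Att{K}_{FC}\one = \pr{\DD_{FF} - \Att{K}_{FF}}\one$, one has $\pr{\pr{\DD_{FF} - \Att{K}_{FF}}\inv - \DD_{FF}\inv}\Att{K}_{FC}\one = \DD_{FF}\inv\Att{K}_{FF}\one$, while the computation above gives $\RR\one = -\tfrac{1}{2^K}\Att{K}_{CF}\DD_{FF}\inv\Att{K}_{FF}\one$; adding these identities and their transposed versions yields $\Rap\one = \Rap\tp\one = \zero$. For the norm bound I would write $\Rap = \RR + \tfrac{1}{2^K}\Att{K}_{CF}\pr{\pr{\DD_{FF} - \Att{K}_{FF}}\inv - \DD_{FF}\inv}\Att{K}_{FC}$, expand $\pr{\DD_{FF} - \Att{K}_{FF}}\inv - \DD_{FF}\inv = \sum_{j\geq 1}\pr{\DD_{FF}\inv\Att{K}_{FF}}^j\DD_{FF}\inv$, and bound it via (ii): $\ni{\DD_{FF}\inv\Att{K}_{FF}} \leq q := \pr{\tfrac{1}{1+\alp}}^{2^K} \leq \tfrac{1}{1+\alp}$, so the series has $\ni{\cdot} \leq \tfrac{q}{1-q}\ni{\DD_{FF}\inv} \leq \tfrac{\pr{1+\alp}q}{\alp}\ni{\DD_{FF}\inv}$. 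Combining with $\ni{\DD_{FF}\inv\Att{K}_{FC}} \leq 1$, $\no{\Att{K}_{CF}} \leq \nt{\DD_{FF}}$ (the column sums of $\Att{K}$ over $F$ equal $\DD_{FF}\one$), the crude bounds $\ni{\MM} \leq n\no{\MM}$ and $\nt{\MM} \leq \sqrt{\ni{\MM}\,\no{\MM}}$, and the fact that the diagonal of $\Ltt{K}_{CC}$ is at most $2^K$ times that of $\LL_{CC}$ (which cancels the $2^{-K}$ prefactor in the magnitudes of $\RR$'s entries), one reaches~\eqref{eq:R} after collecting the $n$-factors.

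The main obstacle is not any single calculation but the bookkeeping: one must verify that the sum-preservation properties of $\SE$, $\SP$, and $\SparP$ are precisely those that make the approximate recursion reproduce, identity for identity, the exact partial-block-elimination identities of Appendix~\ref{sec:exactPBE}; once these are lined up, (i)--(iv) come out mechanically. A secondary subtlety, specific to (iv), is that the $(C,C)$-diagonals of $\Ltt{k}$ grow like $2^k$, and this growth must be tracked carefully so that it exactly cancels the explicit $2^{-K}$ prefactor and leaves the genuine $\pr{1/(1+\alp)}^{2^K}$ decay visible in~\eqref{eq:R}.
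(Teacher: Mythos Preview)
Your proposal is correct and follows essentially the same route as the paper: induction for (i)--(ii) using the degree-preservation properties of $\SE$/$\SP$, then (iii) and (iv) by reading off the row/column sums of $\Stt{0}$ and bounding the Neumann tail. Two small calibrations: first, the \emph{diagonal} preservation of $\SE$ is not a consequence of your uniform $\one\in\ker\pr{\UG{\GG}}$ argument (that only gives row and column sums) --- it comes from Theorem~\ref{thm:SparEoracle1}\eqref{enum:Diagnece}, which you are implicitly invoking via Lemma~\ref{lem:SE}. Second, in (iv) the clean handle is the one you already wrote down, $\no{\Att{K}_{CF}}\leq\nt{\DD_{FF}}$ together with $\ni{\cdot}\leq n\no{\cdot}$; your alternative ``$2^K$ diagonal growth'' observation would yield $\max_{i\in C}\DD_{ii}$ rather than $\nt{\DD_{FF}}$, so it does not directly give~\eqref{eq:R} and is unnecessary once the column-sum bound is in place.
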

\begin{proof}

    We prove~(\ref{item:LttkE}) by  induction.
    Firstly, $\Ltt{0} = \LL$ is an Eulerian Laplacian.
    Suppose that $\Ltt{k-1}$ is an Eulerian Laplacian, we will prove the Eulerianness for $k$.

    Since $\SP$ preserves the row sum, \eq{\Ytt{k}\one = \sum_{i\in F }\frac{1}{\DD_{ii}}\Att{k-1}_{:,i}\Att{k-1}_{i,:}\one = \Att{k-1}_{:,F}\DD_{FF}\inv\Att{k-1}_{F,:}. }

    In this proof, we denote
    \eql{\label{eq:Wtk}}{
        \Wt{k} \defeq \PP\tp\mx{
            \DD_{FF} & - \Att{k-1}_{FC}  \\
            - \Att{k-1}_{CF} & 2\Ltt{k-1}_{CC}
        }\PP
        - \Att{k-1}_{:, F}\DD_{FF}\inv\Att{k-1}_{F,:},
    }
    where $\PP$ is the permutation matrix defined in Section~\ref{sec:exactPBE}.
    By similar arguments with Lemma~\ref{lem:LtkE}, $\Wt{k}$ is an Eulerian Laplacian.

    Then, we have
    \eq{
        \Ltt{k, 0}\one = \Wt{k}\one + \pr{\Att{k-1}_{:,F}\DD_{FF}\inv\Att{k-1}_{F, :} - \Ytt{k}}\one = \zero.
    }
    By combining with the fact that $\Ytt{k}$ is a nonnegative matrix (from Lemma~\ref{lem:SparP}), $\Ltt{k, 0}$ is an Eulerian Laplacian.
    Then, (\ref{item:LttkE}) follows by Lemma~\ref{lem:SE} and induction.

    The nonnegativity of $\Att{k}$ follows directly by Lemma~\ref{lem:SE}.
    Since $\SP$ preserves the row sum  and $\SE$ preserves the diagonal and row sum on the submatrix $\Ltt{k,0}_{FF}$, we have  
    \eq{
       &\ni{\DD_{FF}\inv\Att{k}_{FF}} = \DD_{FF}\inv\Att{k}_{FF}\one = \DD_{FF}\inv\sum_{i\in F }\frac{1}{\DD_{ii}}\Ytt{k,i}_{FF}\one
       = \DD_{FF}\inv\sum_{i\in F } \frac{1}{\DD_{ii}} \Att{k-1}_{F,i} \Att{k-1}_{i,F} \one  \\
       =& \DD_{FF}\inv\Att{k-1}_{FF}\DD_{FF}\inv  \Att{k-1}\one = \ni{\DD_{FF}\inv\Att{k-1}_{FF}\DD_{FF}\inv\Att{k-1}_{FF}} \leq \ni{\DD_{FF}\inv\Att{k-1}_{FF}}^2.
    }
    Then,~(\ref{item:Attsuperl}) can be shown by induction.

    For~(\ref{item:SapE}), it can be shown directly by the nonnegativity of $\Xap$ and~\eqref{item:LttkE}  that all off-diagonal entries of $\Stt{0}$ is non-positive.
    By Fact~\ref{fact:ESchurE}, $\sc{\Ltt{K}, F}$ is an Eulerian Laplacian.
    As $\SparP$ preserves the row sum,
    \eql{\label{eq:Sttone}}{
        &2^K \Stt{0}\one = \sc{\Ltt{K}, F}\one + \Att{K}_{CF}\pr{\DD_{FF} - \Att{K}_{FF}}\inv\Att{K}_{FC}\one - \Xap\one  \\
        =&  \Att{K}_{CF}\pr{\DD_{FF} - \Att{K}_{FF}}\inv\Att{K}_{FC}\one - \sum_{i\in F }\frac{1}{\DD_{ii}}\Att{K}_{C,i}\Att{K}_{i,C}  \one  \\
        =& \Att{K}_{CF}\DD_{FF}\inv\Att{K}_{FF}\pr{\pr{\DD_{FF} - \Att{K}_{FF}}\inv - \DD_{FF}\inv}\Att{K}_{FC}\one  \\
        =& \Att{K}_{CF}\DD_{FF}\inv\Att{K}_{FF}\sum_{i=0}^{+\infty}\pr{\DD_{FF}\inv\Att{k}_{FF}}^i\DD_{FF}\inv\Att{K}_{FC}\one.
    }
    Then, $\Stt{0}\one$ is a nonnegative vector.
    Analogously, $\one\tp\Stt{0}$ is also nonnegative.
    So, $\Stt{0}$ is RCDD.
    Since $\Stt{0}\one$, $\one\tp\Stt{0}$ are nonnegative,
    from the way we compute the patching matrix $\RR$, off-diagonal entries of $\RR$ are non-positive.
    Thus, all off-diagonal entries of $\Sap$ are non-positive.
    It follows by the definition of $\RR_{1,1}$ and direct calculations that $\Sap\one = \Sap\tp\one = \zero$.
    Then, we have shown  $\Sap$ is an Eulerian Laplacian.
    Thus, $\SS$ is also an Eulerian Laplacian by the definition of the oracle $\SparE$.

    By~\eqref{eq:Sttone} and~\eqref{eq:defRap}, $\Stt{0}\one = \pr{\Rap - \RR}\one$.
    Then, as we have just shown $\Sap$ is an Eulerian Laplacian, $\Rap\one = \pr{\Stt{0} + \RR}\one = \Sap\one = \zero$.
    Analogously, $\Rap\tp\one = \zero$.
    As we have shown $\RR$ is non-positive, we have $\ni{\RR}  \leq \one\tp\RR\one = \one\tp\Stt{0}\one$.
    As $ \Rap - \RR = \Att{K}_{CF}\DD_{FF}\inv\Att{K}_{FF}\pr{\pr{\DD_{FF} - \Att{K}_{FF}}\inv - \DD_{FF}\inv}\Att{K}_{FC}  $ is nonnegative,
    we have $\ni{\Rap - \RR} \leq \one\tp\pr{\Rap - \RR}\one \comeq{\eqref{eq:Sttone}}  \one\tp\Stt{0}\one $.
    Thus, $\ni{\Rap} \leq \ni{\RR} + \ni{\Rap - \RR} \leq 2 \cdot \one\tp\Stt{0}\one. $
    As $\SP$ preserves the row sum, by Lemma~\ref{lem:LtkE}, 
    $
        \ni{\Att{K}_{CF}} \leq n\no{\Att{K}_{CF}} \leq n\nt{\DD_{FF}}.
    $
    Since $\Ltt{K}$ is an Eulerian Laplacian, $\ni{\DD_{FF}\inv\Att{K}_{FC}} \leq 1$.
    Then, by~\eqref{eq:Sttone},
    \eq{
        \one\tp\Stt{0}\one \leq \frac{n\ni{\Stt{0}\one} }{2^K}
        \leq \frac{n^2\nt{\DD_{FF}}}{2^K }\pr{\frac{1}{1 + \alp}}^{2^K}\sum_{i=0}^{+\infty}\pr{\frac{1}{1 + \alp}}^{2^i}
        \leq \frac{n^2\nt{\DD_{FF}}}{\alp}\pr{\frac{1}{1 + \alp}}^{2^K}.
    }
    So, $\ni{\Rap} \leq \frac{n^2\nt{\DD_{FF}}}{2^{K-1}\alp}\pr{\frac{1}{1 + \alp}}^{2^K}.  $
    Analogously, $\no{\Rap} \leq \frac{n^2\nt{\DD_{FF}}}{2^{K-1} \alp}\pr{\frac{1}{1 + \alp}}^{2^K}.  $
    Then,~\eqref{eq:R} follows by Fact~\ref{fact:ninobnt}.

\end{proof}

\begin{proof}[Proof of Lemma~\ref{lem:EYEXEtt}]

  For any nonnegative vectors $\aa, \bb \in \Real^n$, we define $\Uvc{\aa, \bb}$ as the undirectification of a biclique as follows:
  \eq{
    \Uvc{\aa, \bb} = \UG{\pr{\one\tp\aa}\Diag{\bb} - \aa\bb\tp } =
    \frac{1}{2}\pr{\pr{\bb\tp\one}\Diag{\aa} + \pr{\aa\tp\one}\Diag{\bb}} - \aa\bb\tp - \bb \aa\tp.
  }

    Then, by Lemma~\ref{lem:SparP}, Lemma~\ref{lem:SE} and Lemma~\ref{lem:ne}, we have
    \eq{
        2\ex\tp \EY{k,i} \yy \leq \eps\pr{\ex\tp\Uvc{\Att{k-1}_{:,i}, \tpp{\Att{k-1}_{i,:}}}\ex + \ey\tp\Uvc{\Att{k-1}_{:,i}, \tpp{\Att{k-1}_{i,:}}}\ey}.
    }
    Then, summing over $i\in F $  yields that
    \eql{\label{eq:exEYeyUt}}{
        2 \ex\tp \EY{k} \ey \leq \eps \pr{\ex\tp \Ut{k} \ex + \ey\tp \Ut{k} \ey},
    }
    where $\Ut{k} = \sum_{i \in F  }\frac{1}{\DD_{ii}}\Uvc{\Att{k-1}_{:,i}, \tpp{\Att{k-1}_{i,:}}}$.

    Since $\Ut{k}$ is a weighted summation of symmetric Laplacians, $\Ut{k}$ is also a symmetric Laplacian.

    We also define $\Wt{k}$ as in~\eqref{eq:Wtk} in this proof.
    And we have shown $\Wt{k}$ is an Eulerian Laplacian.

    By setting $\LL = \Ltt{k-1}$ in Lemma~\ref{lem:Mti} and Lemma~\ref{lem:ULtm}, we have
    \eql{\label{cor:Wtk}}{\U{\Wt{k}} \pleq 2\pr{3 + \frac{2}{\alp}} \U{\Ltt{k-1}}.  } .

    By the definition of $\Wt{k}$,
    \eq{ \U{\Wt{k}} = \Zt{k} - \frac{1}{2}\pr{\Att{k-1}_{:, F}\DD_{FF}\inv\Att{k-1}_{F, :} + \tpp{\Att{k-1}_{F,:}}\DD_{FF}\inv\tpp{\Att{k-1}_{:, F}}}, }
    where $\Zt{k}$ is a matrix whose off-diagonal entries are all non-positive.

    And by the definition of $\Ut{k}$, we have
    \eq{ \Ut{k} = \Dt{k} - \frac{1}{2}\pr{\Att{k-1}_{:, F}\DD_{FF}\inv\Att{k-1}_{F, :} + \tpp{\Att{k-1}_{F,:}}\DD_{FF}\inv\tpp{\Att{k-1}_{:, F}}}, }
    where $\Dt{k}$ is a diagonal matrix.
    Thus, the off-diagonal entries of $\Gt{k} \defeq \U{\Wt{k}} - \Ut{k} = \Zt{k} - \Dt{k}$ is all non-positive.
    And since $\U{\Wt{k}}$ and $\Ut{k}$ are both symmetric Laplacians, we have
    $\Gt{k}\one = \tpp{\Gt{k}}\one = \zero$.
    So, $\Gt{k}$ is a symmetric Laplacian. Then, $\Gt{k} \pgeq \zero$.
    Thus, we have
    \eq{
        \U{\Wt{k}} = \Ut{k} + \Gt{k} \pgeq \Ut{k}.
    }
    By combining with~\eqref{eq:exEYeyUt}, we have
    \eql{\label{eq:EYUWt}}{
        2\ex\tp \EY{k} \ey \leq \eps \pr{\ex\tp \U{\Wt{k}} \ex + \ey\tp \U{\Wt{k}} \ey}.
    }
    By Fact~\ref{lem:ne} and Fact~\ref{fact:aleqU}, $\U{\EY{k}} \pleq \eps \U{\Wt{k}}$.

    It follows by the definitions of $\Wt{k}$ and $\EY{k}$ that
    $
        \Ltt{k, 0} = \Wt{k} + \EY{k}.
    $
    Thus,
    \eq{
        \U{\Ltt{k, 0}} = \U{\Wt{k}} + \U{\EY{k}}  \pleq \pr{1 + \eps}\U{\Wt{k}}.
    }
    By Lemma~\ref{lem:SE},
    \eql{\label{eq:Ettk011112}}{
        2\ex\tp \Ett{k, 0} \ey \leq \eps \pr{\ex\tp \U{\Ltt{k, 0}} \ex + \ey\tp \U{\Ltt{k, 0}} \ey} \leq \eps\pr{1 + \eps}\pr{\ex\tp \U{\Wt{k}} \ex + \ey\tp \U{\Wt{k}} \ey}.
    }
    By~\eqref{cor:Wtk},~\eqref{eq:EYUWt},~\eqref{eq:Ettk011112}, Fact~\ref{lem:ne} and the relation $\Ett{k} = \EY{k} + \Ett{k,0} $, we have
    \eq{
        \Ett{k} \aleq \pr{\eps + \pr{1 + \eps}\eps} \U{\Wt{k}} \pleq 2\pr{3 + \frac{2}{\alp}}\pr{2\eps + \eps^2 }\U{\Ltt{k-1}}.
    }
    By the definition of  $\epsz $,~\eqref{eq:EYLtt} follows.
    The inequality~\eqref{eq:EXscLtt} follows analogously.

\end{proof}

\begin{proof}[Proof of Lemma~\ref{enum:Q6}]
  Denote \eq{
                \xhatt{i} = \big(\xx_C\tp \underbrace{\xx_{F}\tp \ \cdots \ \xx_F\tp}_{\text{$2^i$ repetitions of $\xx_F\tp$}}\big)\tp.
              }
    By Lemma~\ref{lem:Mti},
    $
        \tpp{\xhatt{i}} \U{\Mt{0, i}} \xhatt{i} = 2^i \xx\tp \U{\LL} \xx.
    $
    Thus,
    \eq{
        \tpp{\xhatt{k}} \repFC{2^{k-i}, F, C, \U{\Mt{0, i}}} \xhatt{k} = 2^{k-i} \cdot 2^i \xx\tp\U{\LL} \xx = 2^k\xx\tp\U{\LL}\xx.
    }
    Since $\XL{0} = \U{\LL}$ and the sum of coefficients of the terms on the RHS of~\eqref{eq:newday1} is $1$,
    it follows by induction that
    \eql{\label{eq:2pwerk1}}{
        \tpp{\xhatt{k}} \XL{k} \xhatt{k} = 2^k \xx\tp \U{\LL} \xx.
    }

  By Fact~\ref{fact:alpRCDDPSDpPD1}, $\XL{k}_{-[n], -[n]}$ is PD.
  Then, by Fact~\ref{fact:Schurxusmall} and~\eqref{eq:2pwerk1},  we have
  \eq{
    \xx\tp\sc{\XL{k}, -[n]}\xx \leq \tpp{\xhatt{k}}\XL{k}\xhatt{k} = 2^k \xx\tp\U{\LL}\xx,  
  }
  i.e., $\sc{\XL{k}, -[n]}\pleq 2^k\U{\LL}$.
  Then, using Fact~\ref{fact:scprvpleq} and Fact~\ref{fact:scUpleqUsc}, we have
    $\sc{\XL{k}, -C} \pleq 2^k \sc{\U{\LL}, F} \pleq 2^k \U{\sc{\LL, F}}. $

\end{proof}

\begin{proof}[Proof of Theorem~\ref{thm:SCC}]
  By Lemma~\ref{thm:SparSchur}, we have
  \eq{
    \St{i+1} - \sc{\Stt{i}, F_i} \aleq  \dlt_{i+1}' \U{\sc{\Stt{i}, F_i}} = \frac{\dlt }{3 i^2 }\U{\sc{\Stt{i}, F_i}}.
  }
  Then, $\pr{1 - \dlt_{i+1}'}\U{\sc{\Stt{i}  , F_i}} \pleq \U{\St{i+1}} \pleq \pr{1 + \dlt_{i+1}'} \U{\sc{\Stt{i}, F_i}}. $
  Thus, we have
  \eq{
    \U{\Stt{i+1}} = \frac{1}{1 - \dlt_{i+1}' }\U{\St{i+1}  } \pgeq \U{\sc{\Stt{i}, F_i}}
  }
  and
  \eq{
    &\Stt{i+1} - \sc{\Stt{i}, F_i} = \St{i+1} - \sc{\Stt{i}, F_i} + \frac{\dlt_{i+1}'}{1 - \dlt_{i+1}'}\U{\St{i+1}} \\  \aleq&  \pr{\dlt_{i+1}' + \frac{\dlt_{i+1}'}{1 - \dlt_{i+1}'}\cdot \pr{1 + \dlt_{i+1}'}} \U{\sc{\Stt{i}, F_i}} \pleq \frac{\dlt}{ i^2} \U{\sc{\Stt{i}, F_i}}.
  }
  Analogously, $\U{\LL} \pleq \U{\Stt{1}} $, $\Stt{1} - \LL \aleq \dlt \cdot \U{\LL}.  $

  By Lemma~\ref{lem:FindRCDD}, $\bet = \frac{1}{16\pr{1 + \alp}}$.
  Thus, the loop will terminate in $d = O\pr{\log n}$ iterations.
  Since $\TSE\pr{m, n, \dlt}$ and $\NSE\pr{n, \dlt}$ depends on $m, n $ nearly linearly and $\dlt\inv $ polynomially,
  the result follows by combining Theorem~\ref{thm:SparSchur} and Lemma~\ref{lem:FindRCDD} with the fact $\sum_{i=1}^{+\infty} \pr{1 - \bet}^{i-1} \poly{i^2 } = O(1) $.

\end{proof}

\begin{proof}[Proof of Lemma~\ref{lem:BBpleqo1Bap}]
  By~\eqref{eq:defLap},
  \eq{
    &\putmat{\sc{\Lap, F_1}, C_1, C_1, n}  \\
    =& \putmat{\sc{\Stt{1}, F_1}, C_1, C_1, n} + \putmat{\Stt{2} - \sc{\Stt{1}, F_1}, C_1, C_1, n} \\
     & + \sum_{i=2}^{d-1} \putmat{\Stt{i+1} - \sc{\Stt{i}, F_{i}}, C_{i}, C_{i}, n}  \\
    =& \Stt{2} + \sum_{i=2}^{d-1} \putmat{\Stt{i+1} - \sc{\Stt{i}, F_{i}}, C_{i}, C_{i}, n} .
  }
  Repeating this process gives us that
  \eq{
    &\putmat{\sc{\Lap, \cup_{j=1}^{i-1} F_j}, C_{i-1}, C_{i-1}, n}  \\
     =& \putmat{\Stt{i}, C_{i-1}, C_{i-1}, n } + \sum_{j=i}^{d-1} \putmat{\Stt{j+1} - \sc{\Stt{j}, F_{j}}, C_{j}, C_{j}, n},\ \forall i\in [d].
  }
  Thus,
  \eq{
    \Bap = \BB + \dlt_1\pr{\U{\Stt{1}} - \U{\LL}} + \sum_{i=1}^{d-1} \pr{\sum_{j=1}^{i+1}\dlt_j} \putmat{\U{\Stt{i+1}} - \U{\sc{\Stt{i}, F_i}}, C_i, C_i, n}.
  }
  By the definition of Schur complement chain, $\U{\Stt{1}} \pgeq \U{\LL} $, $\U{\Stt{i+1}} \pgeq \U{\sc{\Stt{i}, F_i}} $ and $\Stt{1} - \LL \aleq \dlt_1 \cdot \U{\LL} $, $\Stt{i+1} - \sc{\Stt{i}, F_i} \aleq \dlt_{i+1}\cdot \U{\sc{\Stt{i}, F_i}}.   $

  Combining with the condition $\sum_{i=1}^{d}\dlt_i\leq 1$, we have
  \eq{
    \BB \pleq \Bap \pleq \BB + \dlt_1^2 \U{\LL} + \sum_{i=1}^{d-1} \dlt_{i+1} \pr{\sum_{j=1}^{i+1} \dlt_{j}} \U{\sc{\Stt{i}, F_i}} \pleq 2\BB.
  }

\end{proof}

\begin{proof}[Proof of Lemma~\ref{lem:Mpt}]
    By the definition of $\Mpt{i, N}$, we have
    \eq{
        \iv{\Stt{i}_{F_i F_i}} - \Mpt{i, N} = \frac{1}{2}\sum_{k=N}^{+\infty}\pr{\II - \frac{1}{2}\Dap_{F_i F_i}\inv\Stt{i}_{F_i F_i}}^k\Dap_{F_i F_i}\inv = \pr{\II - \frac{1}{2}\Dap_{F_i F_i}\inv\Stt{i}_{F_i F_i}}^N \iv{\Stt{i}_{F_i F_i}}.
    }
    By the Gershgorin circle theorem, the modulus of the eigenvalues of $\II - \frac{1}{2}\Dap_{F_i F_i}\inv\Stt{i}_{F_i F_i}$ are no greater than $\frac{2 + \alp}{2\pr{1 + \alp}}$.
    Then, the modulus of the eigenvalues of $\pr{\II - \frac{1}{2}\Dap_{F_i F_i}\inv\Stt{i}_{F_i F_i}}^N$ are no greater than $\pr{\frac{2 + \alp}{2\pr{1 + \alp}}}^N  $.
    Thus, we have $\II - \pr{\II - \frac{1}{2}\Dap_{F_i F_i}\inv\Stt{i}_{F_i F_i}}^N$ is nonsingular.
    It follows that \eq{\Mpt{i, N} =  \pr{\II - \pr{\II - \frac{1}{2}\Dap_{F_i F_i}\inv\Stt{i}_{F_i F_i}}^N} \iv{\Stt{i}_{F_i F_i}} } is nonsingular.

    Then, we have
    \eq{
        &\ni{\iv{\Utilt{i, N}} - \iv{\Utilt{i, \infty}}} \\
        =& \ni{\pr{\iv{\Stt{i}_{F_i F_i}} - \Mpt{i, N}}\Stt{i}_{F_i C_i}}
        = \frac{1}{2}\ni{\sum_{k=N}^{+\infty}\pr{\II - \frac{1}{2}\Dap_{F_i F_i}\inv\Stt{i}_{F_i F_i}}^k\Dap_{F_i F_i}\inv\Stt{i}_{F_i C_i} }  \\
        \leq& \frac{1}{2}\sum_{k=N}^{+\infty}\ni{\II - \frac{1}{2}\Dap_{F_i F_i}\inv\Stt{i}_{F_i F_i}}^k\ni{\Dap_{F_i F_i}\inv\Stt{i}_{F_i C_i}  }
        \leq \frac{1}{2}\sum_{k=N}^{+\infty} \pr{\frac{2 + \alp}{2\pr{1 + \alp}}}^k = \frac{\pr{1 + \alp}}{\alp}  \pr{\frac{2 + \alp}{2\pr{1 + \alp}}}^N.
    }
    Analogously, we have $\no{\iv{\Ltilt{i, N}} - \iv{\Ltilt{i, \infty}}} \leq \frac{\pr{1 + \alp}}{\alp}\pr{\frac{2 + \alp}{2\pr{1 + \alp}}}^N  $ and  \\
    $
        \ni{\iv{\Stt{i}_{F_i F_i}} - \Mpt{i, N}} \leq \frac{\pr{1 + \alp}}{\alp}\pr{\frac{2 + \alp}{2\pr{1 + \alp}}}^N \ni{\Dap_{F_i F_i}\inv}.
    $

\end{proof}

\end{appendices}

\end{document}